%% file: 4-conn.tex
\input{preamble}

\newcommand{\sidenote}[1]{\textbf{(*)}\marginpar {\tiny \raggedright{(*) #1}}}
\newcommand{\ignore}[1]{}
\renewcommand{\sidenote}[1]{}

\newtheorem{remark}[theorem]{Remark}
\usepackage{enumitem}

\usepackage{color,soul}
\definecolor{cblue}{rgb}{0.36, 0.54, 0.66}

\begin{document}


\title{Computing the $(k+2)$-Edge-Connected Components in $k$-Edge-Connected Digraphs in Subquadratic Time}

\maketitle

\begin{abstract}
Computing edge-connected components in directed and undirected graphs is a fundamental and well-studied problem in graph algorithms.
In a very recent breakthrough, Korhonen [STOC 2025] showed that for any fixed $k$, the $k$-edge connected components of an undirected graph can be computed in linear time. 
In contrast, the directed case remains significantly more challenging: linear-time algorithms are only known for $k \le 3$, and for any fixed $k > 3$, the best known bound for sparse or moderately dense graphs
is still the $O(mn)$-time algorithm of Nagamochi and Watanabe (1993).

In this paper, we break the $O(mn)$ barrier for all $k = o(n^{1/4}/\sqrt{\log{n}})$. 
We present a randomized algorithm that computes the $(k+2)$-edge-connected components of a $k$-edge-connected directed graph in $O(k^2 m \sqrt{n} \log n)$ time, for any~$k$. 
This constitutes the first improvement over the classic Nagamochi--Watanabe bound for any constant $k > 3$. 
Our approach introduces new structural insights into directed edge-cuts and combines these with both new and existing techniques. 
A central contribution of our work is a substantial simplification and generalization of the framework introduced in~\cite{GKPP:3ECC}, which achieved an $\widetilde{O}(m\sqrt{m})$ bound for computing the $3$-edge-connected components of a digraph. 
In addition, we develop a variant of our algorithm that achieves the same $O(m \sqrt{n} \log n)$ running time for computing the $4$-edge-connected components of a \emph{general} directed graph.
\end{abstract}

\thispagestyle{empty}
\clearpage

\tableofcontents

\thispagestyle{empty}
\clearpage

\setcounter{page}{1}

\input{introduction}

\input{technical}

\input{text}

\input{decompositions}

\input{localSearchDeterministic}

\input{extending_k_plus_3}

\bibliographystyle{acm}
\bibliography{references}




\end{document}

%% file: preamble.tex
\documentclass[a4paper,11pt]{article}
\usepackage[utf8]{inputenc}
\usepackage{amssymb,amsmath,amsthm,amsfonts}
\usepackage{wrapfig}
\usepackage{graphicx}
\usepackage[lined, algoruled, linesnumbered]{algorithm2e}

\usepackage{fullpage}
\usepackage{epsfig}
\usepackage{xspace}
\usepackage[disable]{todonotes}
\usepackage[nolabel]{showlabels}
\usepackage{thmtools}
\usepackage{thm-restate}
\usepackage[hidelinks,pagebackref=true]{hyperref}
\usepackage[nameinlink]{cleveref}

\definecolor{winered}{rgb}{0.5,0,0}

\hypersetup{
    colorlinks = true,
    linkcolor = blue,
    citecolor = winered 
}

\newtheorem{theorem}{Theorem}
\newtheorem{definition}[theorem]{Definition}
\newtheorem{lemma}[theorem]{Lemma}
\newtheorem{corollary}[theorem]{Corollary}
\newtheorem{claim}[theorem]{Claim}
\newtheorem{property}[theorem]{Property}
\newtheorem{observation}[theorem]{Observation}

\newtheorem{proposition}[theorem]{Proposition}

\newcommand{\cut}[1]{$#1$-cut}
\newcommand{\conn}[1]{$#1$-edge-connected}

\newcommand{\incomp}[1]{$#1$-in set}

\newcommand{\outcomp}[1]{$#1$-out set}
\newcommand{\set}[1]{\{#1\}}

\newcommand{\vin}[2][]{vol_{in#1}(#2)}
\newcommand{\vout}[2][]{vol_{out#1}(#2)}

\newcommand{\kec}{\ensuremath{\leftrightarrow_{\mathit{k}}}}

\def \Ot {\ensuremath{\widetilde{O}}}

\newcommand{\outgadget}[1]{$\mathcal{G}_{out}(#1)$}
\newcommand{\ingadget}[1]{$\mathcal{G}_{in}(#1)$}

\newcommand{\intermediate}{Partition-Witness}

\newcommand{\lightgraph}[1]{$#1$CLG}

\newcommand{\gadgetsube}[1]{$G'_{<#1>}$}
\newcommand{\gadgetsubeM}[1]{G'_{<#1>}}

\newcommand{\contract}[1]{\ensuremath{G_{\langle #1 \rangle}}}
\newcommand{\splitop}[2]{\ensuremath{\mathit{split}(#1,#2)}}
\newcommand{\contractp}[1]{\ensuremath{G'_{\langle #1 \rangle}}}

\newcommand{\nnote}[1]{\todo[color=orange!25!white]{Nikos: #1}\xspace}
\newcommand{\cnote}[1]{\todo[color=orange!05!white]{Charis: #1}\xspace}
\newcommand{\lnote}[1]{\todo[color=orange!05!white]{Loukas: #1}\xspace}

\newcommand{\ecc}[1]{$#1$-ECC}

\newcommand{\mout}[1]{$\mu(#1)$}
\newcommand{\moutM}[1]{\mu(#1)}

\newcommand{\moutR}[1]{$\mu_R(#1)$}
\newcommand{\moutRM}[1]{\mu_R(#1)}

\newcommand{\mpout}[1]{$\mu'(#1)$}
\newcommand{\mpoutM}[1]{\mu'(#1)}

\author{
  Loukas Georgiadis\thanks{Department of Computer Science \& Engineering, University of Ioannina, Greece.  E-mail:\texttt{loukas@uoi.gr}}
  \and
  Evangelos Kipouridis\thanks{Max Planck Institute for Informatics, Saarland Informatics Campus, Germany. Email: \texttt{kipouridis@mpi-inf.mpg.de}}
  \and
  Evangelos Kosinas\thanks{Department of Computer Science \& Engineering, University of Ioannina, Greece. Email: \texttt{ekosinas@cs.uoi.gr}}
  \and
  Charis Papadopoulos\thanks{Department of Mathematics, University of Ioannina, Greece.  E-mail:\texttt{charis@uoi.gr}} 
  \and
  Nikos Parotsidis\thanks{Google Research, Switzerland. Email: \texttt{nikosp@google.com}}
}

\date{}

%% file: introduction.tex
\section{Introduction}

Computing edge-connected components in directed and undirected graphs is a fundamental problem in graph algorithms, motivated by several applications (see, e.g.,~\cite{connectivity:nagamochi-ibaraki}), and has been extensively studied. Before discussing related work and our contributions, we begin with the necessary definitions.

Let $G = (V, E)$ be a strongly connected directed graph (digraph) with $n$ vertices and $m$ edges. 
A set of edges $C \subseteq E$ is called a \emph{cut} if the graph $G \setminus C$ is not strongly connected. 
If $|C| = k$, we refer to $C$ as a \cut{k} of $G$. 
For any pair of vertices $v, w \in V$, we say that a cut $C$ \emph{separates $v$ from $w$} if there is no $v$–$w$ path in $G \setminus C$; in this case, we call $C$ a \emph{$(v,w)$-cut}. 
We denote by $\lambda(v, w)$ the size of a minimum $(v,w)$-cut. 
A digraph $G$ is said to be \emph{$k$-edge-connected} if it has no cuts of size at most $k-1$, that is, if $\lambda(v, w) \ge k$ for all pairs of vertices $v, w \in V$.

\begin{wrapfigure}{r}{0.4\textwidth}  
  \centering
  \vspace{-10pt} 
  \includegraphics[width=.5\textwidth,
                   trim={10cm 0cm 8cm 10cm},clip]{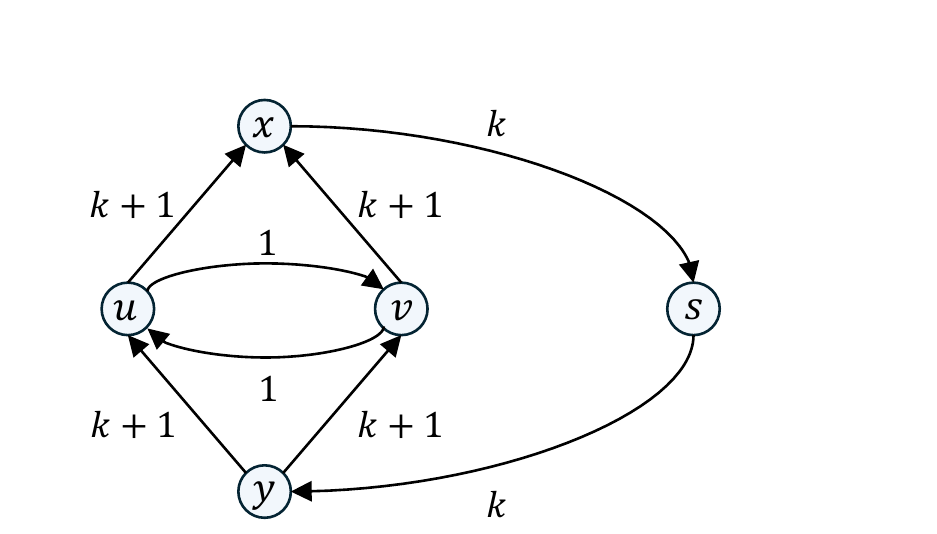}
  \vspace{-10pt}
  \caption{A $k$-edge-connected digraph. 
  The weight on each edge corresponds to its multiplicity. 
  Vertices $u$ and $v$ are $(k+1)$-edge-connected but not $(k+2)$-edge-connected 
  (e.g., $(v,u)$ and the $k$ edges $(x,s)$ form a $(k+1)$-cut separating $v$ from $u$).}
  \label{figure:small-example}
\end{wrapfigure}

We say that two vertices $v$ and $w$ are \emph{$k$-edge-connected}, denoted by $v \kec w$, if there exist $k$ edge-disjoint directed paths from $v$ to $w$ and $k$ edge-disjoint directed paths from $w$ to $v$. See \Cref{figure:small-example}.
(Note that a path from $v$ to $w$ and a path from $w$ to $v$ need not be edge-disjoint.)
By Menger's theorem~\cite{menger}, $v \kec w$ if and only if the removal of any set of at most $k-1$ edges leaves $v$ and $w$ in the same strongly connected component. 
A \emph{$k$-edge-connected component} of a digraph $G = (V, E)$ is a maximal subset $U \subseteq V$ such that $u \kec v$ for all $u, v \in U$. 
The $k$-edge-connected components of $G$ form a partition of $V$, since the relation $\kec$ is an equivalence relation~\cite{georgiadis20162edge}. 
The same definition applies analogously to undirected graphs.

For undirected graphs, the problem of computing the $k$-edge-connected components is very well understood. Until recently, linear-time algorithms were known only for $k \leq 5$~\cite{pathdfs:g00,GI:ECtoVC,Linear4ECC:GIK,3-connectivity:ht,Kosinas2024,Linear4ECC:Nadara,NagamochiIbaraki:3CC,dfs:t,Tsin:3CC}. In a significant recent breakthrough, Korhonen~\cite{LinearkECC:STOC25} introduced an algorithm with running time $O(m+k^{O(k^2)}n)$ for computing the $k$-edge-connected components of an undirected graph, which is linear in the number of edges for any fixed $k$.

The directed case remains more challenging. Until recently, linear-time algorithms were known only for $k \le 2$~\cite{dfs:t,GIP20:SICOMP}.
GKPP~\cite{GKPP:3ECC} presented a randomized (Monte-Carlo) algorithm that computes the $3$-edge-connected components of a digraph with $m$ edges in $\widetilde{O}(m^{3/2})$ time.\footnote{The $\widetilde{O}(\cdot)$ notation hides polylogarithmic factors.}
Their algorithm applies an involved extension of the framework of~\cite{forster2020computing,nanongkai2019breaking} for deciding whether a digraph is $(k+1)$-edge-connected.
It is based on a local search procedure~\cite{chechik2017faster,forster2020computing}, initiated from sampled vertices, which identifies \outcomp{2}s or \incomp{2}s,
that is, vertex sets $S$ that have 
$2$ outgoing edges to $V \setminus S$ or 
$2$ entering edges from $V \setminus S$.
After finding such a set $S$, \cite{GKPP:3ECC} applies an efficient graph operation for replacing $S$ with a gadget of small size that preserves the pairwise connectivity among the vertices of $V \setminus S$.
Georgiadis, Italiano, and Kosinas~\cite{Linear3ECC} later presented a deterministic linear-time algorithm for computing the $3$-edge-connected components of a digraph. 
This result is based on an adaptation of the minset-poset technique of Gabow~\cite{Gabow:Poset:TALG} for identifying minimal \incomp{k}s in $(k-1)$-edge-connected digraphs, combined with a characterization of \incomp{2}s using the concept of strongly divergent spanning trees~\cite{DomCert:TALG}.
%

For any fixed $k > 3$, the best known bound for computing the $k$-edge-connected components 
of a sparse or moderately dense digraph
remains $O(mn)$, as given by the algorithm of Nagamochi and Watanabe~\cite{nagamochi1993computing}.
%
Unfortunately, both approaches proposed in \cite{GKPP:3ECC} and \cite{Linear3ECC} appear difficult to extend to the case $k>3$. Specifically, \cite{GKPP:3ECC} applies a graph transformation that replaces a \incomp{2} or a \outcomp{2} $S$ with a small gadget that preserves the pairwise connectivity of the vertices in $V \setminus S$. The fast construction of this gadget relies crucially on the simple structure of \cut{1}s. 
In contrast, \cite{Linear3ECC} computes minimal \incomp{2} sets by exploiting properties of two strongly divergent spanning trees rooted at a designated vertex $s$. 
Extending this technique to $k>3$ would require, at minimum, an efficient method for computing $k-1$ spanning trees with similar properties. However, Huck~\cite{Huck:Disproof} showed that such spanning trees do not always exist: there are $k$-vertex-connected digraphs where, for a fixed start vertex $s$, do not have such collection of spanning trees.

\subsection{Our results}

In this paper, we present the first improvement over the classical bound of Nagamochi and Watanabe~\cite{nagamochi1993computing} for any constant $k>3$ in $k$-edge-connected directed graphs. 
Specifically, we develop a novel randomized algorithm that computes the $(k+2)$-edge-connected components of a $k$-edge-connected digraph in $O(k^2 m \sqrt{n} \log n)$ time, for any $k$. Hence, we break the $O(mn)$ barrier for $k = o(n^{1/4}/\sqrt{\log{n}})$. 
Our first main result is summarized in the following theorem. 

\begin{theorem}
\label{theorem:keccs}
Let $G$ be a $k$-edge-connected digraph, 
and let $\delta$ be a parameter with $0 < \delta < 1$. 
There exists a randomized algorithm that runs in $O(k^2 m \sqrt{n} \log(n/\delta))$ time and outputs a partition $\mathcal{P}$ of $V(G)$ satisfying the following guarantees:
\begin{itemize}
    \item Every two $(k+2)$-edge-connected vertices of $G$ belong to the same set of $\mathcal{P}$.
    \item With probability at least $1 - \delta$, every two vertices of $G$ that are not $(k+2)$-edge-connected are separated by $\mathcal{P}$.
\end{itemize}
\end{theorem}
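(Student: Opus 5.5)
We plan to prove \Cref{theorem:keccs} with a recursive splitting scheme driven by local search, in the spirit of~\cite{forster2020computing,nanongkai2019breaking,GKPP:3ECC}, but avoiding the gadget replacement of~\cite{GKPP:3ECC}. In a $k$-edge-connected digraph $G$, two vertices $u,v$ fail to be $(k+2)$-edge-connected exactly when some $k$-cut or $(k+1)$-cut separates them, i.e.\ there is a set $S$ with $u\in S$, $v\notin S$ and $|\delta^{+}(S)|\le k+1$ or $|\delta^{-}(S)|\le k+1$. The goal is to find enough such sets to separate every non-$(k+2)$-edge-connected pair, while never splitting $(k+2)$-edge-connected pairs. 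The second point holds automatically if we only ever refine the partition along cuts of size at most $k+1$ of $G$, since such a cut cannot separate two $(k+2)$-edge-connected vertices. This gives the first guarantee deterministically.

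\textbf{Step 1: sets with small volume.} Fix a threshold $\Delta=m/\sqrt{n}$ (balanced later). We sample $O((m/\Delta)\log(n/\delta))$ edges. For each sampled edge we run a local search in the style of~\cite{chechik2017faster,forster2020computing}, in $G$ and in its reverse, to find an \outcomp{(k+1)} (resp.\ \incomp{(k+1)}) $S$ containing the edge's tail with $\vout{S}\le\Delta$. A local search for a set with out-degree at most $k+1$ repeats DFS/path-reversal $k+2$ times, each step exploring $O(\Delta)$ edges. Randomizing the choices adds another $O(k)$ factor, so each search costs $O(k^2\Delta)$. The total is $O(k^2 m\log(n/\delta))$ per round. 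We need a structural lemma, specific to the $k$-edge-connected setting: the minimal $(k+1)$-out sets containing a vertex are ``nested or disjoint''. This uses submodularity together with the fact that all cuts have size at least $k$, so two crossing $(k+1)$-sets have an uncrossed pair of size at most $k+1$. With this lemma, a small-volume minimal set is found with constant probability whenever the sample hits it, and then the partition can be refined along each found set.

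\textbf{Step 2: sets with large volume.} If every separating set and its complement has volume above $\Delta$, then a random sample of $O(\sqrt{n}\log(n/\delta))$ vertices hits both sides with high probability. For each sampled $x$ we compute, via at most $k+2$ augmenting-path rounds ($O(km)$ time), $\lambda(x,\cdot)$ and $\lambda(\cdot,x)$ information. This yields the partition of $V$ into vertices that are $(k+2)$-connected to $x$ versus not. The refinement must be kept consistent, and the reachable/dominator-like structure in the residual graph encodes all minimum cuts of size $\le k+1$ with $x$ on one side. The total time is $O(k m\sqrt n\log(n/\delta))$.

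\textbf{Combining and main obstacle.} Intersecting all refinements gives $\mathcal{P}$. The parameter choice makes both steps cost $O(k^2 m\sqrt n\log(n/\delta))$, and a union bound over the $O(n^2)$ pairs gives success probability $1-\delta$. The hard part is Step 1's guarantee that separating sets need not be minimal. A pair $u,v$ may be separated only by a non-minimal small set, or by sets of size $k$ and $k+1$ that interact. We would first try to show, via submodularity in $k$-edge-connected graphs, that for each non-$(k+2)$-connected pair some \emph{minimal} small-volume $(k+1)$-in or out set separates them, or else the large-volume sampling handles the pair. If that fails, we would recurse on the sides of found cuts, charging volume as in~\cite{GKPP:3ECC}.
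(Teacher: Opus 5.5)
Your argument for the first guarantee is fine. The second guarantee has a genuine gap, located exactly where you flagged the ``hard part''.

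\textbf{The minimal-set lemma is not available.} The claim that minimal $(k+1)$-out sets containing a vertex are nested or disjoint does not follow from uncrossing. For two $(k+1)$-out sets $A,B$ in a $k$-edge-connected graph you only get $\mathit{out}(A\cap B)+\mathit{out}(A\cup B)\le 2k+2$ with both terms at least $k$. So one of them can have $k+2$ outgoing edges, and there is no canonical minimal set for a local search to return. The paper gets canonicity from two ingredients you are missing. The first is the $(k+1)$-ECC decomposition (\Cref{theorem:keccdecomp}), which reduces to graphs whose ordinary vertices are pairwise $(k+1)$-edge-connected. The second is a fixed ordinary reference vertex $s$. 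With both in place, every $(k+1)$-out set separating an ordinary $v$ from $s$ is a minimum $(v,s)$-cut, so \Cref{lemma:submodularity3} gives a unique minimum one, $M(v)$. Then \Cref{lemma:distinctM} shows that non-$(k+2)$-edge-connected ordinary $u,w$ satisfy $M(u)\neq M(w)$ or $M_R(u)\neq M_R(w)$. \Cref{figure:small-example} shows why the reference vertex must be $(k+1)$-connected to the vertices being compared.

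\textbf{Step~1 cannot catch small sets.} $O((m/\Delta)\log(n/\delta))$ samples only hit sets of volume $\Omega(\Delta)$. Even a hit returns a set around the sampled tail, which need not contain the particular $u$ you must separate. The paper instead runs the $s$-avoiding local search (\Cref{proposition:local_search} or \Cref{proposition:randomized_local_search}) from \emph{every} ordinary vertex with budget $m/\sqrt n$. This returns exactly $M(v)$ whenever $\mathit{vol}(M(v))\le m/\sqrt n$. Its $n\cdot m/\sqrt n$ cost is the term balanced against the sampling.

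\textbf{Step~2 does not deliver what is needed.} Augmenting paths need a sink, so $k+2$ rounds from $x$ do not give $\lambda(x,y)$ for all $y$. A bipartition into vertices that are or are not $(k+2)$-connected to $x$ cannot separate a pair $u,w$ when neither is $(k+2)$-connected to $x$. The residual graph of one max-flow encodes only the \emph{minimum} cuts of that source--sink pair.

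For each sampled tail $v$, the paper computes a \emph{good partition} (\Cref{definition:goodpartition}). This is one partition that separates every ordinary $u$ with $v\in M(u)$ from every ordinary $w\notin M(u)$, simultaneously for all such $u$.
\begin{itemize}
\item If $\lambda(v,s)=k+1$, each such $M(u)$ is a minimum $(v,s)$-cut, so the SCCs of the Picard--Queyranne graph suffice.
\item If $\lambda(v,s)=k$, the sets $M(u)$ are \emph{not} minimum cuts and are not represented in the residual graph. This is the case your proposal does not handle. The paper contracts the latest $(v,s)$-mincut $S$ into $z$ and shows $\mathit{out}(M(u)\cup S)=k+1$. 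If all $k$ out-edges of $z$ leave $M(u)\cup S$, then the $2$-edge-connected components of $\widetilde G\setminus\{e_1,\dots,e_k\}$ separate $u$ and $w$. Otherwise some $e_i$ lies inside $M(u)\cup S$. Contracting $z$ with the head of $e_i$ then gives, via \Cref{lemma:include_outgoing_edge}, a vertex $z_i$ with $\lambda(z_i,s)=k+1$, which reduces to the Picard--Queyranne case.
\end{itemize}

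Your fallback of recursing on the sides of found cuts as in~\cite{GKPP:3ECC} is the gadget-and-recursion machinery that the paper points out does not extend beyond $3$-edge-connected components.
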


Hence, we achieve a significant improvement over the $O(mn)$ bound of Nagamochi and Watanabe for all graph densities.
Our approach introduces new structural insights into directed edge-cuts and combines these with both new and existing techniques. 
A key contribution of our work is a substantial simplification and generalization of the framework proposed in \cite{GKPP:3ECC}, which achieved an $\widetilde{O}(m\sqrt{m}\,)$ bound for computing the $3$-edge-connected components of a digraph.
We also give an extension of our algorithm that computes the $(k+3)$-edge-connected components of a given $(k+2)$-edge-connected component $S$ of $G$ in $O(k^3 m \sqrt{|S|} \log(|S|/\delta))$ time.\lnote{Added this sentence.}

Furthermore, we develop a variant of our algorithm that achieves 
$O(m \sqrt{n} \log n)$ running time for computing the $4$-edge-connected components of a \emph{general} directed graph.

\begin{theorem}
\label{theorem:4eccs}
Let $G$ be a digraph, and let $\delta$ be a parameter with $0 < \delta < 1$. 
There exists a randomized algorithm that runs in $O(m \sqrt{n} \log(n/\delta))$ time and outputs a partition $\mathcal{P}$ of $V(G)$ satisfying the following guarantees:
\begin{itemize}
    \item Every two $4$-edge-connected vertices of $G$ belong to the same set of $\mathcal{P}$.
    \item With probability at least $1 - \delta$, every two vertices of $G$ that are not $4$-edge-connected are separated by $\mathcal{P}$.
\end{itemize}
\end{theorem}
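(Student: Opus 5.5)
The natural route to \Cref{theorem:4eccs} is to reduce the general case to a sequence of instances where \Cref{theorem:keccs}, or its $(k+3)$ extension, applies with small $k$. The difficulty is that a general digraph is not $k$-edge-connected for any $k\ge 1$, so we first peel off the low connectivity levels with known linear-time algorithms.

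First compute the strongly connected components in $O(m)$ time, then the $2$-edge-connected components~\cite{GIP20:SICOMP}, and then the $3$-edge-connected components with the deterministic linear-time algorithm of~\cite{Linear3ECC}. Since $4$-edge-connected vertices are $3$-edge-connected, the final partition refines the partition into $3$-edge-connected components. The challenge is that a $3$-edge-connected component $S$ need not induce a $3$-edge-connected subgraph: the paths witnessing $u \leftrightarrow_3 w$ may leave $S$.

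So for each $3$-edge-connected component $S$ we build an auxiliary digraph $H_S$. The aim is that $H_S$ is $3$-edge-connected, or at least $2$-edge-connected so that the $(k+2)$ or $(k+3)$ machinery applies, and that it preserves $\lambda(u,w)$ for $u,w\in S$ up to the threshold $4$. Here is the plan for constructing $H_S$:
\begin{itemize}
\item Use the $2$-edge-connected decomposition and the tree-like structure of the $1$-cuts and $2$-cuts separating $S$ from the rest, in the spirit of the gadget replacement of~\cite{GKPP:3ECC}.
\item Contract each part of $V\setminus S$ hanging off a small cut into a single vertex or a constant-size gadget, adding parallel edges so that cuts of size $\ge 4$ are not created spuriously.
\item Ensure that every $(u,w)$-cut of size $\le 3$ in $G$ corresponds to one in $H_S$, and vice versa.
\end{itemize}
The total size is controlled by arguing that each edge of $G$ appears in $O(1)$ auxiliary graphs, so $\sum_S |E(H_S)| = O(m)$ and $\sum_S |V(H_S)| = O(n)$.

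On each $H_S$, which is $k$-edge-connected with $k\le 2$ in the reduced instance, we run the algorithm of \Cref{theorem:keccs}, or the $(k+3)$ extension applied to the component $S$ in the $k=1$ setting, with failure parameter $\delta/n$. We then restrict the output partition to $S$.
\begin{itemize}
\item \emph{Running time.} Since $k=O(1)$ and $\sqrt{|V(H_S)|}\le\sqrt n$, the cost is $\sum_S O(|E(H_S)|\sqrt n\log(n/\delta)) = O(m\sqrt n\log(n/\delta))$.
\item \emph{One-sided guarantee.} The guarantee that $4$-edge-connected vertices are never separated is deterministic, inherited from \Cref{theorem:keccs}.
\item \emph{Failure probability.} A union bound over at most $n$ components gives overall failure probability at most $\delta$.
\end{itemize}

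The main obstacle is constructing $H_S$ so that it exactly preserves pairwise connectivity up to $4$ among vertices of $S$ while remaining $k$-edge-connected and of linear total size. The gadgets of~\cite{GKPP:3ECC} handle $1$-cuts, but here we must also handle $2$-cuts interacting with $3$-cuts. My first attempt would be to process $S$ inside its $2$-edge-connected component: replace each maximal $2$-in or $2$-out set disjoint from $S$ by a single vertex, oriented with its two boundary edges. I would then prove that any $3$-cut between vertices of $S$ can be rerouted to avoid the interior of such sets. If that fails, a fallback is to modify the local search in the algorithm behind \Cref{theorem:keccs} directly, so that it tolerates cuts of size below $k$ by treating them as already known separators.
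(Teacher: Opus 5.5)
\textbf{The gap is exactly the step you defer.} What \Cref{theorem:4eccs} needs beyond \Cref{theorem:keccs} is a set of reduced instances with three properties. They must have total size $O(m)$ and be built in near-linear time. In each one the ordinary vertices must be exactly one $3$-edge-connected component. Their connectivities up to $4$ must be preserved. This is \Cref{theorem:3eccdecomp}, and your proposal only lists the properties you want $H_S$ to have.

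The paper proves it as follows.
\begin{itemize}
\item Apply the 2CLG decomposition of \cite{GKPP:3ECC}.
\item Inside each 2CLG, with an ordinary root $s$, process the classes $[v]$ of ordinary vertices that share the same minimal $2$-out set $\mu(v)$. The order is the proper order supplied by the linear-time algorithm of \cite{Linear3ECC}.
\item In an evolving graph, find each $\mu'(v)$ by the $s$-avoiding local search. Build $G_{[v]}$ by replacing the complement with a GKPP in-gadget plus the reduction rule, in $O(\mathit{vol}(\mu'(v)))$ time. Then replace $\mu'(v)$ in the evolving graph by an out-gadget.
\item In a second phase, repeat this on the reverse of every $G_{[v]}$.
\end{itemize}
Correctness rests on three ingredients: the claim that $2$-out and $2$-in sets of the evolving graph lift back to $G$, the refined-cuts lemma, and connectivity preservation of the GKPP gadgets. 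The linear total size comes from collapsing each found set before larger ones are searched, so the volumes sum to $O(m)$. Building each $H_S$ independently from $G$ costs $\Omega(m)$ per component, which is $\Theta(mn)$ in the worst case. Your claim that every edge lies in $O(1)$ of the $H_S$ needs precisely this ordered evolving-graph mechanism.

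\textbf{Your concrete plan for $H_S$ fails.} $2$-cuts of a digraph have no tree-like structure, and maximal $2$-out sets avoiding $S$ can cross. More seriously, the ambient graph is only strongly connected, so contracting a $2$-out set can create connectivity; \Cref{lemma:contraction} needs $k$-edge-connectivity with $k=2$.

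Here is a counterexample. Take $S=\{u,c,w\}$ and $X=\{a,b\}$ with edges $u\to w$, $u\to c$ (twice), $u\to a$, $c\to w$, $c\to u$, $c\to a$, $w\to u$ and $w\to c$ (three times each), $a\to b$, and $b\to w$ (twice).
\begin{itemize}
\item $S$ is a $3$-edge-connected component, and $X=V\setminus S$ is both a $2$-out and a $2$-in set.
\item $\lambda(u,w)=3$, and the unique $3$-cut separating $u$ from $w$ is $\{u,c,a\}$. Its cut edges include $(a,b)$, which lies inside $X$, so your rerouting claim is false.
\item After contracting $X$ into $x$, the paths $u\to w$, $u\to x\to w$, $u\to c\to w$ and $u\to c\to x\to w$ give $\lambda(u,w)=4$. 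Since $\lambda(w,u)=4$ in both graphs, $u$ and $w$ would be wrongly merged.
\end{itemize}
The GKPP gadgets avoid this by keeping out-degree-$1$ vertices such as $a$ that are entered from outside.

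Your second stage is sound once correct reduced graphs exist. Those graphs are in general only strongly connected, so sampled vertices may have $\lambda(v,s)\in\{1,2\}$. The $k=1$ case of \Cref{theo:extendingkplusthree} is essentially what the paper does through \Cref{proposition:main_auxiliary_4}. There, one takes the latest mincut and computes the $3$-edge-connected components of $\widetilde{G}\setminus\{e_1\}$ via \cite{Linear3ECC}. One then contracts with the exit point to raise $\lambda$.
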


Note that \Cref{theorem:4eccs} is not subsumed by \Cref{theorem:keccs}, as we do not require the input digraph $G$ to be $2$-edge-connected. We also remark that all our results apply to multigraphs, that is, the input graph may contain parallel directed edges.

\ignore{
Unlike the algorithm of \cite{GKPP:3ECC} for computing the $3$-edge-connected components of a digraph, 
our approach achieves a significant improvement over the $O(mn)$ bound of Nagamochi and Watanabe~\cite{nagamochi1993computing} for all graph densities.
our approach achieves a significant improvement over the $O(mn)$ bound of Nagamochi and Watanabe~\cite{nagamochi1993computing} for all graph densities.
}

Our algorithm loosely follows the framework of GKPP~\cite{GKPP:3ECC} in that it repeatedly performs local searches from sampled vertices. However, it replaces certain bottleneck computations of~\cite{GKPP:3ECC} 
by exploiting new structural insights into directed edge-cuts and combines both new and existing techniques, as detailed below.

\ignore{
A fundamental difficulty in computing $k$-edge-connected components for $k > 3$, which does not arise in the case of $3$-edge-connected components, is that vertices lying in different $(k-1)$-edge-connected components can be separated by cuts of size at most $k-2$, which may exhibit complex structure. In contrast, for $k = 3$, vertices that are not $2$-edge-connected are separated by $1$-cuts, which admit a simple, tree-like representation~\cite{GIP20:SICOMP}. This crucial property is exploited in~\cite{GKPP:3ECC} to enable an efficient gadget substitution operation.\lnote{I think we should remove this paragraph now, or rephrase it somehow.}
\cnote{yes, I think we should remove it.}
}

\subsection{Related work}

Connectivity-related problems are known to be  
much more difficult in directed graphs than in undirected graphs (see, e.g., \cite{Gabow:Poset:TALG,LocalFlow:EC,KT:EC}). 
Cuts in undirected graphs can be represented compactly by the \emph{Gomory-Hu tree} (or \emph{cut tree}), which can be used to identify 
the $k$-edge-connected components of undirected graphs, for any $k$. Furthermore, many efficient algorithms for computing Gomory-Hu trees are available (see e.g., \cite{abboud2021apmf,abboud2021subcubic,APMF2023,CKLPGS:AlmostLinearMaxFlow,hariharan2007efficient,li2022nearly}). 
An $\widetilde{O}(m+nk^3)$-time algorithm for computing the restricted version of the Gomory-Hu tree, that captures edge-cuts of size at most $k$, was presented by Hariharan, Kavitha, and Panigrahi~\cite{hariharan2007efficient}.
In a breakthrough result, Korhonen~\cite{LinearkECC:STOC25} showed how to compute the $k$-edge-connected components of an undirected graph in $O(m+k^{O(k^2)}n)$ time.

On the contrary, in directed graphs, edge cuts have a more complicated structure, and it was proved by Bencz\'ur~\cite{Benczur:CutTrees} that in this case, cut trees do not even exist.
Linear-time algorithms for finding the $2$-edge-connected components are given in \cite{georgiadis20162edge,GIP20:SICOMP}, and for finding the $3$-edge-connected components in \cite{Linear3ECC}.
Schnorr~\cite{schnorr1979bottlenecks} showed how to compute the minimum of $\mathit{flow}(u,v)$ and $\mathit{flow}(v,u)$, for all pairs of vertices $u$ and $v$ of a weighted digraph, using $O(n \log n)$ max-flow computations.
By computing \emph{$k$-bounded max-flow} (i.e., the value $\min\{k, \mathit{flow}(u,v)\}$) using $k$ iterations of Ford-Fulkerson, this approach gives an $O(mnk\log n)$-time algorithm for computing the $k$-edge-connected components of a digraph. 
If we use the $O(m^{1+o(1)})$ max-flow algorithm of CKLPGS~\cite{CKLPGS:AlmostLinearMaxFlow}, we obtain an $O(m^{1+o(1)}n)$-time algorithm, for any $k$.
Nagamochi and Watanabe~\cite{nagamochi1993computing} presented an algorithm that computes the
$k$-edge-connected components of a digraph in $O(mn\min\{k,n,\sqrt{m}\})$  time. This remains the best bound for fixed $k>4$ and for sparse or moderately dense graphs.
Cheung, Lau, and Leung~\cite{boundedEC:SICOMP} presented an algorithm that computes the edge connectivity between all pairs of vertices (i.e., $\lambda(u,v)$ for all vertex pairs $u,v$) in $O(m^{\omega})$ time, where $\omega < 2.371339$ is the matrix multiplication exponent~\cite{FastMatrixMultiplication:SODA25}.  
Subsequently, Akmal and Jin~\cite{akmal_et_al:LIPIcs.ICALP.2023.11} developed an $\widetilde{O}((kn)^{\omega})$-time algorithm for computing the \emph{$k$-bounded edge connectivities} between all vertex pairs, that is, the values $\min\{k,\lambda(u,v)\}$ for all $u,v \in V$ (see also~\cite{boundedEC:SOSA24}).  
Note that knowing all $k$-bounded edge connectivities is sufficient to determine the $i$-edge-connected components of the graph for every $i \le k$.\lnote{Added references for all-pair connectivities.} 

Very recently, Hoppenworth, Saranurak, and Wang~\cite{FTSCC:FOCS25} established near-optimal bounds for a \emph{$k$-connectivity preserver} of a directed graph $G$, that is, a subgraph $H$ of $G$ such that, for all $i \le k$, the $i$-edge-connected components of $G$ and $H$ coincide. 
They showed that any digraph with $n$ vertices admits a $k$-connectivity preserver with $O(k 4^{k} n \log n)$ edges. 
Hence, a fast algorithm for constructing such a preserver would directly improve the running time bounds for computing $k$-edge-connected components. 
Still, the fast construction of these preservers remains an important open problem.

\subsection{Overview of our techniques}

A \outcomp{k} $S$ is a set of vertices of $G$ such that there are $k$ edges from $S$ to $V \setminus S$. \sidenote{Loukas: Defined $k$-out set.}
Following \cite{GKPP:3ECC}, we base our algorithm on the notion of \emph{minimal \outcomp{(k+1)}s}. 
Let $G$ be the input digraph, and let $U \subseteq V(G)$ be a set of $(k+1)$-edge-connected vertices. 
We denote by $\lambda(x,y)$ the size of the minimum $(x,y)$-cut in $G$.
Let $s \in U$ be a fixed vertex of $G$. Then, for any $v \in U$, we have $\lambda(v,s) \ge k+1$.
If there is a \outcomp{(k+1)} $S$ with $v\in S$ and $s\notin S$, then we let $M(v)$ denote the (inclusion-wise) minimum such $(k+1)$-out set. Otherwise, we let $M(v)=\bot$. 
Then, for any vertex $v \in U$ that is not $(k+2)$-edge-connected with $s$, the $(k+2)$-edge-connected component containing $v$ is a subset of $M(v)$. 
We use ``$M_R$'' to denote the same concept as ``$M$'' in the reverse graph $G^R$.
We show that these $M$-sets are sufficient to determine the $(k+2)$-edge-connectivity relation for the vertices in $U$. 
Moreover, if the $M$-set of a vertex exists, we show how to compute it in time proportional to its volume using a variant of the local search procedure of \cite{chechik2017faster}.

We emphasize that the condition $\lambda(v,s) \ge k+1$ for every $v \in U$ is essential in order to correctly partition $U$ into its $(k+2)$-edge-connected components. 
Consider, for example, \Cref{figure:small-example}. 
Suppose we wish to determine whether vertices $u$ and $v$ are $(k+2)$-edge-connected by examining their minimum cuts to a reference vertex $s$. 
In this case, the minimum $(u,s)$-cut and the minimum $(v,s)$-cut coincide, which implies that considering these cuts alone is insufficient to distinguish the connectivity between $u$ and $v$. 
Therefore, when refining a $(k+1)$-edge-connected component $S$ into its constituent $(k+2)$-edge-connected components, we must carefully select a reference vertex $s \in S$. 
We address this issue by introducing the following decomposition.\lnote{Added this paragraph.}  

\paragraph{$(k+1)$-edge-connected-component decomposition.} We begin by computing, a decomposition of the input digraph $G$, into a collection $H_1,\dots,H_t$ of directed graphs that preserve the $(k+2)$-edge-connected components of $G$. 
Each graph $H_i$ contains two types of vertices, ordinary and auxiliary, where the ordinary vertices in $H_i$ form a $(k+1)$-edge-connected component of $G$. 
Moreover, any two vertices $u$ and $v$ of $G$ are $(k+2)$-edge-connected if and only if $u$ and $v$ are $(k+2)$-edge-connected ordinary vertices of a graph $H_i$.
%

Here, we present a randomized $\Ot(k^2 m)$-time algorithm that computes the $(k+1)$-edge-connected-component (\ecc{(k+1)}) decomposition of a $k$-edge-connected digraph, for any value of $k$.  
In addition, we provide a deterministic near-linear-time algorithm that computes the $3$-edge-connected-component (\ecc{3}) decomposition of a general digraph.  
In both decompositions, the total size of the resulting graphs $H_1, \dots, H_t$ is $O(m)$.
To achieve this, we exploit the computation of the nodes corresponding to a poset representation of the minimal \outcomp{k}s~\cite{Gabow:Poset:TALG,Linear3ECC} combined with a contraction operation for the \ecc{(k+1)} decomposition and the gadget substitution operation of \cite{GKPP:3ECC} for the \ecc{3} decomposition. Now, our goal is to partition the ordinary vertices in each graph $H_i$ into $(k+2)$-edge-connected components.

%
%

\paragraph{Local search.} A \outcomp{(\le k)} $S$ is a set of vertices of $G$ such that there are at most $k$ edges from $S$ to $V \setminus S$.
The \emph{volume} of a vertex set $S$, denoted by $\mathit{vol}(S)$, is the number of edges $(v,w)$ whose tail $v$ is in $S$.
CHILP~\cite{chechik2017faster} considered the problem of computing a \outcomp{(\le k)} with the following restrictions:
Given a graph $G$, a vertex $v$, and two integers $\Delta$ and $k$ such that $\Delta \geq k\geq 1$, the goal is to identify a \outcomp{(\le k)} $S$ such that $v\in S$, all vertices in $S \setminus v$ are reachable from $v$ in $G[S]$, and $vol(S) \leq \Delta$, or to conclude that no such set $S$ exists.
CHILP~\cite{chechik2017faster} described a deterministic algorithm, $\mathtt{LocalSearch}(G,v,k,\Delta)$, with $O(\Delta \cdot k^{2k})$ running time, for the following relaxation of the problem:
(i) if there exists a \outcomp{(\le k)} $S$ such that $v\in S$, all vertices in $S\setminus v$ are reachable from $v$ in $G[S]$, and $vol(S) \leq \Delta$, then $\mathtt{LocalSearch}(G,v,k,\Delta)$ returns a \outcomp{(\le k)} $S'$ such that $v\in S'$, all vertices in $S' \setminus v$ are reachable from $v$ in $G[S']$, and $vol(S')\leq (2k)^{k+2}\Delta$, and (ii) otherwise it concludes that no such set $S$ exists.
Note that if a set $S$ satisfying the constraints exists, then in this relaxed version, procedure $\mathtt{LocalSearch}(G,v,k,\Delta)$ returns a set $S'$ which satisfies all constraints except the bound on the volume of $S'$, which can be larger by a factor of $(2k)^{k+2}$.
Later on, FNYSY~\cite{forster2020computing} presented a randomized algorithm for the relaxed version of the problem with $\widetilde{O}(\Delta \cdot k^{2})$ running time. 

Notice that the algorithms of \cite{chechik2017faster,forster2020computing} are useful for $\Delta < m/k$, as otherwise the whole vertex set $V(G)$ may be returned.
Here, we introduce a slight variant of the local search procedure of CHILP~\cite{chechik2017faster}, that we refer to as 
$\mathtt{LocalSearchForMSet}(G,v,s,k,\Delta)$,
which guarantees that the returned set $S'$ does not contain $s$. 
We make use of the following two simple observations: (i) the additional condition, $s \not\in S'$, implies that we can use the local search procedure without any restriction on the volume $\Delta$, i.e., it is not necessary to have $\Delta < m/k$, and (ii) if $\lambda(v,s) = k$ and $\mathit{vol}(M(v)) \le \Delta$, then $\mathtt{LocalSearchForMSet}(G,v,s,k,\Delta)$ is guaranteed to return $S'=M(v)$.

\paragraph{Sampling for large $M$-sets.}

We say that an $M$-set $M(v)$ is ``small'' if its volume is at most $m/\sqrt{n}$, and ``large'' otherwise. Using our modified local search from each vertex $v \not= s$, we can afford to compute all small $M$-sets. In order to account for the partition of the ordinary vertices due to large $M$-sets, we sample the edges uniformly at random. 
Specifically, by sampling $\Ot(\sqrt{n})$ edges, we can guarantee that for every $M$-set $U$ with volume more than $m/\sqrt{n}$, we have sampled at least one edge $(v,w)$ whose tail $v$ is in $U$ with high probability. 

\paragraph{Good partitions.} Since we cannot afford to compute all large $M$-sets explicitly, we must instead find a way to exploit the set $M(v)$ associated with each sampled vertex $v$, i.e., the tail of a sampled edge $(v,w)$.\sidenote{Loukas: Explained what we mean by ``sampled vertex''.} 
To this end, we define a partition $\mathcal{P}$ of $V$ to be \emph{good for $v$} if: 
\begin{itemize}
\item{$\mathcal{P}$ maintains the $(k+2)$-edge-connected components, and}
\item{for every ordinary vertex $u$ such that $v\in M(u)$, the partition $\mathcal{P}$ separates any ordinary vertex $w \notin M(u)$ from $u$.}
\end{itemize}
We show that it suffices to compute a partition that is good for each sampled vertex $v$.
To construct such partitions efficiently, we distinguish two cases depending on whether $\lambda(v, s) > k$ or $\lambda(v, s) \le k$. 
In the former case, we exploit the directed acyclic graph (DAG) representation of all minimum cuts introduced by Picard and Queyranne~\cite{PicardQ82}; in the latter, we leverage structural properties of latest minimum cuts. (A \emph{latest minimum $(x,y)$-cut} is defined as an inclusion-wise maximum \outcomp{\lambda(x,y)} $S$ such that $x \in S$ and $y \notin S$.)\sidenote{Loukas: Defined latest minimum cut.}

\ignore{
\paragraph{Triggered search.}

To compute the partition of the ordinary vertices induced by the large $M$-sets, we iteratively perform local searches initiated from the sampled vertices $v$, which we refer to as \emph{triggered searches}. 
Each triggered search returns a sequence of \outcomp{3}s, which we use to compute the partition of the ordinary vertices. We note that GKPP~\cite{GKPP:3ECC} also employs a triggered search procedure; however, their method for processing the computed out-sets fundamentally differs from ours. 
Moreover, unlike~\cite{GKPP:3ECC}, we cannot guarantee that the vertex $v$ from which the local search is initiated is ordinary; in fact, an identified out-set may not contain any ordinary vertices at all.\lnote{Added this note. Please check!}\cnote{Looks fine!}
In our approach, we introduce a novel technique based on a new concept of \emph{\intermediate{} graphs}. GKPP~\cite{GKPP:3ECC}, on the other hand, use the gadget substitution operation, combined with recursion and backward searches. (See next subsection.)

\paragraph{\intermediate{} Graphs.}

One of the main novelties of our approach is the notion of \emph{\intermediate{} graphs} that we introduce, which completely eliminates the need for recursive calls.
An informal description of a \intermediate{} graph is the following: suppose $X,Y$ are both $3$-out sets with $X\subseteq Y$.
To build a \intermediate{} graph we start with the induced subgraph $G[Y]$ and contract $X$ in a single vertex $z$. 
Then we add the $3$ outgoing edges from $Y$ (which may add at most $3$ vertices, the exit points of $Y$), and add $3$ parallel edges from each exit point of $Y$ to each vertex in $(Y\setminus X) \cup \set{z}$.
We process the \intermediate{} graphs and exploit their properties to create a coarse partitioning of the vertices in $Y\setminus X$, in a way that allows us to identify the final $4$-edge-connected components.
\nnote{Added brief description of why we need them. Please check}
\todo[inline]{Kip: It's good. }

In more detail regarding how the \intermediate{} graphs are employed, we define a sequence of \intermediate{} graphs that corresponds to the sequence of \outcomp{3}s computed by a triggered search. The useful property of these \intermediate{} graphs is that we can compute the partition that is induced from the sequence of \outcomp{3}s, by computing the SCCs and $3$-edge-connected components after removing a few specific edges. Note that since we sample $\Ot(\sqrt{n})$ vertices, we can afford to spend $O(m)$ time to process the \intermediate{} graphs induced from each triggered search. 
}

\subsection{Comparison to previous work}

The linear-time algorithm of Georgiadis, Italiano, and Kosinas~\cite{Linear3ECC} relies on the concept of strongly divergent spanning trees \cite{DomCert:TALG}, which is defined only for pairs of trees. Extending this approach to compute \conn{k} components for $k > 3$ would require $k - 1$ spanning trees with analogous edge- and vertex-disjointness properties. However, the result of Huck~\cite{Huck:Disproof} implies that it may not be possible to find more than two spanning trees with these properties.

At a high level, our algorithm follows a framework similar to that of~\cite{GKPP:3ECC}, but with several fundamental differences, which we highlight below.

\paragraph{Improved running time.} The framework in \cite{GKPP:3ECC} required time $\widetilde{O}(m\sqrt{m})$; the $\sqrt{m}$ factor appears by balancing the need to hit each component of interest (via random sampling) multiple times, and the time spent on the subsequent backward searches (that depended on the number of times each component was hit). In our framework, we overcome the need for backward searches and achieve a significant improvement in the running time, namely $\widetilde{O}(m\sqrt{n})$ for any constant $k$. 

\paragraph{Lack of gadgets.} 
The algorithm of GKPP~\cite{GKPP:3ECC} repeatedly detects \outcomp{2}s and \incomp{2}s, which are then replaced by connectivity-preserving gadgets of sufficiently small size. These gadgets possess the crucial property of reducing the (out- or in-)volume of the replaced component by a constant factor. This property enables GKPP to perform $\Omega(n)$ local searches,
each with a worst-case running time of $\Theta(m)$, without incurring a total cost of $\Omega(mn)$. Whenever a sequence of local searches becomes expensive, the corresponding gadget replacement significantly reduces the size of the graph, thereby ensuring a smaller amortized cost. 
For $k \ge 4$, however, no analogous gadget construction is known that would provide similar guarantees, making it unclear how to afford the time required for $\Omega(n)$ local searches.

\ignore{
The algorithm of GKPP~\cite{GKPP:3ECC} repeatedly detects \outcomp{2}s and \incomp{2}s, which are replaced with connectivity-preserving gadgets of sufficiently small size. These gadgets have the crucial property of reducing the (out- or in-) volume of the replaced component by a constant factor. This allows GKPP to perform $\Omega(n)$ 
local searches\lnote{Replaced ``triggered searches'' with ``local searches'' because now we don't define the former yet.}, each with a worst-case running time of $\Theta(m)$, without incurring a total cost of $\Omega(mn)$. Whenever a sequence of local searches 
takes a long time, the corresponding gadget replacement significantly reduces the graph size, leading to a smaller amortized cost. For $k \ge 4$, however, we are not aware of any gadget construction with similar guarantees that would make it feasible to afford the time required for $\Omega(n)$ local searches.
}

\paragraph{No recursions.} The algorithm of GKPP~\cite{GKPP:3ECC} refines the current partition of the ordinary vertices after detecting an \outcomp{2} or an \incomp{2} $S$, by simply separating the vertices inside $S$ from those outside $S$. 
However, vertices inside $S$ that are not $3$-edge-connected can only be further separated through a recursive call that replaces $V \setminus S$ with a gadget. 
A key technical difficulty of this approach is that the same edge may participate in multiple recursive calls, making the analysis tedious and heavily dependent on properties specific to the case $k=3$, which do not extend to $k \ge 4$. 
In contrast, our new framework leverages the structural properties of minimal \outcomp{k}s, eliminating the need for recursion entirely.

\paragraph{No triggered or backward searches.} 
Another major complication in~\cite{GKPP:3ECC} arises from handling intersecting \outcomp{2}s. 
Consider a local search initiated from a sampled vertex $v$ that identifies a large \outcomp{2} $S$. 
There may exist several \outcomp{2}s $S_1, \ldots, S_k$ that intersect $S$, such that the ordinary vertices in different sets $S_i$ are not $3$-edge-connected. 
To separate these vertices, it is necessary to identify all such sets $S_1, \ldots, S_k$; however, this cannot be achieved through sampling alone, since $S_i \setminus S$ may be very small. 
To address this issue, GKPP~\cite{GKPP:3ECC} employ a sequence of \emph{triggered} and \emph{backward} searches, that is, local searches initiated from specific vertices of previously identified \outcomp{2} sets. 
Through a complex analysis, they show that it suffices to search for sets of volume $O(\sqrt{m})$, resulting in an overall running time of $\widetilde{O}(m \sqrt{m})$. 
In contrast, our framework completely avoids this complication by exploiting the structural properties of minimal \outcomp{k}s.

\paragraph{Simplicity.} Perhaps the two most technically involved contributions of GKPP~\cite{GKPP:3ECC} for the $k=3$ case were the gadget-substitution procedure and the backward searches. Our framework eliminates the need for both, giving an overall significantly simpler algorithm.

\paragraph{Extension for general $k$.} 
Although achieving an $o(mn)$-time algorithm for computing the $k$-edge-connected components of a general digraph, for any fixed constant $k > 4$, remains out of reach, we believe that our techniques provide a promising step toward this goal. 
The main obstacles to attaining such a bound within our framework are the lack of (i) a fast decomposition of a general digraph into $(k-1)$-edge-connected components, and (ii) an efficient algorithm for computing a good partition for a vertex $v$ with $\lambda(v, s) \le k-2$.

\ignore{
As discussed earlier, for the case $k=3$, the use of gadgets was essential in bounding the running time in~\cite{GKPP:3ECC}. To extend the framework of \cite{GKPP:3ECC} to a larger $k$, it is necessary that, given a small \outcomp{k} $S$, one can construct a gadget for both $S$ \emph{and} $V \setminus S$ in time proportional to the volume of $S$.
While our algorithm does not rely on gadgets, our framework permits their construction in $O(m)$ time (since we can afford $O(m)$ time per sampled vertex), which could prove useful for general $k$.
Moreover, it appears that our framework yields an $\widetilde{O}(m\sqrt{n})$-time algorithm for computing the \conn{k} components of a \conn{(k-2)} digraph.

More generally, if one is given a linear-time algorithm for computing the \conn{(k-1)} components of an arbitrary digraph $G$, along with a decomposition into $(k-1)$-edge-connected components, then our framework implies an $\Ot(m\sqrt{n})$-time algorithm for computing the \conn{k} components of $G$. However, we are not aware of whether such preconditions can be achieved for $k > 4$.
}

\ignore{
\paragraph{More generalizable.} As mentioned before, for $k=3$ the use of gadgets was crucial to bound the running time in \cite{GKPP:3ECC}. For this to work correctly, given a small \outcomp{k} $S$, one should be able to build a gadget both for $S$ {\emph{and}} for $V\setminus S$, in time proportional only to the volume of $S$. Although our algorithm does not utilize gadgets, our framework would still allow for their construction in $O(m)$ time (since we can afford $O(m)$ time for each sampled vertex), which could be useful for general $k$.\todo{We should check this!}

\paragraph{Other extensions.} It seems that our framework can give an $\widetilde{O}(m\sqrt{n})$ algorithm for \conn{k} components in \conn{(k-2)} graphs. More generally, if we are provided with a linear-time algorithm for computing \conn{(k-1)} components of a digraph $G$ (not necessarily $(k-2)$-edge-connected) together with a $(k-1)$-connected-components decomposition, our framework implies an $O(m\sqrt{n})$-time algorithm for computing the \conn{k} components of $G$. However, we are not aware if the stated preconditions can be justified for $k>4$. 
%
}

%% file: technical.tex
\section{Technical Overview}
\label{sec:technical}

In this section, we provide a high-level description of our algorithm and present the proof of our main result.

\subsection{Preliminaries and notation}

Let $G$ be a strongly connected graph with $n$ vertices and $m$ edges. 
In everything that follows, we assume that we work on the graph $G$, and all graph elements (e.g., vertices, edges, cuts, etc.) refer to $G$.

Let $S$ be a set of vertices. Every edge of the form $(x,y)$ with $x\in S$ and $y\notin S$ is called an \emph{outgoing edge} of $S$. The number of all outgoing edges of $S$ is denoted as $\mathit{out}(S)$. If $\mathit{out}(S)=k$, then $S$ is called a \emph{$k$-out set}. The tail of every outgoing edge of $S$ is called a \emph{boundary point} of $S$, and the head of every outgoing edge of $S$ is called an \emph{exit point} of $S$. We let $\mathit{vol}(S)$ denote \emph{the volume} of $S$: i.e., the number of edges whose tail is in $S$. When we say that an edge $e$ lies in a vertex set $U$, we mean that both endpoints of $e$ are in $U$. Otherwise, we say that $e$ does not lie in $U$.

Let $X$ and $Y$ be two disjoint sets of vertices. Then, every set of vertices $S$ with $X\subseteq S$ and $S\cap Y=\emptyset$ is called an $(X,Y)$-cut. (Notice that the order of $X$ and $Y$ here is important.) We may also say that $S$ separates $X$ and $Y$ (where, again, the order of $X$ and $Y$ in this expression is important). If $X$ or $Y$ consists of a single vertex, we may substitute it with the vertex that it consists of, and so we may speak of $(x,Y)$-cuts, or $(x,y)$-cuts, or $(X,y)$-cuts, where $x$ and $y$ are vertices. 
We denote by $\lambda(X,Y)$ the size of the minimum $(X,Y)$-cut in $G$.

Let $\mathcal{P}$ be a partition of $V$. If two vertices $x$ and $y$ belong to different sets from $\mathcal{P}$, then we say that \emph{$\mathcal{P}$ separates $x$ and $y$}. We say that \emph{$\mathcal{P}$ maintains the $(k+2)$-edge-connected components} if every two $(k+2)$-edge-connected vertices of $G$ are in the same set of $\mathcal{P}$.

\subsection{$(k+1)$-edge-connected component decomposition}

Our main approach is to exploit the notion of \emph{minimal \outcomp{(k+1)}s}, defined with respect to a fixed source vertex $s$. To make this scheme work, we must ensure that every other ordinary vertex is $(k+1)$-edge-connected to $s$. We achieve this via the following decomposition.

\begin{theorem}[$(k+1)$-Edge-Connected Component (\ecc{(k+1)}) Decomposition]
\label{theorem:keccdecomp}
Let $G$ be a $k$-edge-connected digraph with $n$ vertices and $m$ edges. In 
$\Ot(k^2 m)$ time we can construct a collection $H_1,\dots,H_t$ of graphs such that:
\begin{itemize}
\item{All graphs $H_1,\dots,H_t$ are $k$-edge-connected.} 
\item{The graphs $H_1,\dots,H_t$ have $O(n)$ vertices and $O(m)$ edges in total.}
\item{The vertices $V(H_i)$ of each graph $H_i$ are partitioned into two sets of vertices, ordinary and auxiliary.}
\item{For each $i \in \{1,\ldots,t\}$, the ordinary vertices of $V(H_i)$ are $(k+1)$-edge-connected.}
\item{For every vertex of $G$, there is exactly one graph among $H_1,\dots,H_t$ that contains it as an ordinary vertex.}
\item{Every two vertices $u$ and $v$ of $G$ are $(k+2)$-edge-connected if and only if there is an $i\in\{1,\dots,t\}$ such that $u$ and $v$ are $(k+2)$-edge-connected ordinary vertices of $H_i$.}
\end{itemize}
The algorithm is randomized and returns a correct decomposition with probability at least $1 - \delta$, where $\delta \in (0,1)$ is a parameter specified by the user. If the algorithm fails to compute the correct decomposition, it detects this event and reports an error.
\end{theorem}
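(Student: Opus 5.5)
The plan is to decompose $G$ recursively along minimal $k$-out sets and $k$-in sets, which are exactly the sets whose boundaries separate vertices that are not $(k+1)$-edge-connected. Each piece is replaced by a graph in which one side of the cut is contracted to a single auxiliary vertex. The cut-preservation property we rely on is as follows. Let $S$ be a $k$-out set in the $k$-edge-connected graph $G$. Let $G_1$ be obtained by contracting $V\setminus S$ into one auxiliary vertex $z$, and $G_2$ by contracting $S$ into one auxiliary vertex $z'$. Then for $u,v\in S$ we have $\lambda_G(u,v)=\lambda_{G_1}(u,v)$, and symmetrically for $u,v\in V\setminus S$ in $G_2$, at least up to the threshold $k+2$.

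The proof of this property is by uncrossing. Take a minimum $(u,v)$-cut $X$ in $G$ with $u,v\in S$. By submodularity, $\mathit{out}(X\cap S)+\mathit{out}(X\cup S)\le \mathit{out}(X)+\mathit{out}(S)$. Since $\mathit{out}(X\cup S)\ge k$ by $k$-edge-connectivity and $\mathit{out}(S)=k$, we get $\mathit{out}(X\cap S)\le \mathit{out}(X)$. The set $X\cap S$ does not contain $z$, so it is already a cut in $G_1$. A symmetric argument, using $X\cup \overline S$, handles the in-direction. Contraction never decreases cut values, so the reverse inequality is immediate. The same holds for $k$-in sets, and $k$-edge-connectivity of the contracted graphs is inherited. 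Every vertex is ordinary in exactly one final graph, namely the leaf piece whose ordinary vertex set contains no further $k$-cut. Together with the fact that $(k+1)$- and $(k+2)$-connectivity are preserved, this gives the last three bullets of Theorem~\ref{theorem:keccdecomp}.

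To get near-linear time, I would not split one cut at a time. Instead I would compute the $(k+1)$-edge-connected components together with a laminar family of the relevant cuts, in the spirit of Gabow's poset of minimal $k$-out sets~\cite{Gabow:Poset:TALG,Linear3ECC}. The steps are as follows.

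\begin{enumerate}
\item Fix a root $s$ and compute a maximum packing of $k$ arc-disjoint out-arborescences from $s$, and likewise in-arborescences, in $\Ot(k^2 m)$ time using Gabow's algorithm or a randomized variant.
\item From these packings, derive for every vertex $v$ the minimal $k$-in and $k$-out sets separating $v$ from $s$. These form the nodes of the poset, and the poset has laminar structure.
\item Process the laminar family bottom-up, contracting each minimal set into an auxiliary vertex in its parent's graph and, symmetrically, contracting the outside into an auxiliary vertex in the child's graph.
\item Handle separately the vertices that are $(k+1)$-connected to $s$ but separated among themselves by cuts not isolating $s$. I would handle these by re-rooting inside each piece; each contraction step only adds $O(k)$ edges per auxiliary vertex.
\end{enumerate}

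Laminarity gives the $O(n)$ bound on vertices and the $O(m)$ bound on edges, since every original edge lies in exactly one piece or becomes one of $k$ cut edges that are charged to a tree node.

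Randomization enters only through the fast arborescence packing, or through a randomized verification. Errors are detectable: we check that each piece's ordinary vertices are $(k+1)$-edge-connected, for example via a certificate from the packing. Repeating the procedure $O(\log(1/\delta))$ times gives success probability $1-\delta$.

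The main obstacle is step 4 together with the laminarity claim in step 2. Minimal $k$-out sets with respect to $s$ can cross $k$-in sets, and the $(k+1)$-edge-connected components are not simply the atoms of one laminar family. For this I would first try Gabow's poset-minset machinery, which computes the $(k+1)$-components directly in $\Ot(k^2 m)$. I would then build the decomposition from those components by contracting, for each component $C$, each connected part of the quotient structure outside $C$ while preserving $k$-cuts. The difficulty is keeping the total size linear, so that no edge is copied into many $H_i$.
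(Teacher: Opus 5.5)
Your cut-preservation step is sound. The uncrossing argument is a valid replacement for the paper's splitting-plus-Menger proof of its contraction lemma; you only need to add the case $X\cup S=V$, where $X\supseteq V\setminus S$ is already a cut of the contracted graph. Naive recursive splitting along $k$-cuts would also give a correct decomposition.

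\textbf{The gap.} The gap is in what makes the theorem nontrivial: the $\widetilde{O}(k^2m)$ time and the $O(n)$ vertices and $O(m)$ edges in total. There your plan rests on a false claim. Minimal $k$-out sets with respect to $s$ are not laminar, not even among themselves. Take $k$ parallel copies of each of the edges $(s,u)$, $(s,v)$, $(u,w)$, $(u,x)$, $(x,w)$, $(v,w)$, $(v,y)$, $(y,w)$, $(w,s)$. This graph is $k$-edge-connected, and $\mu(u)=\{u,x,w\}$ and $\mu(v)=\{v,y,w\}$ meet in $\{w\}$ without being nested.

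Your charging argument fails even for nested sets. When the outside of a set is contracted to $z$, every edge entering the set reappears as an edge out of $z$. So an edge entering the bottom of a long chain of nested sets is copied into every piece along the chain. You flag this yourself at the end but do not resolve it. In addition, listing the minimal sets explicitly in step 2 can cost $\Theta(n^2)$. Your randomization and verification story (a randomized packing plus a ``certificate'' that each piece's ordinary vertices are $(k+1)$-edge-connected) is not backed by any concrete procedure.

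\textbf{What the paper does instead.} The paper needs no laminarity. It uses Gabow's labeling graph only to obtain the classes $[v]$ (vertices with the same minimal $k$-out set) in a proper, i.e.\ topological, order. It then maintains an evolving graph $G'$:
\begin{itemize}
\item For each class, it finds the minimal $k$-out set $\mu'(v)$ in the current $G'$ by a local search that avoids $s$. The search doubles $\Delta$, each attempt costs $O(k^2\Delta)$, and it is repeated $\lceil\log_2(2n/\delta)\rceil$ times.
\item It contracts $\mu'(v)$ to a single vertex of out-degree $k$. Later sets simply swallow such vertices, so crossings are harmless. Because internal edges are deleted, the total volume of all found sets is $O(m+kn)=O(m)$.
\item It builds the piece $G_{[v]}$ by contracting the complement of $\mu'(v)$, with a reduction rule keeping only $\min\{k,\rho\}$ copies of $(v_{\bar S},u)$ for each $u$. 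This is exactly what caps the piece at $O(\mathit{vol}(\mu'(v)))$ edges and avoids the repeated copying described above.
\end{itemize}
Randomness enters only through the local search. Failures are detectable because $\mu'(v)\neq\emptyset$ whenever $[v]\neq[s]$.

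\textbf{Step 4.} This step is only gestured at, and it misdescribes what remains to be separated. What remains are vertices within a common class $[v]$, which can only be separated by $k$-in sets. The paper reruns the same procedure on $(G_{[v]})^R$, rooted at the contracted vertex. This is justified by two facts. First, $u\leftrightarrow_{k+1}v$ if and only if $[u]=[v]$ and $[u]_R=[v]_R$. Second, the Refined Cuts lemma shows that a $k$-cut separating two vertices of one class can be taken inside $\mu(v)$, so it survives the contraction of the complement.
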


We note that for any constant $k$, we can avoid the use of randomization in the algorithm of \Cref{theorem:keccdecomp}. Hence, we obtain a deterministic near-linear-time algorithm that 
computes a $(k+1)$-edge-connected-component (\ecc{(k+1)}) decomposition of a $k$-edge-connected digraph, for any constant $k$. Our algorithm exploits the computation of the nodes corresponding to a poset representation of the minimal \outcomp{k}s~\cite{Gabow:Poset:TALG} combined with a contraction operation. 

For $k=2$, we give a deterministic near-linear-time algorithm that computes a $3$-edge-connected-component decomposition of a \emph{general} digraph. 

\begin{theorem}[3-Edge-Connected Component (\ecc{3}) Decomposition]
\label{theorem:3eccdecomp}
Let $G$ be a digraph with $n$ vertices and $m$ edges. In $\Ot(m+n)$ time, we can construct a collection $H_1,\dots,H_t$ of graphs such that:
\begin{itemize}
\item{All graphs $H_1,\dots,H_t$ are strongly connected.}
\item{The graphs $H_1,\dots,H_t$ have $O(m)$ edges in total.}\lnote{Added the total number of edges. I think we can have $O(m)$ auxiliary vertices (and $n$ ordinary).}
\item{The vertices $V(H_i)$ of each graph $H_i$ are partitioned into two sets of vertices, ordinary and auxiliary.}
\item{For each $i \in \{1,\ldots,t\}$, the ordinary vertices of $V(H_i)$ are $3$-edge-connected.}
\item{For every vertex of $G$, there is exactly one graph among $H_1,\dots,H_t$ that contains it as an ordinary vertex.}
\item{Every two vertices $u$ and $v$ of $G$ are $4$-edge-connected if and only if there is an $i\in\{1,\dots,t\}$ such that $u$ and $v$ are $4$-edge-connected ordinary vertices of $H_i$.}
\end{itemize}
\end{theorem}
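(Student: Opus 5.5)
We build the decomposition in two phases. The first phase computes a \ecc{2} decomposition, which is just the strongly connected components. The second phase refines each of those into a \ecc{3} decomposition using the $2$-edge-connected component structure and the gadget substitution of \cite{GKPP:3ECC}.

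\textbf{Phase 1 (strongly connected components).} We compute the SCCs $C_1,\dots,C_r$ of $G$ in linear time. Two vertices in different SCCs are not even $1$-edge-connected, so $4$-edge-connectivity only ever holds inside one SCC. Moreover, every path between two vertices of $C_j$ stays inside $C_j$. Hence $u,v\in C_j$ are $4$-edge-connected in $G$ if and only if they are $4$-edge-connected in $G[C_j]$. We may therefore treat each $G[C_j]$ separately: it is strongly connected, and the total size over all $j$ is $O(m+n)$.

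\textbf{Phase 2 (splitting along $1$-cuts).} Fix a strongly connected $G'=G[C_j]$. We compute its $2$-edge-connected components in linear time \cite{georgiadis20162edge,GIP20:SICOMP}, together with the bridge and dominator-tree structure with respect to a root $s$. For each $2$-edge-connected component $B$ we want a strongly connected graph $H_B$ containing $B$ as ordinary vertices, in which $V\setminus B$ is replaced by small auxiliary gadgets that preserve all pairwise connectivities among the vertices of $B$, in the style of the $1$-cut gadgets of \cite{GKPP:3ECC}. The structure of $1$-cuts gives the key property: every $1$-in or $1$-out set separating parts of $G'$ is captured by a bridge of a flow graph rooted at $s$ or at its reverse. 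We contract each maximal such side not containing $B$ into one auxiliary vertex, keeping its edges to and from $B$ (so parallel edges are allowed). The crucial claim is that $\lambda_{H_B}(u,v)=\lambda_{G'}(u,v)$ for all $u,v\in B$, capped at $4$. In one direction, a cut in $H_B$ lifts to a cut in $G'$ of the same size after uncontracting. In the other direction, any minimum $(u,v)$-cut in $G'$ can be uncrossed with the $1$-cut defining each contracted side, using submodularity of $\mathit{out}(\cdot)$ and the fact that each such side has a single entering or leaving edge. Uncrossing yields a cut of no larger size that does not split that side. Each edge of $G'$ appears in $O(1)$ graphs $H_B$, because every edge lies inside one component or is charged to a constant number of contracted sides. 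This gives $O(m)$ total edges.

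\textbf{Phase 3 (recursing to $3$-edge-connectivity).} Each $H_B$ is $2$-edge-connected on its ordinary vertices but may not yet be $3$-edge-connected. We repeat the same scheme inside $H_B$. We find its $3$-edge-connected components and the corresponding \outcomp{2}/\incomp{2} structure using the minimal-\outcomp{2} poset technique of \cite{Gabow:Poset:TALG,Linear3ECC}, which runs in linear time on digraphs whose relevant vertices are $2$-edge-connected. We then replace each side of a $2$-cut not containing the target component by the \cite{GKPP:3ECC} gadget, which preserves connectivities up to $3$ among the remaining vertices. Since we only care whether $\lambda\ge 4$, we must check that uncrossing a $3$-cut with a $2$-cut side still works. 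For this we use submodularity together with the property that the ordinary vertices are $2$-edge-connected on both sides.

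\textbf{Main obstacle.} The hard part is ensuring near-linear total size when gadgets are nested: one edge might be copied into many graphs across the two refinement levels. We would handle this by arranging, for each level, the cut sides in a laminar (tree-like) family, as in the poset and dominator-tree representation. Each graph $H_i$ then corresponds to a node of the tree, and its edges are the original edges inside that node plus $O(1)$ gadget edges per child or parent. Summing over tree nodes gives $O(m)$ edges. Logarithmic factors from the sorting or data structures used for the poset computations account for the $\Ot$.
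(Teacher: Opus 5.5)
Your outline has the right ingredients: SCCs first, then a GKPP-style decomposition along $1$-cuts, then the minimal-$2$-out-set poset of Linear3ECC combined with GKPP gadgets. For the $1$-cut step, the paper simply invokes the 2CLG decomposition of \Cref{theorem:2clgdecomp}. You could cite it instead of relying on your unproved ``each edge lies in $O(1)$ graphs'' claim.

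\textbf{The gap is in Phase~3, where the real work is.} You never say which $2$-cut sides are replaced for a given target component. The fix you propose for the size bound also fails. You assume the relevant sides form a laminar family, so that each $H_i$ is a tree node carrying $O(1)$ gadget edges per child or parent. That is false even for one root and one direction: the sets $\mu(v)$ are reachability sets of Gabow's labeling DAG, not nodes of a tree, and they can cross. For example, take the $2$-edge-connected digraph on $\{s,a,b,c\}$ with edges $a\to c$ and $b\to c$ (each twice), and $c\to a$, $c\to b$, $a\to s$, $b\to s$, $s\to a$, $s\to b$. Here $\mu(a)=\{a,c\}$ and $\mu(b)=\{b,c\}$, and neither contains the other. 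Replacing ``each side of a $2$-cut'' in both directions at once only adds more crossings. So your sketch establishes neither the $O(m)$ total size nor that the ordinary vertices of each $H_i$ are $3$-edge-connected.

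\textbf{How the paper handles this.} It never needs laminarity.
\begin{itemize}
\item In the first phase it processes the classes $[v]$ in proper order (\Cref{definition:proper}) on an evolving graph. A local search that avoids $s$ finds $\mu'(v)$ in $O(\mathit{vol}(\mu'(v)))$ time.
\item $G_{[v]}$ is built by replacing the complement of $\mu'(v)$ with an in-gadget; the reduction rule keeps its size $O(\mathit{vol}(\mu'(v)))$. Then $\mu'(v)$ is immediately replaced by its out-gadget in the evolving graph.
\item Crossings are therefore harmless. Every vertex of out-degree at least $2$, including earlier gadget centers, is contracted the first time it lies in a discovered set. Together with Lemma~6.10 of GKPP, this bounds the total volume by $O(m+n)$.
\item A second phase reruns the same procedure on the reverse of each $G_{[v]}$, rooted at its in-gadget center. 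Its correctness rests on \Cref{cor:refined-cuts}. If $x,y\in[v]$ are not $3$-edge-connected, some $2$-out set of $G^R$ separating them contains no ordinary or $2$-in auxiliary vertex outside $\mu(v)$, so it survives in $G_{[v]}$.
\end{itemize}
This refined-cuts argument is what lets the two directions be handled one after the other rather than simultaneously. Nothing in your proposal plays its role.

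\textbf{A smaller issue with the gadget property.} The gadget must preserve $\ell$-edge-connectivity for every $\ell\ge 3$ (Lemma~3.6 of GKPP), not only ``connectivities up to $3$''. Your uncrossing sketch for a $3$-cut $X$ against a $2$-out side $S$ also breaks down. When $X\cap S$ contains only out-degree-$1$ auxiliary vertices, $\mathit{out}(X\cap S)=1$ is possible, and submodularity then only gives $\mathit{out}(X\cup S)\le 4$. This is why the GKPP gadget keeps such vertices rather than contracting all of $S$. It is also why the input must be a genuine 2CLG, not merely a graph whose ordinary vertices are $2$-edge-connected.
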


Similarly to the \ecc{(k+1)} decomposition, we take advantage of the computation of the nodes corresponding to a poset representation of the minimal \outcomp{2}s~\cite{Linear3ECC} combined with the gadget substitution operation of \cite{GKPP:3ECC}. 
The details are provided in \Cref{sec:decompositions}.

\subsection{Computing the $(k+2)$-edge-connected components through the minimum $(k+1)$-out sets}
\label{section:computing_(k+2)-eccs}

\Cref{theorem:keccdecomp} implies that the computation of the $(k+2)$-edge-connected components of $G$ reduces to that of the $(k+2)$-edge-connected components of the graphs $H_1,\dots,H_t$. Thus, from now on we may assume that the input graph $G$ satisfies the property that its vertex set can be partitioned into two kinds of vertices: ordinary and auxiliary. The ordinary vertices of $G$ are $(k+1)$-edge-connected, and the goal is to compute a partition $\mathcal{P}$ of $V(G)$ such that two ordinary vertices are $(k+2)$-edge-connected if and only if they belong to the same set in $\mathcal{P}$. 

We achieve our goal through a randomized algorithm which may have a one-sided error: with probability at most $\delta$ (for any fixed $\delta>0$ specified by the user), it may be that two ordinary vertices of $G$ that are not $(k+2)$-edge-connected belong to the same set in $\mathcal{P}$. However, if two ordinary vertices are $(k+2)$-edge-connected, then it is certain that they appear in the same set in $\mathcal{P}$. The precise formulation of our main result is given in the following (where we assume that $k$ is a fixed constant):

\begin{restatable}{theorem}{maintheorem}
\label{theorem:main}
Let $\delta$ be a number with $0<\delta<1$. There is an algorithm (see \Cref{algorithm:main}) that runs in $O(m\sqrt{n}\log(n/\delta))$ time, where $n$ is the number of the ordinary vertices of $G$, and outputs a partition $\mathcal{P}$ of $V(G)$ with the following guarantees:
\begin{itemize}
\item{Every two $(k+2)$-edge-connected ordinary vertices of $G$ are in the same set from $\mathcal{P}$.}
\item{With probability at least $1-\delta$, every two ordinary vertices of $G$ that are not $(k+2)$-edge-connected are separated by $\mathcal{P}$.}
\end{itemize} 
\end{restatable}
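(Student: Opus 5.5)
The plan is to build the partition from the $M$-sets $M(v)$ and $M_R(v)$ defined with respect to a fixed ordinary source $s$. First, a structural characterization. Fix an ordinary vertex $s$; every ordinary $v$ has $\lambda(v,s)\ge k+1$ and $\lambda(s,v)\ge k+1$. I would prove that two ordinary vertices $u,w$ are \emph{not} $(k+2)$-edge-connected iff some $M$-set or $M_R$-set (over ordinary vertices) separates them. More precisely, there should be an ordinary $x$ with $M(x)\neq\bot$ (or $M_R(x)\neq\bot$) containing exactly one of $u,w$. The ``if'' direction is immediate, since $M(x)$ is a $(k+1)$-out set. For ``only if'', suppose a $(k+1)$-out set $S$ has $u\in S$, $w\notin S$. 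If $s\notin S$, then $M(u)\subseteq S$ excludes $w$. If $s\in S$, complement: $V\setminus S$ is a $(k+1)$-in set containing $w$ and not $s$, so $M_R(w)$ separates. Submodularity together with $k$-edge-connectivity shows that the minimum $(k+1)$-out set containing $v$ and avoiding $s$ is well defined: the intersection of two such sets has out-degree at most $k+1$, and strictly less is impossible. So it suffices to separate, for every ordinary $u$ with $M(u)\neq\bot$, the set $M(u)$ from the ordinary vertices outside it, and symmetrically in $G^R$. Refining by all these sets only ever splits non-$(k+2)$-edge-connected pairs, which gives the first (deterministic) guarantee.

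Second, the small sets. For each ordinary $v\neq s$, run $\mathtt{LocalSearchForMSet}(G,v,s,k,\Delta)$ with $\Delta=m/\sqrt n$. By observation (ii), it returns exactly $M(v)$ whenever $\lambda(v,s)=k+1$ and $\mathit{vol}(M(v))\le\Delta$. The total cost is $O(n\cdot k^{O(1)}\Delta)=O(m\sqrt n)$ for constant $k$. We refine $\mathcal P$ by each returned set. Refining by $n$ sets of total size $O(m\sqrt n)$ can be done by hashing with random labels: add a random label to each vertex of each set, then group vertices by the sum. This adds only a $\log$ factor and a collision probability that I fold into $\delta$. Alternatively, partition refinement achieves the same cost deterministically.

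Third, the large sets. Sample $O(\sqrt n\log(n/\delta))$ edges uniformly at random and take their tails. Any $M(u)$ of volume greater than $m/\sqrt n$ contains some sampled tail with probability $1-\delta/n^2$. There are at most $2n$ relevant sets, so a union bound gives probability at least $1-\delta$ overall. For each sampled $v$, I compute in $O(m)$ time (for constant $k$) a partition that is \emph{good for $v$}, and then take the common refinement of all of these. Goodness for $v\in M(u)$ separates $M(u)$ from outside ordinary vertices, which covers every large $M$-set. The total cost is $O(m\sqrt n\log(n/\delta))$.

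The main obstacle is computing a good partition for $v$ in $O(m)$ time. The difficulty is that we need \emph{all} ordinary $u$ with $v\in M(u)$ at once, without doing a search per $u$. I would split on $\lambda(v,s)$. If $\lambda(v,s)\ge k+1$ (hence $=k+1$), compute a max flow from $v$ to $s$ with $k+1$ augmenting paths in $O(km)$ time. Then build the Picard–Queyranne DAG of residual SCCs, whose closures are exactly the minimum $(v,s)$-cuts. For $u$ with $v\in M(u)$, $M(u)$ is the minimal closure containing $u$, i.e.\ the set of SCCs reachable from $u$'s SCC in the residual graph. Separating these for all $u$ at once amounts to partitioning by the SCCs of the DAG, restricted appropriately, which is linear. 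If $\lambda(v,s)=k$, then $v$ lies in a $k$-out set, and I would use the latest minimum $(v,s)$-cut $L$. Every $M(u)\ni v$ with $s\notin M(u)$ should interact with $L$ via uncrossing: $M(u)\cap L$ or $M(u)\cup L$ is tight. Inside $L$ one can then contract $V\setminus L$ to a sink, recovering the $\lambda=k+1$ situation relative to a modified target. This latter case is the step I am least sure of and would work out most carefully.
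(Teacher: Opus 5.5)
You follow the paper's framework: local searches for the $M$-sets of volume at most $m/\sqrt{n}$, edge sampling to hit the large ones, and a partition that is good for each sampled tail, built from the SCCs of the Picard--Queyranne graph when $\lambda(v,s)=k+1$. Those parts are fine. Your direct characterization (either $M(u)\neq\bot$ and $w\notin M(u)$, or $M_R(w)\neq\bot$ and $u\notin M_R(w)$) is a tidy replacement for \Cref{lemma:distinctM,lemma:distinctM2}. Refining by each returned set is also safe, whatever nonempty outputs the search produces.

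Two small fixes are needed. First, the search must be called with parameter $k+1$. With parameter $k$ it reverses only $k$ paths and returns $\emptyset$ whenever $\lambda(v,s)=k+1$. Second, to show the SCC partition maintains the $(k+2)$-edge-connected components, note that in $\mathit{PQ}$ every vertex reaches $v$ and $s$ reaches every vertex. So if $u$ cannot reach $w$, then $\mathit{reach}_{\mathit{PQ}}(u)$ is a $(v,s)$-mincut, i.e., a $(k+1)$-out set separating $u$ from $w$.

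The genuine gap is the case $\lambda(v,s)=k$, which cannot be skipped. You do not control which vertex of a large $M(u)$ the sampling hits, and it may be an auxiliary vertex with $\lambda(v,s)=k$. Your sketch points the wrong way. The latest $(v,s)$-mincut $L$ is a $k$-out set avoiding $s$, so it contains no ordinary vertex. Both $u$ and $w$ lie outside $L$, and $M(u)\cap L$ contains neither. Contracting $V\setminus L$ and working inside $L$ therefore merges $u$ and $w$ and can never separate them.

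The argument that works goes the other way.
\begin{itemize}
\item By submodularity, with $\mathit{out}(M(u)\cap L)\ge k$ (a $(v,s)$-cut) and $\mathit{out}(M(u)\cup L)\ge k+1$ (a $(u,s)$-cut), the set $U=M(u)\cup L$ is a $(k+1)$-out set containing $u$ but not $w$ or $s$.
\item Contract $L$ itself, not its complement, into a vertex $z$ of out-degree $k$. This does not decrease connectivities between vertices outside $L$.
\item If all $k$ out-edges of $z$ are out-edges of $U$, deleting them leaves $U$ with out-degree $1$. Meanwhile $(k+2)$-edge-connected pairs remain $2$-edge-connected. So the $2$-edge-connected components of $\widetilde{G}\setminus\{e_1,\dots,e_k\}$ separate $u$ from $w$.
\item Otherwise some out-edge $(z,x_i)$ lies inside $U$; contract $x_i$ into $z$ to get $z_i$. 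Here choosing the \emph{latest} mincut is essential. \Cref{lemma:include_outgoing_edge} gives $\lambda(L\cup\{x_i\},s)\ge k+1$, so the image of $U$ is a minimum $(z_i,s)$-cut. The Picard--Queyranne partition for $z_i$ then separates $u$ from $w$, since it works for any $(k+1)$-out set containing $z_i$ and $u$ but not $w$ or $s$.
\end{itemize}
Doing this for every $i$ costs $O(k^2m)$ per sampled vertex. Without this case analysis, large $M$-sets whose sampled tails have $\lambda(v,s)=k$ are not handled, and the second guarantee of the theorem is not established.
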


In order to establish \Cref{theorem:main}, we utilize the concept of the minimum $(k+2)$-out sets. Specifically, we first fix an arbitrary ordinary vertex $s$ of $G$. Then, for every ordinary vertex $v$ such that there is a $(k+1)$-out set that separates $v$ and $s$, we let $M(v)$ denote the (inclusion-wise) minimum \outcomp{(k+1)} that contains $v$ but not $s$. (The uniqueness of $M(v)$ is guaranteed by \Cref{corollary:minimal-exists}.) If for an ordinary vertex $v$ no such \outcomp{(k+1)} exists, then we let $M(v)=\bot$. (In particular, we have $M(s)=\bot$.) We use ``$M_R$'' to denote the same concept as ``$M$'' in the reverse graph $G^R$ (where we have fixed the same ordinary vertex $s$ in $G^R$).

It should be clear that, if two ordinary vertices $u$ and $w$ are $(k+2)$-edge-connected, then we have $M(u)=M(w)$ and $M_R(u)=M_R(w)$. On the other hand, if $u$ and $w$ are not $(k+2)$-edge-connected, then \Cref{lemma:distinctM} implies that either $M(u)\neq M(w)$ or $M_R(u)\neq M_R(w)$.

Thus, our approach is the following. First, we want to compute a partition $\mathcal{P}_1$ of $V(G)$ with the property that two ordinary vertices have the same $M$-set if and only if they belong to the same set in $\mathcal{P}_1$. Then, we want to compute a partition $\mathcal{P}_2$ of $V(G^R)$ with the property that two ordinary vertices have the same $M_R$-set if and only if they belong to the same set in $\mathcal{P}_2$. Then, the output $\mathcal{P}$ is the common refinement of $\mathcal{P}_1$ and $\mathcal{P}_2$. The computation of $\mathcal{P}_1$ and $\mathcal{P}_2$ is performed independently on $G$ and $G^R$, using the same procedure. Thus, it is sufficient to describe the idea for computing $\mathcal{P}_1$.

Now we distinguish two types of $M$-sets w.r.t. their volume. We call a vertex set ``small'' if it has volume at most $m/\sqrt{n}$. Otherwise, we call it ``large''. The small $M$-sets can be computed explicitly using the following local search based procedure (which follows from an adaptation of techniques from \cite{chechik2017faster}):

\begin{restatable}{proposition}{localsearch}
\label{proposition:local_search}
Let $v$ be a vertex with $v\neq s$ and $\lambda(v,s)\geq k+1$, and let $\Delta\geq 1$ be an integer. There is an algorithm $\mathtt{LocalSearchForMSet}(G,v,s,k+1,\Delta)$ which runs in $O(2^k(k+1)!\Delta)$ time and returns a set of vertices $S$ (which may be $\emptyset$) with the following guarantees:
\begin{itemize}
\item{If $S\neq\emptyset$, then $S=M(v)$.}
\item{If $S=\emptyset$, then either $M(v)=\bot$ or $\mathit{vol}(M(v))>\Delta$.}
\end{itemize}
\end{restatable}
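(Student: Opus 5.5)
We adapt the deterministic local search of CHILP~\cite{chechik2017faster}, with two modifications. First, we use $s$ as a ``poison'' vertex that every search must avoid. Second, we work with the budget $k+1$ instead of $k$ and exploit $\lambda(v,s)\ge k+1$.

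\textbf{The search.} The procedure is a depth-bounded branching recursion. We maintain a current set of ``forbidden'' edges $F$, with $|F|\le k+1$, that we have decided to treat as outgoing edges of the candidate set. We then grow the set $R$ of vertices reachable from $v$ in $G\setminus F$ by an ordinary DFS/BFS, stopping as soon as one of three events occurs:
\begin{itemize}
\item[(a)] the exploration terminates, i.e., $R$ is closed in $G\setminus F$;
\item[(b)] the explored volume exceeds a threshold of order $\Delta$;
\item[(c)] $s$ is reached.
\end{itemize}
In case (a), $R$ is a set with $\mathit{out}(R)\le|F|\le k+1$ that contains $v$ and excludes $s$. Since $\lambda(v,s)\ge k+1$, we get $\mathit{out}(R)=k+1$, so $R$ is a \outcomp{(k+1)} separating $v$ from $s$. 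In cases (b) and (c), we branch: we pick an edge on the explored region (in case (c), an edge on the found $v$--$s$ path; in case (b), a suitably chosen edge of the explored part, as in CHILP), add it to $F$, and recurse. The recursion depth is at most $k+1$. If we always branch on the edges of a single $v$--$s$ path, the branching factor is bounded by the path length, which is too large. So, following CHILP, we use a DFS tree and guess which of the at most $k+1$ cut edges is hit, branching only on the edges that the ``first-exit'' argument allows. The analysis giving $O((2k)^{k}\cdot\Delta)$-type cost per leaf, and a total of $O(2^k(k+1)!\Delta)$, follows CHILP's counting: at depth $i$ there are at most $2(k+1-i)$ useful choices, and the per-level exploration cost is $O(\Delta)$. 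Because the searches never need to pass $s$, no restriction $\Delta<m/k$ is needed. The whole graph cannot be returned, since every returned set avoids $s$.

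\textbf{Extracting $M(v)$.} Among all closed sets $R$ found at the leaves, we return their intersection, provided at least one was found, and otherwise we return $\emptyset$. For correctness we argue two things.

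\emph{Soundness.} Every returned $R$ is a \outcomp{(k+1)} containing $v$ and not $s$. By the submodularity/uniqueness property (\Cref{corollary:minimal-exists}), such sets are closed under intersection while remaining $(k+1)$-out sets, because $\lambda(v,s)\ge k+1$ forces equality in submodularity. Hence if any $R$ is found, $M(v)\neq\bot$ and $M(v)\subseteq R$.

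\emph{Completeness.} Suppose $M(v)\neq\bot$ and $\mathit{vol}(M(v))\le\Delta$. Consider the branch that always guesses edges of $\mathit{out}(M(v))$. Along it, exploration from $v$ never leaves $M(v)$ once all $k+1$ outgoing edges are forbidden. It therefore terminates within volume $\Delta$ with $R$ equal to the set of vertices reachable from $v$ inside $M(v)$. This set is itself a $(k+1)$-out set separating $v$ from $s$, so by minimality it equals $M(v)$. Consequently the intersection equals $M(v)$ exactly.

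If $\mathit{vol}(M(v))>\Delta$ or $M(v)=\bot$ and some set is still found, we must avoid returning a non-minimal set. To handle this, we either verify the output or rely on the soundness argument together with a final check. The check recomputes reachability from $v$ in $G$ minus the out-edges of the intersection, restricted to it, and confirms it is a $(k+1)$-out set of volume $\le\Delta$. By completeness, when $\mathit{vol}(M(v))\le\Delta$ the branch finding $M(v)$ is present, so the intersection is exactly $M(v)$.

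\textbf{Main obstacle.} The main obstacle is the branching analysis: ensuring that the exploration along the ``correct'' branch stays within $M(v)$ and that the number of branches is $2^k(k+1)!$ rather than depending on path lengths. We would follow CHILP's argument that the first forbidden edge can be chosen among $O(k)$ candidates by sampling the exploration order, adapted so that reaching $s$ acts as a certificate that some out-edge of $M(v)$ has been crossed.
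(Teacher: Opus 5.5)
\textbf{There is a genuine gap, at the step you yourself flag as the main obstacle: branching on \emph{edges} that you add to $F$.}

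The CHILP-style first-exit argument (in the paper, $\mathtt{FindOutPaths}$, \Cref{lemma:DFS}) does not produce candidate cut edges. It only produces $O(k)$ DFS-tree \emph{paths} from $v$ such that at least one of them \emph{ends} outside the target set $S$; if $s$ is discovered, it is the single tree path to $s$. It says nothing about which edge of such a path is an outgoing edge of $S$. Such a path may have length $\Theta(k\Delta)$, and it may leave and re-enter $S$ several times.

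Your scheme needs an actual outgoing edge of $M(v)$, not yet in $F$, among the candidates at every level of the ``correct'' branch. Nothing in your proposal, or in CHILP, gives $O(k)$ candidate edges with this property. Branching over all edges of the candidate paths, or of the $v$--$s$ path in case (c), gives a branching factor proportional to the path length. That yields $\Delta^{\Theta(k)}$ time, not $O(2^k(k+1)!\Delta)$. Appealing to ``sampling the exploration order'' does not close this. It would also make the procedure randomized, while the statement is deterministic. Your completeness argument assumes this correct branch exists, so completeness is unproven.

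\textbf{The missing idea is to reverse the whole path instead of deleting an edge.} This is what the paper does.
\begin{itemize}
\item By \Cref{lemma:reverse}, reversing any path that starts in $S$ and ends outside $S$ lowers both $\mathit{out}(S)$ and $\mathit{vol}(S)$ by exactly one. This holds no matter where or how often the path crosses the boundary, so you never need to identify the crossing edge.
\item By \Cref{corollary:minimum-set-after-reverse}, $M(v)$ remains the minimum $(v,s)$-mincut in the new graph, with value one less.
\item If all $k+1$ reversed paths in a branch end outside $M(v)$, then $\mathit{reach}(v)=M(v)$ at the leaf. If some reversed path ends inside $M(v)$, then $v$ still reaches $s$ at the leaf (\Cref{lemma:reversing_paths}), so that leaf returns $\emptyset$.
\item If $\lambda(v,s)>k+1$, \Cref{corollary:edge-conn-after-reverse} keeps $v$ connected to $s$ at every leaf, so every leaf returns $\emptyset$.
\end{itemize}
In case (c), the $v$--$s$ path certainly ends outside $M(v)$, so with reversal no branching is needed there.

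Reversal also makes your intersection-and-verification step unnecessary, since every non-empty leaf output is already exactly $M(v)$. Your soundness argument for closed sets, and the final volume check, are fine in themselves.
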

\begin{proof}
See \Cref{sec:proposition:local_search}. 
\end{proof}

Assuming that $k$ is a fixed constant for our problem, we can use  \Cref{proposition:local_search}, in order to find all small $M$-sets in time $O(n\cdot (m/\sqrt{n}))=O(m\sqrt{n})$. Thus, we get a first partition of $V(G)$ with the property that two ordinary vertices have the same small $M$-set if and only if they belong to the same set of that partition. 

Notice that the dependency of the running time on $k$ is exponential, but this can be improved to a polynomial dependence by using a randomized algorithm to perform the local searches (although this will incur a polylogarithmic overhead on the running time, in order to guarantee a sufficiently high probability of success). 

\begin{restatable}{proposition}{randomizedlocalsearch}
\label{proposition:randomized_local_search}
Given that $\lambda(v,s)\geq k$, there is a procedure (shown in \Cref{algorithm:randomizedlocalsearch}) that has a running time of $O(k^2\Delta)$ and returns a (possibly empty) set of vertices $S$ with the following guarantees:
\begin{itemize}
\item{If $S\neq\emptyset$, then $S=M(v)$.}
\item{If $\lambda(v,s)=k$ and $\mathit{vol}(M(v))\leq\Delta$, then, with probability at least $1/2$, $S\neq\emptyset$.}
\end{itemize}
%
\end{restatable}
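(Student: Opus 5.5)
The plan is to adapt the randomized local search of FNYSY~\cite{forster2020computing} into a residual-graph computation. Correctness (the first guarantee) will be checked deterministically at the end. Randomness affects only whether we succeed.

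\textbf{The procedure.} Set $\nu = 2k\Delta$ and run $k$ rounds. We maintain a residual graph $G'$, initially $G$, in which we reverse paths. In round $i$ we run a DFS from $v$ in $G'$ that stops once it has explored $\nu$ edges, or earlier if it runs out of edges. If it runs out of edges without reaching $s$, we go directly to the final check. Otherwise we pick an explored edge $(x,y)$ uniformly at random and reverse the DFS-tree path from $v$ to $x$ in $G'$. (If the DFS reached $s$, we may instead reverse the path to $s$; nothing below depends on this choice.) After the $k$ rounds we run one more DFS from $v$ in $G'$, again capped at $\nu$ explored edges. If it exhausts the reachable set $R$ without reaching $s$, we output $S=R$; otherwise we output $\emptyset$. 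Each DFS costs $O(\nu)=O(k\Delta)$, so the total time is $O(k^2\Delta)$.

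\textbf{Correctness when $S\neq\emptyset$.} The reversed paths form $k$ edge-disjoint walks out of $v$ in $G$ (a flow of value $k$ from $v$ to the path endpoints). Let $R$ be the residual-reachable set, so $\mathit{out}_{G'}(R)=0$ and $v\in R$, $s\notin R$. Counting crossings of the flow gives $\mathit{out}_G(R)=(\#\text{out-crossings})-(\#\text{in-crossings})\le k$. Since $\lambda(v,s)\ge k$, equality holds. Hence $\lambda(v,s)=k$, $R$ is a $k$-out set, and every flow path leaves $R$ exactly once and never re-enters.

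Now take any $k$-out set $T$ with $v\in T$ and $s\notin T$. By submodularity, $\mathit{out}(T\cap R)+\mathit{out}(T\cup R)\le 2k$, and both terms are at least $k$ because each set separates $v$ from $s$. So $T\cap R$ is a $k$-out set. All $k$ flow units start at $v\in T\cap R$ and end outside $R$, so every out-edge of $T\cap R$ is a saturated forward flow edge, and no flow enters $T\cap R$ (flow never re-enters $R$, and inside $R$ it only leaves $T\cap R$ through these $k$ edges). Therefore no residual edge leaves $T\cap R$, and hence $R\subseteq T\cap R$, that is, $R\subseteq T$. This shows that $R$ is the inclusion-wise minimum such set, so $R=M(v)$.

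\textbf{Success probability.} Assume $\lambda(v,s)=k$ and $\mathit{vol}(M(v))\le\Delta$, and write $M=M(v)$. We keep the invariant that after $j$ successful rounds, exactly $j$ of the $k$ out-edges of $M$ have been reversed and no flow path re-enters $M$. Call round $j+1$ \emph{successful} if the DFS explored all $\nu$ edges and the sampled edge has its tail outside $M$. At most $\mathit{vol}(M)\le\Delta$ of the explored edges have their tail inside $M$; this uses the invariant that reversed edges within $M$ do not increase the relevant count. So a round fails with probability at most $\Delta/\nu = 1/(2k)$, and by a union bound all $k$ rounds succeed with probability at least $1/2$.

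A successful round reverses a path from $v$ to a vertex outside $M$. Such a path must use one of the still-unreversed out-edges of $M$, which maintains the invariant. If instead the DFS got stuck early without reaching $s$, the argument of the previous paragraph already applies to the current (smaller) flow, and we output $M$. After $k$ successful rounds no residual edge leaves $M$, so the final DFS stays inside $M$, explores at most $\Delta<\nu$ edges, and returns $M$.

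The main obstacle is this invariant. We must argue carefully that a path ending outside $M$ crosses the boundary of $M$ exactly once net, and that edges reversed in earlier rounds cannot create residual out-edges of $M$ or inflate the count of explored edges with tail in $M$. I would handle this by mirroring the FNYSY analysis, using the fact that all previously reversed paths leave $M$ exactly once.
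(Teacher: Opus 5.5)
Your running-time bound and your proof of the first guarantee are fine. The residual-flow argument works: every $G$-edge leaving the final reachable set $R$ is reversed and no $G$-edge entering it is, so $\mathit{out}_G(R)\le k$; you then apply submodularity with an arbitrary $k$-out set $T$ and use saturation of $T\cap R$. This is a valid, more self-contained alternative to the paper's appeal to \Cref{lemma:reversing_paths}, and it also handles $\lambda(v,s)>k$ in the same stroke.

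The gap is in the probability analysis, at exactly the step you flag as the main obstacle: the invariant you plan to maintain is false. A DFS-tree path can leave $M=M(v)$ through an out-edge $(v,a)$, re-enter through an in-edge $(a,w)$ with $w\in M$, and leave again through a second out-edge $(w,b)$. If $b$ is sampled, the round counts as successful. Yet two original out-edges of $M$ are now reversed, flow re-enters $M$, and the reversed edge $(w,a)$ is a new residual out-edge of $M$. A later path may exit $M$ through it without using any ``still-unreversed'' out-edge. So neither ``exactly $j$ out-edges reversed, no re-entry'' nor the ``fact'' that previously reversed paths leave $M$ exactly once holds, and mirroring FNYSY along those lines does not go through.

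What you actually need is only the net-crossing bookkeeping of \Cref{lemma:reverse}. Reversing a path that starts in $M$ decreases both $\mathit{out}(M)$ and $\mathit{vol}(M)$ by exactly one if it ends outside $M$, and leaves both unchanged if it ends inside; this is proved by induction on path length. In your flow language: $\mathit{vol}(M)-\mathit{out}(M)$ is unchanged by reversals, and $\mathit{out}_{G'}(M)$ equals $k$ minus the number of reversed paths ending outside $M$. This gives $\mathit{vol}_{G_i}(M)\le\Delta$ in every round, whatever happened earlier, hence the $1/(2k)$ per-round failure bound. It also gives $\mathit{out}_{G_k}(M)=0$ after $k$ good rounds, with no re-entry property needed.

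Two smaller corrections. First, reversing the path to $s$ is not optional. If the DFS reaches $s$ but exhausts its reachable part after fewer than $\nu$ explored edges (this always happens once $2k\Delta>m$), sampling among the explored edges no longer fails with probability at most $\Delta/\nu$. In that case you must reverse the path to $s$, as the paper does by stopping as soon as $s$ is discovered. Second, the branch where the DFS gets stuck without reaching $s$ never occurs. By \Cref{corollary:edge-conn-after-reverse}, $\lambda_{G_i}(v,s)\ge k-i\ge 1$ for $i<k$, and your own flow count would give $\mathit{out}_G(R)\le i<k$ there. That case should therefore be ruled out, not described as outputting $M$.
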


\begin{proof}
See \Cref{sec:proposition:random_local_search}. 
\end{proof}

For a randomized version of the local search, see \Cref{sec:proposition:random_local_search}. We note that the local search procedure of choice can be plugged in as a black-box in Line~\ref{line:localsearchforkcut} of \Cref{algorithm:main}.


The case of the large $M$-sets is highly involved, because we cannot afford to compute all of them explicitly. We discuss how to handle those sets in the following section.


\subsection{Sampling for large $M$-sets and computing good partitions}

In order to account for the partition of the ordinary vertices due to the large $M$-sets, we rely on structural properties of the $(k+1)$-edge cuts of the graph. Our first step is to sample enough edges of the graph (uniformly at random, with repetitions allowed), so that, with sufficiently high probability, for every large $M$-set $U$, we have sampled at least one edge whose tail is in $U$. For this purpose, it is sufficient to sample $\tilde{O}(\sqrt{n})$ edges, due to the large volume of the large $M$-sets (see proof of \Cref{theorem:main}).

Now, for every tail $v$ of every sampled edge, we distinguish two possibilities: either $\lambda(v,s)=k+1$, or $\lambda(v,s) \le k$. (In the case $\lambda(v,s) > k+1$ we do nothing, because this implies that $v$ is not included in a $(k+1)$-out set that separates $v$ and $s$.)

In each of those cases, our goal is to provide a partition $\mathcal{P}$ of $V(G)$ with the property that $(1)$ it maintains the $(k+2)$-edge-connected components, and $(2)$ for every ordinary vertex $u$ with $M(u)\neq\bot$ and $v\in M(u)$, and every ordinary vertex $w\notin M(u)$, we have that $u$ and $w$ are separated by $\mathcal{P}$. 
More formally, we have:


\begin{definition}[Good Partition]
\label{definition:goodpartition}
Let $v$ be any vertex such that $\lambda(v,s) \le k+1$. A partition $\mathcal{P}$ of $V$ is \emph{good for $v$} if it satisfies the following properties:
\begin{itemize}
\item{$\mathcal{P}$ maintains the $(k+2)$-edge-connected components.}
\item{For every ordinary vertex $u$ with $v\in M(u)$, and every ordinary vertex $w\notin M(u)$, we have that $u$ and $w$ are separated by $\mathcal{P}$.}
\end{itemize}
\end{definition}

We provide a linear-time algorithm for computing a good partition when $\lambda(v,s)=k+1$, by exploiting the DAG representation of all minimum cuts introduced by Picard and Queyranne~\cite{PicardQ82}.

\begin{restatable}{proposition}{mainordinary}
\label{proposition:main_ordinary}
Let $v$ be a vertex with $\lambda(v,s)=k+1$. Then there is an algorithm that runs in $O(km)$ time and returns a
good partition for $v$.
\end{restatable}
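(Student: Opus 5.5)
We build a good partition for $v$ from the Picard--Queyranne representation of all minimum $(v,s)$-cuts. Since $\lambda(v,s)=k+1$, we first compute a maximum $(v,s)$-flow of value $k+1$ with $k+1$ rounds of augmenting paths. This takes $O(km)$ time. We then form the residual graph $G_f$ and contract its strongly connected components into a DAG $D$. By Picard--Queyranne, the minimum $(v,s)$-cuts, i.e.\ the $(k+1)$-out sets $S$ with $v\in S$ and $s\notin S$, are exactly the closed sets of $D$ (closed under residual reachability) that contain the SCC of $v$ and avoid the SCC of $s$. Our candidate partition $\mathcal{P}$ is: one part for each SCC of $G_f$ that is neither reachable from $v$ nor co-reachable to $s$, together with the part $R_v$ of vertices residually reachable from $v$ (the earliest minimum cut) and the part of vertices that reach $s$ residually. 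Every vertex of $G$ lies in exactly one such part.

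\textbf{Property (1).} Let $x,y$ be $(k+2)$-edge-connected. Every minimum $(v,s)$-cut $S$ has $\mathit{out}(S)=k+1$, so it cannot separate $x$ from $y$ in either direction; hence $x\in S\iff y\in S$ for every closed set $S$. If $x$ and $y$ were in different SCCs of $G_f$, some closed set of $D$ would contain one and not the other; we can take the closure of the SCC of $x$ together with that of $v$, arguing symmetrically if that closure contains $s$. That would be a minimum $(v,s)$-cut separating them, which is impossible. The boundary cases where $x$ or $y$ lies in $R_v$ or in the $s$-side are handled the same way, using the earliest and latest minimum cuts. So $x$ and $y$ lie in the same part of $\mathcal{P}$.

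\textbf{Property (2).} Take an ordinary $u$ with $v\in M(u)$, so $M(u)$ is a $(k+1)$-out set containing $v$ but not $s$. This makes $M(u)$ a minimum $(v,s)$-cut, hence a closed set of $D$. A closed set is a union of SCCs, and every part of $\mathcal{P}$ is either an SCC, $R_v\subseteq M(u)$, or the $s$-side, which is disjoint from $M(u)$. Therefore every part lies entirely inside or entirely outside $M(u)$. It follows that any ordinary $w\notin M(u)$ is in a different part from $u\in M(u)$, so $\mathcal{P}$ separates them.

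\textbf{Running time.} The flow computation, construction of the residual graph, SCC computation and reachability searches together take $O(km+m)=O(km)$ time.

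\textbf{Main obstacle.} The delicate point is the characterization in property (2): we need that $M(u)$ being a $(k+1)$-out set containing $v$ and not $s$ makes it precisely a residual-closed set. This holds because every $(v,s)$-cut has size at least $\lambda(v,s)=k+1$, so an out-degree exactly $k+1$ forces the cut to be minimum and saturated by the flow. Once this is established, the rest reduces to standard Picard--Queyranne facts, and both properties follow from the fact that $\mathcal{P}$ refines every minimum $(v,s)$-cut.
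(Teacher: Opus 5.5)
Your proof is correct and is essentially the paper's argument: compute a maximum $(v,s)$-flow with $k+1$ augmentations, take the strongly connected components of the Picard--Queyranne (residual) graph, and use that every minimum $(v,s)$-cut, in particular $M(u)$, is a reachability-closed union of SCCs. Your extra parts $R_v$ and the $s$-side are in fact just the SCCs of $v$ and $s$, because $G$ is strongly connected, so every vertex reaches $v$ in $\mathit{PQ}$ and $s$ reaches every vertex; hence your partition coincides with the paper's, and the paper's Property~1 argument is a cleaner form of your case analysis (if $u$ cannot reach $w$ in $\mathit{PQ}$, then $\mathit{reach}_{\mathit{PQ}}(u)$ is a minimum $(v,s)$-cut containing $u$ but not $w$).
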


To obtain \Cref{proposition:main_auxiliary}, we construct the Picard--Queyranne DAG representation of all minimum $v$-$s$ cuts. This representation corresponds to the residual graph obtained after computing any maximum $(v,s)$-flow---which has value $k + 1$, since $\lambda(v, s) = k + 1$---and contracting its strongly connected components (SCCs). The resulting SCCs define a partition of $V$, which we show satisfies \Cref{definition:goodpartition}.  
The connection between the Picard--Queyranne DAG and the $M$-sets follows from a simple observation: for any ordinary vertex $u$ with $\lambda(u, s) = k + 1$, if $v \in M(u)$, then $M(u)$ is also a minimum $v$-$s$ cut, and is therefore represented in the DAG.  
Full details are provided in \Cref{section:proof_of_prop_main_ordinary}.

The case where $\lambda(v, s) = k$ is more involved. In this setting, we leverage structural properties of latest minimum cuts.


\begin{restatable}{proposition}{mainauxiliary}
\label{proposition:main_auxiliary}
Let $v$ be a vertex with $\lambda(v,s) = k$. Then there is an algorithm that runs in $O(k^2m)$ time and returns a
good partition for $v$. 
\ignore{partition $\mathcal{P}$ of $V$ with the following properties:
\begin{itemize}
\item{$\mathcal{P}$ maintains the $4$-edge-connected components.}
\item{For every ordinary vertex $u$ with $v\in M(u)$, and every ordinary vertex $w\notin M(u)$, we have that $u$ and $w$ are separated by $\mathcal{P}$.}
\end{itemize}
}
\end{restatable}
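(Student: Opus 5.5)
Since $\lambda(v,s)=k$, the vertex $v$ is not ordinary: every ordinary vertex is $(k+1)$-edge-connected to $s$. The plan is to reduce to the case of \Cref{proposition:main_ordinary} by locating the structure that separates $v$ from $s$. First compute a maximum $(v,s)$-flow of value $k$ with $k$ augmenting paths in $O(km)$ time. From the residual graph, read off the \emph{latest} minimum $(v,s)$-cut $L$, i.e., the inclusion-wise maximum $k$-out set containing $v$ but not $s$; this is the complement of the set of vertices that can reach $s$ in the residual graph.

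\textbf{Structural step.} Take any ordinary $u$ with $v\in M(u)$. I would show, using submodularity of $\mathit{out}(\cdot)$, that $M(u)\cup L$ and $M(u)\cap L$ behave well. We have $\mathit{out}(M(u)\cap L)\ge k$ because $G$ is $k$-edge-connected and $v\in M(u)\cap L$, $s\notin M(u)\cap L$. Submodularity then gives $\mathit{out}(M(u)\cup L)\le (k+1)+k-k=k+1$. Since $u\in M(u)\cup L$ and $s\notin M(u)\cup L$, and $\lambda(u,s)\ge k+1$, this forces $\mathit{out}(M(u)\cup L)=k+1$ and $\mathit{out}(M(u)\cap L)=k$. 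Moreover $u\notin L$: otherwise $L$ would be a $k$-out set separating $u$ from $s$. So $M(u)\cup L$ is a $(k+1)$-out set separating $u$ and $s$, and every such $M(u)$ is assembled from $L$ together with a piece outside $L$.

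\textbf{Contraction step.} Next, contract $L$ into a single vertex $z$ and argue that $\lambda(z,s)=k+1$ in the contracted graph, at least with respect to the relevant sets. The idea is that $(k+1)$-out sets containing both $L$ and $u$ correspond to minimum $(z,s)$-cuts. I then need to show that $M(u)$ can be recovered from such a cut, i.e.\ that minimality of $M(u)$ and maximality of $L$ interact correctly; this is the point where I must check that $M(u)\setminus L$ together with $L$ yields $M(u)$. If instead $\mathit{out}(L)$ in the contracted sense equals $k$, then no ordinary vertex $u$ has $v\in M(u)$ and the trivial partition with the ordinary vertices kept together suffices. Applying the Picard--Queyranne construction of \Cref{proposition:main_ordinary} to $z$ in $G/L$ gives a partition, which I extend by keeping the vertices of $L$ together in one additional class. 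Any $(k+2)$-edge-connected pair lies on the same side of every $(k+1)$-cut and of $L$, so the partition maintains the $(k+2)$-edge-connected components.

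\textbf{Main obstacle.} The difficulty is that $M(u)$ need not contain $L$, since $M(u)$ is minimal. To separate $u$ from an ordinary $w\notin M(u)$, I must show that $w\notin M(u)\cup L$, and also handle $w\in L\setminus M(u)$. The second case is fine because $L$ contains no ordinary vertices: every ordinary $x$ has $\lambda(x,s)\ge k+1$, so $x\notin L$. Thus every ordinary $w\notin M(u)$ also lies outside $M(u)\cup L$. It therefore suffices that the DAG partition separates $u$ from every $w$ lying outside the minimum $(z,s)$-cut $(M(u)\cup L)/L$, which follows as in \Cref{proposition:main_ordinary}.

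For the running time, the $k$-augmentation flow is $O(km)$. A further flow of value up to $k+1$ plus the SCC computation costs $O(km)$ as well. The $O(k^2m)$ budget leaves room if the construction must be repeated for up to $k$ nested latest cuts, which I would invoke if a single contraction fails.
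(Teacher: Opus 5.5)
Your structural step is correct and matches the paper. You show $\mathit{out}(M(u)\cup L)=k+1$, and since $L$ contains no ordinary vertex, every ordinary $w\notin M(u)$ lies outside $U:=M(u)\cup L$.

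\textbf{The gap is in the contraction step.} After contracting the $k$-out set $L$ into $z$, the vertex $z$ has out-degree exactly $k$. So $\{z\}$ is itself a $k$-out set, and $\lambda_{G/L}(z,s)=k$, never $k+1$.
\begin{itemize}
\item $U/L$ is therefore not a minimum $(z,s)$-cut, and the Picard--Queyranne characterization says nothing about it.
\item Worse, by the maximality of $L$ the only minimum $(z,s)$-cut in $G/L$ is $\{z\}$. Hence the PQ partition for $z$ is just $\{\{z\},V(G/L)\setminus\{z\}\}$, which separates no two ordinary vertices.
\item Your fallback (``if $\mathit{out}(L)$ in the contracted sense equals $k$, no ordinary $u$ has $v\in M(u)$'') is therefore always triggered, and it is false. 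Take an auxiliary $v$ joined to an ordinary $u$ by $k$ edges in each direction, where $u$ has $k+1$ further out-edges. Then $\mathit{out}(\{u,v\})=k+1<\mathit{out}(\{u\})$, so $M(u)=\{u,v\}\ni v$.
\end{itemize}
The remark about ``nested latest cuts'' does not supply the missing mechanism.

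\textbf{What is missing} is a case split on how the $k$ outgoing edges $e_1,\dots,e_k$ of $z$ sit relative to $\widetilde U=U/L$.
\begin{itemize}
\item If all of them leave $\widetilde U$, then in $\widetilde G\setminus\{e_1,\dots,e_k\}$ the set $\widetilde U$ has exactly one outgoing edge. So the $2$-edge-connected components of that graph separate $u$ from $w$. They also preserve $(k+2)$-edge-connectivity, because deleting $k$ edges lowers any local connectivity by at most $k$.
\item If some $e_i=(z,x_i)$ lies inside $\widetilde U$, contract $z$ with $x_i$ into $z_i$. This is exactly where the latest-cut property is used: by \Cref{lemma:include_outgoing_edge}, $\lambda(L\cup\{x_i\},s)>k$. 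The set $\widetilde U$ then shows the value is exactly $k+1$ and that $\widetilde U$ is a minimum $(z_i,s)$-cut. So \Cref{proposition:13_2}, applied with $z_i$ in place of $v$, separates $u$ from $w$.
\end{itemize}
Refining the $2$-edge-connected-component partition by the at most $k$ resulting PQ partitions gives the good partition within $O(k^2m)$ time.
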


Let $S$ be the latest mincut that separates $v$ and $s$. This is the (inclusion-wise) maximum $\lambda(v,s)$-out set that separates $v$ and $s$. The reason that we consider the latest mincut is that, first, this can be computed in linear time, and second, it captures the notion of having made the utmost progress before increasing the edge-connectivity. (This idea is made precise in \Cref{lemma:include_outgoing_edge}.) 

Note that $S$ contains only auxiliary vertices (that is, $\lambda(x, s) = k$ for all $x \in S$), so we may contract $S$ into a single auxiliary vertex $z$ with out-degree $k$. 
Now consider a set $M(u)$ such that $v \in M(u)$, and let $U = M(u) \cup S$. 
For any ordinary vertex $w \notin M(u)$, we also have $w \notin U$. 
To separate $u$ and $w$, we distinguish two cases: 
(1) all $k$ outgoing edges $e_1, \ldots, e_k$ of $z$ belong to $E(U, V \setminus U)$; or 
(2) at least one outgoing edge $e_i$ of $z$ lies inside $U$. 
In the former case, it suffices to compute the $2$-edge-connected components of $G \setminus \{e_1, \ldots, e_k\}$, which can be done in linear time~\cite{georgiadis20162edge,GIP20:SICOMP}. 
In the latter case, we can contract $e_i$, thereby reducing the problem to the case $\lambda(v, s) = k + 1$.


\ignore{
We distinguish the cases $\lambda(v,s)=k+1$ and $\lambda(v,s) \le k$ because we handle them with different techniques. We will give a brief overview of those techniques here, and the full details can be found in Sections~\ref{section:proof1} and \ref{section:proof2}, respectively.

In the case where $\lambda(v,s)=3$, we initiate a procedure which we call ``triggered search''. (See Proposition~\ref{proposition:triggered}.) This makes repeated calls to the local-search procedure of Proposition~\ref{proposition:local_search}, and  it iteratively explores $3$-out sets on top of previously found $3$-out sets. More specifically, the triggered search is initiated at $v$, the first $3$-out set discovered is $M(v)$, and every subsequent local search starts from the most recently found $3$-out set contracted with one of its exit points. Thus, triggered search explores a sequence of $3$-out sets $S_1\subseteq S_2\subseteq\dots\subseteq S_t$, with $S_1=M(v)$. The total time to execute the triggered search procedure is $O(m)$.
For every new $3$-out set discovered by the triggered search, we construct an auxiliary graph which we call \emph{\intermediate{} graph}. We build this graph by contracting the previously found $3$-out set, by keeping intact the new part that was discovered in order to form the new $3$-out set, and by adding some auxiliary edges (For the precise definition of the {\intermediate{} graphs}, see Section~\ref{section:intermediate_graphs}.) Thus, the {\intermediate{} graph} corresponding e.g. to the exploration of the $3$-out set $S_i$ is formed by contracting $S_{i-1}$ into a single vertex, by keeping $S_i\setminus S_{i-1}$ intact, and by adding some auxiliary edges. The total size of the {\intermediate{} graphs} generated by a triggered search is $O(m)$.

Now consider an $M$-set $U$ with $v\in U$. (I.e., $U$ is the $M$-set of some ordinary vertex.) First, we show that the last $3$-out set explored by the triggered search -- i.e., following the notation of the previous paragraph, $S_t$ -- contains the set of vertices that are reachable from $v$ in $G[U]$. This implies that, if we remove the three outgoing edges of $S_t$, and compute the strongly connected components of the remaining graph, we have that every vertex in $U$ is separated from every vertex in $V\setminus S_t$. However, we also want to separate every ordinary vertex in $U$ from every ordinary vertex in $S_t\setminus U$. This is where we utilize the {\intermediate{} graphs} that we generate from the output of the triggered search. Specifically, for every {\intermediate{} graph} $H$, we compute a partition of its vertex set, which is given by the $3$-edge-connected components of $H$ after removing a single edge. (More precisely, there are three special edges in $H$ for which we perform this computation, and then we return the common refinement of the three resulting partitions.) This is where we use the linear-time algorithm of \cite{Linear3ECC} for computing the $3$-edge-connected components in linear time. The resulting partition of $V(H)$ has the property that it maintains the ordinary vertices that are $4$-edge-connected (in $G$) in the same cluster. Furthermore, (and this is the crucial part for which we utilize properties of $3$-edge cuts), if $H$ contains two ordinary vertices $u$ and $w$ such that $u\in U$ and $w\notin U$, then $u$ and $w$ end up in different clusters in this partition. (Hence the name ``Partition-Witness graphs''.) In this way we provide the partition of the ordinary vertices induced by every large $M$-set for which we have sampled a vertex $v$ in it with $\lambda(v,s)=3$. Observe that the total time here can be easily bounded by $\tilde{O}(m\sqrt{n})$.

In the case where $\lambda(v,s)<3$, we utilize the concept of the latest mincut $S$ that separates $v$ and $s$. This is the (inclusion-wise) maximum $\lambda(v,s)$-out set that separates $v$ and $s$. The reason that we consider the latest mincut is that, first, this can be computed in linear time, and second, it captures the notion of having made the utmost progress before increasing the edge-connectivity. (This idea is made precise in Lemma~\ref{lemma:include_outgoing_edge}.) 

Now consider again an $M$-set $U$ with $v\in U$. Here there are two possibilities: either there is an outgoing edge $e$ of $S$ which lies outside of $U$, or every outgoing edge of $S$ lies inside of $U$. In the first case, we show that it is sufficient to remove $e$,  and compute the $3$-edge-connected components of the resulting graph. Then we have the property that every ordinary vertex in $U$ is separated from every vertex in $V\setminus U$ (while maintaining the $4$-edge-connected components). In the second case, we show that it is sufficient to contract $S$ together with one of its exit points into a single vertex $r$, and recurse. That is, we now deal with $r$ in the same way that we processed the sampled vertex $v$. Since $S$ is the latest mincut that separates $v$ from $s$, we have that $\lambda(r,s)>\lambda(v,s)$. Thus, the total time for processing all sampled vertices $v$ with $\lambda(v,s)<3$ can be easily bounded again by $\tilde{O}(m\sqrt{n})$. 
}



\subsection{Proof of the main theorem}
\label{section:main-result}

We now present the proof of our first main result, stated in \Cref{theorem:keccs}. 
We begin with the case where $k$ is a fixed constant. 
In this setting, we can apply the deterministic version of the \ecc{(k+1)}-decomposition (see the paragraph right after the statement of \Cref{theorem:keccdecomp}), which runs in $\Ot(m)$ time.

After this decomposition, we can consider a $k$-edge-connected graph $G$ with $n$ vertices and $m$ edges, that consists of two kinds of vertices: ordinary and auxiliary. The ordinary vertices of $G$ are $(k+1)$-edge-connected. 
We prove the following:

\maintheorem*
\begin{proof}
Let $s$ be a fixed ordinary vertex of $G$. First, we compute all $M$-sets with volume at most $m/\sqrt{n}$. Specifically, for every ordinary vertex $v\neq s$, we apply the algorithm $\mathtt{LocalSearchForMSet}(G,v,s,k+1,\Delta)$, whose guarantees are stated in \Cref{proposition:local_search}. Thus, we get a vertex set $S(v)$ such that, if $M(v)\neq\bot$ and $\mathit{vol}(M(v))\leq m/\sqrt{n}$, then $S(v)=M(v)$. Otherwise, we may still get $M(v)$, or $\emptyset$. (In any case, we will not get a non-empty output which is distinct from $M(v)$.) Since we apply this procedure for every ordinary vertex $v\neq s$ (and the number of such ordinary vertices may be as large as $n-1$), by \Cref{proposition:local_search} we get an $O((n-1)\cdot m/\sqrt{n})=O(m\sqrt{n})$ time bound for this step. The total output is a collection of $O(n)$ sets of vertices with $O(m/\sqrt{n})$ vertices in each set. We can sort those sets in $O(n\cdot m/\sqrt{n})=O(m\sqrt{n})$ time with bucket sort, and determine the partition $\mathcal{P}_1$ of the ordinary vertices that have the same $M$-set with volume $O(m/\sqrt{n})$. To be more precise, if for two distinct ordinary vertices $v$ and $v'$ we have $S(v)=S(v')$, then we put $v$ and $v'$ in the same set in $\mathcal{P}_1$. (And we can put all auxiliary vertices of $G$ in a single set in this partition.) Thus, if two ordinary vertices are $(k+2)$-edge-connected, then they appear in the same set in $\mathcal{P}_1$.

Now we have to consider the separations induced by the $M$-sets of ordinary vertices with volume larger than $m/\sqrt{n}$. To do this, we first sample enough edges (uniformly, with repetitions allowed) so that, with probability at least $1-\delta$, we have that, for every such ``large'' $M$-set $U$, we have sampled at least one edge whose tail is in $U$. Thus, it is sufficient to sample $N=\lceil\sqrt{n}\log_{2}(n/\delta)\rceil$ edges. To see this, consider an $M$-set $U$ with volume more than $m/\sqrt{n}$. This means that there are more than $m/\sqrt{n}$ edges whose tail is in $U$. Then, the probability that a randomly chosen edge does not have its tail in $U$ is less than $1-(m/\sqrt{n})/m = 1-1/\sqrt{n}$. Thus, the probability that, after $N$ samples, we have not sampled an edge whose tail is in $U$ is less than 
$$
(1-1/\sqrt{n})^N\leq 2^{-N/\sqrt{n}}\leq 2^{-\sqrt{n}\log_{2}(n/\delta)/\sqrt{n}} = 2^{-\log_{2}(n/\delta)}=\delta/n.$$ 
Thus, since $n$ is a trivial upper bound on the number of distinct $M$-sets, by the union bound we have that: the probability that there is an $M$-set $U$ with volume more that $m/\sqrt{n}$ for which we have not sampled an edge whose tail is in $U$ is less than $\delta$. 

Thus, from now on, we may assume that: for every $M$-set $U$ with volume more than $m/\sqrt{n}$, we have sampled at least one edge whose tail is in $U$ $(*)$. 

Now, for every tail $v$ of an edge that we have sampled, we do the following. First, we perform at most $k+2$ iterations of Ford-Fulkerson's algorithm, in order to determine whether $\lambda(v,s) \le k$, or $\lambda(v,s)=k+1$, or $\lambda(v,s)>k+1$. (This takes time $O(km)$.) In the first and second case, we apply \Cref{proposition:main_auxiliary} and \Cref{proposition:main_ordinary}, respectively, on $v$, in order to get a partition $\mathcal{P}(v)$ of $V(G)$ with the guarantess provided by the respective proposition. (Again, this step takes $O(m)$ time.) In the third case (i.e., if $\lambda(v,s)>k+1$), we do nothing. Thus, we get a collection of $O(N)$ partitions of $V(G)$ in total time $O(Nm)=O(\sqrt{n}\log(n/\delta)m)$. We compute the common refinement $\mathcal{P}_2$ of those partitions with bucket sort, and this takes time $O(nN)$. Then, we return the common refinement of $\mathcal{P}_1$ and $\mathcal{P}_2$, which takes $O(n)$ time to be computed. This is how we get a partition $\mathcal{Q}$. Notice that, by the guarantees of \Cref{proposition:main_auxiliary,proposition:main_ordinary}, we have that $\mathcal{Q}$ maintains the $(k+2)$-edge-connected components of $G$.

Now we repeat the same process on the reverse graph $G^R$, with the same fixed vertex $s$, and thus we get a partition $\mathcal{Q}_R$. Our final output is the common refinement of $\mathcal{Q}$ and $\mathcal{Q}_R$.

In order to establish correctness, it remains to show that: if two ordinary vertices $u$ and $w$ are not $(k+2)$-edge-connected, then they appear in distinct sets in the final output. So let $u$ and $w$ be two ordinary vertices that are not $(k+2)$-edge-connected. Then, by \Cref{lemma:distinctM} we have that either $M(u)\neq M(w)$ or $M_R(u)\neq M_R(w)$. Let us assume, w.l.o.g., that $M(u)\neq M(w)$. Now, if either $\mathit{vol}(M(u))\leq m/\sqrt{n}$ or $\mathit{vol}(M(w))\leq m/\sqrt{n}$, then $u$ and $w$ are separated by $\mathcal{P}_1$, and thus by the final output. Otherwise, for each of $u$ and $w$ we have that its $M$-set is either $\bot$, or it has volume more than $m/\sqrt{n}$, but they cannot both be $\bot$. Now, since $M(u)\neq M(w)$, by \Cref{lemma:distinctM2} we have that either $M(u)\neq\bot$ and $w\notin M(u)$, or $M(w)\neq\bot$ and $u\notin M(w)$. Thus, we may assume, w.l.o.g., that $M(u)\neq\bot$ and $w\notin M(u)$. Now, since $\mathit{vol}(M(u))>m/\sqrt{n}$, our assumption $(*)$ implies that we have sampled at least one edge whose tail $v$ is in $M(u)$. Then, we obviously have that $\lambda(v,s)\leq k+1$, and therefore one of \Cref{proposition:main_auxiliary,proposition:main_ordinary} (depending on whether $\lambda(v,s) \le k$ or $\lambda(v,s)=k+1$, respectively), implies that $\mathcal{P}(v)$ separates $u$ and $w$. Thus, $u$ and $w$ are separated by the final output.

(Notice that our assumption $(*)$ holds on $G$ with probability more than $1-\delta$. However, we want the same assumption to hold simultaneously for $G$ and $G^R$ with probability more than $1-\delta$. Thus, we should have chosen $\delta$ to be half of the desired probability of failure, so that, by the union bound, this assumption is satisfied simultaneously in $G$ and $G^R$ with the desired probability of success.)
\end{proof}

\begin{algorithm}[h!]
\caption{\textsf{Compute the $(k+2)$-edge-connected components of $G$, with probability at least $1-\delta$}}
\label{algorithm:main}
\LinesNumbered
\DontPrintSemicolon
fix an ordinary vertex $s$\;
\ForEach{ordinary vertex $v\neq s$}{
  apply procedure $\mathtt{LocalSearchForMSet}(G,v,s,k+1,m/\sqrt{n})$, and let $S(v)$ be the output\;
  \label{line:localsearchforkcut}
  \tcp{See \Cref{proposition:local_search}}
}
let $\mathcal{P}_1$ be the partition of $V$ that is formed by the clusters of vertices that have the same $S$ set computed in the previous step (let $S(u)=\emptyset$ for every auxiliary vertex $u$)\;
sample $\lceil\sqrt{n}\log_{2}(2n/\delta)\rceil$ edges (uniformly, with repetitions allowed)\;
\label{line:algorithm:main:sample}
\ForEach{tail $v$ of a sampled edge}{
  \If{$\lambda(v,s)=k+1$}{
  \label{line:lambda=k+1}
    apply the algorithm described in \Cref{proposition:main_ordinary}, and let $\mathcal{P}(v)$ be the output partition of $V$\;
  }
  \ElseIf{$\lambda(v,s) \leq k$}{
  \label{line:lambda<=k}
    apply the algorithm described in \Cref{proposition:main_auxiliary}, and let $\mathcal{P}(v)$ be the output partition of $V$\;
    \label{line:prop_for_auxiliary}
  }
}
let $\mathcal{P}_2$ be the common refinement of all partitions of the form $\mathcal{P}(v)$ computed in the previous step\;
let $\mathcal{Q}$ be the common refinement of $\mathcal{P}_1$ and $\mathcal{P}_2$\;
let $\mathcal{Q}_R$ be the result of applying the same procedure on $G^R$\;
\textbf{return} the common refinement of $\mathcal{Q}$ and $\mathcal{Q}_R$\;
\end{algorithm}

If we drop the assumption that $k$ is a fixed constant, then, by relying throughout on the deterministic local search procedure of \Cref{proposition:local_search}, there is an exponential dependency on $k$ in the time bound for computing the $(k+2)$-edge-connected components. We can improve this dependency to polynomial, by using instead the randomized local search procedure of \Cref{proposition:randomized_local_search}. The local search procedure is used in two different places in our framework: in the algorithm for computing the \ecc{(k+1)}-decomposition (\Cref{theorem:keccdecomp}), and in the search for small $M$-sets in Line~\ref{line:localsearchforkcut} of \Cref{algorithm:main}. 

In particular, if we want to have the same guarantees for \Cref{algorithm:main} as in the statement of \Cref{theorem:main}, but with a running time of $O(k^2m\sqrt{n}\log(n/\delta))$, then we must replace the local search procedure in Line~\ref{line:localsearchforkcut} with that provided by \Cref{proposition:randomized_local_search}, and repeat it for $\lceil \log{(4n/\delta)} \rceil$ times for every vertex $v\neq s$. This is in order to get a $\delta/(4n)$ bound on the probability of failure for every vertex $v\neq s$, for the search for small $M$-sets in each direction (i.e., in $G$ and in $G^R$), so that the union bound gives a $\delta/2$ bound on the probability of failure for this search in both directions. Furthermore, we must increase the sampling rate in Line~\ref{line:algorithm:main:sample} from $\lceil\sqrt{n}\log_{2}(2n/\delta)\rceil$ to $\lceil\sqrt{n}\log_{2}(4n/\delta)\rceil$, in order to also get a $\delta/2$ bound on the probability of failure for the search for large $M$-sets (in both directions). Thus, by the union bound, the total probability of failure of \Cref{algorithm:main} can be bounded by $\delta$.

Overall, in order to establish \Cref{theorem:keccs}, we need as input the graph $G$, and the parameters $k$ and $\delta$. First, we employ the randomized decomposition of \Cref{theorem:keccdecomp}, which runs in $\Ot(k^2 m)$ time and produces a correct decomposition $H_1,\dots, H_t$ with probability at least $1 - \delta/2$. Then, for each graph $H_i$, we apply \Cref{algorithm:main} with bound $\delta/(2n)$ on the probability of failure. This is because: first, there are at most $n$ graphs $H_i$ on which we apply this algorithm, and second, each of those graphs has at most $n$ ordinary vertices (that correspond to vertices of the original graph). Thus, the union bound establishes the $\delta$ bound on the probability of failure for computing the $(k+2)$-edge-connected components of the original graph $G$.

\ignore{
For non-constant values of~$k$, we employ the randomized decomposition of \Cref{theorem:keccdecomp}, which runs in $\Ot(k^2 m)$ time and produces a correct decomposition with probability at least $1 - \delta_1$, 
for a user-specified parameter $0 < \delta_1 < 1$.
If the construction fails (which occurs with probability at most~$\delta_1$), the algorithm terminates and reports an error.

Assuming that the \ecc{(k+1)}-decomposition succeeds, we proceed to execute \Cref{algorithm:main}. 
In line~\ref{line:localsearchforkcut}, we invoke the randomized local search described in \Cref{proposition:randomized_local_search}. 
To amplify the success probability for correctly identifying the small $M$-sets, we repeat the local search from each vertex~$v$ a total of 
$\lceil \log{(2n/\delta_2)} \rceil$ times, for a user-specified parameter $0 < \delta_2 < 1$. 
Since at most $2(n-1)$ local searches are performed for small $M$-sets (at most $n-1$ in the forward direction and at most $n-1$ in the reverse direction), a union bound implies that the probability of any such search failing is at most~$\delta_2$.
Moreover, by the proof of \Cref{theorem:main}, the probability that the search for large $M$-sets fails (in either direction) is at most~$\delta_3$, for a user-specified $0 < \delta_3 < 1$.
Consequently, the overall success probability of our algorithm is at least $1 - \delta$, where $\delta = \delta_1 + \delta_2 + \delta_3$.

This completes the proof of \Cref{theorem:keccs}.\lnote{Added a formal proof of Theorem 1. Please check!}
}

\subsection{Computing the $4$-edge-connected components}
In order to establish \Cref{theorem:4eccs}, we use the same framework that we applied in order to establish \Cref{theorem:keccs}, with some crucial differences that we will discuss here. (Notice that \Cref{theorem:4eccs} is not an immediate consequence of \Cref{theorem:keccs}, because here we assume that the input graph $G$ is any \emph{general} digraph, i.e., not necessarily $k$-edge-connected.)

First, \Cref{theorem:3eccdecomp} implies that the computation of the $4$-edge-connected components of $G$ reduces to that of the $4$-edge-connected components of the graphs (of the $3$-ECC decomposition) $H_1,\dots,H_t$. Thus, from now on we may assume that the input graph $G$ is strongly connected and has two kinds of vertices: ordinary and auxiliary, where the ordinary vertices are $3$-edge-connected. The goal is to compute a partition $\mathcal{P}$ of $V(G)$ such that two ordinary vertices are $4$-edge-connected if and only if they belong to the same set in $\mathcal{P}$.

Now we fix an ordinary vertex $s$, and we apply \Cref{algorithm:main}, with the following modification in Line~\ref{line:prop_for_auxiliary}: instead of using \Cref{proposition:main_auxiliary}, we use \Cref{proposition:main_auxiliary_4}. This is because here we may have $\lambda(v,s)\in\{1,2\}$ (whereas with the assumption that the graph is $k$-edge-connected, only the case $\lambda(v,s)=k$ can arise).

\begin{restatable}{proposition}{mainaux}
\label{proposition:main_auxiliary_4}
Let $G$ be a strongly connected graph that has two kinds of vertices: ordinary and auxiliary, where the ordinary vertices are $3$-edge-connected. Let $s$ be an ordinary vertex, and let $v$ be a vertex with $\lambda(v,s)\leq 2$. Then there is an algorithm that runs in $O(m)$ time and returns a partition $\mathcal{P}$ of $V(G)$ with the following properties:
\begin{itemize}
\item{$\mathcal{P}$ maintains the $4$-edge-connected components of $G$.}
\item{For every two ordinary vertices $u$ and $w$ with $v\in M(u)$ and $w\notin M(u)$, we have that $u$ and $w$ are separated by $\mathcal{P}$. }
\end{itemize}
(Here ``$M(u)$'' is the minimum $3$-out set that separates $u$ from $s$.)
\end{restatable}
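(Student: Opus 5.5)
We follow the strategy sketched for \Cref{proposition:main_auxiliary}, iterated: latest mincuts raise $\lambda$ by at least one per round, starting from $\lambda(v,s)\le 2$ and ending at $\lambda=3$. \Cref{proposition:main_ordinary} (with $k=2$) handles that final case. Since $\lambda$ starts at $1$ or $2$, there are at most two rounds, so $O(m)$ time in total.

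\textbf{First step.} Compute $\lambda=\lambda(v,s)\in\{1,2\}$ and the latest minimum $(v,s)$-cut $S$ (the maximum $\lambda$-out set containing $v$ but not $s$). This takes at most three Ford--Fulkerson augmentations plus a reverse reachability search from $s$ in the residual graph, hence $O(m)$ time. Every ordinary vertex $x$ has $\lambda(x,s)\ge 3$, so $S$ contains no ordinary vertex. Consider any ordinary $u$ with $v\in M(u)$ and put $U=M(u)\cup S$. By submodularity, $out(U)\le out(M(u))+out(S)-out(M(u)\cap S)$. Since $M(u)\cap S$ contains $v$ but not $s$, we have $out(M(u)\cap S)\ge\lambda$. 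Hence $out(U)\le 3$, and every ordinary $w\notin M(u)$ satisfies $w\notin U$.

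\textbf{Case split.} Let $e_1,\dots,e_\lambda$ be the outgoing edges of $S$.
\begin{itemize}
\item \emph{Case (1): all $e_i$ leave $U$.} Then $out(U\setminus\{\text{these}\})$ in $G\setminus\{e_1,\dots,e_\lambda\}$ is at most $3-\lambda$. The vertices $u$ and $w$ are then separated by a cut of size at most $3-\lambda$ in $G'=G\setminus\{e_1,\dots,e_\lambda\}$, i.e., they are not $(4-\lambda)$-edge-connected in $G'$. So we take the $(4-\lambda)$-edge-connected components of $G'$: the $3$-ECCs via~\cite{Linear3ECC} when $\lambda=1$, or the $2$-ECCs via~\cite{GIP20:SICOMP} when $\lambda=2$. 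Both run in linear time. Removing $\lambda$ edges preserves the property that $4$-edge-connected vertices of $G$ are $(4-\lambda)$-edge-connected, so this partition maintains the $4$-ECCs.
\item \emph{Case (2): some $e_i=(x,y)$ has $y\in U$.} For each $i$, contract $S\cup\{y_i\}$ into a vertex $r_i$. Every $(r_i,s)$-cut corresponds to a cut containing $S$ and $y_i$, so \Cref{lemma:include_outgoing_edge} (latest-mincut maximality) gives $\lambda(r_i,s)>\lambda$. The set $U$ contracts to a $3$-out set containing $r_i$ and $u$ but not $s$, and the $M$-sets of ordinary vertices containing $S\cup\{y_i\}$ are unchanged. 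Contraction only merges vertices, so the $4$-ECC relation among ordinary vertices is preserved. We then recurse from $r_i$: apply \Cref{proposition:main_ordinary} if $\lambda(r_i,s)=3$, or repeat the whole procedure if $\lambda(r_i,s)=2$. The case $\lambda(r_i,s)>3$ cannot occur when case (2) is relevant, since $U$ is a $3$-out set.
\end{itemize}

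\textbf{Assembly.} The output is the common refinement of the case-(1) partition and all recursive partitions, with contracted vertices expanded back. There are at most $\lambda\le 2$ branches per level and at most two levels, so $O(1)$ linear-time computations in total. Any $u,w$ as in the statement fall into case (1) or into the branch of some $e_i$ lying inside $U$, and are separated there.

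\textbf{Main obstacle.} The delicate step is case (2). We must verify that after contraction the relevant set is still exactly $M(u)$ (not a smaller $3$-out set that would wrongly drop $w$), or at least that a partition good for $r_i$ in the contracted graph still separates $u$ from $w$. The plan is to apply \Cref{proposition:main_ordinary} to the contracted graph with the $3$-out set $U/(S\cup\{y_i\})$, noting that this is a minimum $(r_i,s)$-cut containing $u$ and excluding $w$, and then to argue via the Picard--Queyranne DAG, rather than through $M$-sets directly, that $u$ and $w$ lie in different SCCs.
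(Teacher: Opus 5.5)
Your proposal is correct and follows essentially the same route as the paper: take the latest $(v,s)$-mincut $S$, use submodularity to get a $3$-out set $M(u)\cup S$, and split on whether the out-edges of $S$ leave it. If they do, you take the $(4-\lambda)$-edge-connected components after deleting those edges; if one lies inside, you contract $S$ with its head and either apply the Picard--Queyranne argument or run one more round. The obstacle you flag is resolved as you suggest: your case analysis only ever uses some $3$-out set containing $v,u$ and avoiding $s,w$, as in the Remark after \Cref{proposition:13_2}. Alternatively, $M_{\widetilde G_i}(u)$ automatically contains $r_i$, because every $3$-out set separating $u$ from $s$ contains $M(u)\ni v\in S$ and so cannot avoid $S\cup\{y_i\}$.
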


We note that Line~\ref{line:prop_for_auxiliary} is the only place in which the use of \Cref{algorithm:main} for the computation of the $4$-edge-connected components differs from that for the computation of the $(k+2)$-edge-connected components. In particular, in Lines~\ref{line:localsearchforkcut}, \ref{line:lambda=k+1} and \ref{line:lambda<=k}, we just replace ``$k$'' with ``$2$''.
The analysis is the same as in \Cref{section:computing_(k+2)-eccs,section:main-result}, where the ``$M$'' sets here have the meaning of minimal $3$-out sets that separate ordinary vertices from $s$. The proof of \Cref{proposition:main_auxiliary_4} is discussed in \Cref{section:proof_of_prop_main_auxiliary}.

%% file: text.tex
\ignore{
\section{Main result}
\label{section:main-result}
\todo[inline]{We should re-organize. Title: Technical overview, definitions, statement of theorem for 3ecc-decomposition, statements of propositions 11,12 and 27, proof of main theorem}

Let $G$ be a strongly connected graph with $n$ vertices and $m$ edges (thus, we have $m\geq n$). We assume that $G$ consists of two kinds of vertices: ordinary and auxiliary. The ordinary vertices of $G$ are $3$-edge-connected. Our main result is the following:

\begin{theorem}
Let $\delta$ be a number with $0<\delta<1$. There is an algorithm (see \Cref{algorithm:main}) that runs in $O(m\sqrt{n}\log(n/\delta))$ time and outputs a partition $\mathcal{P}$ of $V(G)$ with the following guarantees:
\begin{itemize}
\item{Every two $4$-edge-connected ordinary vertices of $G$ are in the same set from $\mathcal{P}$.}
\item{With probability at least $1-\delta$, every two ordinary vertices of $G$ that are not $4$-edge-connected are separated by $\mathcal{P}$.}
\end{itemize} 
\end{theorem}
\begin{proof}
Let $s$ be a fixed ordinary vertex of $G$. First, we compute all $M$-sets with volume at most $m/\sqrt{n}$. Specifically, for every ordinary vertex $v\neq s$, we apply the algorithm $\mathtt{LocalSearchFor3Cut}(G,v,s,\Delta)$, whose guarantees are stated in \Cref{proposition:local_search}. Thus, we get a vertex set $S(v)$ such that, if $M(v)\neq\bot$ and $\mathit{vol}(M(v))\leq m/\sqrt{n}$, then $S(v)=M(v)$. Otherwise, we may still get $M(v)$, or $\emptyset$. (In any case, we will not get a non-empty output which is distinct from $M(v)$.) Since we apply this procedure for every ordinary vertex $v\neq s$ (and the number of such ordinary vertices may be as large as $n-1$), by \Cref{proposition:local_search} we get an $O((n-1)\cdot m/\sqrt{n})=O(m\sqrt{n})$ time bound for this step. The total output is a collection of $O(n)$ sets of vertices with $O(m/\sqrt{n})$ vertices in each set. We can sort those sets in $O(n\cdot m/\sqrt{n})=O(m\sqrt{n})$ time with bucket sort, and determine the partition $\mathcal{P}_1$ of the ordinary vertices that have the same $M$-set with volume $O(m/\sqrt{n})$. To be more precise, if for two distinct ordinary vertices $v$ and $v'$ we have $S(v)=S(v')$, then we put $v$ and $v'$ in the same set in $\mathcal{P}_1$. (And we can put all auxiliary vertices of $G$ in a single set in this partition.) Thus, if two ordinary vertices are $4$-edge-connected, then they appear in the same set in $\mathcal{P}_1$.

Now we have to consider the separations induced by the $M$-sets of ordinary vertices with volume larger than $m/\sqrt{n}$. To do this, we first sample enough edges (uniformly, with repetitions allowed) so that, with probability at least $1-\delta$, we have that, for every such ``large'' $M$-set $U$, we have sampled at least one edge whose tail is in $U$. Thus, it is sufficient to sample $N=\lceil\sqrt{n}\log_{2}(n/\delta)\rceil$ edges. To see this, consider an $M$-set $U$ with volume more than $m/\sqrt{n}$. This means that there are more than $m/\sqrt{n}$ edges whose tail is in $U$. Then, the probability that a randomly chosen edge does not have its tail in $U$ is less than $1-(m/\sqrt{n})/m = 1/\sqrt{n}$. Thus, the probability that, after $N$ samples, we have not sampled an edge whose tail is in $U$ is less than 
$$
(1-1/\sqrt{n})^N\leq 2^{-N/\sqrt{n}}\leq 2^{-\sqrt{n}\log_{2}(n/\delta)/\sqrt{n}} = 2^{-\log_{2}(n/\delta)}=\delta/n.$$ 
Thus, since $n$ is a trivial upper bound on the number of distinct $M$-sets, by the union bound we have that: the probability that there is an $M$-set $U$ with volume more that $m/\sqrt{n}$ for which we have not sampled an edge whose tail is in $U$ is less than $\delta$. 

Thus, from now on, we may assume that: for every $M$-set $U$ with volume more than $m/\sqrt{n}$, we have sampled at least one edge whose tail is in $U$ $(*)$. 

Now, for every tail $v$ of an edge that we have sampled, we do the following. First, we perform at most four iterations of Ford-Fulkerson's algorithm, in order to determine whether $\lambda(v,s)<3$, or $\lambda(v,s)=3$, or $\lambda(v,s)>3$. (This takes time $O(m)$.) In the first and second case, we apply \Cref{proposition:main_auxiliary} and \Cref{proposition:main_ordinary}, respectively, on $v$, in order to get a partition $\mathcal{P}(v)$ of $V(G)$ with the guarantess provided by the respective proposition. (Again, this step takes $O(m)$ time.) In the third case (i.e., if $\lambda(v,s)>3$), we do nothing. Thus, we get a collection of $O(N)$ partitions of $V(G)$ in total time $O(Nm)=O(\sqrt{n}\log(n/\delta)m)$. We compute the common refinement $\mathcal{P}_2$ of those partitions with bucket sort, and this takes time $O(nN)$. Then, we return the common refinement of $\mathcal{P}_1$ and $\mathcal{P}_2$, which takes $O(n)$ time to be computed. This is how we get a partition $\mathcal{Q}$. Notice that, by the guarantees of \Cref{proposition:main_auxiliary,proposition:main_ordinary}, we have that $\mathcal{Q}$ maintains the $4$-edge-connected components of $G$.

Now we repeat the same process on the reverse graph $G^R$, with the same fixed vertex $s$, and thus we get a partition $\mathcal{Q}_R$. Our final output is the common refinement of $\mathcal{Q}$ and $\mathcal{Q}_R$.

In order to establish correctness, it remains to show that: if two ordinary vertices $u$ and $w$ are not $4$-edge-connected, then they appear in distinct sets in the final output. So let $u$ and $w$ be two ordinary vertices that are not $4$-edge-connected. Then, by \Cref{lemma:distinctM} we have that either $M(u)\neq M(w)$ or $M_R(u)\neq M_R(w)$. Let us assume, w.l.o.g., that $M(u)\neq M(w)$. Now, if either $\mathit{vol}(M(u))\leq m/\sqrt{n}$ or $\mathit{vol}(M(w))\leq m/\sqrt{n}$, then $u$ and $w$ are separated by $\mathcal{P}_1$, and thus by the final output. Otherwise, for each of $u$ and $w$ we have that its $M$-set is either $\bot$, or it has volume more than $m/\sqrt{n}$, but they cannot both be $\bot$. Now, since $M(u)\neq M(w)$, by \Cref{lemma:distinctM2} we have that either $M(u)\neq\bot$ and $w\notin M(u)$, or $M(w)\neq\bot$ and $u\notin M(w)$. Thus, we may assume, w.l.o.g., that $M(u)\neq\bot$ and $w\notin M(u)$. Now, since $\mathit{vol}(M(u))>m/\sqrt{n}$, our assumption $(*)$ implies that we have sampled at least one edge whose tail $v$ is in $M(u)$. Then, we obviously have that $\lambda(v,s)\leq 3$, and therefore one of \Cref{proposition:main_auxiliary,proposition:main_ordinary} (depending on whether $\lambda(v,s)<3$ or $\lambda(v,s)=3$, respectively), implies that $\mathcal{P}(v)$ separates $u$ and $w$. Thus, $u$ and $w$ are separated by the final output.

(Notice that our assumption $(*)$ holds on $G$ with probability more than $1-\delta$. However, we want the same assumption to hold simultaneously for $G$ and $G^R$ with probability more than $1-\delta$. Thus, we should have chosen $\delta$ to be half of the desired probability of failure, so that, by the union bound, this assumption is satisfied simultaneously in $G$ and $G^R$ with the desired probability of success.)
\end{proof}

\begin{algorithm}[h!]
\caption{\textsf{Compute the $4$-edge-connected components of $G$, with probability at least $1-\delta$}}
\label{algorithm:main}
\LinesNumbered
\DontPrintSemicolon
fix an ordinary vertex $s$\;
\ForEach{ordinary vertex $v\neq s$}{
  apply procedure $\mathtt{LocalSearchFor3Cut}(G,v,s,m/\sqrt{n})$, and let $S(v)$ be the output\;
  \tcp{See \Cref{proposition:local_search}}
}
let $\mathcal{P}_1$ be the partition of $V$ that is formed by the clusters of vertices that have the same $S$ set computed in the previous step (let $S(u)=\emptyset$ for every auxiliary vertex $u$)\;
sample $\lceil\sqrt{n}\log_{2}(2n/\delta)\rceil$ edges (uniformly, with repetitions allowed)\;
\ForEach{tail $v$ of a sampled edge}{
  \If{$\lambda(v,s)=3$}{
    apply the algorithm described in \Cref{proposition:main_ordinary}, and let $\mathcal{P}(v)$ be the output partition of $V$\;
  }
  \ElseIf{$\lambda(v,s)<3$}{
    apply the algorithm described in \Cref{proposition:main_auxiliary}, and let $\mathcal{P}(v)$ be the output partition of $V$\;
  }
}
let $\mathcal{P}_2$ be the common refinement of all partitions of the form $\mathcal{P}(v)$ computed in the previous step\;
let $\mathcal{Q}$ be the common refinement of $\mathcal{P}_1$ and $\mathcal{P}_2$\;
let $\mathcal{Q}_R$ be the result of applying the same procedure on $G^R$\;
\textbf{return} the common refinement of $\mathcal{Q}$ and $\mathcal{Q}_R$\;
\end{algorithm}
}

\section{Basic definitions}

\ignore{
\subsection{Basic definitions}
Let $G$ be a strongly connected graph with $n$ vertices and $m$ edges, that consists of two kinds of vertices: ordinary and auxiliary. The ordinary vertices of $G$ are $3$-edge-connected. In everything that follows, we assume that we work on the graph $G$, and all graph elements (e.g., vertices, edges, cuts, etc.) refer to $G$.

Let $S$ be a set of vertices. Every edge of the form $(x,y)$ with $x\in S$ and $y\notin S$ is called an \emph{outgoing edge} of $S$. The number of all outgoing edges of $S$ is denoted as $\mathit{out}(S)$. If $\mathit{out}(S)=k$, then $S$ is called a \emph{$k$-out set}. The tail of every outgoing edge of $S$ is called a \emph{boundary point} of $S$, and the head of every outgoing edge of $S$ is called an \emph{exit point} of $S$. We let $\mathit{vol}(S)$ denote \emph{the volume} of $S$: i.e., the number of edges whose tail is in $S$. When we say that an edge $e$ lies in a vertex set $U$, we mean that both endpoints of $e$ are in $U$. Otherwise, we say that $e$ does not lie in $U$.

Let $X$ and $Y$ be two disjoint sets of vertices. Then, every set of vertices $S$ with $X\subseteq S$ and $S\cap Y=\emptyset$ is called an $(X,Y)$-cut. (Notice that the order of $X$ and $Y$ here is important.) We may also say that $S$ separates $X$ and $Y$ (where, again, the order of $X$ and $Y$ in this expression is important). If $X$ or $Y$ consists of a single vertex, we may substitute it with the vertex that it consists of, and so we may speak of $(x,Y)$-cuts, or $(x,y)$-cuts, or $(X,y)$-cuts, where $x$ and $y$ are vertices. 

Let $\mathcal{P}$ be a partition of $V$. If two vertices $x$ and $y$ belong to different sets from $\mathcal{P}$, then we say that \emph{$\mathcal{P}$ separates $x$ and $y$}. We say that \emph{$\mathcal{P}$ maintains the $4$-edge-connected components} if every two $4$-edge-connected ordinary vertices of $G$ are in the same set of $\mathcal{P}$.
}

\subsection{The operator of local edge connectivity}
Let $X$ and $Y$ be two disjoint sets of vertices. If there is a $k$-out set $S$ with $X\subseteq S$ and $S\cap Y=\emptyset$, then we say that $S$ is \emph{a $k$-cut that separates $X$ and $Y$}. (We note that the order of $X$ and $Y$ here is important.)  Furthermore, if there is no $k'$-out set $S'$ with $k'<k$ such that $X\subseteq S'$ and $S'\cap Y=\emptyset$, then we say that $S$ is an \emph{$(X,Y)$-mincut}. If there is an $(X,Y)$-mincut $S$ with $\mathit{out}(S)=k$, then we write $\lambda(X,Y)=k$. (We note that the operator $\lambda$ does not necessarily act symmetrically on sets of vertices.)

If either of $X$ and $Y$ consists of a single vertex, then we may substitute it in the expression ``$\lambda(X,Y)$'' with the vertex that it consists of. (E.g., if $X=\{x\}$, then we may denote $\lambda(X,Y)$ simply as $\lambda(x,Y)$.) Furthermore, we may write ``$\lambda_G$'' instead of just ``$\lambda$'', if we want to specify the reference graph $G$.

\begin{lemma}
\label{lemma:submodularity3}
Let $X$ and $Y$ be two disjoint sets of vertices with $\lambda(X,Y)= k$, and let $S$ and $S'$ be two $k$-out sets that separate $X$ and $Y$. Then, both $S\cup S'$ and $S\cap S'$ are $k$-out sets.
\end{lemma}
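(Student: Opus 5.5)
We use the standard submodularity of the out-degree function. For any two vertex sets $A$ and $B$ we have
\[
\mathit{out}(A)+\mathit{out}(B)\;\geq\;\mathit{out}(A\cup B)+\mathit{out}(A\cap B).
\]
We verify this by counting contributions edge by edge. Every edge $(x,y)$ counted on the right-hand side is also counted at least as many times on the left-hand side. This is a case check on whether $x$ and $y$ lie in $A\setminus B$, $B\setminus A$, $A\cap B$, or outside $A\cup B$.

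Concretely, an edge from $A\cap B$ to outside $A\cup B$ is counted twice on each side. An edge from $A\cap B$ to $A\setminus B$ is counted once on the right (as outgoing from $A\cap B$) and once on the left (from $B$). An edge from $A\setminus B$ to outside $A\cup B$ is counted once on each side. The remaining cases are symmetric or only contribute to the left-hand side, such as edges from $A\setminus B$ to $B\setminus A$.

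Now apply this with $A=S$ and $B=S'$. Since $X\subseteq S\cap S'$ and $(S\cup S')\cap Y=\emptyset$, both $S\cup S'$ and $S\cap S'$ are vertex sets separating $X$ and $Y$. By the definition of $\lambda(X,Y)=k$, no set separating $X$ and $Y$ has out-degree smaller than $k$, so $\mathit{out}(S\cup S')\geq k$ and $\mathit{out}(S\cap S')\geq k$. Submodularity gives
\[
2k=\mathit{out}(S)+\mathit{out}(S')\geq \mathit{out}(S\cup S')+\mathit{out}(S\cap S')\geq 2k,
\]
so equality holds throughout. Hence both terms equal exactly $k$, and $S\cup S'$ and $S\cap S'$ are $k$-out sets.

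The only step needing care is the submodularity inequality, in particular the edge-case bookkeeping for edges with one endpoint in $A\cap B$ or going between $A\setminus B$ and $B\setminus A$. Everything else is immediate from the minimality encoded in $\lambda(X,Y)=k$. Note also that $S\cap S'$ is nonempty because it contains $X$, so it is a legitimate cut.
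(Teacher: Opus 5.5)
Your proof is correct and follows the paper's argument: both $S\cup S'$ and $S\cap S'$ separate $X$ and $Y$, so each has out-degree at least $k$, and submodularity forces both to equal $k$. The paper simply cites submodularity of the cut function, whereas you also verify it by an edge-by-edge case count, and that case analysis is correct.
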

\begin{proof}
Due to the submodularity of the cut function, we have $\mathit{out}(S\cap S')+\mathit{out}(S\cup S')\leq\mathit{out}(S)+\mathit{out}(S')$, and therefore $\mathit{out}(S\cap S')+\mathit{out}(S\cup S')\leq 2k$. Notice that both $S\cap S'$ and $S\cup S'$ are cuts that separate $X$ and $Y$. Thus, since $\lambda(X,Y)= k$, we have $\mathit{out}(S\cap S')\geq k$ and $\mathit{out}(S\cup S')\geq k$. Therefore, we have $2k\leq\mathit{out}(S\cap S')+\mathit{out}(S\cup S')$, and thus we infer that $\mathit{out}(S\cap S')=k$ and $\mathit{out}(S\cup S')=k$.
\end{proof}

We will be using the following corollary throughout, without explicitly invoking it.

\begin{corollary}
\label{corollary:minimal-exists}
Let $X$ and $Y$ be two disjoint sets of vertices with $\lambda(X,Y)=k$. Then there is an inclusion-wise minimum $k$-out set that separates $X$ and $Y$. Furthermore, there is an inclusion-wise maximum $k$-out set that separates $X$ and $Y$.
\end{corollary}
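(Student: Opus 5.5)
The plan is to derive both claims directly from \Cref{lemma:submodularity3}, using that there are only finitely many vertex sets. Let $\mathcal{F}$ be the family of all $k$-out sets that separate $X$ and $Y$. Since $\lambda(X,Y)=k$, some $(X,Y)$-mincut $S$ has $\mathit{out}(S)=k$, so $\mathcal{F}\neq\emptyset$. Moreover, every member of $\mathcal{F}$ is itself an $(X,Y)$-mincut, because no $(X,Y)$-cut has fewer than $k$ outgoing edges.

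\emph{Step 1 (closure of $\mathcal{F}$).} Take $S,S'\in\mathcal{F}$. Since $X\subseteq S\cap S'$ and $(S\cup S')\cap Y=\emptyset$, both $S\cap S'$ and $S\cup S'$ separate $X$ and $Y$. By \Cref{lemma:submodularity3}, both are $k$-out sets, so both lie in $\mathcal{F}$. Hence $\mathcal{F}$ is closed under pairwise intersection and pairwise union.

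\emph{Step 2 (extremal elements).} The family $\mathcal{F}$ is finite, being a subfamily of $2^V$. Let $S_{\min}=\bigcap_{S\in\mathcal{F}}S$ and $S_{\max}=\bigcup_{S\in\mathcal{F}}S$. Applying Step 1 repeatedly, by induction on the number of sets combined, shows that $S_{\min},S_{\max}\in\mathcal{F}$. By construction, $S_{\min}$ is contained in every member of $\mathcal{F}$, and $S_{\max}$ contains every member. They are therefore the inclusion-wise minimum and maximum $k$-out sets separating $X$ and $Y$. Uniqueness follows immediately: two inclusion-wise minimum sets would each contain the other, and the same holds for maximum sets.

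There is essentially no obstacle here. The only point requiring care is in Step 1, where we must check that the intersection and the union still separate $X$ and $Y$ in the correct order. This is what allows \Cref{lemma:submodularity3} to apply. For the intersection, $X$ lies in both sets, so $X\subseteq S\cap S'$. For the union, neither set meets $Y$, so $(S\cup S')\cap Y=\emptyset$.
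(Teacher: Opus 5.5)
Your proposal is correct and follows the paper's own argument: both take the family of all $k$-out sets separating $X$ and $Y$ and use \Cref{lemma:submodularity3} to conclude that its total intersection and total union are again such sets. You simply spell out what the paper leaves implicit, namely finiteness, induction over pairwise operations, and the check that intersection and union still separate $X$ from $Y$ in the right order.
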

\begin{proof}
Consider the collection $\mathcal{S}$ of all $k$-out sets that separate $X$ and $Y$. Then, \Cref{lemma:submodularity3} implies that $\bigcap{\mathcal{S}}$ is a $k$-out set that separates $X$ and $Y$. Similarly, \Cref{lemma:submodularity3} implies that $\bigcup{\mathcal{S}}$ is a $k$-out set that separates $X$ and $Y$.
\end{proof}


Let $Z$ be a set of vertices of $G$, and let $G_Z$ be the graph that is formed by contracting $Z$ into a single vertex $z$. If $X$ is a set of vertices of $G_Z$, then we let $X'$ denote $X$ if $z\notin X$, and $(X\setminus\{z\})\cup{Z}$ otherwise. (I.e., $X'$ is the ``uncontraction'' of $X$ in $G$.) Then it is easy to see that, for every set of vertices $S$ of $G_Z$, we have $\mathit{out}_{G_Z}(S)\geq\mathit{out}_G(S')$. This implies that, for every two disjoint sets of vertices $X$ and $Y$ of $G_Z$, we have $\lambda_{G_Z}(X,Y)\geq\lambda_G(X',Y')$. (I.e., the process of contracting vertices can only increase the edge connectivity.)

\subsection{Minimum $(k+1)$-out sets}
In everything that follows, we will use ``$s$'' to denote a fixed ordinary vertex of $G$. We will use $s$ in order to consider minimum $(k+1)$-out sets that separate vertices from it. Specifically, let $v$ be a vertex with $\lambda(v,s)\geq k+1$. If there is a $(k+1)$-out set $S$ with $v\in S$ and $s\notin S$, then we let $M(v)$ denote the (inclusion-wise) minimum such $(k+1)$-out set. Otherwise, we let $M(v)=\bot$. We use ``$M_R$'' to denote the same concept as ``$M$'' in the reverse graph $G^R$.

The reason that we consider the $M$-sets is twofold. First, those sets are sufficient in order to determine the relation of $(k+1)$-edge-connectivity, as shown in \Cref{lemma:distinctM}.\footnote{A similar observation was utilized in \cite{Linear3ECC}, in order to compute the $3$-edge-connected components in linear time. Specifically, \cite{Linear3ECC} also uses a concept of $M$-sets, which are the minimum $2$-in sets that separate $s$ from a vertex. \Cref{lemma:distinctM} is analogous to Proposition III.5 in \cite{Linear3ECC}.} And second, if the $M$-set of a vertex exists, then it can be computed in time proportional to its volume using a local search procedure (see \Cref{proposition:local_search}).

\begin{lemma}
\label{lemma:distinctM}
Let $u$ and $w$ be two ordinary vertices that are not $(k+2)$-edge-connected. Then, either $M(u)\neq M(w)$, or $M_R(u)\neq M_R(w)$.
\end{lemma}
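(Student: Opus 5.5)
We argue by contrapositive: assuming $M(u)=M(w)$ and $M_R(u)=M_R(w)$, we show that $u$ and $w$ are $(k+2)$-edge-connected. Since both are ordinary, they are $(k+1)$-edge-connected. So if they are not $(k+2)$-edge-connected, Menger's theorem gives a $(k+1)$-edge cut separating them. Concretely, there is a set $S$ with $\mathit{out}(S)=k+1$ that contains exactly one of $u,w$. Moreover, such a cut cannot have size $\le k$, since $\lambda(u,w),\lambda(w,u)\ge k+1$. By symmetry between $u$ and $w$, assume $u\in S$ and $w\notin S$.

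We case on the position of $s$.

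\textbf{Case $s\notin S$.} Then $S$ is a $(k+1)$-out set containing $u$ and not $s$. Since $\lambda(u,s)\ge k+1$ ($s$ is ordinary), we get $\lambda(u,s)=k+1$, so $M(u)\neq\bot$. By \Cref{corollary:minimal-exists}, $M(u)$ is the inclusion-wise minimum $(u,s)$-cut with $k+1$ outgoing edges, and $S$ is another such cut. Hence $M(u)\subseteq S$, so $w\notin M(u)$.

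On the other hand, $w\in M(w)$ whenever $M(w)\neq\bot$. So either $M(w)=\bot\neq M(u)$, or $w\in M(w)\setminus M(u)$. In both subcases $M(u)\neq M(w)$.

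\textbf{Case $s\in S$.} Then $s\neq u$ or not: if $s=u$, replace the roles appropriately. Consider the complement $T=V\setminus S$, which contains $w$ but not $s$. The number of edges entering $T$ equals $\mathit{out}(S)=k+1$, so $T$ is a $(k+1)$-out set in $G^R$ containing $w$ and not $s$. Running the previous argument in $G^R$ gives $M_R(w)\neq\bot$, $M_R(w)\subseteq T$, and hence $u\notin M_R(w)$. Therefore $M_R(u)\neq M_R(w)$.

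The case $s\in\{u,w\}$ needs a separate check, since $M(s)=M_R(s)=\bot$ by definition. If $u=s$, then $w\neq s$, and the cut $S$ yields $M_R(w)\neq\bot$, which differs from $M_R(s)=\bot$. If $w=s$, the first case applies, and $M(u)\neq\bot=M(s)$. Both are absorbed into the argument above.

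The only delicate point is that minimality of $M(u)$ with respect to inclusion among \emph{all} $(k+1)$-out sets separating $u$ from $s$ requires $\lambda(u,s)=k+1$ exactly. This is why the ordinary-vertex hypothesis, namely $(k+1)$-edge-connectivity to $s$, is essential, as stressed in the overview. Without it, $S$ might not be comparable to $M(u)$.
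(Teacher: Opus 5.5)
Your proof is correct and matches the paper's argument: you take a $(k+1)$-out set $S$ with $u\in S$ and $w\notin S$. If $s\notin S$, minimality gives $M(u)\subseteq S$ and hence $w\notin M(u)$; if $s\in S$, you pass to $V\setminus S$ in $G^R$ to get $u\notin M_R(w)$. (Your separate check for $s\in\{u,w\}$ is already covered by these two cases, and the argument is actually direct rather than by contrapositive, but neither point affects correctness.)
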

\begin{proof}
Since $u$ and $w$ are not $(k+2)$-edge-connected (but they are $(k+1)$-edge-connected), we may assume, w.l.o.g., that there is a $(k+1)$-cut $S$ that separates $u$ and $w$. Now there are two possibilities: either $s\notin S$, or $s\in S$. Let us first consider the case that $s\notin S$. Then, $S$ is a $(k+1)$-cut that separates $u$ and $s$, and therefore $M(u)$ exists and $M(u)\subseteq S$. Then, since $w\notin S$, we have $w\notin M(u)$. Therefore, we infer that either $M(w)$ does not exist, or $M(w)\neq M(u)$ (because, if $M(w)$ exists, it satisfies $w\in M(w)$). Now let us assume that $s\in S$. Then, $V\setminus S$ is a $(k+1)$-cut in $G^R$ that separates $w$ and $s$. Thus, we have $M_R(w)\subseteq V\setminus S$, and $u\notin M_R(w)$. Therefore, we infer, as previously, that either $M_R(u)$ does not exist, or $M_R(u)\neq M_R(w)$.  
\end{proof}

\begin{lemma}
\label{lemma:distinctM2}
Let $u$ and $w$ be two ordinary vertices such that $M(u)\neq M(w)$. Then, either $M(u)\neq\bot$ and $w\notin M(u)$, or $M(w)\neq\bot$ and $u\notin M(w)$. 
\end{lemma}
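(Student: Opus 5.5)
We argue by cases on which of the two $M$-sets is $\bot$, and then use the minimality of $M$-sets to compare them. First, $M(u)$ and $M(w)$ cannot both be $\bot$, since they are distinct. If exactly one of them is $\bot$, say $M(w)=\bot$, then $M(u)\neq\bot$. We must then show $w\notin M(u)$. Suppose instead that $w\in M(u)$. Then $M(u)$ is a $(k+1)$-out set containing $w$ but not $s$. Since $w$ is ordinary and the ordinary vertices are $(k+1)$-edge-connected, $\lambda(w,s)\geq k+1$, so $M(w)$ is defined. The existence of such a set forces $M(w)\neq\bot$, a contradiction. The symmetric case is identical.

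Now assume both $M(u)\neq\bot$ and $M(w)\neq\bot$. Suppose, for contradiction, that the conclusion fails, so $w\in M(u)$ and $u\in M(w)$. Since $M(u)$ is a $(k+1)$-out set containing $w$ but not $s$, and $M(w)$ is the inclusion-wise minimum such set, we get $M(w)\subseteq M(u)$. Symmetrically, $M(u)\subseteq M(w)$. Hence $M(u)=M(w)$, contradicting the hypothesis.

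The only point needing care is that minimality means inclusion-wise minimum, not merely minimal. This is exactly what \Cref{corollary:minimal-exists} provides when applied with $X=\{w\}$ and $Y=\{s\}$ and $\lambda(w,s)=k+1$. If instead $\lambda(w,s)>k+1$, then no $(k+1)$-out set separates $w$ from $s$, so $w\in M(u)$ is already impossible and this case cannot arise. I expect no real obstacle beyond checking this uniqueness condition.
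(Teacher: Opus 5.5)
Your proof is correct and follows the paper's argument. Both first rule out $M(u)=M(w)=\bot$, then use that $w\in M(u)$ forces $M(w)\neq\bot$ with $M(w)\subseteq M(u)$, and that additionally $u\in M(w)$ would give $M(u)\subseteq M(w)$ and hence equality. The only difference is cosmetic: you split into the cases ``exactly one is $\bot$'' and ``both defined'', whereas the paper assumes w.l.o.g.\ $M(u)\neq\bot$ and handles both situations in a single chain.
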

\begin{proof}
Since $M(u)\neq M(w)$, we cannot have $M(u)=\bot$ and $M(w)=\bot$. Thus, we may assume, w.l.o.g., that $M(u)\neq\bot$. Now, if $w\notin M(u)$, then we are done. So let us assume that $w\in M(u)$. Then, $M(u)$ is a $(k+1)$-out set that separates $w$ and $s$, and therefore $M(w)\neq\bot$. Then, due to the minimality of $M(w)$, we have $M(w)\subseteq M(u)$. Now, if we assume that $u\in M(w)$, then, due to the minimality of $M(u)$, we get $M(u)\subseteq M(w)$, and therefore we have $M(u)=M(w)$, which contradicts the assumption in the statement of the lemma. Thus, we conclude that $u\notin M(w)$.
\end{proof}

In order to compute $M$-sets of ``small'' volume (and, later on, separations of vertices induced by $M$-sets of ``large'' volume), we rely on the following local search procedure.

\localsearch*
\ignore{
\begin{restatable*}{proposition}{localsearch}
\label{proposition:local_search}
Let $v$ be a vertex with $v\neq s$ and $\lambda(v,s)\geq 3$, and let $\Delta\geq 1$ be an integer. There is an algorithm $\mathtt{LocalSearchFor3Cut}(G,v,s,\Delta)$ which runs in $O(\Delta)$ time and returns a set of vertices $S$ (which may be $\emptyset$) with the following guarantees:
\begin{itemize}
\item{If $S\neq\emptyset$, then $S=M(v)$.}
\item{If $S=\emptyset$, then either $M(v)=\bot$ or $\mathit{vol}(M(v))>\Delta$.}
\end{itemize}
\end{restatable*}
}

(Notice that the guarantees of \Cref{proposition:local_search} imply that if $M(v)\neq\bot$ and $\mathit{vol}(M(v))\leq\Delta$, then the output $S$ will be $M(v)$.)

We note that \Cref{proposition:local_search} follows from a straightforward adaptation of ideas contained in \cite{chechik2017faster}, for computing minimal sets of vertices with a bounded number of outgoing edges, through a local search procedure. In order to understand the relation between \Cref{proposition:local_search} and the local search procedure of \cite{chechik2017faster}, we recall the following concept from \cite{chechik2017faster}. Let $u$ be a vertex of a graph $G$, and let $k\geq 0$ be an integer. Then, a set of vertices $S$ is called a $k$-edge-out component of $u$ if $u\in S$ and $\mathit{out}(S)\leq k$, and there is no set of vertices $S'$ with $u\in S'\subset S$ such that $\mathit{out}(S')\leq\mathit{out}(S)$.\lnote{Maybe we should note that this concept differs from the he \outcomp{(\le k)} mentioned earlier.}
The number of edges of $G[S]$ is called the volume of $S$.  Now, given a vertex $u$ and two integers $k$ and $\Delta$, we have the following:


\begin{lemma}[\cite{chechik2017faster}]
\label{lemma:local-search-chilp}
In $O((2k)^{k+1}\cdot\Delta)$ time we can find a $(k-1)$-edge-out component of $u$ with volume less than $(2k-1)(\Delta+1)$, or determine that there is no $(k-1)$-edge-out component of $u$ with volume at most $\Delta$.
\end{lemma}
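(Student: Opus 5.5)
This lemma is cited from CHILP~\cite{chechik2017faster}. The plan is to reconstruct their deterministic branching local search rather than invoke any earlier result of this paper. The procedure is a recursive search $\mathtt{Search}(u,S_0,j)$ that keeps a set of ``forced'' cut edges and at most $k-1$ remaining edge choices.

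The core step works as follows. Starting from $u$, run a DFS/BFS that explores out-edges, ignoring edges already declared as cut edges. Stop after visiting roughly $(2k-1)(\Delta+1)$ edges, or when the exploration closes off. If the exploration closes off, the explored set $S$ has all its outgoing edges among the declared ones, so $\mathit{out}(S)\le k-1$, and we return it (after minimality trimming, discussed below). If the budget is exhausted, we sample uniformly among the first $(2k-1)(\Delta+1)$ explored edges. In the deterministic version we instead branch over a structured choice: take a DFS-tree path and guess which edge on it is an outgoing edge of the target component. Declare that edge cut and recurse with one fewer allowed cut edge.

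The key counting claim concerns a $(k-1)$-edge-out component $T$ of volume at most $\Delta$. Any exploration exceeding the budget must traverse an edge leaving $T$. Moreover, among the explored edges, the edges not inside $T$ number more than $(2k-2)(\Delta+1)$. Hence the branching (at most $2k$ useful choices per level, over $k$ levels) hits an outgoing edge of $T$ at some branch. That branch eventually converges to a set with at most $k-1$ out-edges, all of whose explored volume is less than $(2k-1)(\Delta+1)$. The total running time is the number of branches, $(2k)^{k}$, times $O(k\Delta)$ work per branch, which gives $O((2k)^{k+1}\Delta)$.

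The main obstacle is the minimality requirement in the definition of a $(k-1)$-edge-out component: no proper subset $S'\ni u$ may have $\mathit{out}(S')\le\mathit{out}(S)$. To obtain this, among all successful branches I would return one minimizing $\mathit{out}$ and then inclusion, and argue via submodularity (as in \Cref{lemma:submodularity3}) that the minimum-out sets containing $u$ within the explored region form a lattice. The inclusion-wise minimum of this lattice can then be computed by a bounded Ford–Fulkerson run of at most $k$ augmentations from $u$ inside the explored subgraph. This costs $O(k\Delta)$, within the bound. If no branch succeeds, then by the counting claim no component of volume at most $\Delta$ exists, and we report that.
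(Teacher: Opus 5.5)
There is a genuine gap at the branching step. Deleting (``declaring cut'') a guessed edge is sound only if that edge is an outgoing edge of the target component $T$, and you never produce a set of $O(k)$ candidates guaranteed to contain one.

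\begin{itemize}
\item Guessing ``which edge on a DFS-tree path'' is an outgoing edge of $T$ gives one branch per edge of the path. That is up to $\Theta(k\Delta)$ branches per level and $(k\Delta)^{\Theta(k)}$ overall, not $(2k)^k$.
\item Your counting claim does not fix this. That most explored edges are not inside $T$ only shows that a suitable explored edge has its \emph{tail} outside $T$. The tree path from $u$ to that tail then crosses the boundary of $T$ somewhere, but you do not learn where. The path may also leave and re-enter $T$, crossing several outgoing edges.
\item For the same reason, the sampling variant you mention fails with deletion. A uniformly random explored edge is itself an outgoing edge of $T$ with probability at most $(k-1)/((2k-1)(\Delta+1))$.
\end{itemize}

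The missing idea is to \emph{reverse} the entire tree path from $u$ to a vertex outside $T$, instead of deleting a single edge. This is what the paper sketches for this cited lemma, and it proves the technique in detail for its $M$-set variant (\Cref{lemma:reverse,lemma:DFS}).

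\begin{itemize}
\item By \Cref{lemma:reverse}, reversing such a path lowers $\mathit{out}(T)$ by exactly one, however the path crosses the boundary. It also keeps $T$ minimal. So you only need a candidate \emph{vertex} outside $T$, not the crossing edge.
\item Such candidates can be found deterministically. Run one DFS from $u$ and pause it every $\Delta+1$ explored edges. In each segment, record the tree path to the lowest ancestor of the paused vertex that the DFS backtracked to.
\item Each of the at most $k-1$ outgoing edges of $T$ can be explored once and backtracked over once. So among $2k-1$ segments, some segment touches none of them, and its recorded endpoint lies outside $T$.
\item This gives $O(k)$ candidate paths per level, $k-1$ levels, and a final reachability search, hence the $O((2k)^{k+1}\Delta)$ bound.
\end{itemize}

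As a side benefit, the final reachability set is automatically a $(k-1)$-edge-out component. Every proper subset containing $u$ has an outgoing edge in the reversed graph, hence strictly larger out-degree in $G$. Your Ford--Fulkerson trimming step therefore becomes unnecessary. As stated it is sound, but it costs $O(k^2\Delta)$ rather than $O(k\Delta)$.
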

 
The idea behind \Cref{lemma:local-search-chilp} is roughly the following. If there is a $(k-1)$-edge-out component $S$ of $u$ with volume at most $\Delta$, then there is a DFS-based procedure, that starts from $u$, explores $O(k\Delta)$ edges, and finds a set of $O(k)$ paths with the property that at least one of them starts from $u$ and ends outside of $S$. Then, by reversing the direction of the edges of such a path, $S$ becomes a $(k-2)$-edge-out component in the resulting graph. Thus, if we repeat this process $O(k)$ times, in the end it is sufficient to just explore the reachability set of $u$ in the resulting graph, and this provides $S$.

There is an obvious similarity between the guarantees of \Cref{lemma:local-search-chilp} for $k=k+2$ and \Cref{proposition:local_search}. Specifically, it is easy to see that, if we have a vertex $v\neq s$ with $\lambda(v,s)=k+1$, then $M(v)$ is a $(k+1)$-edge-out component of $v$. Therefore, if $G[M(v)]$ has volume at most $\Delta$, then \Cref{lemma:local-search-chilp} will be able to identify a $(k+1)$-edge-out component of $v$ in $O(\Delta)$ time. However, the $(k+1)$-edge-out component of $v$ returned by \Cref{lemma:local-search-chilp} may have less than $k+1$ outgoing edges. This is because there may exist a $k'$-out set with $k'<k+1$ that contains both $v$ and $s$ and has sufficiently small volume. (In fact, the whole graph is a $0$-edge-out component of $v$, and thus for sufficiently large $\Delta$ we just get the whole graph as output.) But here it helps precisely that we know that $s$ is outside of $M(v)$. Thus, whenever we happen to meet $s$ during the (DFS-based) local search procedure, we can immediately pick and reverse the discovered path from $v$ to $s$. This is the crux of the adaptation that we had to make to the local search procedure from \cite{chechik2017faster}. The full proof of \Cref{proposition:local_search} is given in \Cref{sec:proposition:local_search}. 

\subsection{Latest mincuts}
Let $s$ and $t$ be two distinct vertices with $\lambda(s,t)=k$. By \Cref{corollary:minimal-exists}, there is an inclusion-wise maximum $k$-out set $S$ that separates $s$ and $t$. We call $S$ \emph{the latest $(s,t)$-mincut}. (We note that the order of $s$ and $t$ is important.) 

The intuitive reason that we consider latest mincuts is because these are the furthest that we can cut, without exceeding the value of the minimum cut; a precise formulation of this property is provided in \Cref{lemma:include_outgoing_edge}. Furthermore, the latest mincuts for pairs of vertices with bounded edge-connectivity can be computed in linear time, as shown in the following lemma.
We note that the notion of the latest $(s,t)$-mincut was introduced by Ford and Fulkerson~\cite{FordFulkerson}, who showed how to compute it using the residual graph of the corresponding flow problem. For completeness—and because we will later use similar arguments—we provide a self-contained proof that does not rely on flow-based techniques.

\begin{lemma}
\label{lemma:compute_latest}
Let $s$ and $t$ be two distinct vertices of $G$ with $\lambda(s,t)=k$. Then, the latest $(s,t)$-mincut can be computed in $O(km)$ time.
\end{lemma}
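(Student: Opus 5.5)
The plan is to compute the latest $(s,t)$-mincut by an augmenting-path argument phrased directly in terms of out-sets, avoiding an appeal to flow theory. First, we run $k$ rounds of Ford--Fulkerson: in each round we search, by DFS/BFS from $s$ in the current residual graph, for an $s$--$t$ path, and reverse its edges. Since $\lambda(s,t)=k$, exactly $k$ such paths are found, each round costs $O(m)$, and this gives $O(km)$ in total. Let $G'$ be the graph obtained after these $k$ reversals. Then $t$ is no longer reachable from $s$ in $G'$.

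Next, we compute the set $T$ of vertices that can reach $t$ in $G'$, using a single backward search from $t$ in $O(m)$ time. We output $S=V\setminus T$ and claim that $S$ is the latest $(s,t)$-mincut. The argument rests on the standard counting identity: for every set $X$ with $s\in X$ and $t\notin X$, we have $\mathit{out}_{G'}(X)=\mathit{out}_G(X)-k$. This holds because each reversed path crosses from $X$ to $V\setminus X$ exactly once more than it crosses back, and reversing it therefore lowers the out-count by one, up to a bookkeeping check on each crossing.

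Consequently, the $k$-out sets of $G$ separating $s$ from $t$ are exactly the sets $X\ni s$, $X\not\ni t$ with $\mathit{out}_{G'}(X)=0$. A set $X$ with $\mathit{out}_{G'}(X)=0$ and $t\notin X$ cannot contain any vertex that reaches $t$ in $G'$, so $X\subseteq V\setminus T$. Conversely, $V\setminus T$ has no outgoing edges in $G'$ by the definition of $T$. It also contains $s$, since $s$ cannot reach $t$, and it does not contain $t$. Hence $V\setminus T$ is a $k$-out set of $G$ separating $s$ and $t$ that contains every other such set, which means it is the maximum one guaranteed by \Cref{corollary:minimal-exists}.

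The main obstacle is stating the counting identity carefully when the graph is a multigraph and some reversed edges may later be traversed again in the reverse direction by subsequent paths. To handle this cleanly, I would track the net crossing count per round: by induction on the number of rounds, $\mathit{out}$ of $X$ decreases by exactly $1$ per round, for any path in the current graph from inside $X$ to outside $X$. Everything else is a routine reachability computation.
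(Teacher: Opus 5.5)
Your proposal is correct and follows essentially the same route as the paper: $k$ rounds of augmenting-path reversal, then taking the set of vertices that cannot reach $t$ in the final graph. In both, the per-path identity of \Cref{lemma:reverse} (each reversed $s$--$t$ path lowers $\mathit{out}$ of every $(s,t)$-cut by exactly one) yields both containment of every $k$-out separating set and $\mathit{out}_G(V\setminus T)=k$. The differences are cosmetic. You build the maximum set directly rather than comparing against an a priori latest mincut. You also state without proof that a path exists in each of the first $k$ rounds and that $t$ is unreachable from $s$ afterwards; both follow from that same identity.
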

\begin{proof}
Let $S$ be the latest $(s,t)$-mincut. Now we apply the following procedure (essentially $k$ iterations of Ford-Fulkerson's algorithm for computing an $(s,t)$-maxflow). First, we compute a path $P_1$ in $G_0:=G$ from $s$ to $t$. Then we reverse the direction of the edges of $P_1$, and let $G_1$ be the resulting graph. Now suppose that we have computed a path $P_i$ from $s$ to $t$ in $G_{i-1}$ and a graph $G_i$ using the same process, for some $i\in\{1,\dots,k-1\}$. Then, we compute a path $P_{i+1}$ from $s$ to $t$ in $G_i$, we reverse the direction of the edges of $P_{i+1}$ in $G_i$, and let $G_{i+1}$ be the resulting graph. Finally, we compute the set of vertices $S'$ of $G_k$ that cannot reach $t$. Obviously, this whole procedure can be completed in $O(km)$ time.

Our goal is to show that $S'=S$.
First, a repeated application of \Cref{lemma:reverse} implies that $\mathit{out}_{G_k}(S)=0$. Thus, no vertex from $S$ can reach $t$ in $G_k$, and therefore $S\subseteq S'$. Since $S'$ is an $(s,t)$-cut and $S$ is the inclusion-wise maximum $k$-out set in $G$ that separates $s$ and $t$, we infer that $\mathit{out}_{G}(S')\geq k$. Now, a repeated application of \Cref{lemma:reverse} implies that $\mathit{out}_{G_k}(S')=\mathit{out}_{G}(S')-k$. Due to the definition of $S'$, we have $\mathit{out}_{G_k}(S')=0$. This implies that $\mathit{out}_{G}(S')=k$. Therefore, due to the maximality of $S$, we conclude that $S'=S$.
\end{proof}

\begin{lemma}
\label{lemma:reaching_boundaries}
Let $s$ and $t$ be two distinct vertices, and let $S$ be an $(s,t)$-mincut (not necessarily the latest). Then, every boundary point of $S$ in $G$ is reachable by $s$ through a path in $G[S]$.
\end{lemma}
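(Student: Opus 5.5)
We argue by contradiction, using only the cut-minimality of $S$. Suppose some boundary point $x$ of $S$ is not reachable from $s$ through a path in $G[S]$. Let $R \subseteq S$ be the set of vertices of $S$ reachable from $s$ within $G[S]$. Then $s \in R$, $t \notin R$ (since $t \notin S$), and $x \notin R$ by assumption. We compare $\mathit{out}(R)$ with $\mathit{out}(S)$ and show $\mathit{out}(R) < \mathit{out}(S) = \lambda(s,t)$. Since $R$ separates $s$ and $t$, this contradicts the definition of $\lambda(s,t)$.

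To compare the two counts, we classify the outgoing edges of $R$. Let $(a,b)$ be an edge with $a \in R$ and $b \notin R$. If $b \in S$, then $b$ would be reachable from $s$ within $G[S]$ by extending the path to $a$ with the edge $(a,b)$. This contradicts $b \notin R$. Hence $b \notin S$, so every outgoing edge of $R$ is an outgoing edge of $S$, which gives $\mathit{out}(R) \le \mathit{out}(S)$.

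For strict inequality, note that $x$ is a boundary point of $S$, so there is an edge $(x,y)$ with $y \notin S$. This edge is an outgoing edge of $S$. Since $x \notin R$, it is not an outgoing edge of $R$. Therefore $\mathit{out}(R) \le \mathit{out}(S) - 1 < \lambda(s,t)$, which yields the desired contradiction.

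The only care needed is the closure argument showing that $R$ has no outgoing edges into $S \setminus R$, and that argument is immediate from the definition of reachability. The proof does not use that $S$ is the latest mincut, only that it is a minimum cut, which matches the parenthetical remark in the statement.
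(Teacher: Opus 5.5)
Your proof is correct and is essentially the paper's argument. Both take $R=\mathit{reach}_{G[S]}(s)$, note that every outgoing edge of $R$ is an outgoing edge of $S$, and use $\mathit{out}(R)\ge\lambda(s,t)=\mathit{out}(S)$. The paper argues directly that the two outgoing-edge sets must coincide, whereas you reach the same point by contradiction via the missing edge $(x,y)$.
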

\begin{proof}
Let $R$ be the set of vertices that are reachable by $s$ in $G[S]$. Thus, we have $s\in R$ and $R\subseteq S$. Therefore, $R$ is an $(s,t)$-cut, and therefore $\mathit{out}_{G}(R)\geq\lambda(s,t)$. Furthermore, by the definition of $R$ we have that every outgoing edge of $R$ in $G$ is an outgoing edge of $S$ in $G$. Thus, $\mathit{out}_{G}(R)\leq\mathit{out}_{G}(S)$. Then, since $\mathit{out}_{G}(S)=\lambda(s,t)$, we infer that $\mathit{out}_{G}(R)=\lambda(s,t)$. This shows that the set of the outgoing edges of $R$ in $G$ coincides with the set of the outgoing edges of $S$ in $G$. We conclude that $R$ includes the tails of the outgoing edges of $S$ in $G$ (i.e., the boundary points of $S$).
\end{proof}

For convenience, we use the following notation. Let $v$ be a vertex of a graph $H$. Then $\mathit{reach}_H(v)$ denotes the set of vertices that are reachable from $v$. In other words, $\mathit{reach}_H(v)$ consists of every vertex $x$ of $H$ for which there exists a path that starts from $v$ and ends in $x$.

\begin{lemma}
\label{lemma:include_outgoing_edge}
Let $s$ and $t$ be two distinct vertices, let $S$ be the latest $(s,t)$-mincut, let $R=\mathit{reach}_{G[S]}(s)$, and let $x$ be an exit point of $S$ with $x\neq t$. Then $\lambda(R\cup\{x\},t)>\mathit{out}(S)$.
\end{lemma}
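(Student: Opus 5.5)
Write $k=\mathit{out}(S)=\lambda(s,t)$. The plan is a proof by contradiction: assume $\lambda(R\cup\{x\},t)\le k$ and produce a $k$-out $(s,t)$-cut strictly larger than $S$, contradicting the maximality of the latest mincut.

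First, note that $R\cup\{x\}$ contains $s$ and not $t$ (since $x\neq t$). Every $(R\cup\{x\},t)$-cut is therefore also an $(s,t)$-cut, so $\lambda(R\cup\{x\},t)\ge\lambda(s,t)=k$. Suppose for contradiction that equality holds, and let $T$ be a $k$-out set with $R\cup\{x\}\subseteq T$ and $t\notin T$.

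Both $S$ and $T$ are $k$-out sets separating $s$ from $t$, and $\lambda(s,t)=k$. By \Cref{lemma:submodularity3}, the set $S\cup T$ is also a $k$-out set separating $s$ and $t$. It contains $x$, and $x\notin S$ because $x$ is an exit point of $S$. Hence $S\cup T$ is a $k$-out $(s,t)$-cut that strictly contains $S$. This contradicts the fact that $S$ is the inclusion-wise maximum such set (\Cref{corollary:minimal-exists}). Therefore $\lambda(R\cup\{x\},t)>k$.

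This argument never uses $R$ beyond the fact that $s\in R$. I would still remark why $R$ appears in the statement: by \Cref{lemma:reaching_boundaries}, $R$ contains all boundary points of $S$, and $R$ has the same outgoing edges as $S$, which is presumably how the lemma is applied later. The only point that needs care is that $x\neq t$ is exactly what makes $R\cup\{x\}$ disjoint from $\{t\}$, so that \Cref{lemma:submodularity3} applies to the pair $S,T$. I do not expect a genuine obstacle.
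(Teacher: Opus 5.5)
Your proof is correct, but it takes a different route from the paper's. You uncross the hypothetical $k$-out set $T\supseteq R\cup\{x\}$ with $S$ via \Cref{lemma:submodularity3}, so that $S\cup T$ is a $k$-out $(s,t)$-cut strictly containing $S$.

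The paper instead uses direct edge counting. It writes $S=R\cup U$ with $U=S\setminus R$, and uses \Cref{lemma:reaching_boundaries} to show that every outgoing edge of $U$ has its head in $R$. Since $R\subseteq R'$ (the paper's name for your $T$), every outgoing edge of $S\cup R'=U\cup R'$ is then already an outgoing edge of $R'$. This gives $\mathit{out}(S\cup R')\le\mathit{out}(R')=\mathit{out}(S)$. That containment step is the only place $R$ is genuinely needed, which is why $R$ appears in the statement.

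Your argument is shorter and reuses a lemma already in the paper. It also proves a stronger fact, $\lambda(\{s,x\},t)>\lambda(s,t)$. Because $\lambda$ is monotone in its first argument, this gives $\lambda(A\cup\{x\},t)>\mathit{out}(S)$ for every $A$ with $s\in A$ and $t\notin A$. In particular it yields $\lambda(S\cup\{x\},t)>\mathit{out}(S)$ directly, which is the form actually used in the proof of \Cref{proposition:14_2}; the paper reaches it from the $R\cup\{x_i\}$ version by monotonicity. So your guess that $R$ matters for later applications is not borne out: $R$ is an artifact of the paper's proof technique.

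What the paper's route buys in exchange is an explicit structural statement without a second appeal to submodularity: the cut edges of $S\cup R'$ form a subset of those of $R'$.
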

\begin{proof}
Notice that $R\cup\{x\}$ is an $(s,t)$-cut, and therefore $\lambda(R\cup\{x\},t)\geq\mathit{out}(S)$. Now let us suppose, for the sake of contradiction, that $\lambda(R\cup\{x\},t)=\mathit{out}(S)$. This implies that there is an $(R\cup\{x\},t)$-cut $R'$ with $\mathit{out}(R')=\mathit{out}(S)$. Our goal is to show that $S\cup R'$ is an $(s,t)$-cut with $\mathit{out}(S\cup R')=\mathit{out}(S)$, which contradicts the fact that $S$ is the latest $(s,t)$-mincut (because $S\cup R'$ is strictly larger than $S$, since it contains $x$). Throughout this proof, whenever we speak of an outgoing edge of a set of vertices, we mean an outgoing edge in $G$.

Let $U$ be the set of vertices from $S$ that are unreachable by $s$ in $G[S]$. (I.e., $U=S\setminus R$.) Then, we have $S=R\cup U$. Notice that every outgoing edge of $U$ has its head in $R$. (Otherwise, there exists an edge $(z,w)$ with $z\in U$ and $w\notin S$. Thus, $(z,w)$ is an outgoing edge of $S$. But then, since $S$ is an $(s,t)$-mincut, by \Cref{lemma:reaching_boundaries} we have that $s$ can reach $z$ in $G[S]$, in contradiction to the definition of $U$.)

Now, since $S=R\cup U$ and $R\subset R'$, we have $S\cup R' = U\cup R'$. Let $(z,w)$ be an outgoing edge of $S\cup R'$. Then, either $z\in U$ or $z\in R'$. If $z\in U$, then $(z,w)$ is an outgoing edge of $U$, and so $w\in R$. But $R\subset R'$, and so $(z,w)$ cannot be an outgoing edge of $S\cup R'$. Thus, we have $z\in R'$. Then, $(z,w)$ is an outgoing edge of $R'$. This shows that the set of the outgoing edges of $S\cup R'$ is a subset of the set of the outgoing edges of $R'$. Thus, we have $\mathit{out}(S\cup R')\leq\mathit{out}(R')$. But since $\mathit{out}(R')=\mathit{out}(S)$, this implies the desired contradiction. 
\end{proof}

\section{Finding good partitions induced by the large $M$-sets}

In this section we assume that $G$ is a strongly connected graph, where some of its vertices are designated as ``ordinary'' and they are $(k+1)$-edge-connected.  Let $s$ be a fixed ordinary vertex. For every ordinary vertex $u$ with $\lambda(u,s)=k+1$, we let $M(u)$ denote the inclusion-wise minimum $(k+1)$-out set that separates $u$ and $s$.

Our goal is to establish the following three propositions.  

\mainordinary*

\mainauxiliary*

\mainaux*

\ignore{
\begin{restatable*}{proposition}{mainordinary}
\label{proposition:main_ordinary}
Let $v$ be a vertex with $\lambda(v,s)=3$. Then there is an algorithm that runs in $O(m)$ time and returns a partition $\mathcal{P}$ of $V$ with the following properties:
\begin{itemize}
\item{$\mathcal{P}$ maintains the $4$-edge-connected components.}
\item{For every ordinary vertex $u$ with $v\in M(u)$, and every ordinary vertex $w\notin M(u)$, we have that $u$ and $w$ are separated by $\mathcal{P}$.}
\end{itemize}
\end{restatable*}

\begin{restatable*}{proposition}{mainauxiliary}
\label{proposition:main_auxiliary}
Let $v$ be a vertex with $\lambda(v,s)<3$. Then there is an algorithm that runs in $O(m)$ time and returns a partition $\mathcal{P}$ of $V$ with the following properties:
\begin{itemize}
\item{$\mathcal{P}$ maintains the $4$-edge-connected components.}
\item{For every ordinary vertex $u$ with $v\in M(u)$, and every ordinary vertex $w\notin M(u)$, we have that $u$ and $w$ are separated by $\mathcal{P}$.}
\end{itemize}
\end{restatable*}
}

The proofs of \Cref{proposition:main_ordinary,proposition:main_auxiliary} are provided in \Cref{section:proof_of_prop_main_ordinary,section:proof_of_prop_main_auxiliary}, where they appear as \Cref{proposition:13_2} and \Cref{proposition:14_2}, respectively, \sidenote{Charis: added a brief explanation for the different statemens.} because the corresponding properties of the computed partitions are explained formally in their statements. 
The proof of \Cref{proposition:main_auxiliary_4} is discussed in \Cref{section:proof_of_prop_main_auxiliary}, because this is established by using the same techniques as in \Cref{proposition:14_2}.

\subsection{Proof of \Cref{proposition:main_ordinary}
(using the Picard-Queyranne graph)}
\label{section:proof_of_prop_main_ordinary}
Let $a$ and $b$ be two distinct vertices of $G$. If $G$ is a multigraph, then we consider it as a simple capacitated graph, with capacity function $c:E(G)\rightarrow\mathbb{N}$, where $c(e)$ for every edge $e$ equals the multiplicity of $e$ in the original graph. Let us define an $(a,b)$-flow $f$ as a function $f:E(G)\rightarrow\mathbb{Z}$ with the following two properties:

\begin{enumerate}
\item{$0\leq f(e)\leq c(e)$, for every edge $e$ of $G$. (I.e., the flow passing through every edge cannot exceed its capacity.)}
\item{$\sum_{e=(u,v)}{f(e)}=\sum_{e=(v,u)}{f(e)}$, for any vertex $v\in V(G)\setminus\{a,b\}$. (I.e., the amount of flow entering any vertex $v \notin\{a,b\}$ equals the amount of flow exiting $v$.)} 
\end{enumerate}  

Without loss of generality, we may assume that no flow enters the source vertex $a$ and no flow leaves the sink vertex $b$, i.e.,
$f(e)=0$ for any edge $e=(v,s)$ and any edge $e=(t,v)$. \sidenote{Loukas: Added that there is no flow entering the source and exiting the sink.}

We call $\sum_{e=(a,u)}{f(e)}$ the value of $f$. (I.e., the value of $f$ is the amount of flow exiting $a$.) It is well known that an $(a,b)$-flow with value $\lambda(a,b)$ exists, and this is in fact the maximum value of any $(a,b)$-flow. We call such a flow an \emph{$(a,b)$-maxflow}.

\begin{definition}[Picard-Queyranne graph]
\label{definition:pq}
\normalfont
Let $f$ be an $(a,b)$-maxflow of $G$. Then we define a graph $\mathit{PQ}$ with $V(\mathit{PQ})=V(G)$ as follows. For every edge $e=(x,y)$ of $G$, there is an edge $(x,y)$ in $\mathit{PQ}$ if $f(e)<c(e)$, and an edge $(y,x)$ if $f(e)>0$. $\mathit{PQ}$ is called the Picard-Queyranne graph that corresponds to $f$.
\end{definition}

\begin{figure}[ht!]
\begin{center}
\includegraphics[scale=1, clip=true, trim={0cm 0cm 0cm 7cm}, width=\textwidth]{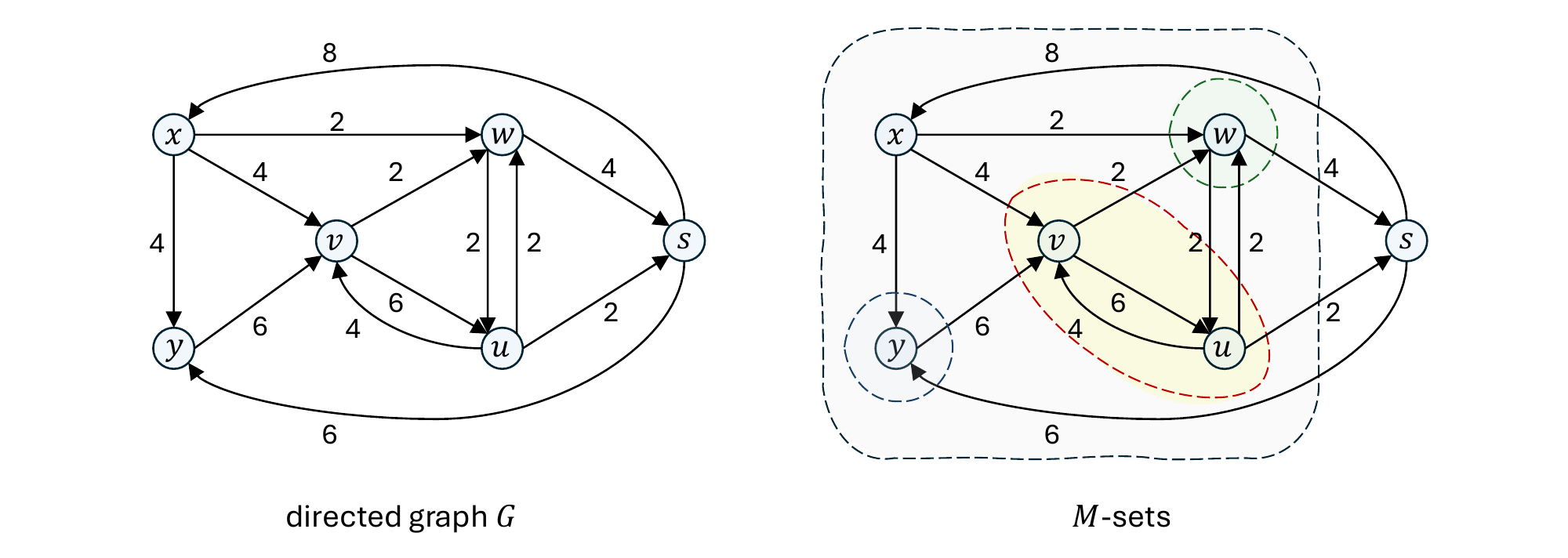}
\includegraphics[scale=1, clip=true, trim={0cm 0cm 0cm 7cm}, width=\textwidth]{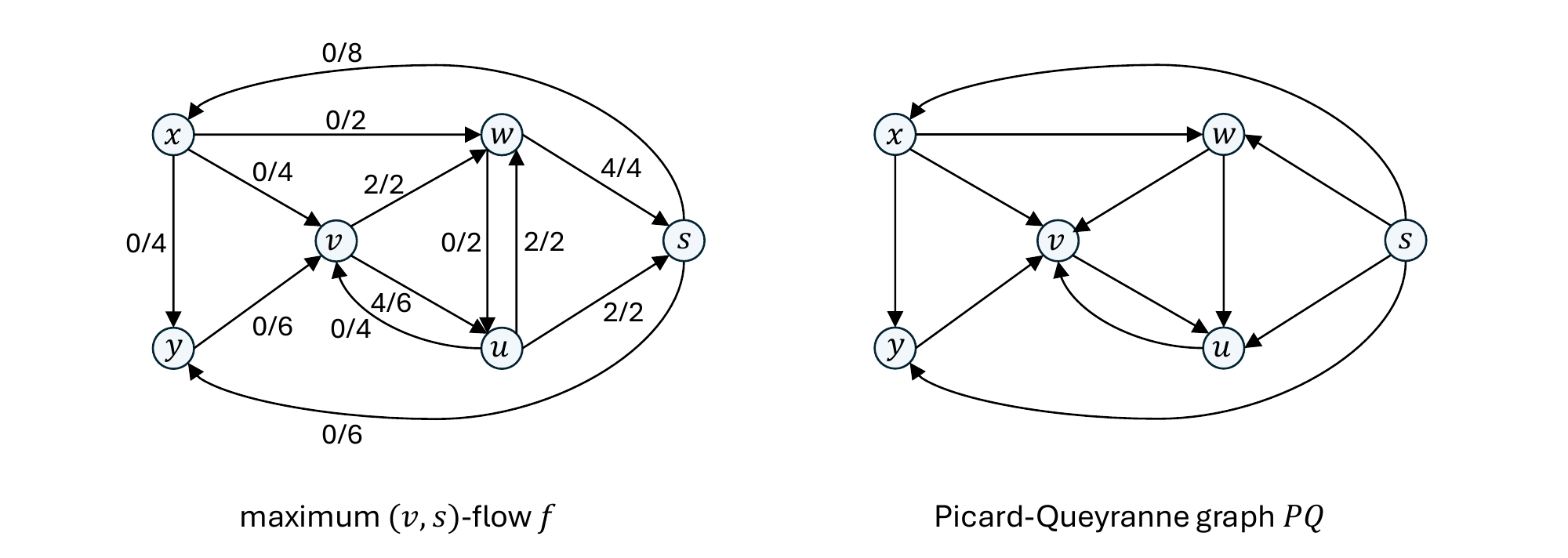}
\end{center}
\caption{Top: A directed graph $G$ with source vertex $s$, such that $\lambda(z,s) =6$, for all vertices $z\not=s$. The $M$-sets of $G$ are $M(v)=M(u)=\{v,u\}$, $M(w)=\{w\}$, $M(y)=\{y\}$, and $M(x)=\{x,y,u,v,w\}$.
Bottom: A maximum $(v,s)$-flow and the corresponding Picard-Queyranne graph $\mathit{PQ}$. We have $x,y,w \not\in M(u)$, so these vertices are not strongly connected with $u$ in $\mathit{PQ}$. 
\label{figure:PQgraph}}
\end{figure}

We note that, since $G$ is strongly connected, it is not difficult to see that $a$ is reachable by all vertices in $\mathit{PQ}$, and $b$ reaches all vertices in $\mathit{PQ}$. (Thus, if we contract every strongly connected component of $\mathit{PQ}$ into a single vertex, then the node containing $b$ is the only source, and the node containing $a$ is the only sink in the resulting DAG.)

Picard and Queyranne~\cite{PicardQ82} have provided the following characterization of the $(a,b)$-mincuts of $G$:

\begin{proposition}[Theorem 1 in \cite{PicardQ82}]
\label{proposition:pq}
A set of vertices $X$ of $G$ is an $(a,b)$-mincut if and only if: it contains $a$, it does not contain $b$, and it is closed w.r.t. the reachability relation in $\mathit{PQ}$ (i.e., for every $x\in X$, we have $\mathit{reach}_{PQ}(x)\subseteq X$).
\end{proposition}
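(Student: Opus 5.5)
The plan is to compare, for an arbitrary $(a,b)$-cut $X$ (a set with $a\in X$, $b\notin X$), its capacity $\mathit{out}(X)$ with the value of the maxflow $f$. I will show that the difference $\mathit{out}(X)-\lambda(a,b)$ is a sum of nonnegative slack terms, and that each slack term corresponds exactly to an edge of $\mathit{PQ}$ leaving $X$. Then $X$ is a mincut iff all slack terms vanish iff no $\mathit{PQ}$-edge leaves $X$ iff $X$ is closed under reachability in $\mathit{PQ}$.

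\textbf{Step 1 (flow across a cut).} Sum the conservation equations over all $v\in X\setminus\{a\}$ and add the outflow at $a$. Since no flow enters $a$, the left side is the value of $f$, namely $\lambda(a,b)$. Edges with both endpoints in $X$ cancel, so
\[
\lambda(a,b)=\sum_{e\in E(X,V\setminus X)} f(e)-\sum_{e\in E(V\setminus X,X)} f(e).
\]
Writing $\mathit{out}(X)=\sum_{e\in E(X,V\setminus X)}c(e)$ (multiplicities as capacities), this gives
\[
\mathit{out}(X)-\lambda(a,b)=\sum_{e\in E(X,V\setminus X)}\bigl(c(e)-f(e)\bigr)+\sum_{e\in E(V\setminus X,X)} f(e).
\]
Both sums are nonnegative because $0\le f\le c$. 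In particular this re-proves $\mathit{out}(X)\ge\lambda(a,b)$.

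\textbf{Step 2 (slack terms are $\mathit{PQ}$-edges).} By \Cref{definition:pq}, a $\mathit{PQ}$-edge $(x,y)$ with $x\in X$, $y\notin X$ arises in exactly two ways. Either $G$ has an edge $e=(x,y)\in E(X,V\setminus X)$ with $f(e)<c(e)$, or $G$ has an edge $e=(y,x)\in E(V\setminus X,X)$ with $f(e)>0$. These are precisely the positive terms of the two sums above. Hence $\mathit{out}(X)=\lambda(a,b)$ iff no edge of $\mathit{PQ}$ leaves $X$.

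\textbf{Step 3 (closure).} Finally, no $\mathit{PQ}$-edge leaves $X$ iff $\mathit{reach}_{\mathit{PQ}}(x)\subseteq X$ for every $x\in X$. One direction follows by induction along a path; for the other, a single leaving edge already violates closure. Combining Steps 2 and 3 yields the characterization: $X$ is an $(a,b)$-mincut iff $a\in X$, $b\notin X$, and $X$ is closed under reachability in $\mathit{PQ}$.

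The only point needing care is Step 1. One must check that the net-flow identity uses the convention that no flow enters $a$ or leaves $b$, which the paper assumes. One must also handle parallel edges consistently, via capacities. Everything else is routine bookkeeping.
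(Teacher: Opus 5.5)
Your proof is correct. The paper does not prove this statement itself: it imports it as Theorem~1 of Picard and Queyranne and uses it as a black box in \Cref{proposition:13_2}. What you give is essentially the classical self-contained argument for that theorem. The cut identity
\[
\mathit{out}(X)-\lambda(a,b)=\sum_{e\in E(X,V\setminus X)}\bigl(c(e)-f(e)\bigr)+\sum_{e\in E(V\setminus X,X)}f(e)
\]
is weak duality with the slack made explicit. Your observation that the positive slack terms correspond exactly to the $\mathit{PQ}$-edges leaving $X$ is complementary slackness between a maxflow and a mincut.

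Compared with the citation, your route costs a paragraph. In exchange, it makes the section self-contained except for max-flow/min-cut, namely the existence of a flow of value $\lambda(a,b)$, which the paper also assumes. It works directly with the paper's own conventions. You rightly flag the no-flow-into-$a$ convention as the one delicate point; the summation over $X\setminus\{a\}$ avoids $b$ automatically because $b\notin X$. It also shows that no strong connectivity is needed for the characterization.

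Your single identity gives both directions that \Cref{proposition:13_2} actually uses:
\begin{itemize}
\item $\mathit{reach}_{\mathit{PQ}}(u)$, once it contains $a$ and avoids $b$, is a mincut;
\item a mincut such as $M(u)$ must contain $\mathit{reach}_{\mathit{PQ}}(u)$.
\end{itemize}

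As a side benefit, apply the same flow-summation to a closed set $Y=\mathit{reach}_{\mathit{PQ}}(x)$ with $a\notin Y$. The net flow out of $Y$ is $0$ or $-\lambda(a,b)$, and closure gives $f=c$ on edges leaving $Y$ and $f=0$ on edges entering it, so $\mathit{out}(Y)\le 0$. Under strong connectivity this is impossible, which justifies the paper's unproved remark that every vertex reaches $a$ in $\mathit{PQ}$.
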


We will prove the following. (See \Cref{figure:PQgraph}.)

\begin{proposition}
\label{proposition:13_2}
Let $v$ be a vertex with $\lambda(v,s)=k+1$. There is an algorithm that runs in $O(km)$ time and returns a partition $\mathcal{P}$ of $V(G)$ with the following properties:
\begin{itemize}
\item{$\mathcal{P}$ maintains the $(k+2)$-edge-connected components of $G$.}
\item{For every two ordinary vertices $u$ and $w$ with $v\in M(u)$ and $w\notin M(u)$, we have that $u$ and $w$ are separated by $\mathcal{P}$.}
\end{itemize}
\end{proposition}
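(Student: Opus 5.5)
The plan is to compute a $(v,s)$-maxflow $f$ by running $k+1$ iterations of Ford--Fulkerson. This takes $O(km)$ time, since $\lambda(v,s)=k+1$. From $f$ we build the Picard--Queyranne graph $\mathit{PQ}$ of \Cref{definition:pq} with $a=v$ and $b=s$, and compute its strongly connected components in $O(m)$ time. The returned partition $\mathcal{P}$ consists of these SCCs. It remains to check the two properties, and both will follow from \Cref{proposition:pq}.

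\textbf{Separation property.} Let $u,w$ be ordinary vertices with $v\in M(u)$ and $w\notin M(u)$.
\begin{itemize}
\item First, $M(u)$ is a $(k+1)$-out set containing $v$ and not $s$. Since $\lambda(v,s)=k+1$, it is a $(v,s)$-mincut.
\item By \Cref{proposition:pq}, $M(u)$ is therefore closed under reachability in $\mathit{PQ}$. So nothing reachable from $u\in M(u)$ lies outside $M(u)$, and in particular $u$ cannot reach $w$ in $\mathit{PQ}$.
\item Hence $u$ and $w$ lie in different SCCs, as required.
\end{itemize}
This direction only uses closedness of one mincut, so it is the easy part.

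\textbf{Maintaining the $(k+2)$-edge-connected components.} This is the main obstacle. Let $x,y$ be $(k+2)$-edge-connected, and suppose for contradiction that they lie in different SCCs. Then w.l.o.g.\ $y\notin\mathit{reach}_{PQ}(x)$, or else $x\notin\mathit{reach}_{PQ}(y)$; the two cases are symmetric, so assume the first. Define
\[
X=\mathit{reach}_{PQ}(x)\cup\mathit{reach}_{PQ}(v).
\]
The argument then proceeds in three steps.
\begin{itemize}
\item $X$ is closed and contains $v$.
\item $X$ does not contain $s$. Here I use that $s$ is the unique source of the condensed DAG, as noted after \Cref{definition:pq}. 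Since $v$ reaches $s$ only if $v$ and $s$ share an SCC, which would contradict $\lambda(v,s)=k+1>0$, the vertex $v$ does not reach $s$. For $x$, the cases $x=s$ and $y=s$ are handled separately.
\item Therefore, by \Cref{proposition:pq}, $X$ is a $(v,s)$-mincut with $\mathit{out}(X)=k+1$.
\end{itemize}
If $y\notin X$, then $X$ is a $(k+1)$-cut separating $x$ from $y$, which contradicts $x$ and $y$ being $(k+2)$-edge-connected.

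The delicate case is $y\in\mathit{reach}_{PQ}(v)$. Then I instead use the set $\mathit{reach}_{PQ}(v)$, taking the SCC ordering into account, to produce a $(k+1)$-out set that separates one of $x,y$ from the other. If $x\notin\mathit{reach}_{PQ}(v)$, then $\mathit{reach}_{PQ}(v)$ itself is a $(k+1)$-cut separating $y$ from $x$. If instead both $x$ and $y$ are reachable from $v$, consider the complement side: $V\setminus\mathit{reach}^{-1}_{PQ}(y)$, i.e.\ the vertices that cannot reach $y$. This set is closed, contains $x$, and avoids $y$ and $s$. Its containing $v$ would need $v$ not to reach $y$, which fails here, so I expect to fall back to working in the PQ graph of the reverse flow and using $s\notin$ cut.

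I expect the real difficulty to be arranging a closed set containing $v$ and not $s$ that splits $x$ from $y$. The intended fix is to take the closed set $\mathit{reach}_{PQ}(x)\cup\mathit{reach}_{PQ}(v)$ and check that $y$ is excluded whenever $y$ lies in a strictly different SCC not below $x$. If it is not excluded, I swap the roles of $x$ and $y$ using the DAG order: one of the two SCCs is not reachable from the other, and I choose $x$ to be the vertex in that SCC.
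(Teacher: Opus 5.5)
\textbf{There is a genuine gap in the maintenance property.} Your algorithm matches the paper's, and so does your proof of the separation property. The problem is the subcase $x,y\in\mathit{reach}_{\mathit{PQ}}(v)$ with $y\notin\mathit{reach}_{\mathit{PQ}}(x)$, which you never close.

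\begin{itemize}
\item The appeal to ``the PQ graph of the reverse flow'' is not an argument.
\item The swap in your last paragraph cannot help, because this subcase is symmetric in $x$ and $y$. After swapping, $\mathit{reach}_{\mathit{PQ}}(y)\cup\mathit{reach}_{\mathit{PQ}}(v)$ contains $x$ for the same reason.
\item Your other candidate, the set of vertices that cannot reach $y$, is closed and avoids $y$ and $s$. But you could not show it contains $v$. Without that it is not a $(v,s)$-mincut, and \Cref{proposition:pq} says nothing about its out-degree.
\end{itemize}
So, as written, no $(k+1)$-out set separating $x$ and $y$ is produced in this case.

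\textbf{The missing fact.} The remark following \Cref{definition:pq} has a second half, which you did not use (you used only the half about $s$): every vertex of $\mathit{PQ}$ reaches $v$, i.e., the SCC of $v$ is the unique sink. To see it, let $Z$ be the set of vertices that reach $v$.
\begin{itemize}
\item No residual edge enters $Z$. Hence every edge of $G$ from $V\setminus Z$ to $Z$ is saturated, and every edge from $Z$ to $V\setminus Z$ carries no flow.
\item $s\in Z$, because the flow value is $k+1>0$, and reversing a flow-carrying $v$--$s$ path gives a residual $s$--$v$ path.
\item Flow conservation on $V\setminus Z$ then forces the total capacity of edges leaving $V\setminus Z$ to be $0$. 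By strong connectivity, $Z=V$.
\end{itemize}

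\textbf{How it closes the proof.} With this fact the maintenance property takes one line, which is exactly the paper's argument. If $y\notin\mathit{reach}_{\mathit{PQ}}(x)$, then $\mathit{reach}_{\mathit{PQ}}(x)$ is closed and contains $v$. It does not contain $s$, since otherwise $x$ would reach $y$ through $s$. So it is a $(v,s)$-mincut, hence a $(k+1)$-out set containing $x$ but not $y$.

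This also disposes of the loose ends:
\begin{itemize}
\item $\mathit{reach}_{\mathit{PQ}}(v)\subseteq\mathit{reach}_{\mathit{PQ}}(x)$, so your ``delicate case'' $y\in\mathit{reach}_{\mathit{PQ}}(v)$ never occurs.
\item The cases $x=s$ and $y=s$ need no separate treatment.
\item The same conservation argument applied to your set of vertices not reaching $y$ shows that this set must contain $v$. That would have rescued your second attempt as well.
\end{itemize}
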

\begin{remark}
\normalfont
We note that the second point holds even if we replace ``$M(u)$'' with ``a $(k+1)$-out set $U$ that separates $u$ and $s$''. The proof is precisely the same (i.e., every instance of ``$M(u)$'' is just replaced with ``$U$'').    
\end{remark}
\begin{proof}
First we perform $k+1$ iterations of Ford-Fulkerson's augmenting paths algorithm (e.g. using BFS), in order to find a maximum $(v,s)$-flow. Then we construct the corresponding graph $\mathit{PQ}$ as described in \Cref{definition:pq}, and we let $\mathcal{P}$ be the collection of the strongly connected components (viewed as vertex-sets) of $\mathit{PQ}$. Thus, the computation of $\mathcal{P}$ takes $O(km)$ time.  

Now let $u$ and $w$ be two vertices of $G$ that are separated by $\mathcal{P}$. Then, at least one of those vertices is unreachable by the other in $\mathit{PQ}$. 
Thus, we may assume w.l.o.g. that $u$ cannot reach $w$ in $\mathit{PQ}$. This implies that $u$ cannot reach the strongly connected component of $s$ in $\mathit{PQ}$ (because $s$ reaches all vertices of $\mathit{PQ}$), and therefore $X=\mathit{reach}_{PQ}(u)$ is a $(v,s)$-mincut of $G$. Since $\lambda(v,s)=k+1$, we have that $X$ is a $(k+1)$-out set of $G$. Furthermore, $X$ separates $u$ and $w$. Thus, $u$ and $w$ are not $(k+2)$-edge-connected in $G$. This establishes the property that $\mathcal{P}$ maintains the $(k+2)$-edge-connected components of $G$.

Now let $u$ and $w$ be two ordinary vertices of $G$ with $v\in M(u)$ and $w\notin M(u)$. This implies that $M(u)$ is a $(k+1)$-out set that separates $u$ and $s$. Therefore, since $v\in M(u)$, we have that $M(u)$ is a $(v,s)$-mincut. According to \Cref{proposition:pq}, this implies that $\mathit{reach}_{PQ}(u)\subseteq M(u)$. Thus, since $w\notin M(u)$, we have that $u$ cannot reach $w$ in $\mathit{PQ}$, and therefore $u$ and $w$ are separated by $\mathcal{P}$. 
\end{proof}

\subsection{Proof of \Cref{proposition:main_auxiliary}}
\label{section:proof_of_prop_main_auxiliary}

We will prove the following.

\begin{proposition}
\label{proposition:14_2}
Let $v$ be a vertex with $\lambda(v,s)=k$. There is an algorithm that runs in $O(k^2m)$ time and returns a partition $\mathcal{P}$ of $V(G)$ with the following properties:
\begin{itemize}
\item{$\mathcal{P}$ maintains the $(k+2)$-edge-connected components of $G$.}
\item{For every two ordinary vertices $u$ and $w$ with $v\in M(u)$ and $w\notin M(u)$, we have that $u$ and $w$ are separated by $\mathcal{P}$.}
\end{itemize}
\end{proposition}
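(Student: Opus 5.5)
The plan follows the outline given in the technical overview. Let $S$ be the latest $(v,s)$-mincut, computed in $O(km)$ time by \Cref{lemma:compute_latest}. Let $e_1=(x_1,y_1),\dots,e_k=(x_k,y_k)$ be its outgoing edges, and put $R=\mathit{reach}_{G[S]}(v)$.

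First, $S$ contains no ordinary vertex. Indeed, if an ordinary $u\in S$ existed, then $S$ would be a $k$-out set separating $u$ from $s$, contradicting $\lambda(u,s)\ge k+1$. Now fix an ordinary $u$ with $v\in M(u)$. Then $M(u)$ is a $(v,s)$-cut with $k+1$ outgoing edges. By submodularity, $\mathit{out}(M(u)\cup S)+\mathit{out}(M(u)\cap S)\le 2k+1$. Both sets are $(v,s)$-cuts, so both have at least $k$ outgoing edges. Since $M(u)\cup S$ contains $u$ and avoids $s$, it has at least $k+1$. Hence $U:=M(u)\cup S$ is a $(k+1)$-out set separating $u$ from $s$, and every ordinary $w\notin M(u)$ also satisfies $w\notin U$. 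So it suffices to separate $u$ from $w$ using $(k+1)$-out sets $U\supseteq S$ with $s\notin U$. By the Remark after \Cref{proposition:13_2}, any such $U$ can play the role of $M(u)$.

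The algorithm returns the common refinement of $k+1$ partitions, each of which maintains the $(k+2)$-edge-connected components.

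Partition $\mathcal{P}_0$ handles the case where all $e_i$ leave $U$. Then $\mathit{out}(U)$ consists of $e_1,\dots,e_k$ plus exactly one further edge, so $U$ is a $1$-out set in $G\setminus\{e_1,\dots,e_k\}$. We take $\mathcal{P}_0$ to be the $2$-edge-connected components of $G\setminus\{e_1,\dots,e_k\}$, computed in linear time~\cite{georgiadis20162edge,GIP20:SICOMP}. If $u$ and $w$ are $(k+2)$-edge-connected in $G$, they remain $2$-edge-connected after deleting $k$ edges, so $\mathcal{P}_0$ maintains the components. Since $u\in U$ and $w\notin U$, the $1$-cut separates them.

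Partition $\mathcal{P}_i$ handles the case where some $e_i$ lies inside $U$, so $y_i\in U$. Here we contract $R\cup\{y_i\}$ into a single vertex $r_i$, obtaining a graph $G_i$. (We need $y_i\neq s$; if $y_i=s$ then $e_i$ cannot lie in $U$, so this case does not arise for that $i$.) By \Cref{lemma:include_outgoing_edge}, $\lambda_{G_i}(r_i,s)\ge k+1$. Since $R\subseteq S\subseteq U$ and $y_i\in U$, the contracted set $U$ is a $(k+1)$-out set in $G_i$ separating $r_i$ from $s$. Therefore $\lambda_{G_i}(r_i,s)=k+1$. We then apply the algorithm of \Cref{proposition:13_2} to $r_i$ in $G_i$, in its Remark form with $U$ in place of $M(u)$, and uncontract the resulting partition. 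This costs $O(km)$ per index, and $O(k^2m)$ over all $i$.

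The main obstacle is checking that \Cref{proposition:13_2} remains valid in $G_i$. Two points need care.

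First, the partition must maintain the $(k+2)$-edge-connected components of $G$. The proof of \Cref{proposition:13_2} shows that vertices separated by the Picard--Queyranne SCCs are split by a $(k+1)$-out set of $G_i$. Uncontracting this set gives a $(k+1)$-out set of $G$, because contraction does not change the outgoing edges of a set containing the contracted vertex $r_i$. Note also that $R\cup\{y_i\}$ contains at most one ordinary vertex, namely possibly $y_i$, so no two ordinary vertices are merged artificially by the contraction.

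Second, the separation argument relied only on $U$ being a $(k+1)$-out set containing $u$ and the source $r_i$ but not $s$, which holds here. The care in this step is confirming that ordinary vertices keep the edge-connectivity they need inside $G_i$, or else observing that the proof of \Cref{proposition:13_2} uses only the minimum-cut structure at $r_i$ and not the ordinary-vertex hypothesis.
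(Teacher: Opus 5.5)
Your proposal is correct and is essentially the paper's proof. It uses the same latest $(v,s)$-mincut $S$ and the same submodularity argument showing that $U=M(u)\cup S$ is a $(k+1)$-out set. It also uses the same case split: if all outgoing edges of $S$ leave $U$, you take the $2$-edge-connected components after deleting those edges; if some outgoing edge lies inside $U$, you contract its head together with the source side and apply \Cref{proposition:13_2} in its Remark form.

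Your variations are harmless. You delete $e_1,\dots,e_k$ directly in $G$ rather than after contracting $S$ into $z$. You contract $R\cup\{y_i\}$ rather than $S\cup\{y_i\}$, which lets \Cref{lemma:include_outgoing_edge} apply verbatim without the paper's monotonicity step.

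One point should be made explicit. The algorithm does not know $U$, so it cannot derive $\lambda_{G_i}(r_i,s)=k+1$ from the existence of $U$. Instead it must test $\lambda_{G_i}(r_i,s)=k+1$ for each $i$ (as \Cref{algorithm:14_2} does) and skip the indices where the test fails.
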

\begin{proof}
We apply the procedure shown in \Cref{algorithm:14_2}. Thus, first we compute the latest $(v,s)$-mincut $S$. By \Cref{lemma:compute_latest}, this takes $O(km)$ time. Notice that $S$ does not contain any ordinary vertex of $G$ (because the ordinary vertices are $(k+1)$-edge-connected, and $S$ is a $k$-out set). 

Then, we contract $S$ into a single vertex $z$, and let $\widetilde{G}$ be the resulting graph. It is easy to establish the following two facts: $(i)$ for every two vertices $x$ and $y$ of $G$ with $x,y\notin S$, we have $\lambda_{\widetilde{G}}(x,y)\geq\lambda_{G}(x,y)$, and $(ii)$ the out-degree of $z$ in $\widetilde{G}$ is $k$. Let $\{e_1,\dots,e_k\}$ be the set of the outgoing edges of $z$ in $\widetilde{G}$.

Now, as shown in Line~\ref{line:14_2_1}, we first remove all outgoing edges of $z$ from $\widetilde{G}$, and we compute the partition $\mathcal{Q}_0$ of the $2$-edge-connected components of the resulting graph, that we denote by $\widetilde{G}'$. By \cite{georgiadis20162edge}, this takes linear time. Let $\mathcal{P}_0$ be the partition of $V(G)$ that corresponds to $\mathcal{Q}_0$. (I.e., every set of $\mathcal{Q}_0$ is a set of $\mathcal{P}_0$, except for the set $U\in\mathcal{Q}_0$ that contains $z$, which is replaced by $(U\setminus\{z\})\cup S$.) Let $x$ and $y$ be two vertices of $G$ with $x,y\notin S$ and $\lambda_{G}(x,y)\geq k+2$. Then, by $(i)$ we infer that $\lambda_{\widetilde{G}'}(x,y)\geq 2$. Thus, $\mathcal{P}_0$ maintains the $(k+2)$-edge-connected components of $G$. 

Then, for every $i\in\{1,\dots,k\}$ such that the head $x_i$ of $e_i$ is not $s$, we contract $z$ with $x_i$ into a single vertex $z_i$, and let $\widetilde{G}_i$ be the resulting graph (derived from $\widetilde{G}$). Notice that $\widetilde{G}_i$ is essentially the same graph as if we had contracted the vertex set $S\cup\{x_i\}$ of $G$ into $z_i$. Thus, it is not difficult to see that \Cref{lemma:include_outgoing_edge} implies that $\lambda_{\widetilde{G}_i}(z_i,s)>\lambda_{G}(v,s)=k$. 
(To be precise, \Cref{lemma:include_outgoing_edge} implies that $\lambda_{G}(R\cup\{x_i\},s)>k$, where $R=\mathit{reach}_{G[S]}(v)$. But then, since $R\cup\{x_i\}\subseteq S\cup\{x_i\}$, we have $\lambda_{G}(R\cup\{x_i\},s)\leq\lambda_{G}(S\cup\{x_i\},s)$. Thus, we get $\lambda_{G}(S\cup\{x_i\},s)>k$, which implies that $\lambda_{\widetilde{G}_i}(z_i,s)>k$.) 
Now, if $\lambda_{\widetilde{G}_i}(z_i,s)=k+1$, then we can apply \Cref{proposition:13_2} on $\widetilde{G}_i$ (with $v=z_i$), in order to get a partition $\mathcal{Q}_i$. By \Cref{proposition:13_2}, we have that $\mathcal{Q}_i$ maintains the $(k+2)$-edge-connected components of $\widetilde{G}_i$. Then, since $\widetilde{G}_i$ is essentially derived from $G$ by contracting $S\cup\{x_i\}$ into a single vertex, we have that the partition $\mathcal{P}_i$ of $V(G)$ that corresponds to $\mathcal{Q}_i$ maintains the $(k+2)$-edge-connected components of $G$. 
 
The output of \Cref{algorithm:14_2} is the partition $\mathcal{P}$ which is the common refinement of $\mathcal{P}_0$ and all $\mathcal{P}_i$, for $i\in\{1,\dots,k\}$ such that the head $x_i$ of $e_i$ is not $s$ and $\lambda_{G}(S\cup\{x_i\},s)=k+1$. Therefore, so far we have established that $\mathcal{P}$ maintains the $(k+2)$-edge-connected components of $G$.

Now let $u$ and $w$ be two ordinary vertices of $G$ such that $v\in M(u)$ and $w\notin M(u)$. Consider the set of vertices $U=M(u)\cup S$. As $w$ is an ordinary vertex, we have $w \notin U$. 
We will show that $\mathit{out}_{G}(U)=k+1$. 
By the submodularity of the cut function we have 
$$
\mathit{out}_{G}(M(u)\cap S)+\mathit{out}_{G}(M(u)\cup S)\leq\mathit{out}_{G}(M(u))+\mathit{out}_{G}(S).
$$
We have $\mathit{out}_{G}(M(u))=k+1$ and $\mathit{out}_{G}(S)=k$. On the other hand, we have $\mathit{out}_{G}(M(u)\cap S)\geq k$ (because $M(u)\cap S$ is a $(v,s)$-cut), and $\mathit{out}_{G}(M(u)\cup S)\geq k+1$ (because $M(u)\cup S$ is a $(u,s)$-cut). Thus, we infer that $\mathit{out}_{G}(M(u)\cup S)=k+1$.

Now consider the projection $\widetilde{U}$ of $U$ from $G$ into $\widetilde{G}$. Then, since $S\subset U$, we have $\mathit{out}_{\widetilde{G}}(\widetilde{U})=\mathit{out}_{G}(U)$, and therefore $\widetilde{U}$ is a $(k+1)$-out set of $\widetilde{G}$. Notice that $u,z\in\widetilde{U}$ and $w\notin\widetilde{U}$. Now, if the outgoing edges of $z$ in $\widetilde{G}$ are outgoing edges of $\widetilde{U}$, then their removal from $\widetilde{G}$ drops the connectivity from $u$ to $w$ to $1$. Thus, $u$ and $w$ are separated by $\mathcal{Q}_0$, and therefore by $\mathcal{P}_0$, and therefore by $\mathcal{P}$. Otherwise, we have that at least one outgoing edge $e_i$ of $z$ in $\widetilde{G}$ lies entirely within $\widetilde{U}$. Then, since we contract $z$ with the head of $e_i$ into a vertex $z_i$ in order to get $\widetilde{G}_i$, it is easy to see that the projection of $\widetilde{U}$ into $\widetilde{G}_i$ is a $(k+1)$-out set of $\widetilde{G}_i$ that contains $z_i$, and does not contain $w$ or $s$. Then, by \Cref{proposition:13_2} we infer that $u$ is separated from $w$ by $\mathcal{P}_i$, and therefore by $\mathcal{P}$.

Notice that we get the $O(k^2m)$ time bound because we apply at most $k$ times the algorithm that establishes \Cref{proposition:13_2} on graphs that have at most as many edges as $G$.
\end{proof}

In order to establish \Cref{proposition:main_auxiliary_4}, we can use \Cref{algorithm:14_2} with the following modifications (and the proof of correctness is similar to that of \Cref{proposition:14_2}). First, if $\lambda(v,s)=2$, then we just apply this algorithm with $k=2$. However, if $\lambda(v,s)=1$, then the vertex $z$, that appears in Line~\ref{line:algorithm14_2_z} (and is the contraction of the latest $(v,s)$-mincut $S$), has a unique outgoing edge $e_1$. Thus, in Line~\ref{line:14_2_1} we perform a computation of the $3$-edge-connected components of $\widetilde{G}\setminus\{e_1\}$. This can be performed in linear time, by \cite{Linear3ECC}. Then, we have $\lambda'=\lambda_{G}(S\cup\{x_1\},s)>1$, where $x_1$ is the head of $e_1$. And now, if $\lambda'\leq 3$, we work similarly as in Lines~\ref{line:14_2_a} to \ref{line:14_2_b}. The only difference is that, if $\lambda'=2$, then we apply again \Cref{proposition:main_auxiliary_4}, instead of \Cref{proposition:13_2}.

   
\begin{algorithm}[h!]
\caption{\textsf{The procedure that establishes \Cref{proposition:14_2}.}}
\label{algorithm:14_2}
\LinesNumbered
\DontPrintSemicolon

\tcp{input is a vertex $v$ with $\lambda(v,s)=k$}
compute the latest $(v,s)$-mincut $S$\;
contract $S$ into a vertex $z$, and let $\widetilde{G}$ be the resulting graph\;
\label{line:algorithm14_2_z}
let $\{e_1,\dots,e_k\}$ be the outgoing edges of $z$ in $\widetilde{G}$\;
compute the partition $\mathcal{Q}_0$ of the $2$-edge-connected components of $\widetilde{G}\setminus\{e_1,\dots,e_k\}$\;
\label{line:14_2_1}
let $\mathcal{P}_0$ be the partition of $V(G)$ that corresponds to $\mathcal{Q}_0$ (via the inverse of the quotient map from $G$ to $\widetilde{G}$)\;
\ForEach{$i\in\{1,\dots,k\}$}{
  \If{the head $x_i$ of $e_i$ is not $s$ \textbf{and} $\lambda_{G}(S\cup\{x_i\},s)=k+1$}{
    let $\widetilde{G}_i$ be the graph obtained from $\widetilde{G}$ by contracting $z$ and $x_i$ into a vertex $z_i$\;
    \label{line:14_2_a}
    let $\mathcal{Q}_i$ be the partition that is returned by applying \Cref{proposition:13_2} on $\widetilde{G}_i$, with $v:=z_i$\;
    let $\mathcal{P}_i$ be the partition of $V(G)$ that corresponds to $\mathcal{Q}_i$\;
    \label{line:14_2_b}
  }
  \Else{
    let $\mathcal{P}_i$ be $\{V(G)\}$\;
  }
}

\textbf{return} $\mathcal{P}_0$ refined by $\mathcal{P}_1,\dots,\mathcal{P}_k$\; 

\end{algorithm}

%% file: decompositions.tex
\section{Edge-connected components decompositions}
\label{sec:decompositions}

Our main algorithm (\Cref{algorithm:main}) in \Cref{section:main-result} assumes that the ordinary vertices of the input digraph are $(k+1)$-edge-connected. 
In this section, we justify this assumption by proving \Cref{theorem:keccdecomp} and \Cref{theorem:3eccdecomp}. 
To this end, we leverage the computation of the nodes corresponding to a poset representation of the minimal \outcomp{k}s~\cite{Gabow:Poset:TALG,Linear3ECC}, in an appropriate order, combined with either a contraction operation for the \ecc{(k+1)} decomposition or the gadget substitution operation of~\cite{GKPP:3ECC} for the \ecc{3} decomposition.

\input{kecc-decomposition}

\input{3ecc-decomposition}

%% file: kecc-decomposition.tex
\subsection{$(k+1)$-ECC decomposition of a $k$-edge connected digraph}
\label{sec:kecc-decomposition}

\ignore{
Our main algorithm of \Cref{section:main-result} assumes that the ordinary vertices of the input digraph are $(k+1)$-edge-connected.  
Here we justify this assumption by proving \Cref{theorem:keccdecomp}.

\begin{theorem}[$k$-Edge-Connected Component (\ecc{k}) Decomposition]
\label{theorem:keccdecomp}
Let $G$ be a $k$-edge-connected digraph with $m$ edges. In $\widetilde{O}(m)$ time\lnote{This bound refers to constant $k$. We may want to express the exact dependence on $k$}, we can construct a collection $H_1,\dots,H_t$ of graphs such that:
\begin{itemize}
\item{The vertices $V(H_i)$ of each graph $H_i$ are partitioned into two sets of vertices, ordinary and auxiliary.}
\item{For each $i \in \{1,\ldots,t\}$, the ordinary vertices of $V(H_i)$ are $(k+1)$-edge-connected.}
\item{For every vertex of $G$, there is exactly one graph among $H_1,\dots,H_t$ that contains it as an ordinary vertex.}
\item{Every two vertices $u$ and $v$ of $G$ are $(k+2)$-edge-connected if and only if there is an $i\in\{1,\dots,t\}$ such that $u$ and $v$ are $(k+2)$-edge-connected ordinary vertices of $H_i$.}
\end{itemize}
\end{theorem}
}

Let $G=(V,E)$ be a $k$-edge-connected digraph. 
Let $s$ be an arbitrarily selected vertex of $G$.
Recall that a \outcomp{k} (resp., \incomp{k}) $S$ is a set of vertices such that $out(S) = k$ (resp., $in(S) = k$).
For any ordinary vertex $v$, we let \mout{v} denote the minimal \outcomp{k} that contains $v$ but not $s$.\footnote{We use the notation \mout{v} to denote the minimal \outcomp{k}s, in order to distinguish it from $M(v)$, which denotes the minimal \outcomp{(k+1)}s.} 
It follows from~\Cref{lemma:submodularity3} that if any $k$-out set separating $v$ from $s$ exists, then there also exists a minimum one, contained in every other such set (notice that $v,s$ are \conn{k}, therefore $\lambda(u,s)\ge k$).
If $v$ is not contained in any such set then we let $\moutM{v} = \bot$.

Let $[v]$ denote the set of ordinary vertices that have the same minimal \outcomp{k} in $G$, i.e., $[v] = \{ u : \moutM{v} = \moutM{u}\}$. 
Similarly, $[v]_R$ denotes the set of ordinary vertices that have the same minimal \outcomp{k} in $G^R$.
Note that the $[v]$-sets form a partition of the vertices of $G$. In particular, $v \in [s]$ implies $\moutM{v} = \bot$, i.e., there is no cut of size at most $k$ that separates all paths from $v$ to $s$.

\ignore{
\begin{lemma}
\label{lemma:kcut-induces-kout}
Let $v$ be an ordinary vertex, and let $C$ be a set of edges with $|C|=k$ such that $v$ cannot reach $s$ in $G \setminus C$. Then $C=E(S, V\setminus S)$, where $S$ is the set of vertices that $v$ reaches in $G\setminus C$.
\end{lemma}
\begin{proof}
Let $C=\{(x_1,y_1),\ldots,(x_k,y_k)\}$. Since $G$ is $k$-edge-connected and $v$ cannot reach $s$ in $G\setminus C$, we have that all paths from $v$ to $s$ in $G \setminus (C \setminus (x_i,y_i))$, for any $i \in \{1,\ldots,k\}$, use the edge $(x_i,y_i)$. From this we infer that $x_i \in S$ and $y_i \not\in S$, and therefore $(x_i,y_i)\in E(S,V\setminus S)$. Now, if there is an edge $(x_{k+1},y_{k+1})\in E(S,V\setminus S)$ such that $(x_{k+1},y_{k+1})\notin C$, then, by definition of $S$, we have that $v$ can reach $x_{k+1}$ in $G\setminus C$, but $v$ cannot reach $y_{k+1}$ in $G\setminus C$. But then, since $(x_{k+1},y_{k+1})\notin C$, we have that $x_{k+1}$ can reach $y_{k+1}$ in $G\setminus C$, a contradiction. This shows that $E(S,V\setminus S)=C$.
\end{proof}
}

\begin{proposition}
\label{proposition:minimal-in-out-sets}
Let $G$ be a $k$-edge-connected digraph with a fixed start vertex $s$. Then, any two vertices $u$ and $v$ are $(k+1)$-edge-connected if and only if $[u]=[v]$ and $[u]_R=[v]_R$.
\end{proposition}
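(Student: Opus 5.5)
The proof will mirror that of \Cref{lemma:distinctM}, one level down: the minimal $(k+1)$-out sets are replaced by the minimal $k$-out sets $\moutM{\cdot}$, and the ambient connectivity is $k$ instead of $k+1$. There are two directions.

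\emph{Forward direction.} Suppose $u$ and $v$ are $(k+1)$-edge-connected. Assume, for contradiction, that $\moutM{u}\neq\moutM{v}$. Then one of them is not $\bot$; by symmetry let it be $\moutM{u}$.

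If $v\notin\moutM{u}$, then $\moutM{u}$ is a $k$-out set with $u$ inside and $v$ outside. Removing its $k$ outgoing edges leaves no $u$--$v$ path, so $\lambda(u,v)\le k$, a contradiction.

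If $v\in\moutM{u}$, then $\moutM{u}$ is a $k$-out set separating $v$ from $s$. Hence $\moutM{v}\neq\bot$, and by minimality $\moutM{v}\subseteq\moutM{u}$. Since the two sets differ, we get $u\notin\moutM{v}$, as in \Cref{lemma:distinctM2}. Now $\moutM{v}$ separates $v$ from $u$ with $k$ outgoing edges, again a contradiction.

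So $[u]=[v]$. The same argument in $G^R$ gives $[u]_R=[v]_R$.

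\emph{Converse direction.} Suppose $u$ and $v$ are not $(k+1)$-edge-connected. Since $G$ is $k$-edge-connected, there is w.l.o.g.\ a $k$-out set $S$ with $u\in S$ and $v\notin S$.

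If $s\notin S$, then $S$ separates $u$ from $s$. Since $\lambda(u,s)\ge k$, \Cref{lemma:submodularity3} and \Cref{corollary:minimal-exists} show that $\moutM{u}$ exists and $\moutM{u}\subseteq S$, so $v\notin\moutM{u}$. If $\moutM{v}=\moutM{u}$, then $v\in\moutM{v}=\moutM{u}$, which is impossible. Hence $[u]\neq[v]$. This also covers the case $v=s$, where $\moutM{v}=\bot$.

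If $s\in S$, then $V\setminus S$ is a $k$-out set of $G^R$ containing $v$ and not $s$. Then $\moutM{v}$ computed in $G^R$ lies inside $V\setminus S$ and misses $u$, so $[u]_R\neq[v]_R$. The case $u=s$ is handled the same way.

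\emph{Expected obstacle.} The only delicate point is bookkeeping. The definition of $[v]$ is stated for ordinary vertices, but this proposition quantifies over all vertices. I would treat $[\cdot]$ as defined for every vertex, with $s\in[s]$ and $\moutM{s}=\bot$. Existence of the minimum set needs only $\lambda(x,s)\ge k$, which holds for every vertex $x$ by $k$-edge-connectivity. Beyond that, the argument should go through directly.
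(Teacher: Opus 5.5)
Your proof is correct and follows the paper's route. The paper simply invokes \Cref{lemma:distinctM} one level down (minimal $k$-out sets in a $k$-edge-connected graph), which is exactly your converse direction; your forward direction, which the paper treats as evident, is a valid argument in the style of \Cref{lemma:distinctM2}, though it can be shortened by noting that every $k$-out set contains either both or neither of two $(k+1)$-edge-connected vertices, so the $k$-out sets separating $u$ from $s$ and those separating $v$ from $s$ form the same family.
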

\begin{proof}
Follows immediately from \Cref{lemma:distinctM} and the definition of the sets $[v]$ and $[v]_R$,
\ignore{
By the definition of the sets $[v]$ and $[v]_R$, it suffices to show that $u \leftrightarrow_{\mathit{k+1}} v$ if and only if $M(u)=M(v)$ and $M_R(u)=M_R(v)$.

Suppose that $u \leftrightarrow_{\mathit{k+1}} v$. Let $S$ be a set of vertices such that $|E(S,V\setminus S)|=k$. Then, since $u$ and $v$ are $k$-edge-connected, we have that $u\in S$ if and only if $v\in S$. This implies that $M(u)=M(v)$. Similarly, we have $M_R(u)=M_R(v)$.

Now assume that $M(u)=M(v)$ and $M_R(u)=M_R(v)$. Let us suppose, for the sake of contradiction, that $u$ and $v$ are not $k$-edge-connected. Then, $u$ and $v$ are separated by a $k$-edge cut. Thus, we may assume w.l.o.g. that there is a set of edges $C$ with $|C|=k$ such that $u$ cannot reach $v$ in $G\setminus C$. 
First, let us assume that $s$ can reach $v$ in $G\setminus C$. Then $u$ cannot reach $s$ in $G\setminus C$. Now let us consider the set of vertices $S$ that $u$ reaches in $G\setminus C$. Then, \Cref{lemma:kcut-induces-kout} implies that $S$ is a \outcomp{k} that contains $u$. Notice that $v\notin S$. But then, due to the minimality of $M(v)$, we infer that $v \notin M(u)$, in contradiction to $M(u)=M(v)$. This shows that $s$ cannot reach $v$ in $G\setminus C$.
Therefore, $v$ cannot reach $s$ in $G^R\setminus C^R$. Since $u$ cannot reach $v$ in $G\setminus C$, we have that $v$ cannot reach $u$ in $G^R\setminus C^R$. But then, with a similar argument, we arrive at a contradiction to $M_R(u)=M_R(v)$. We conclude that $u$ and $v$ are $k$-edge-connected.
}
\end{proof}

\begin{definition}
\label{definition:proper}
An ordering $\prec$ of the sets $[v]$ is \emph{proper} if, for all vertices $x,y$, $\moutM{x} \subset \moutM{y}$ implies $[x] \prec [y]$. 
\end{definition}

\begin{proposition}
\label{proposition:proper}
The sets $[v]$ can be computed in proper order in $O(km \log n)$ time.
\end{proposition}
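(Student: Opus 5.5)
We will compute, for every vertex $v\neq s$ with $\moutM{v}\neq\bot$, the set $\moutM{v}$ implicitly, via Gabow's poset representation of the minimal \outcomp{k}s~\cite{Gabow:Poset:TALG}. The key structural fact is that the distinct sets $\moutM{v}$ form a laminar-like family under inclusion. If $v\in\moutM{u}$, then by minimality $\moutM{v}\subseteq\moutM{u}$. For two sets that cross, \Cref{lemma:submodularity3} applied to both sets as $(v,s)$-mincuts (with $\lambda(v,s)=k$, since $G$ is $k$-edge-connected) shows that their intersection is again a \outcomp{k} separating the relevant vertex from $s$. Hence each class $[v]$ consists exactly of the vertices $u$ with $\moutM{u}=\moutM{v}$, which are the vertices of $\moutM{v}$ not lying in any strictly smaller $\mu$-set. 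These classes are the nodes of the poset, with $[x]$ below $[y]$ iff $\moutM{x}\subset\moutM{y}$; a proper order is any linear extension.

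Concretely, I would run $k$ rounds of augmenting paths toward $s$ in the style of Gabow. We compute $k$ edge-disjoint spanning in-arborescences (or, equivalently, a maximum $(V\setminus s,s)$ structure) rooted at $s$, and derive from it a residual-type graph $D$ in which, by the Picard--Queyranne characterization (\Cref{proposition:pq}) applied uniformly for all sources, a set $X\not\ni s$ is a \outcomp{k} iff it is closed under reachability in $D$. Then $\moutM{v}=\mathit{reach}_D(v)$ whenever this set avoids $s$, and $\moutM{v}=\bot$ otherwise. The classes $[v]$ are therefore the strongly connected components of $D$, with the component of $s$ together with everything reaching it forming $[s]$. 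The DAG of SCCs encodes the inclusion order: $\moutM{x}\subset\moutM{y}$ iff the SCC of $y$ reaches the SCC of $x$. A reverse topological order of the condensation is therefore a proper order, computable in linear time.

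The main obstacle is constructing $D$ within $O(km\log n)$ time, since $k$ arborescence packings via Gabow's algorithm cost $O(km\log n)$, and the correctness of the uniform closure characterization simultaneously for all sources $v$ must be justified. I would first try Gabow's result directly: he computes the minimal $k$-out sets poset (the ``minset-poset'') in $O(km\log n)$ time for $k$-edge-connected digraphs (see also \cite{Linear3ECC}). I would then argue only the translation: each poset node corresponds to one class $[v]$, and the poset order coincides with inclusion of $\mu$-sets. Should the direct citation not give classes, the fallback is the closure argument above, verified via \Cref{lemma:submodularity3}: the reachability closure of $v$ is the intersection of all $k$-out sets containing $v$ and avoiding $s$.
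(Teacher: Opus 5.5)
Your primary route, citing Gabow, is the paper's proof. The paper takes Gabow's labeling graph $\mathit{LG}$, which has the property that $u\in\mu(v)$ iff $v$ reaches $u$. $\mathit{LG}$ is built from a complete $k$-intersection, computable in $O(km\log n)$ time by Gabow's edge-connectivity algorithm. Its strongly connected components are the classes $[v]$, and they are listed in topological order in $O(m)$ time by Gabow's Lemma~2.16. Provided your citation is made precise in this form, your argument is fine. Your translation from reachability to a proper order (reverse topological order of the condensation) is also correct.

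The self-contained ``concrete'' construction and the fallback, however, would not work, and two supporting claims are false.

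First, no digraph $D$ can make the sets $X\not\ni s$ that are closed under reachability exactly the $k$-out sets. Closed sets are closed under union, but $k$-out sets avoiding $s$ are not. In the cycle $v_1\to v_2\to v_3\to s\to v_1$ with $k=1$, both $\{v_1\}$ and $\{v_3\}$ are $1$-out sets, yet $\{v_1,v_3\}$ is a $2$-out set. Picard--Queyranne works for a single source only because all $(v,s)$-mincuts contain $v$, so they pairwise intersect; it does not extend ``uniformly for all sources''.

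Second, the natural residual-type graph built from $k$ in-arborescences (reverse the tree edges, keep the rest) does not even satisfy the weaker property $\mathit{reach}_D(v)=\mu(v)$. A tree edge may enter $\mu(v)$, and its reversal lets $v$ escape. Take edges $(a,c),(a,c),(b,c),(b,c),(c,s),(s,a),(s,b)$ with $k=1$ and arborescence $\{(a,c),(b,c),(c,s)\}$. Then $\mu(a)=\{a,c\}$, but $\mathit{reach}_D(a)=\{a,b,c\}$.

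The same example refutes your laminarity claim: $\mu(b)=\{b,c\}$ crosses $\mu(a)$. You do not actually need laminarity. Your description of $[v]$ as the vertices of $\mu(v)$ lying in no strictly smaller $\mu$-set only uses the implication $u\in\mu(v)\Rightarrow\mu(u)\subseteq\mu(v)$.

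Producing a sparse graph in which $u\in\mu(v)$ holds exactly when $v$ reaches $u$ is the nontrivial content of Gabow's labeling graph. The proof must therefore rest on that citation, as the paper's does, not on your fallback.
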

\begin{proof}
The sets $[v]$ correspond to the strongly connected components (SCCs) of the \emph{labeling graph} $\mathit{LG}$ introduced by Gabow~\cite{Gabow:Poset:TALG}. 
This graph satisfies the property that $u \in \moutM{v}$ if and only if $u$ is reachable from $v$ in $\mathit{LG}$. 
Moreover, its SCCs can be computed in topological order in $O(m)$ time (Lemma~2.16 in~\cite{Gabow:Poset:TALG}). 
The labeling graph $\mathit{LG}$ is defined with respect to a \emph{complete $k$-intersection} of a $k$-edge-connected digraph, which can be constructed in $O(km \log n)$ time~\cite{edge_connectivity:gabow}.
\end{proof}

Let $C_1,\ldots,C_q$ be the $(k+1)$-edge-connected components of $G$.
Our plan is to compute, for each component $C_i$, a graph $G_i$, where the vertices in $C_i$ are ordinary, that satisfies \Cref{theorem:keccdecomp}.
We construct this decomposition in two phases, where the first phase processes the sets $[v]$, corresponding to the minimal \outcomp{k}s of $G$, in proper order, while the second phase processes the sets $[v]_R$, corresponding to the minimal \outcomp{k}s of $G^R$ in proper order.

\begin{definition}\label{def:contraction}
Let $S$ be an \outcomp{k} of a $k$-edge-connected digraph $G$.  
The \emph{contraction of $S$}, denoted by $\langle S \rangle$, is the operation that contracts all vertices in $S$ into a single vertex $v_S$. Specifically:
\begin{itemize}
    \item every edge $(u,v)$ entering $S$, where $u \notin S$ and $v \in S$, is replaced by an edge $(u, v_S)$;
    \item every edge $(v,w)$ leaving $S$, where $v \in S$ and $w \notin S$, is replaced by an edge $(v_S, w)$;
    \item all edges with both endpoints in $S$ are deleted.
\end{itemize}
We denote by $\contract{S}$ the graph obtained from $G$ after the contraction of $S$.
\end{definition}

Our goal is to show that the contraction of a \outcomp{k} $S$ in $G$ maintains the pairwise connectivity of the remaining vertices in $\bar{S}=V \setminus S$. 
For convenience, we will work with a modified version of $G$, where we split the outgoing edges of $S$ by introducing auxiliary intermediate vertices. More formally, we have the following.

\begin{definition}
Let $S$ be a \outcomp{k} of a $k$-edge-connected digraph $G$, with outgoing edges $C=E(S,V \setminus S) = \{(x_1,y_1), \ldots, (x_k,y_k)\}$. 
The \emph{splitting operation on $C$} introduces, for each edge $e_i = (x_i, y_i) \in C$, a new vertex $x(e_i)$, and replaces $e_i$ with the two edges $(x_i, x(e_i))$ and $(x(e_i), y_i)$. 
We denote by \splitop{G}{C} the resulting graph obtained after performing this operation.
\end{definition}

\begin{lemma}
\label{lemma:splitting}
Let $S$ be a \outcomp{k} of a $k$-edge-connected digraph $G$.
Let $u$ and $v$ be any two vertices of $G$. Then, for any $\ell \ge k$, $u$ and $v$ are $\ell$-edge-connected in $G$ if and only if $u$ and $v$ are $\ell$-edge-connected in \splitop{G}{E(S,V \setminus S)}.
\end{lemma}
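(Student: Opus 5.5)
We argue both directions via Menger's theorem, comparing cuts in $G$ and in $G' := \splitop{G}{C}$, where $C=E(S,V\setminus S)=\{e_1,\dots,e_k\}$ with $e_i=(x_i,y_i)$. The key observation is that $G'$ is obtained from $G$ by subdividing each edge $e_i$ once. Each new vertex $x(e_i)$ has in-degree $1$ and out-degree $1$ in $G'$. Note that $u,v$ range over vertices of $G$ only.

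For the direction from $G$ to $G'$, suppose $u$ and $v$ are $\ell$-edge-connected in $G$. Take $\ell$ edge-disjoint $u$--$v$ paths in $G$. Replacing each occurrence of $e_i$ by the two-edge path $(x_i,x(e_i)),(x(e_i),y_i)$ yields $\ell$ edge-disjoint $u$--$v$ paths in $G'$. The same works for $v$--$u$ paths. Disjointness is preserved because distinct original edges map to disjoint pairs of new edges.

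For the converse, take $\ell$ edge-disjoint $u$--$v$ paths in $G'$; we may assume they are simple. Since $u,v\neq x(e_i)$, any such path entering $x(e_i)$ must use $(x_i,x(e_i))$ followed immediately by $(x(e_i),y_i)$, because these are the only edges incident to $x(e_i)$. Contracting each such consecutive pair back to $e_i$ gives a $u$--$v$ walk in $G$. Two distinct paths cannot both pass through $x(e_i)$, since they would share the edge $(x_i,x(e_i))$. Hence the resulting walks in $G$ are edge-disjoint, and each walk contains a simple path. The same argument handles $v$--$u$ paths.

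The argument is essentially routine, and the only delicate point is the converse: verifying that the subdivision vertices are internal to any path and that no two paths can reuse a subdivided edge. Degree $1$ in and out at $x(e_i)$ settles both. The hypotheses $\ell\ge k$ and that $S$ is a $k$-out set of a $k$-edge-connected graph are not needed for this equivalence between vertices of $G$. They presumably matter later, when the new vertices $x(e_i)$ are treated as auxiliary and one wants the new graph to remain $k$-edge-connected, or when $x(e_i)$ is compared with other vertices. So I would state the proof for original vertices $u,v$ and note explicitly that it holds for every $\ell$.
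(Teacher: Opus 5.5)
Your argument is correct and matches the paper's proof, which justifies the lemma in one line by noting that subdividing the edges of $E(S,V\setminus S)$ does not change $\lambda(u,v)$ for vertices $u,v$ of $G$; your Menger-based correspondence of edge-disjoint paths, using that each $x(e_i)$ has exactly one entering and one leaving edge, simply spells this out. You are also right that neither $\ell\ge k$ nor the $k$-out hypothesis is used, though your side remark needs correcting: the split graph $G'$ is not $k$-edge-connected for $k\ge 2$, precisely because those subdivision vertices have in- and out-degree $1$.
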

\begin{proof}
The lemma holds since splitting an edge $(x_i,y_i)$ does not affect the local edge connectivity $\lambda(u,v)$ between any two vertices $u$ and $v$ of $G$. 
\end{proof}

\begin{lemma}
\label{lemma:kout-split}
Let $S$ be a \outcomp{k} of a $k$-edge-connected digraph $G$, with outgoing edges $C=E(S,V\setminus S)=\{(x_1,y_1),\ldots,(x_k,y_k)\}$.
Also, let $G'=\splitop{G}{C}$ and let 
$X=\{x(e_1),\ldots,x(e_k)\}$ be the set of vertices introduced by the splitting operation.
Then, for any vertex $v \in S$, the minimum $(v,X)$-cut in $G'[S \cup X]$ has size at least $k$.
\end{lemma}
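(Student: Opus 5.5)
We argue by contradiction through the dual (cut) formulation, so Menger is not needed in the hard direction. Suppose that for some $v \in S$ there is a $(v,X)$-cut in $H := G'[S \cup X]$ with fewer than $k$ outgoing edges. Concretely, there is a vertex set $T$ of $H$ with $v \in T$, $T \cap X = \emptyset$ (so $T \subseteq S$), and $\mathit{out}_H(T) < k$. The plan is to lift $T$ to a set in $G'$ (equivalently in $G$) whose number of outgoing edges is also less than $k$. That contradicts the $k$-edge-connectivity of $G$, which by \Cref{lemma:splitting} is preserved in $G'$.

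The key step is comparing $\mathit{out}_{G'}(T)$ with $\mathit{out}_H(T)$. Since $T \subseteq S$, any edge of $G'$ leaving $T$ either ends in $S \cup X$ or ends in $V \setminus S$. The first kind is counted in $\mathit{out}_H(T)$. For the second kind, observe that in $G'$ every edge whose tail lies in $S$ has its head in $S \cup X$: the original outgoing edges $(x_i,y_i)$ of $S$ have been replaced by $(x_i, x(e_i))$ and $(x(e_i), y_i)$. So no edge goes from $S$ to $V \setminus S$ directly. Hence $\mathit{out}_{G'}(T) = \mathit{out}_H(T) < k$. Also $T$ is a proper nonempty subset of $V(G')$: it contains $v$ and misses $X$. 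Therefore removing its fewer than $k$ outgoing edges disconnects $v$ from any vertex of $X$, so $\lambda_{G'}(v, x(e_1)) < k$.

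Finally, we translate this back to $G$. The cut $T \subseteq S \subseteq V$ is a set of original vertices, and its outgoing edges in $G'$ correspond bijectively to its outgoing edges in $G$: an edge $(x_i, x(e_i))$ corresponds to $(x_i,y_i)$, and edges inside $G$ are unchanged. So $\mathit{out}_G(T) < k$. Moreover $T \neq V$, since $S$ is a proper subset of $V$: $\mathit{out}(S) = k \ge 1$ forces $V \setminus S \neq \emptyset$. This contradicts the $k$-edge-connectivity of $G$, which requires $\mathit{out}_G(T) \ge k$ for every nonempty proper subset $T$.

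The only potential subtlety is the case $k=0$, which is trivial, and parallel edges, which the counting handles since out-degree counts multiplicities. I expect the main (mild) obstacle to be stating cleanly that an edge of $G'$ with tail in $S$ never reaches $V \setminus S$ without passing through $X$. This is immediate from the definition of $C$ as all of $E(S,V\setminus S)$ and from the splitting operation.
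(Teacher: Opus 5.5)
Your proof is correct and follows the same route as the paper's. You assume a $(v,X)$-cut in $G'[S\cup X]$ with fewer than $k$ edges and take its $v$-side (the paper uses the set $R$ of vertices reachable from $v$ after deleting the cut edges; you use the vertex set $T$ directly); this is a set with fewer than $k$ outgoing edges in $G'$, which contradicts the $k$-edge-connectivity of $G$ via \Cref{lemma:splitting}. Your version is also more complete: it makes explicit the step the paper leaves implicit, namely that no edge of $G'$ with tail in $S$ leaves $S\cup X$, so $\mathit{out}_{G'}(T)=\mathit{out}_{G'[S\cup X]}(T)$, and it transfers the cut back to $G$ through the bijection of outgoing edges instead of citing \Cref{lemma:splitting}.
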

\begin{proof}
Suppose, for contradiction, that the claim is not true. Let $C'$ be a minimum $(v,X)$-cut in $G'[S\cup X]$ with size $|C'|=k'<k$.
Let $R$ be the set of vertices that $v$ reaches in $G'[S\cup X] \setminus C$. Then, $R$ is a \outcomp{k'} in $G'$ containing $v$ and, which contradicts \Cref{lemma:splitting} and the fact that $G$ is $k$-edge-connected.
\end{proof}

\ignore{
Now let $S' = S \cup X$, where $X = \{x(e_i) : e_i \in E(S, V \setminus S)\}$ is the set of vertices introduced by the splitting operation. Then $S'$ is a $k$-out set of $G' = \splitop{G}{E(S, V \setminus S)}$. It is also easy to verify that contracting $S'$ in $G'$ yields the same graph as contracting $S$ in $G$.
}

\begin{observation}
\label{obs:equivalence}
Let $S$ be a \outcomp{k} of a $k$-edge-connected digraph $G$. Also, let $G' = \splitop{G}{E(S, V \setminus S)}$ and let $S' = S \cup X$, where $X = \{x(e_i) : e_i \in E(S, V \setminus S)\}$ is the set of vertices introduced by the splitting operation. Then, $S'$ is a $k$-out set of $G'$ and $\contract{S}=\contractp{S'}$.
\end{observation}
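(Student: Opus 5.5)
This is a direct check of definitions, done in two parts: first that $S'$ is a $k$-out set of $G'$, then that contracting $S'$ in $G'$ gives exactly the graph $\contract{S}$. Throughout, let $C=E(S,V\setminus S)=\{e_1,\dots,e_k\}$ with $e_i=(x_i,y_i)$. By construction, $V(G')=V\cup X$ and
\[
E(G')=(E\setminus C)\cup\{(x_i,x(e_i)),(x(e_i),y_i): 1\le i\le k\}.
\]

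\textbf{Step 1: $S'$ is a $k$-out set.} We enumerate the edges of $G'$ with tail in $S'=S\cup X$.
\begin{itemize}
\item An edge with tail in $S$ is either an edge of $E\setminus C$ or a new edge $(x_i,x(e_i))$. Edges of $E\setminus C$ with tail in $S$ have their head in $S$, because every edge from $S$ to $V\setminus S$ belongs to $C$. The new edges $(x_i,x(e_i))$ have their head in $X\subseteq S'$. So no edge with tail in $S$ leaves $S'$.
\item An edge with tail $x(e_i)\in X$ is precisely $(x(e_i),y_i)$, and $y_i\notin S$. Since $y_i\in V$, we also have $y_i\notin X$, so $y_i\notin S'$.
\end{itemize}
Hence $E_{G'}(S',V(G')\setminus S')=\{(x(e_i),y_i):1\le i\le k\}$, and these are $k$ distinct edges because each $x(e_i)$ is a distinct new vertex. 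Therefore $\mathit{out}_{G'}(S')=k$.

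\textbf{Step 2: $\contract{S}=\contractp{S'}$.} Both graphs have vertex set $(V\setminus S)\cup\{v_S\}$, since $V(G')\setminus S'=V\setminus S$. We compare edges class by class, identifying $v_S$ with $v_{S'}$.
\begin{itemize}
\item \emph{Edges with both endpoints in $V\setminus S$.} Every such edge lies in $E\setminus C$, and it is kept unchanged in both graphs.
\item \emph{Edges entering the contracted vertex.} An edge entering $S'$ in $G'$ has its tail in $V\setminus S$. The only edges of $G'$ with head in $X$ are the edges $(x_i,x(e_i))$, whose tails lie in $S$, so every entering edge has its head in $S$. These edges are exactly the edges $(u,w)\in E$ with $u\notin S$ and $w\in S$. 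Both contractions replace each of them by $(u,v_S)$.
\item \emph{Edges leaving the contracted vertex.} By Step 1, the edges leaving $S'$ are $(x(e_i),y_i)$; contracting $S'$ turns each into $(v_{S'},y_i)$. Contracting $S$ in $G$ turns each $e_i$ into $(v_S,y_i)$. These multisets coincide.
\item \emph{Internal edges.} Edges with both endpoints in $S'$ are deleted by $\langle S'\rangle$. Edges with both endpoints in $S$ are deleted by $\langle S\rangle$.
\end{itemize}
The edge multisets therefore agree, with parallel edges preserved, so the two graphs are equal.

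There is no real obstacle in this argument. The only point requiring care is confirming that no edge of $G'$ enters $X$ from outside $S$, which guarantees that the entering edges are unchanged by the splitting.
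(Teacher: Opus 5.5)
Your proof is correct. You classify the edges of $G$ and $G'$ into internal, entering, leaving and untouched edges, and you check that no edge of $G'$ enters $X$ from $V\setminus S$; this is exactly the routine definitional verification the paper intends. The paper states this as an observation with no written proof, so your argument simply spells that check out.
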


\ignore{
\begin{lemma}
\label{lemma:koutboundary}
Let $S$ be a \outcomp{k} of a $k$-edge-connected digraph $G$, with boundary vertices $x_1,\ldots,x_k$. Then, for any vertex $v \in S \setminus \{x_1,\ldots,x_k\}$, the minimum $(v,\{x_1,\ldots,x_{\eta}\})$-cut in $G[S]$ has size at least $k$.
\end{lemma}
\begin{proof}
Suppose, for contradiction, that the claim is not true. Let $C$ be a minimum $(v,\{x_1,\ldots,x_{\eta}\})$-cut in $G[S]$ with size $|C|=k'<k$.
Let $R$ be the set of vertices that $v$ reaches in $G[S] \setminus C$. By \Cref{lemma:kcut-induces-kin}, $R$ is a \outcomp{k'} in $G$, which contradicts the fact that $G$ is $k$-edge-connected.
\end{proof}
}

Now we are ready to prove that the contraction of a \outcomp{k} $S$ in $G$ maintains the pairwise connectivity of the remaining vertices in $\bar{S}=V \setminus S$.

\begin{lemma}
\label{lemma:contraction}
Let $S$ be a \outcomp{k} of a $k$-edge-connected digraph $G$. Then, for any $\ell \ge k$, any two vertices $u,v \not\in S$ are $\ell$-edge-connected in $G$ if and only if $u$ and $v$ are $\ell$-edge-connected in \contract{S}.
\end{lemma}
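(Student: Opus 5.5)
One direction is straightforward, and one requires rerouting paths through $S$.

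Easy direction. Contraction can only increase local edge connectivity, as observed in the paper after \Cref{corollary:minimal-exists}: for any $x,y\notin S$ we have $\lambda_{\contract{S}}(x,y)\ge\lambda_G(x,y)$. Hence if $u$ and $v$ are $\ell$-edge-connected in $G$, they remain so in $\contract{S}$.

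Converse. I will argue by contrapositive through cuts. Suppose $\lambda_G(u,v)<\ell$. Then there is a set $T$ with $u\in T$, $v\notin T$ and $\mathit{out}_G(T)<\ell$. I want a cut $T'$ in $\contract{S}$ that still separates $u$ from $v$ and satisfies $\mathit{out}_{\contract{S}}(T')\le \mathit{out}_G(T)$. Either $T\cup S$ or $T\setminus S$ projects to such a set in $\contract{S}$, since neither splits $S$. By submodularity,
\[
\mathit{out}(T\cup S)+\mathit{out}(T\cap S)\le \mathit{out}(T)+k .
\]
If $T\cap S\neq\emptyset$, then $\mathit{out}(T\cap S)\ge k$ because $G$ is $k$-edge-connected, so $\mathit{out}(T\cup S)\le\mathit{out}(T)$. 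We then take $T'$ to be the projection of $T\cup S$; it contains $u$ but not $v$, since $v\notin S$.

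If $T\cap S=\emptyset$, then $T$ already avoids $S$ and projects directly, with the same outgoing edges. So in both cases $\lambda_{\contract{S}}(u,v)<\ell$, which is the desired contradiction. The case $T\cap S=\emptyset$ is trivial here, but I will need to be careful if $S\cup T$ is the whole vertex set. That cannot happen, because $v\notin S\cup T$.

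Alternative route. Equivalently, one can follow the splitting framework the paper sets up. Split the outgoing edges of $S$ (\Cref{lemma:splitting}), which leaves $\ell$-connectivity unchanged. Use \Cref{lemma:kout-split} to route any path prefix entering $S$ to an arbitrary exit vertex of $X$, via $k$ edge-disjoint paths inside $G'[S\cup X]$. Then \Cref{obs:equivalence} identifies the contraction of $S'$ with $\contract{S}$. The main subtlety in that path-based version is rerouting several edge-disjoint $u$--$v$ paths of $\contract{S}$ that pass through $v_S$. Each such path enters $S$ on some edge and must leave through a distinct outgoing edge $e_i$, and all these paths need simultaneous edge-disjoint linkages inside $S$. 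This is exactly a $k$-linkage from entry points to $X$, which requires a flow argument in $G'[S\cup X]$ with a super-source on the entering edges.

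The cut/submodularity argument sidesteps this obstacle entirely, so I would present that one. The only point to double-check is the inequality $\mathit{out}(T\cap S)\ge k$, which holds because $T\cap S$ is a nonempty proper subset of $V$ in a $k$-edge-connected graph.
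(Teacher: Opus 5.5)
Your proof is correct, and it takes a different route from the paper's.

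\textbf{The paper's route.} The paper argues with paths. It first splits the outgoing edges of $S$ (\Cref{lemma:splitting}, \Cref{obs:equivalence}). It then maps a maximum family of edge-disjoint $u$--$v$ paths of the split graph into the contraction, which gives one inequality. For the other inequality, it takes a maximum family in the contracted graph and reroutes the $\eta\le k$ paths through the contracted vertex inside $S'$, using Menger's theorem and the fact that subsets of $S'$ have at least $k$ outgoing edges.

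\textbf{Your route.} You argue with cuts. A minimum $(u,v)$-cut $T$ of $G$ either avoids $S$, in which case it survives contraction with the same outgoing edges. Or you uncross it with $S$ by submodularity: $\mathit{out}(T\cap S)\ge k=\mathit{out}(S)$ forces $\mathit{out}(T\cup S)\le\mathit{out}(T)$, and $T\cup S$ does not split $S$. Together with monotonicity of $\lambda$ under contraction, this gives $\lambda_G(u,v)=\lambda_{G_{\langle S\rangle}}(u,v)$ for all $u,v\notin S$. That is exactly the equality the paper proves. It handles both directions of $\ell$-edge-connectivity by symmetry, and it shows the hypothesis $\ell\ge k$ is not actually needed.

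\textbf{Comparison.} Your argument is shorter and needs neither the splitting construction nor Menger's theorem. It also sidesteps the simultaneous-linkage issue you correctly identified. Several paths may enter $S$, possibly at the same vertex, and they must be routed edge-disjointly to distinct exits. That really calls for a multi-source flow/cut argument, and the paper's path-based proof treats this step only briefly. Your version also makes clear that the only property of $S$ used is $\mathit{out}(A)\ge\mathit{out}(S)$ for every nonempty $A\subseteq S$. The paper's approach is more constructive, since it exhibits the rerouted paths. For this lemma, though, the uncrossing argument is the cleaner proof.

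One small wording point: your converse is a contrapositive, so the final step yields $\lambda_{G_{\langle S\rangle}}(u,v)<\ell$ directly rather than a contradiction.
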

\begin{proof}
For any graph $H$, let $\lambda_H(u,v)$ denote the size of the minimum $u$-$v$ cut in $H$. We show that $\lambda_G(u,v) = \lambda_{\contract{S}}(u,v)$.
Define $X, S'$ and $G'$ as in \Cref{obs:equivalence}.
By \Cref{lemma:splitting}, $\lambda_{G}(u,v)=\lambda_{G'}(u,v)$ and by \Cref{obs:equivalence}, $\lambda_{\contract{S}}(u,v) = \lambda_{\contractp{S'}}(u,v)$. Hence,
it suffices to argue that 
$\lambda_{G'}(u,v) = \lambda_{\contractp{S'}}(u,v)$.


Let $P'=\{p_1, p_2, \ldots, p_{\zeta}\}$ be a maximum set of edge-disjoint $u$-$v$ paths in $G'$. By Menger's theorem~\cite{menger}, $\zeta = \lambda_{G'}(u,v)$.
Clearly, each path $p_i \in P$ has a corresponding $u$-$v$ path $q_i$ in \contractp{S'}. Moreover, it is easy to observe that  $Q=\{q_1, q_2, \ldots, q_{\zeta}\}$ is a set of edge-disjoint $u$-$v$ paths in \contractp{S'}. Hence, $\lambda_{\contractp{S'}}(u,v) \ge \lambda_{G'}(u,v)$.

Now, we argue that $\lambda_{\contractp{S'}}(u,v) \le \lambda_{G'}(u,v)$. Let $Q=\{q_1, q_2, \ldots, q_{\xi}\}$ be a maximum set of edge-disjoint $u$-$v$ paths in \contractp{S'}, so $\xi = \lambda_{\contractp{S'}}(u,v)$. Assume, without loss of generality, that $Q'=\{q_1, q_2, \ldots, q_{\eta}\}$ is the subset of $Q$ consisting of the paths that contain the vertex $v_{S'}$ into which $S'$ was contracted (\Cref{def:contraction}).
Then, any path $p_i \in Q'$ has the edges $(u_i,v_{S'})$ and $(v_{S'},y_i)$, where $y_i$ is an exit point of $S'$.
Hence, $\eta \le k$.
Let $(u_i,v_i)$ and $(x(e_i),y_i)$ be the edges of $G'$ that correspond to $(u_i,v_{S'})$ and $(v_{S'},y_i)$, respectively.
We claim that the minimum $(v_i, \{x(e_i),\ldots,x(e_\eta)\})$-cut in $G'[S']$ has size at least $k$. Suppose, for contradiction, that the claim is not true. Let $C$ be a $(v_i, \{x(e_i),\ldots,x(e_\eta)\})$-cut in $G'[S']$ with size $|C|=k'<k$.
Let $R$ be the set of vertices that $v_i$ reaches in $G'[S'] \setminus C$. Then, $R$ is a \outcomp{k'} in $G'$, which, by \Cref{lemma:splitting}, contradicts the fact that $G$ is $k$-edge-connected.
This implies $\lambda_{G'}(\{v_1, \ldots, v_{\eta}\}, \{y_1, \ldots, y_{\eta}\}) \ge k$. 
By Menger's theorem, $G'$ therefore contains at least $\eta$ edge-disjoint $u$–$v$ paths passing through $S'$, corresponding to the set of paths $Q'$ in \contractp{S'}. 
Since the paths in $Q \setminus Q'$ remain unchanged in $G'$, we conclude that $\lambda_{\contractp{S'}}(u, v) \le \lambda_{G'}(u, v)$.
\end{proof}

\subsubsection{First phase}

Let $G$ be the given $k$-edge-connected digraph. We choose an arbitrary vertex $s$ as a start vertex and compute the sets $[v]$, for all ordinary vertices $v$, corresponding to the minimal \outcomp{k}s $\moutM{v}$, in proper order (\Cref{definition:proper} and \Cref{proposition:proper}). 

Now, given a proper order of the sets $[v]$, our goal is to compute, for each set $[v]$, an auxiliary graph $G_{[v]}$, 
satisfying the following: 
\begin{definition}[First-phase auxiliary graph]
\label{definition:auxv-k}
An auxiliary graph $G_{[v]}$ is $k$-edge-connected and contains two types of vertices, ordinary and auxiliary, such that:
\begin{enumerate}
    \item The ordinary vertices are precisely the vertices of $[v]$.
    \label{definition:auxv:property1-k}
    \item An auxiliary vertex $x$ is not $(k+1)$-edge-connected with any other vertex $y$, ordinary or auxiliary, in $G_{[v]}$.
    \label{definition:auxv:property2-k}
    \item Any two ordinary vertices $u$ and $v$ of $G_{[v]}$ are $\ell$-edge-connected in $G_{[v]}$, for $\ell \ge k$, if and only if they are $\ell$-edge-connected in $G$.
    \label{definition:auxv:property3-k}
\end{enumerate}
\end{definition}

In the first phase of our algorithm, we construct auxiliary graphs $G_{[v]}$ for all sets $[v]$, 
with a total number of vertices $O \left ( \sum_{[v]}  |V(G_{[v]})| \right) = O(|V(G)|)$ and a total number of edges $O \left ( \sum_{[v]}  |E(G_{[v]})| \right) = O(|E(G)|)$.

%
We construct the graphs $G_{[v]}$ by applying the contraction operation of \Cref{def:contraction}.
This ensures that the connectivity of ordinary vertices is preserved (\Cref{lemma:contraction}) and that the resulting graph is a $k$-edge-connected.

We maintain an evolving graph $G'=(V',E')$, where we apply the contraction of each \outcomp{k} that we identify. Initially, $G'=G$. We
process the sets $[v]$ in proper order, which means that if $\moutM{u} \subset \moutM{v}$ then $[u]$ is computed before $[v]$.

When we process the next set $[v]$, we wish to identify the minimal \outcomp{k} \mpout{v} in the current graph $G'$ that corresponds to \mout{v} and contains all ordinary vertices of $[v]$.
To process the next set $[v]$, we execute the 
randomized local search \Cref{algorithm:randomizedlocalsearch} (presented in \Cref{sec:proposition:local_search}) for exponentially increasing $\Delta = 2^i$, $i \in \{0,\ldots,\lceil \log_2{m} \rceil\}$.\footnote{Even though it is not necessary, one could make the following modification in lines 2 and 9 of \Cref{algorithm:DFS}: one can return the DFS-tree path as soon as we reach any ordinary vertex $w \not \in [v]$. This is because any such $w$ has not been contracted into an auxiliary vertex.}
For constant $k$, we can use the 
local search \Cref{algorithm:localsearch} instead, in order to obtain a deterministic decomposition.
Henceforth, we will assume that the local searches have successfully identified each set \mpout{v}. (Note that this is easy to verify, since for any $[v] \not= [s]$ we have $\mpoutM{v} \not= \emptyset$.)


When we initiate a local search from an ordinary vertex $v$ in the evolving graph $G'$, we find a minimal \outcomp{k} $S=\mpoutM{v}$. At this point, we need to (i) create the auxiliary graph $G_{[v]}$, by performing a contraction of $V(G') \setminus S$, and (ii) update the evolving graph $G'$, by performing a contraction of $S$.

To enable the fast execution of tasks (i) and (ii), we maintain, for each vertex in $V(G')$, a list of incoming edges and a list of outgoing edges, organized in circular doubly-linked lists. This allows us to delete an edge from a list and merge two lists in constant time. Also, we assume that for each instance of an edge $e=(u,v)$, we maintain a pointer to the location of $e$ in the adjacency lists of $u$ and $v$.

When we apply a contraction for a \outcomp{k} (resp., \incomp{k}) $S$, we introduce a new auxiliary vertex $v_S$ with $\mathit{out}(v_S)=k$ (resp., $\mathit{in}(v_S)=k$), that represents the vertices (ordinary and auxiliary) in $S$. We refer to such a vertex $v_S$ as a \emph{$k$-out auxiliary vertex} (resp., \emph{$k$-in auxiliary vertex}).

\paragraph{Construction of $G_{[v]}$.}

Let $S=\mpoutM{v}$ be the next minimal \outcomp{k} found in $G'$, and let $\bar{S}=V(G') \setminus S$. Also, let $C=E(S,\bar{S})=\{e_1=(x_1,y_1), \ldots, e_k=(x_k,y_k)\}$ be the edges of the \cut{k} that defines $S$. (Note that $x_1,\ldots,x_k \in S$ and $y_1,\ldots,y_k \in \bar{S}$).
We apply the contraction operation for $\bar{S}$, which replaces $G'[\bar{S}]$ (the subgraph of $G'$ induced by $\bar{S}$) with a $k$-in auxiliary vertex $v_{\bar{S}}$. Then, we let $G'=\contractp{\bar{S}}$.

Given $S$, it is straightforward to construct $G'$ in $O(\mathit{vol}(S) + \mathit{in}(S))$ time, which is not fast enough for our purposes. Fortunately, we can apply the following reduction rule to obtain a construction with 
$O(\mathit{vol}(S))$ time and space.

\vspace{0.5cm}
\noindent\emph{Reduction Rule}: For each vertex $u \in S$ do the following.
\begin{itemize}
\item Let $\rho=E(\bar{S},u)$, i.e., the number of edges entering $u$ from $\bar{S}$. Then, in $G'$ we keep $\min\{k,\rho\}$ copies of the edge $(v_{\bar{S}},u)$. 
\end{itemize}

The correctness of this reduction rule follows directly from the observation that, for any two vertices $u, w \in S$, there exist at most $k$ edge-disjoint $u$–$w$ paths that traverse $\bar{S}$.

To complete the construction of $G_{[v]}$, we let $G_{[v]} = \contractp{\bar{S}}$, for $S=\mpoutM{v}$, where we execute the reduction rule.

\begin{lemma}
\label{lemma:gadget-replace-time}
Given a \outcomp{k} $S=\mpoutM{v}$ of the evolving graph $G'$, we can construct $G_{[v]}$ in $O(\mathit{vol}(S))$ time, plus $O(m)$ time for all $G_{[v]}$ graphs constructed through the algorithm.
\end{lemma}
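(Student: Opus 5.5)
The plan is to charge all the work for $G_{[v]}$ to the edges whose tail lies in $S$. The only work not of this kind is the single allocation of the new vertex $v_{\bar S}$, which goes into the global $O(m)$ term.

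First, fix the representation. Every edge of $G'$ stays in the circular doubly-linked in-list of its head and out-list of its tail, and each edge instance keeps cross-pointers into both lists. I will build $G_{[v]}$ directly from $S$ without copying $G'[\bar S]$.

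The construction then goes as follows.
\begin{enumerate}
\item Create the new $k$-in auxiliary vertex $v_{\bar S}$.
\item Scan the out-lists of all $u\in S$; this takes $O(\mathit{vol}(S))$ time. Every edge $(u,w)$ with $w\in S$ is kept. Every edge with $w\in\bar S$ is one of the $k$ cut edges $e_1,\dots,e_k$ and is re-attached as $(x_i,v_{\bar S})$.
\item Handle the edges entering $S$ from $\bar S$ using the Reduction Rule. For each $u\in S$ we need $\min\{k,\rho_u\}$ copies of $(v_{\bar S},u)$, where $\rho_u=|E(\bar S,u)|$.
\end{enumerate}
Membership in $S$ is tested with a mark bit that is set and cleared within the scan, so it costs $O(|S|)\le O(\mathit{vol}(S))$. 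Here $|S|\le\mathit{vol}(S)$ because $G'$ is $k$-edge-connected with $k\ge1$, so every vertex has out-degree at least one.

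The delicate step is computing $\min\{k,\rho_u\}$ without scanning the whole in-list of $u$, since that could cost $\mathit{in}(S)$. My plan is to walk the in-list of $u$ and stop as soon as $k$ edges from $\bar S$ have been seen. Each edge encountered in this walk is either from $\bar S$, and there are at most $k$ of these before we stop, or from $S$. Each edge from $S$ is counted once in $\mathit{vol}(S)$, as an out-edge of its tail. So the walk costs $O(k+\deg^-_S(u))$, and summing over $u\in S$ gives $O(k|S|+\mathit{vol}(S))$. For constant $k$ this is $O(\mathit{vol}(S))$.

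To avoid even the $k|S|$ factor, I would instead generate the copies lazily. Every $u\in S$ other than the boundary vertices has in-degree at least $k$, so $\min\{k,\rho_u\}\le k$ copies of $(v_{\bar S},u)$ can be charged to $u$'s own in-degree contribution from inside $S$. The part not covered by this charge is where I expect the main obstacle to sit. If the charge fails there, I will accept $O(k\cdot\mathit{vol}(S))$ as the bound, which is consistent with the polynomial-in-$k$ factors elsewhere.

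Finally, the additive $O(m)$ term is the cost of updating $G'$, where $S$ is contracted into $v_S$, and the contraction must also be done without touching $\bar S$. I plan to do this by reusing the $k$ out-edges and by merging the in-lists of the vertices of $S$ into that of $v_S$ in $O(1)$ per vertex, discarding internal edges as they are found. Each original edge of $G$ is deleted or redirected at most once over the whole algorithm, because once an edge is internal to a contracted set it disappears. Each contraction also removes at least one vertex of $G'$, so there are at most $n$ of them. Summing, the total of these global costs over all graphs $G_{[v]}$ is $O(m+n)=O(m)$.
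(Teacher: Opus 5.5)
Your construction is the paper's: copy the out-lists of $S$, redirect the $k$ cut edges to $v_{\bar S}$, and walk the in-list of each $u\in S$ only until $k$ edges from $\bar S$ have been seen. This costs $O(\mathit{vol}(S)+k|S|)$.

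\textbf{The gap: absorbing the $k|S|$ term.} As written, you get $O(\mathit{vol}(S))$ only for constant $k$, and otherwise fall back to $O(k\cdot\mathit{vol}(S))$, which is not the stated bound. The missing observation is one line, and it is the one the paper uses. $G'$ is $k$-edge-connected: contracting a $k$-out set never lowers edge-connectivity, and in the second phase $G'=(G_{[v]})^R$ is $k$-edge-connected by construction. Also, $G'$ has at least two vertices because $s\notin S$. So every vertex of $G'$ has out-degree at least $k$. Hence $k|S|\le\mathit{vol}(S)$, and $O(\mathit{vol}(S)+k|S|)=O(\mathit{vol}(S))$ for every $k$. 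You already used this fact with ``out-degree at least one'' to get $|S|\le\mathit{vol}(S)$; you only needed it with $k$.

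Drop the lazy-charging idea. A vertex having in-degree at least $k$ says nothing about how many of its in-edges come from inside $S$; the start vertex $v$, for instance, may receive all of them from $\bar S$. Moreover, $G_{[v]}$ itself contains $\sum_{u\in S}\min\{k,\rho_u\}$ copies of edges out of $v_{\bar S}$, which can be as many as $k|S|$. So no scheme avoids paying for $k|S|$, and you need $k|S|\le\mathit{vol}(S)$ regardless.

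\textbf{The additive $O(m)$ term.} Your explanation of it is misplaced. Contracting $S$ into $v_S$ in $G'$ is not part of this lemma; the paper handles it separately in \Cref{lemma:contract-time-2}, using a DSU.

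Your claim that every original edge is redirected at most once is also false. An edge $(w,u)$ entering $S$ acquires head $v_S$. Later, $v_S$ may lie inside a larger minimal $k$-out set $S_2$, since nested sets are processed inner-first, and then the same edge is redirected again, to $v_{S_2}$. Splicing in-lists in $O(1)$ per vertex avoids touching these edges only by leaving their head fields stale. That is exactly why the paper resolves heads through $\mathit{find}$.
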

\begin{proof}
To construct $G_{[v]}$, we first create a copy of each vertex in $v \in S$, and also copy the list of outgoing edges of $v$ in $G'$. This takes $O(\mathit{vol}(S))$ time. Then, we introduce the $k$-in auxiliary vertex $v_{\bar{S}}$ of $\bar{S}$, and replace each entering edge $(x_i,y_i) \in E(S,\bar{S})$ with $(x_i,v_{\bar{S}})$.

Next, we need to add the necessary outgoing edges $(v_{\bar{S}},u)$, for  $u\in S$, according to the Reduction Rule. To that end, for each vertex $v \in S$, we scan its list of entering edges in $G'$, and we count the number $\rho$ of edges $(z,u)$ such that $z \in \bar{S}$. We stop as soon as we find $k$ such edges, or reach the end of the list of entering edges of $v$.
Hence, we add $\min\{k,\rho\}$ copies of the edge $(v_{\bar{S}},u)$ to the list of outgoing edges of $v_{\bar{S}}$. 
Since we access at most $k$ edges entering $u$ from $\bar{S}$, the overall time to scan all lists is bounded by $O(\mathit{vol}(S)+k|S|)=O(\mathit{vol}(S))$, as claimed.
This bound also includes the time to construct the list of entering edges for each vertex of $G_{[v]}$. 
\end{proof}

\paragraph{Updating $G'$.}

After we have constructed $G_{[v]}$, we need to update the evolving graph by setting $G' = \contractp{S}$. Recall that $S=\mpoutM{v}$ is a \outcomp{k} of the evolving graph $G'$. 
We apply the contraction operation for $S$, which replaces $G'[S]$ (the subgraph of $G'$ induced by $S$) with a $k$-out auxiliary vertex $v_S$. This replacement procedure is analogous to the one applied for the construction of $G_{[v]}$, but slightly different since we operate directly on $G'$.
The important difference with respect to the contraction procedure of $\bar{S}$ for $G_{[v]}$, is that here we do not construct a new graph from scratch, but modify $G'$. Also, we do not need (and cannot afford) to apply the Reduction Rule.

To perform the contraction of $S$ into $v_S$, we need to update the head of the edges $(w,v)$, where $w \in \bar{S}$. 
We do that indirectly with a disjoint set-union (DSU) data structure~\cite{setunion:tvl,dsu:tarjan}, which we use to contract all the vertices of $S$.
The DSU data structure supports the operation $\mathit{unite}(u,v)$, which unites the sets of two given vertices $u$ and $v$, and makes $u$ the representative of the new set. It also supports the query $\mathit{find}(v)$, which returns the representative vertex of the set containing $v$.
To perform the contraction of $S$, we chose a vertex $r \in S$ as representative, and execute $\mathit{unite}(r,v)$ for all vertices $v \in S \setminus r$. Hence, whenever we need to traverse an edge $(w,v)$, we traverse $(w,\mathit{find}(v))$ instead.

Next, we update the list of outgoing and entering edges of $v$. To do this fast, we assume that we have stored a list $L(S)$ of the edges with both endpoints in $S$, which we can do in $O(\mathit{vol}(S))$ time.
Next, we delete each edge $(u,v) \in L(S)$, both from the list of outgoing edges of $u$ and from the list of entering edges of $v$. 
Then, for every vertex $v \in S \setminus r$, we merge the list of entering edges of $v$ with the list of entering edges of $r$.
Finally, we update the list of outgoing edges of $r$, so that it consists of the $k$ outgoing edges of $S$. 

The proof that $G_{[v]}$ satisfies \Cref{definition:auxv-k} is deferred to a later subsection. We proceed to analyze the running time of the procedure.

\begin{lemma}
\label{lemma:contract-time-2}
Given a \outcomp{k} $S=\mpoutM{v}$ of the evolving graph $G'$, we can update $G'$ in $O(\mathit{vol}(S))$ time, plus the time required for the DSU data structure.
\end{lemma}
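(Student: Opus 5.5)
The plan is to account for every step of the update procedure described just above and charge each one either to $\mathit{vol}(S)$ or to the DSU operations. The key point is that the procedure never scans the edges entering $S$ from $\bar{S}$. Those edges are handled only implicitly: their heads are redirected through $\mathit{find}$, and their list entries are moved by constant-time splicing of circular doubly-linked lists.

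\textbf{Step 1: the internal edges.} I would first build the list $L(S)$ of edges with both endpoints in $S$. Scan the outgoing list of every vertex $u\in S$ and keep those edges $(u,w)$ with $w\in S$. Membership in $S$ is tested with a mark array set on $S$ in $O(|S|)$ time, or with $\mathit{find}$ if $S$ contains already contracted vertices; this is the only place where DSU queries arise. Since $G'$ is $k$-edge-connected, every vertex has out-degree at least $1$, so $|S|\le \mathit{vol}(S)$, and this scan costs $O(\mathit{vol}(S))$ plus the DSU time. Each edge of $L(S)$ is then removed from the outgoing list of its tail and the entering list of its head in $O(1)$ time, using the stored pointers. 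This costs $O(|L(S)|)\subseteq O(\mathit{vol}(S))$ in total.

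\textbf{Step 2: unite and merge.} Next, execute $\mathit{unite}(r,u)$ for all $u\in S\setminus\{r\}$. This is $|S|-1$ DSU operations, charged to the DSU term. For every $u\in S\setminus\{r\}$, splice the entering list of $u$ into that of $r$. After Step 1 these lists contain only edges from $\bar{S}$, and each splice takes $O(1)$ time, for $O(|S|)$ overall. Because their heads are now resolved through $\mathit{find}$, the edges entering $S$ need not be touched individually. This is exactly what avoids the $\mathit{in}(S)$ term.

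\textbf{Step 3: outgoing edges of $r$.} Finally, rebuild the outgoing list of $r$ so that it consists of the $k$ edges of $E(S,\bar{S})$. These are precisely the outgoing edges of vertices in $S$ that were not placed in $L(S)$, so they were already identified during the scan in Step 1 at no extra cost. Relinking them takes $O(k)\subseteq O(\mathit{vol}(S))$ time. The old outgoing lists of the vertices in $S\setminus\{r\}$ are discarded; they are empty of relevant edges apart from these $k$ edges.

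Summing the three steps gives $O(\mathit{vol}(S))$ plus the DSU time, as claimed. The main subtlety, which I would check carefully, is that no step implicitly walks the entering edges from $\bar{S}$. In particular, the deletions in Step 1 must use the stored position pointers rather than a search through the lists, and the merges in Step 2 must be splices rather than copies.
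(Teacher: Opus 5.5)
Your proposal is correct and follows essentially the same route as the paper's proof. Both build $L(S)$ by scanning the out-lists of $S$, delete those edges in $O(|L(S)|)$ time via the stored pointers, make the out-list of $r$ consist of the $k$ cut edges, and merge the entering lists of $S\setminus\{r\}$ into that of $r$ in $O(|S|)$ total time, with the DSU redirecting heads. Your extra details make explicit what the paper leaves implicit: the membership test, the bound $|S|\le\mathit{vol}(S)$ from positive out-degrees, and the use of $O(1)$ splices rather than copies so that $\mathit{in}(S)$ is never paid.
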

\begin{proof}
Our procedure receives as input a list of the vertices in $S$.
It computes a list $L(S)$ of the edges with both endpoints in $S$ in $O(\mathit{vol}(S))$ time.
Then, we can remove all edges in $L(S)$ from the adjacency lists of $G'$ in $O(|L(S)|)$ time, and update the list of outgoing edges of $r$, so that it consists of the $k$ outgoing edges of $S$.
Finally, we can merge the list of entering edges of each vertex $v \in S \setminus r$ with that of $r$, in a total of $O(|S|)$ time.
Thus, the total update time is $O(|L(S)|+|S|)=O(\mathit{vol}(S))$, plus the time required by the DSU data structure.
\end{proof}

After processing all sets $[v]$, such that $s \not\in [v]$, we set $G_{[s]}=G'$. The ordinary vertices of $G_{[s]}$ are the ordinary vertices $v \in V(G)$ for which $\moutM{v} = \mpoutM{v} = \bot$.

\subsubsection{Second phase}

The second phase completes the construction of the \ecc{(k+1)} decomposition $H_1, \ldots, H_t$, where each $H_i$ corresponds to a unique $(k+1)$-edge-connected component $C_i$ of $G$. 
To that end, we process each auxiliary graph $G_{[v]}$ of the first phase as follows. Recall that if $s \not\in [v]$, $G_{[v]}$ is the graph that results from the contraction of the \incomp{k} $S=V(G') \setminus \mpoutM{v}$. Since $s \not\in \mpoutM{v}$, the $k$-out auxiliary vertex $v_{S}$ represents the contraction of $S=V(G') \setminus \mpoutM{v}$, which contains the start vertex $s$. Hence, we set $v_S$ as the start vertex of $G_{[v]}$, and apply the same procedure as the first phase, for $G'=(G_{[v]})^R$. For $G_{[s]}$, we apply the procedure of the first phase for $G'=(G_{[s]})^R$ with start vertex $s$.

\subsubsection{Correctness proof}

First, we will prove the correctness of our construction, under the assumption that the local searches have successfully identified each set \mpout{v}. Then, we will bound the probability of failure.

In the following, we prove that in each of the graphs created, the ordinary vertices are all \conn{(k+1)}.
The rest of the properties of \Cref{theorem:keccdecomp} follow directly from our algorithm and the fact that contractions preserve the connectivity of ordinary vertices (\Cref{lemma:contraction}).

Let $C_1,\ldots,C_q$ be the $(k+1)$-edge-connected components of a \conn{k} $G$.

\begin{lemma} \label{lem:correctnessPhaseOne-k}
The graphs $G_{[v]}$ constructed during the first phase of the algorithm satisfy \Cref{definition:auxv-k}.
\end{lemma}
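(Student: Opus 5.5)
We must show that each $G_{[v]}$ is $k$-edge-connected and satisfies the three properties of \Cref{definition:auxv-k}. The plan is an induction over the order in which the sets $[v]$ are processed. The invariant, for the evolving graph $G'$ at the moment $[v]$ is processed, is:
\begin{itemize}
\item $G'$ is $k$-edge-connected;
\item for every $\ell\ge k$, any two original (ordinary) vertices still present in $G'$ are $\ell$-edge-connected in $G'$ iff they are in $G$;
\item for every unprocessed ordinary $u$, the minimal $k$-out set $\mu'(u)$ in $G'$ is exactly the image of $\mu(u)$ under the contractions performed so far.
\end{itemize}

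\textbf{Maintaining the invariant.} The first two parts follow from \Cref{lemma:contraction}. A contraction never creates a cut of size less than $k$, since contraction only increases $\lambda$. It also preserves $\lambda$ between vertices outside the contracted set.

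For the third part we use properness. Every set $\mu(y)$ contracted before $[v]$ either satisfies $\mu(y)\subseteq\mu(v)$, or is disjoint from $[v]$ and does not contain $v$. In the latter case $\mu(y)\cap\mu(v)$ is a $(y,s)$-separating set. By \Cref{lemma:submodularity3} applied to $\mu(y)$ and $\mu(v)$ (both separating $\{y\}$ from $s$ when $y\in\mu(v)$), it is a $k$-out set, and minimality of $\mu(y)$ forces $\mu(y)\subseteq\mu(v)$ whenever $y\in\mu(v)$. Hence contracted sets are either nested inside $\mu(v)$ or disjoint from it, and $\mu(v)$ maps to a $k$-out set of $G'$.

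Conversely, any $k$-out set of $G'$ containing $v$ and not $s$ uncontracts to a $k$-out set of $G$, so it contains $\mu(v)$. Hence $\mu'(v)$ is the image of $\mu(v)$. In particular, the ordinary vertices in $\mu'(v)$ are exactly those of $\mu(v)$ not yet contracted. By properness, these are precisely the vertices of $[v]$: any ordinary $u\in\mu(v)\setminus[v]$ has $\mu(u)\subsetneq\mu(v)$, so $[u]$ was processed earlier and $u$ was absorbed into an auxiliary vertex. This gives Property~\ref{definition:auxv:property1-k}.

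\textbf{Connectivity (Property~\ref{definition:auxv:property3-k}).} $G_{[v]}$ is $G'$ with $\bar S=V(G')\setminus S$ contracted, $S=\mu'(v)$. The set $\bar S$ is a $k$-in set, so \Cref{lemma:contraction} applied in $G'^R$ preserves $\ell$-edge-connectivity, $\ell\ge k$, for vertices of $S$. It remains to check the Reduction Rule. Any family of edge-disjoint paths between vertices of $S$ uses at most $k$ entries into $v_{\bar S}$, since $v_{\bar S}$ has in-degree $k$. Capping each multiplicity $(v_{\bar S},u)$ at $k$ therefore loses no path system: reroute within the at most $k$ copies per head. $k$-edge-connectivity of $G_{[v]}$ follows similarly.

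\textbf{Auxiliary vertices (Property~\ref{definition:auxv:property2-k}).} This is the step needing the most care. Each auxiliary vertex of $G_{[v]}$ is either $v_{\bar S}$ or a $k$-out vertex $v_T$ from an earlier contraction, lying inside $S$. For $v_T$, the singleton $\{v_T\}$ is a $k$-out set, and $v_T\ne$ every other vertex $y$. If $y\neq v_T$, then $\{v_T\}$ separates $v_T$ from $y$ with a cut of size $k$, so $v_T$ is not $(k+1)$-edge-connected to $y$. For $v_{\bar S}$, its in-degree is $k$, so $V(G_{[v]})\setminus\{v_{\bar S}\}$ is a $k$-out set separating every other vertex from $v_{\bar S}$.

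The main obstacle is thus the nesting claim ensuring $\mu'(v)$ is the image of $\mu(v)$ and contains exactly $[v]$ as ordinary vertices; it rests on \Cref{lemma:submodularity3} and properness as sketched above.
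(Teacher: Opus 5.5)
Your overall strategy matches the paper's. Lifting a $k$-out set of $G'$ back to $G$ shows that $\mu'(v)$ contains every vertex of $[v]$. One additionally needs a $k$-out set of $G'$ that avoids $s$ and whose ordinary vertices are exactly $[v]$ (the paper's $S'$). The gap is in how you obtain this set.

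Your case analysis only shows that if an earlier representative $y$ lies in $\mu(v)$, then $\mu(y)\subseteq\mu(v)$. It says nothing about an earlier contracted set that meets $\mu(v)$ while $y\notin\mu(v)$. In fact your conclusion ``contracted sets are either nested inside $\mu(v)$ or disjoint from it'' is false. Take vertices $s,y,v,w$ with $k$ parallel copies of each of $(s,y),(s,v),(w,s)$ and $2k$ copies of each of $(y,w),(v,w)$. Then:
\begin{itemize}
\item the graph is $k$-edge-connected, with $\mu(w)=\{w\}$, $\mu(y)=\{y,w\}$, $\mu(v)=\{v,w\}$, and $[w],[y],[v]$ is a proper order;
\item after $[y]$ is processed, the auxiliary vertex $z$ representing $\{y,w\}$ crosses $\mu(v)$;
\item in $G'$ one gets $\mu'(v)=\{v,z\}$, whose preimage $\{v,y,w\}$ strictly contains $\mu(v)$.
\end{itemize}
So $\mu(v)$ is not a union of vertices of $G'$, and your justification that ``$\mu(v)$ maps to a $k$-out set of $G'$'' does not go through. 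Similarly, the set contracted at step $[y]$ is the preimage of $\mu'(y)$, which in general can be larger than $\mu(y)$.

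The missing idea is to absorb crossing contracted sets using submodularity. Let $\pi$ be the quotient map onto $V(G')$, and let $T_1,\dots,T_r$ be the preimages of the auxiliary vertices of $G'$ that meet $\mu(v)$.
\begin{itemize}
\item Each $T_i$ is a $k$-out set of $G$ avoiding $s$.
\item By induction, all vertices of $T_i$ lie in classes processed before $[v]$, so $T_i\cap[v]=\emptyset$.
\item The set $T_i\cap(\mu(v)\cup T_1\cup\dots\cup T_{i-1})$ is nonempty and avoids $s$, so its out-degree is at least $k$. Submodularity then gives $\mathit{out}(\mu(v)\cup T_1\cup\dots\cup T_i)=k$ for every $i$.
\end{itemize}
Hence $\pi^{-1}(\pi(\mu(v)))$ is a $k$-out set of $G$. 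Therefore $\pi(\mu(v))$ is a $k$-out set of $G'$ whose ordinary vertices are the unprocessed vertices of $\mu(v)$. By your properness argument these are exactly $[v]$.

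Combined with your lifting argument, this gives $\mu'(v)=\pi(\mu(v))$. So your invariant is correct when ``image'' means $\pi(\mu(v))$, but not in the stronger sense $\pi^{-1}(\mu'(v))=\mu(v)$ that your nesting claim suggests.

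The rest of your proposal is fine: Properties~2 and~3, $k$-edge-connectivity, and the Reduction Rule. It spells out what the paper attributes to \Cref{lemma:contraction}.
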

\begin{proof}
We only need to prove Property~\ref{definition:auxv:property1-k} of \Cref{definition:auxv-k}, as the remaining properties follow immediately from the correctness of the contraction operation (\Cref{lemma:contraction}).

To prove Property~\ref{definition:auxv:property1-k}, consider the local search that identified \mpout{v} in $G'$.
All vertices of the induced subgraph of the minimum $k$-out set of any ordinary vertex $v$ are reachable from $v$ (otherwise we could remove the non-reachable vertices and get a smaller set).
Then, when we process $[v]$, for all vertices in $[v]$, the evolving graph contains a $k$-out set $S'$ whose ordinary vertices are exactly the remaining ordinary vertices of $\mu(v)$ in $G'$.
By induction, if there exists any $u$ with $\mu(u) \subset \mu(v)$, the ordinary vertices of $\mu(u)$ have already been processed and thus are no longer ordinary vertices in the evolving graph (this is why we need the proper ordering on the sets $[v]$).
Furthermore, the ordinary vertices $u$ of $\mu(v)$ have $\mu(u) \subseteq \mu(v)$ (otherwise by \Cref{lemma:submodularity3} the set $\mu(u) \cap \mu(v)$ is $k$-out and separates $u$ from $s$, contradicting the minimality of $\mu(u)$).
We conclude that the ordinary vertices of $S'$ are exactly the vertices of $[v]$.

As our local search avoids $s$ by construction, and $s\not \in [v]$, it follows that $s$ is not in the aforementioned set.
Therefore, the existence of $S'$ shows that the ordinary vertices of the minimum $k$-out set we find contains a subset of the remaining ordinary vertices of $[v]$ and no other ordinary vertices.

We now argue that the minimum $k$-out set $\mu'(v)$ we find contains all vertices of $[v]$.
If this was not true, then it would contain some $v_1 \in [v]$ but not some $v_2 \in [v]$, and also not $s$ (by construction of the local search algorithm).
But the contraction operation on the evolving graph guarantees that such a set would exist in the original graph (as we prove in the following claim), which contradicts the definition of $[v]$. Therefore, it suffices to prove:

\begin{claim}\label{clm:persisting2OutIn}
Let $u,v$ be two ordinary vertices in the evolving graph $G'$, and suppose there exists a $k$-out (resp. $k$-in) set $S$ containing $u$ but not $v,s$.
Then, there exists a $k$-out (resp. $k$-in) set in the original graph, containing $u$ but not $v,s$.
\end{claim}
\begin{proof}
Let $G''$ be the evolving graph at some point during the first phase when we found a $k$-out set $S''$ and contracted it into $v_{S''}$, so that $G' = G''_{\langle S'' \rangle}$.
It suffices to prove that in $G''$ there exists a $k$-out or $k$-in set $S$ containing $u$ but not $v,s$, as we can repeat the argument to reach the original graph.
Note that $S''$ does not contain $u,v,s$, as these vertices are ordinary in $G'$.

The contraction operation in the evolving graph does not remove any edge with at most one endpoint in $S''$. 
Therefore, if $S$ does not contain $v_{S''}$, then $S$ exists as is in $G''$, with the same number of incoming and outgoing edges.

On the other hand, if $S$ contains $v_{S''}$, then let $Y = (S \setminus \set{v_{S''}}) \cup S''$. 
If $S$ is a \outcomp{k}, then $Y$ is also a \outcomp{k} that contains $u$ (because $u\in S$) but not $v,s$. Similarly, if $S$ is a \incomp{k}, then $Y$ is also a \incomp{k} that contains $u$ but not $v,s$. In either case the result follows.
\end{proof}
This concludes the proof for all sets except for $[s]$.
However, as the sets $[v]$ form a partition, and after the first phase is done we put all remaining ordinary vertices of the evolving graph in $G_{[s]}$, the result follows.
\end{proof}

Now we focus on the second phase.

\begin{lemma}[Refined Cuts]
\label{lemma:refined-cuts-k}
Let $S$ be a \outcomp{k} of a $k$-edge-connected digraph $G$ that contains two distinct vertices $x$ and $y$. If there is no $(\le k)$-out set $S' \subset S$ that separates $x$ and $y$, then any \outcomp{k} that separates $x$ and $y$ must contain all vertices in $V \setminus S$.
\end{lemma}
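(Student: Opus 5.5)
The plan is a single submodularity argument, in the same style as \Cref{lemma:submodularity3}. Fix an arbitrary \outcomp{k} $T$ that separates $x$ and $y$. The statement is symmetric in $x$ and $y$, so we may assume $x\in T$ and $y\notin T$; the other orientation is handled identically with the roles of $x$ and $y$ swapped. The goal is to show $S\cup T=V$, which is exactly the claim $V\setminus S\subseteq T$.

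The key step is to consider the two sets $S\cap T$ and $S\cup T$. By submodularity of the cut function,
$$
\mathit{out}(S\cap T)+\mathit{out}(S\cup T)\leq \mathit{out}(S)+\mathit{out}(T)=2k .
$$
The set $S\cap T$ contains $x$ and does not contain $y$. Since $y\in S$, it is a proper subset of $S$, and it separates $x$ and $y$. By the hypothesis of the lemma, $S\cap T$ is not a $(\le k)$-out set, so $\mathit{out}(S\cap T)\geq k+1$. Plugging this in gives $\mathit{out}(S\cup T)\leq k-1$.

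To finish, recall that $G$ is $k$-edge-connected. Every nonempty proper subset of $V$ therefore has at least $k$ outgoing edges. The set $S\cup T$ is nonempty, since it contains $x$. Hence $\mathit{out}(S\cup T)\le k-1$ forces $S\cup T=V$, so $V\setminus S\subseteq T$, as required.

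I expect no real obstacle. The only points needing care are the following:
\begin{itemize}
\item One must read ``separates $x$ and $y$'' in the hypothesis as covering the relevant orientation (or both orientations), so that the symmetric case is justified.
\item One must note explicitly that $S\cap T\subsetneq S$ because it omits $y$.
\item One must note that $S\cup T\neq\emptyset$, so that $k$-edge-connectivity applies to it.
\end{itemize}
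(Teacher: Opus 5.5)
Your proposal is correct and is essentially the paper's own proof. The paper also takes a $k$-out set $X$ with $x\in X$ and $y\notin X$, uses the hypothesis to get $\mathit{out}(S\cap X)>k$, and concludes from submodularity and $k$-edge-connectivity that $\mathit{out}(S\cup X)<k$ forces $S\cup X=V$; your remark about the orientation of ``separates'' is a point of care the paper leaves implicit, since it treats only the case $x\in X$, $y\notin X$.
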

\begin{proof}
Let $X$ be a \outcomp{k} that contains $x$ but not $y$. 
Then, $out(S \cap X) > k$, since otherwise $S \cap X$ would be a $(\le k)$-out set contained in $S$ that separates $x$ and $y$.
By submodularity $\mathit{out}(S)+\mathit{out}(X) \ge \mathit{out}(S \cup X) + \mathit{out}(S \cap X)$, and since $\mathit{out}(S) = \mathit{out}(X) = k$, we have $\mathit{out}(S \cup X) < k$. Thus, since $G$ is $k$-edge-connected, $S \cup X$ must contain all vertices of $G$, and so $V \setminus S \subseteq X$.
\end{proof}

\begin{corollary}
\label{cor:refined-cuts-k}
Let $x$ and $y$ two vertices in $[v]$ that are not $(k+1)$-edge-connected in $G$. Then, there is a \outcomp{k} $X$ in $G_s^R$ that separates $x$ and $y$ that does not contain any vertex in $V \setminus \moutM{v}$. In particular, $X$ does not contain $s$.
\end{corollary}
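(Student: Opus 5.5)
We argue directly from \Cref{lemma:refined-cuts-k}, applied with $S=\moutM{v}$, and then pass to complements to move from $G$ to the reverse graph. (We read $G_s^R$ as the reverse graph $G^R$ with the same start vertex $s$. A \outcomp{k} of $G^R$ is exactly a \incomp{k} of $G$.)

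First, we obtain a cut in $G$. Since $G$ is $k$-edge-connected but $x$ and $y$ are not $(k+1)$-edge-connected, some set of $k$ edges separates one from the other. Taking the set of vertices reachable from the source side after deleting these edges gives a \outcomp{k} $Y$ of $G$ that contains exactly one of $x,y$. Such a set has at most $k$ outgoing edges, hence exactly $k$ by $k$-edge-connectivity. By renaming, we may assume $x\in Y$ and $y\notin Y$.

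Second, we verify the hypothesis of \Cref{lemma:refined-cuts-k} for $S=\moutM{v}$. Since $x,y\in[v]$, we have $\moutM{x}=\moutM{y}=\moutM{v}$. In particular $x,y\in\moutM{v}$ and $s\notin\moutM{v}$.
\begin{itemize}
\item Suppose there were a $(\le k)$-out set $S'\subset\moutM{v}$ separating $x$ and $y$ in either order, say $a\in S'$ and $b\notin S'$ with $\{a,b\}=\{x,y\}$.
\item By $k$-edge-connectivity, $S'$ is a \outcomp{k}. It contains $a$ and avoids $s$, because $S'\subset\moutM{v}$.
\item By minimality (\Cref{lemma:submodularity3}), $\moutM{a}\subseteq S'$, so $b\notin\moutM{a}$.
\item But $\moutM{a}=\moutM{v}\ni b$, a contradiction.
\end{itemize}
Hence no such $S'$ exists. \Cref{lemma:refined-cuts-k} then says that every \outcomp{k} separating $x$ and $y$ contains $V\setminus\moutM{v}$. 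In particular $V\setminus\moutM{v}\subseteq Y$, and so $s\in Y$.

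Third, we take complements. Let $X=V\setminus Y$. Its entering edges in $G$ are exactly the outgoing edges of $Y$, so $X$ is a \incomp{k} of $G$, i.e. a \outcomp{k} of $G^R$. It contains $y$ but not $x$, so it separates $y$ from $x$ in $G^R$. Moreover $X\cap(V\setminus\moutM{v})=\emptyset$, and in particular $s\notin X$. This is the required set, with the roles of $x$ and $y$ as named after the swap in the first step.

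The main obstacle is the second step, namely ruling out a smaller separating cut inside $\moutM{v}$ in both directions. This is where the hypothesis $x,y\in[v]$ is used, via the minimality of $\mu$-sets. It is also the step where one must check that $S'$ avoids $s$, which holds because $S'\subset\moutM{v}$.
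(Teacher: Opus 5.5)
Your proof is correct and follows the paper's route. You check that no $(\le k)$-out subset of $\mu(v)$ separates $x$ and $y$ (the paper asserts this in one line from the definition of $[v]$, and you spell it out via minimality of $\mu(a)$), apply \Cref{lemma:refined-cuts-k} to a $k$-out set of $G$ separating the two vertices, and complement it to get a $k$-out set of $G^R$ inside $\mu(v)$. The only difference is where that initial cut comes from: the paper takes $V\setminus\mu_R(y)$ via \Cref{lemma:distinctM}, so its witness is specifically $\mu_R(y)$, whereas you take an arbitrary cut obtained directly from the failure of $(k+1)$-edge-connectivity, which is more self-contained and suffices for the existence claim.
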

\begin{proof}
We apply \Cref{lemma:refined-cuts-k} for $S=\moutM{v}$. By the definition of $[v]$, and the fact that $x,y \in [v]$, we have that there is no \outcomp{k} $S' \subset \moutM{v}$ that separates $x$ and $y$. Since $x$ and $y$ are not $(k+1)$-edge-connected, we have $x \not\in \moutRM{y}$ or $y \not\in \moutRM{x}$ (\Cref{lemma:distinctM}).
Assume, without loss of generality, that $x \not\in \moutRM{y}$. Then, $X = V \setminus \moutRM{y}$ is a \incomp{k} in $G^R$ that contains $x$ but not $y$.
Hence, $X$ is a \outcomp{k} in $G$ that separates $x$ and $y$, so by \Cref{lemma:refined-cuts-k}, it contains all vertices of $V \setminus \moutM{v}$. Then, \moutR{y} is a \outcomp{k} in $G_s^R$ that separates $x$ and $y$ that does not contain any vertex in $V \setminus \moutM{v}$.
\end{proof}

We are now ready to show that the ordinary vertices in the graphs constructed during the second phase of the algorithm are \conn{(k+1)}.

\begin{lemma}
Let $u,v$ be ordinary vertices in a graph $H_i$ constructed during the second phase of the algorithm.
Then $u,v$ are \conn{(k+1)}.
\end{lemma}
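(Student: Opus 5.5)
We will argue by contradiction: assume $u,v$ are ordinary in some $H_i$ but not $(k+1)$-edge-connected in $G$, and show that the second phase must have placed them in different graphs. First we pin down where $u,v$ came from. Every second-phase graph $H_i$ arises from some first-phase graph $G_{[w]}$ by running the first-phase procedure on $(G_{[w]})^R$. By \Cref{lem:correctnessPhaseOne-k}, the ordinary vertices of $G_{[w]}$ are exactly $[w]$. Hence $u,v\in[w]$, i.e., $\moutM{u}=\moutM{v}=\moutM{w}$. By \Cref{definition:auxv-k}(3), $u$ and $v$ are also not $(k+1)$-edge-connected in $G_{[w]}$.

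Next we produce a $k$-cut inside the second-phase graph that avoids the new start vertex. Apply \Cref{cor:refined-cuts-k} with $x=u$, $y=v$ (in the case $[w]\ne[s]$). It gives a $k$-out set $X$ in the reverse graph that separates $u$ and $v$ and contains no vertex of $V\setminus\moutM{w}$, and in particular does not contain $s$. Now transport $X$ into $G_{[w]}$. Since $G_{[w]}$ is obtained from $G$ by contracting sets of auxiliary vertices inside $\moutM{w}$ and contracting $V(G')\setminus\moutM{w}$ into the start vertex $v_S$, the set $X$ avoids all of $V\setminus \moutM{w}$. A contracted block lying inside $\moutM{w}$ is either entirely in $X$ or can be absorbed; this is exactly the argument of \Cref{clm:persisting2OutIn}, run in the opposite direction using the out/in symmetry. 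So $X$ projects to a $k$-out set of $(G_{[w]})^R$ containing one of $u,v$ but neither the other nor $v_S$.

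If I cannot establish directly that the contracted blocks are not split by $X$, I would instead argue via connectivity values. By \Cref{lemma:contraction} together with \Cref{lemma:splitting}, $\lambda$ between non-contracted vertices is preserved. Applying \Cref{lemma:refined-cuts-k} inside $G_{[w]}$ then gives a $k$-cut separating $u$ and $v$ that avoids $v_S$. For $[w]=[s]$ the start vertex is $s$ itself and the argument is analogous, using \Cref{lemma:distinctM} directly.

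Finally, with such a set available in $(G_{[w]})^R$ relative to start vertex $v_S$, we have $\mu_R(u)\ne\mu_R(v)$ there. The reasoning is that the set contains one of them but not the other, as in the proof of \Cref{lemma:distinctM}. Hence $[u]_R\ne[v]_R$ in $(G_{[w]})^R$. Applying \Cref{lem:correctnessPhaseOne-k} to the second phase, the ordinary vertices of each resulting graph form exactly one class $[\cdot]_R$, so $u$ and $v$ land in different $H_i$. This is the desired contradiction.

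The main obstacle is the transport step: showing that the $k$-cut from \Cref{cor:refined-cuts-k}, which lives in the original graph, survives as a $k$-cut avoiding $v_S$ after the first-phase contractions. This requires care that contracted $k$-out auxiliary blocks inside $\moutM{w}$ do not straddle the cut in a way that increases its size.
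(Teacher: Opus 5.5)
Your outline is the paper's proof step for step. You get $u,v\in[w]$ from \Cref{lem:correctnessPhaseOne-k}, take a reverse $k$-cut avoiding $s$ from \Cref{cor:refined-cuts-k}, transfer that cut into $G_{[w]}$, and apply \Cref{lem:correctnessPhaseOne-k} once more in the second phase.

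The paper settles the transfer with a bare ``Hence''. The justification you offer for it does not work as written, for three reasons.
\begin{itemize}
\item \Cref{clm:persisting2OutIn} only lifts a set from the contracted graph back to $G$ by uncontraction. It gives nothing in the direction you need.
\item The contracted blocks need not lie inside $\mu(w)$. Any block meeting $\mu(w)$ is swallowed whole by $\mu'(w)$, and blocks can cross $\mu(w)$: take the block of an earlier class $[u']$ whose $\mu(u')$ crosses $\mu(w)$. So $X$ may genuinely split a block $B$.
\item Absorbing $B$ into $X$ is uncontrolled, since $B$ is a $k$-out set of $G$ whose in-degree can be large.
\end{itemize}
What does work is discarding. $X$ and $V\setminus B$ are both $k$-in sets of $G$, so $\mathit{in}(X\setminus B)+\mathit{in}(X\cup(V\setminus B))\le 2k$. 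The second term is at least $k$ because $B\not\subseteq X$. Since blocks contain no vertex of $[w]$, $X\setminus B$ still contains the same one of $u,v$ and avoids the other and $s$. Iterating over the disjoint maximal blocks gives a set that projects into $G_{[w]}$. It misses the part collapsed into $v_{\bar S}$, because the uncontraction of $\mu'(w)$ contains $\mu(w)\supseteq X$. The Reduction Rule can only lower its in-degree.

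Your fallback is simpler still: once its hypothesis is checked, it is a complete proof with no transfer at all.
\begin{itemize}
\item By \Cref{definition:auxv-k}, $u$ and $v$ are not $(k+1)$-edge-connected in the $k$-edge-connected graph $G_{[w]}$. So some $k$-out set $Z$ of $G_{[w]}$ contains exactly one of them.
\item Suppose $Z$ avoided $v_{\bar S}$ (your $v_S$). Then $Z\subseteq\mu'(w)$ would have the same outgoing edges in the evolving graph $G'$. Edges from $\mu'(w)$ into $\bar S$ are redirected one-to-one to $v_{\bar S}$, and the Reduction Rule only thins edges leaving $v_{\bar S}$.
\item \Cref{clm:persisting2OutIn}, in the direction it actually proves, then gives a $k$-out set of $G$ containing exactly one of $u,v$ and not $s$. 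This contradicts $\mu(u)=\mu(v)$. For $[w]=[s]$, the same argument with $s$ in place of $v_{\bar S}$ contradicts $\mu(u)=\mu(v)=\bot$.
\item Hence $v_{\bar S}\in Z$. Then $V(G_{[w]})\setminus Z$ is a $k$-out set of $(G_{[w]})^R$ that separates $u$ and $v$ and avoids the second-phase start vertex. Your last paragraph then goes through verbatim.
\end{itemize}
This is just \Cref{lemma:refined-cuts-k} with $S=V(G_{[w]})\setminus\{v_{\bar S}\}$, and the Claim supplies exactly its hypothesis. It also makes \Cref{cor:refined-cuts-k} unnecessary.
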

\begin{proof}
If $u,v$ are not \conn{(k+1)} and are not splitted by the first phase of the algorithm, then $[u] = [v]$ by \Cref{lem:correctnessPhaseOne-k}.
\Cref{cor:refined-cuts-k} implies that there exists a $k$-out set $X$ in $G_s^{R}$ that w.l.o.g. contains $u$ and does not contain $v,s$.
Hence, $G_{[v]}$ contains a $k$-out set $S'$ containing $u$ but not $v,s$.
As the second phase is the same as the first phase in the graph $G_{[v]}$ instead of $G$, by \Cref{lem:correctnessPhaseOne-k} we conclude that, in the end, $u$ is in a different graph $H_i$ from $v$.
\ignore{Therefore, the minimum $k$-out set of $u$ in $G_s^{R}$ that does not include $s$ also does not include $v$.
Notice that $u$ reaches everything in this minimum $k$-out set, as otherwise we could remove the unreachable vertices and create an even smaller $k$-out set separating $u$ from $s$.
Therefore we can apply Lemma~6.5 of~\cite{GKPP:3ECC} which implies that $G_{[v]}$ contains a $k$-out set $S'$ containing $u$ but not $v,s$; in turn this means that the minimum $k$-out set of $u$ in $G_{[v]}$ does not contain $v,s$.
As the second phase is the same with the first phase in the graph $G_{[v]}$ instead of $G$, by \Cref{lem:correctnessPhaseOne} we get that in the end $u$ is in a different graph $H_i$ from $v$.}
\end{proof}

We complete the proof of correctness by bounding the probability of error. 

\begin{proposition}\label{lem:correctnessPhaseTwo}
The graphs constructed during the second phase of the algorithm satisfy \Cref{theorem:keccdecomp} with probability at least $1-\delta$, where $\delta$ is a parameter with $0 < \delta < 1$.\lnote{Added this proposition.}
\end{proposition}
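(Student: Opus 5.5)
The plan is to reduce the statement to a union bound over the local searches, which are the only randomized steps. \Cref{lem:correctnessPhaseOne-k} and the lemma that follows it establish every property of \Cref{theorem:keccdecomp} under the assumption that each local search successfully returns the set \mpout{v} it is looking for. So it suffices to bound the probability that some local search fails, and to argue that a failure is always detected.

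First, I would count the local searches over both phases. In the first phase there is one search per set $[v]$ with $s\notin[v]$. In the second phase there is one search per such set of each graph $(G_{[v]})^R$. Since the ordinary vertices of all the $G_{[v]}$ partition $V(G)$, the total number of sought sets is at most $2n$. Each sought set is found by running \Cref{algorithm:randomizedlocalsearch} for $\Delta=2^i$, $i\in\{0,\dots,\lceil\log_2 m\rceil\}$.

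Next, I would use \Cref{proposition:randomized_local_search}. A nonempty output is always exactly \mpout{v}. When $\Delta\ge\mathit{vol}(\mpoutM{v})$, a single run succeeds with probability at least $1/2$. I would therefore repeat the search $r=\lceil\log_2(2n/\delta)\rceil$ times at each value of $\Delta$. Once $\Delta$ reaches the volume of \mpout{v}, all $r$ repetitions fail with probability at most $2^{-r}\le\delta/(2n)$. A union bound over the at most $2n$ sought sets then gives total failure probability at most $\delta$.

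For detection, note that every $[v]\neq[s]$ satisfies $\mpoutM{v}\neq\bot$. Hence an empty result after the largest $\Delta$ is a detectable failure, and the algorithm can report an error. Nonempty results are never wrong, so the only way to fail is to fail visibly.

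For the running time, the geometric doubling charges the search for each $[v]$ at $O(k^2\,\mathit{vol}(\mpoutM{v})\log(n/\delta))$. The step I expect to need the most care is showing that these volumes sum to $\Ot(m)$ over all sets. I would argue this from the contraction structure: after $S=\mpoutM{v}$ is found, its internal edges are deleted from $G'$ and $v_S$ keeps only $k$ outgoing edges. So each edge of the evolving graph lies inside at most one processed set, plus $O(k)$ boundary edges per set. Applying the same accounting to the second phase on graphs of total size $O(m)$ gives the $\Ot(k^2 m)$ bound claimed in \Cref{theorem:keccdecomp}.
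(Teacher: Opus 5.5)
Your proposal is correct and follows essentially the same argument as the paper: assume every local search returns $\mu'(v)$, repeat each search $\lceil\log_2(2n/\delta)\rceil$ times so that it fails with probability at most $\delta/(2n)$ once $\Delta\ge\mathit{vol}(\mu'(v))$, take a union bound over fewer than $2n$ sought sets, and detect failure using $\mu'(v)\neq\emptyset$ for $[v]\neq[s]$. The one detail the paper adds is a second abort rule: it also reports an error when the first success occurs only at $\Delta\ge 2\,\mathit{vol}(\mu'(v))$. Without that rule, your per-search charge of $O(k^2\,\mathit{vol}(\mu'(v))\log(n/\delta))$, and hence the $\widetilde{O}(k^2m)$ bound, holds only on the same $1-\delta$ event rather than in the worst case.
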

\begin{proof}
From the analysis above, the graphs constructed during the second phase of the algorithm satisfy \Cref{theorem:keccdecomp}, provided that each local search correctly identifies the corresponding \mpout{v} set. 
Consider the next set $S = \mpoutM{v}$ that the algorithm attempts to find in the evolving graph $G'$ during the first phase. 
Let $2^{i-1} < \mathit{vol}(S) \le 2^{i}$. 
By \Cref{proposition:randomized_local_search} and the use of exponentially increasing $\Delta$, the execution of \Cref{algorithm:randomizedlocalsearch} with $\Delta = 2^{i}$ fails to identify \mpout{v} (i.e., incorrectly returns $\emptyset$) with probability at most $1/2$.

Note that in this setting, it is straightforward to verify whether a failure has occurred. 
Indeed, for any $[v] \neq [s]$, we have $\mpoutM{v} \neq \emptyset$. 
Therefore, if the local search for $S = \mpoutM{v}$ either returns $S = \emptyset$ 
or succeeds only when the volume parameter satisfies $\Delta \ge 2\,\mathit{vol}(S)$, 
we can immediately terminate the execution and report a failure. 
(This property will also be used to bound the overall running time of the construction.)

To amplify the success probability, we repeat \Cref{algorithm:randomizedlocalsearch} 
$N = \lceil \log_2(2n/\delta) \rceil$ times. 
Then, the probability that all $N$ repetitions fail is at most 
$(1/2)^N \le \frac{\delta}{2n}$.
Since the algorithm identifies fewer than $2n$ distinct \mpout{v} sets in total 
(less than $n$ in each of the two phases), 
a union bound implies that the probability that any local search fails is at most~$\delta$. 
Hence, with probability at least $1 - \delta$, all searches succeed, completing the proof.
\end{proof}

\ignore{
\begin{corollary} \label{lem:correctnessPhaseTwo}
The graphs constructed during the second phase of the algorithm satisfy \Cref{theorem:keccdecomp}.
\end{corollary}
}

\subsubsection{Running time}

Now we bound the running time of our construction and the total size of the constructed auxiliary graphs. We only analyze the construction of the first phase, since the second phase executes the same construction on the reverse graph of each first-phase auxiliary graph.

\begin{lemma}
\label{lemma:first-phase-evolving-k}
The total time to identify all $\mpoutM{v}$ sets and maintain the evolving graph $G'$ during the first phase of the algorithm is $O(k^2m \log_2{(n/\delta)})$, plus the time to perform $O(m)$ operations on the DSU data structure. 
\end{lemma}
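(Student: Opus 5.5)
The plan is to split the cost into three parts: the local searches, the contractions of the evolving graph $G'$, and the Gabow poset preprocessing. The preprocessing costs $O(km\log n)$ by \Cref{proposition:proper}, which is within the claimed bound. The local searches need a charging argument, and the contractions are handled by \Cref{lemma:contract-time-2}.

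\textbf{Local searches.} Fix the set $S=\mpoutM{v}$ found when processing $[v]$. We run \Cref{algorithm:randomizedlocalsearch} with $\Delta=2^0,2^1,\dots$, and for each $\Delta$ we repeat it $N=\lceil\log_2(2n/\delta)\rceil$ times. By \Cref{proposition:randomized_local_search} one call costs $O(k^2\Delta)$. As in the proof of \Cref{lem:correctnessPhaseTwo}, we stop at the first $\Delta=2^i$ with $2^{i-1}<\mathit{vol}(S)\le 2^i$. If that level also fails, we abort and report failure, so the sum never goes past this level. The total cost for $S$ is therefore a geometric sum,
$\sum_{j\le i} O(k^2 2^j N)=O(k^2\,\mathit{vol}_{G'}(S)\log(n/\delta))$.

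\textbf{Charging the volumes.} Next I must show that $\sum_{[v]}\mathit{vol}_{G'}(\mpoutM{v})=O(m)$, where each volume is measured in the evolving graph at the moment the set is found. Contraction never creates edges: each contraction of $S$ deletes the edges lying inside $S$ and keeps only the $k$ outgoing edges of $S$, now leaving $v_S$. So $\mathit{vol}_{G'}(S)$ equals the number of edges deleted at this step plus $k$. Every original edge is deleted at most once, and the number of contractions is at most the number of ordinary vertices, which is at most $n$. Hence the sum is at most $m+kn=O(m)$, using $m\ge kn$ since $G$ is $k$-edge-connected.

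\textbf{Updating $G'$.} The update work per set is $O(\mathit{vol}(S))$ plus DSU operations, by \Cref{lemma:contract-time-2}. Summed, this gives $O(m)$ time and $O(m)$ DSU operations, because each traversal of an edge calls $\mathit{find}$ and each unite is charged to a vertex.

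\textbf{Main obstacle.} The obstacle I expect is the wasted searches at the smaller levels $\Delta<\mathit{vol}(S)$. These are bounded by the geometric sum above. But the searches operate on $G'$, so the volume there must be the quantity that gets charged, not the volume in $G$. I would also check that the boundary edges of $v_S$ do not get re-charged too often. They are charged only $k$ per contraction, and that is exactly the $kn$ term in the bound.
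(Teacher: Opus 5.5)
Your proposal is correct and follows essentially the same route as the paper. The doubling-$\Delta$ geometric sum, with $\lceil\log_2(2n/\delta)\rceil$ repetitions per level, bounds the search for each set by $O(k^2\log(n/\delta)\,\mathit{vol}_{G'}(\mu'(v)))$; these volumes sum to at most $m+kn=O(m)$, because contractions only delete edges and each new vertex carries $k$ outgoing edges; and the update cost follows from \Cref{lemma:contract-time-2}.
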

\begin{proof}
Directly by our algorithm and the 
running time of our local search, we have that the time to find an $\mpoutM{v}$ is $O(k^2 \log_2{(n/\delta)}\sum_{i=0}^{\lceil \log_2{vol(\mpoutM{v})}\rceil} 2^i) = O(k^2 \log_2{(n/\delta)}vol(\mpoutM{v}))$.
The contraction operation removes all vertices in a \outcomp{k}, and introduces a new vertex with $k$ outgoing edges; as we perform at most $n$ contractions, the extra edges added this way are at most $kn=O(m)$.

We conclude that the running time to identify all $\mpoutM{v}$ sets is at most 
$k^2 \log_2{(n/\delta)}$
times the sum of out-degrees of all vertices of $G$, plus $O(m)$, which gives the desired 
$O(k^2m \log_2{(n/\delta)})$ bound.
The lemma then directly follows from \Cref{lemma:contract-time-2}.
\end{proof}

\begin{lemma}
\label{lemma:first-phase-auxiliary-k}
The total size of all constructed graphs during the first phase of the algorithm is $O(m)$, and the total number of vertices is $O(n)$.
The total time to construct all auxiliary graphs $G_{[v]} = \contractp{V' \setminus \mpoutM{v}}$ during the first phase of the algorithm is $\Ot(k^2m)$. 
\end{lemma}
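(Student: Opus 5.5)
The plan is a charging argument. Every constructed graph $G_{[v]}$ is charged to the minimal set $S=\mpoutM{v}$ it was built from, and we then show that these sets are disjoint in a suitable sense. Fix the first phase and let $S_1,S_2,\dots$ be the sets $\mpoutM{v}$ found in order in the evolving graph $G'$.

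By construction, $G_{[v]}$ contains the following:
\begin{itemize}
\item the vertices of $S_j$;
\item one $k$-in auxiliary vertex $v_{\bar S_j}$;
\item the outgoing edges of vertices of $S_j$, whose number is $\mathit{vol}_{G'}(S_j)$;
\item at most $k|S_j|$ edges leaving $v_{\bar S_j}$, by the Reduction Rule.
\end{itemize}
Hence $|V(G_{[v]})|=|S_j|+1$ and $|E(G_{[v]})|\le \mathit{vol}_{G'}(S_j)+k|S_j|$. Since every vertex has out-degree at least $k$ in the $k$-edge-connected graph $G'$, we have $k|S_j|\le \mathit{vol}_{G'}(S_j)$, so $|E(G_{[v]})|=O(\mathit{vol}_{G'}(S_j))$. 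The final graph $G_{[s]}$ is the last evolving graph. It will be handled by the same accounting as the current $G'$, so it suffices to bound $\sum_j |S_j|$ and $\sum_j \mathit{vol}_{G'}(S_j)$, each volume taken in the evolving graph at the moment $S_j$ is found.

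The key step is that the vertices of $S_j$ are removed from $G'$ right after $S_j$ is found. The set $S_j$ is contracted into a single new vertex $v_{S_j}$ with out-degree exactly $k$. So every vertex of $G'$ is either an original vertex of $G$ or a contraction vertex. Each vertex belongs to at most one $S_j$ as a ``live'' vertex, after which it disappears. Summing:
\begin{itemize}
\item $\sum_j |S_j|\le n+(\text{number of contraction vertices})\le 2n$, since there are fewer than $n$ sets (each contains a new ordinary vertex class $[v]$);
\item $\sum_j \mathit{vol}_{G'}(S_j)\le \sum_{x\in V(G)}\deg^+_G(x)+k\cdot\#\{\text{contraction vertices}\}\le m+kn=O(m)$.
\end{itemize}
This uses that the out-edges of an original vertex never multiply under contraction: edges inside $S$ are deleted and out-edges of $S$ are reassigned to $v_S$, whose out-degree is $k$. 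Therefore the first phase produces $O(n)$ vertices and $O(m)$ edges in total, including $G_{[s]}$.

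For the running time, \Cref{lemma:gadget-replace-time} gives $O(\mathit{vol}(S_j))$ per graph plus $O(m)$ overall, so the total is $O(m)$ for construction. Added to this is the cost of locating the sets, which \Cref{lemma:first-phase-evolving-k} bounds by $O(k^2m\log(n/\delta))$ plus $O(m)$ DSU operations, i.e.\ $\Ot(k^2m)$ overall.

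The main obstacle is making sure that a vertex's out-edges are not counted twice across graphs. In particular, edges of $G'$ entering $S_j$ from $\bar S_j$ must not be paid for; the Reduction Rule and the $O(\mathit{vol}(S))$ construction of \Cref{lemma:gadget-replace-time} exist precisely so that $\mathit{in}(S_j)$ never enters the bound. I would also check that the out-degree $k$ of contraction vertices, rather than their in-degree, is what enters later volumes.
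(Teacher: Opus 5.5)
Your proposal is correct and follows essentially the same route as the paper. Both arguments bound each $G_{[v]}$ by $O(\mathit{vol}(\mu'(v)))$ via the Reduction Rule. Both sum these volumes using that each contraction removes the vertices of the found set and adds one vertex of out-degree $k$, giving $m+kn=O(m)$. Both count two new auxiliary vertices per found set to get the $O(n)$ vertex bound, and take the $\tilde{O}(k^2m)$ running time from the lemma on identifying the sets and maintaining the evolving graph. The paper only cites that lemma for the volume sum; you spell out the disjointness and out-degree-preservation accounting explicitly.
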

\begin{proof}
By \Cref{lemma:contract-time-2}, when we identify the next minimal \outcomp{k} \mpout{v} in the evolving graph $G'$, we create $G_{[v]} = \contractp{V' \setminus \mpoutM{v}}$ in $O(\mathit{vol}(\mpoutM{v}))$ time (plus $O(m)$ time in total, which is negligible).
So, the total size of all auxiliary graphs $G_{[v]}$ is bounded by the time to construct them, as they are all explicitly constructed.
This is $O(\sum_{[v]}{\mathit{vol}(\mpoutM{v})}) = O(m)$, exactly as argued in \Cref{lemma:first-phase-evolving-k}.
To bound the total number of vertices, we observe that whenever we identify a minimal \outcomp{k} $S$ in the evolving graph $G'$, we introduce two auxiliary vertices: a $k$-out vertex $v_S$ in $G'$ and a $k$-in vertex $v_{\bar{S}}$ in $G_{[v]}$. Since this process can occur at most $n$ times (once per $[v]$ set), we introduce a maximum of $2n$ auxiliary vertices. Therefore, the total number of vertices in all generated graphs remains $O(n)$.\lnote{Added this.}

The running time is dominated by \Cref{lemma:first-phase-evolving-k}.
It is $\Ot(k^2 m)$, using, e.g., the DSU data structure of~\cite{dsu:tarjan} or of~\cite{setunion:tvl}.
\end{proof}

\ignore{
If for the DSU data structure we use a structure that supports each $\mathit{find}$ operation in worst-case $O(1)$ time and
any sequence of $\mathit{unite}$ operations in total time $O(n \log n)$~\cite{setunion:tvl}, then we obtain an algorithm that constructs the \ecc{(k+1)} decomposition in $O(k^2 m \log{n})$ total time.
Alternatively, we implement $\mathit{unite}$ and $\mathit{find}$ using compressed trees with appropriate heuristics \cite{dsu:tarjan}, and the total time for the DSU operations is $O(m \alpha(m,n))$. 
}

%% file: 3ecc-decomposition.tex
\subsection{3-edge-connected component decomposition of a general digraph}
\label{sec:3ecc-decomposition}


\ignore{
\begin{theorem}[3-Edge-Connected Components (\ecc{3}) Decomposition]
\label{theorem:3eccdecomp}
Let $G$ be a digraph. In linear time, we can construct a collection $H_1,\dots,H_t$ of graphs such that:
\begin{itemize}
\item{The vertices $V(H_i)$ of each graph $H_i$ are partitioned into two sets of vertices, ordinary and auxiliary.}
\item{For each $i \in \{1,\ldots,t\}$, the ordinary vertices of $V(H_i)$ are $3$-edge-connected.}
\item{For every vertex of $G$, there is exactly one graph among $H_1,\dots,H_t$ that contains it as an ordinary vertex.}
\item{Every two vertices $u$ and $v$ of $G$ are $4$-edge-connected if and only if there is an $i\in\{1,\dots,t\}$ such that $u$ and $v$ are $4$-edge-connected ordinary vertices of $H_i$.}
\end{itemize}
\end{theorem}
}

In this section we describe the 
decomposition of \Cref{theorem:3eccdecomp}. 
Here we take advantage of the $2$-connectivity-light graph (\lightgraph{2}) decomposition, introduced in \cite{GKPP:3ECC} and also exploited in \cite{Linear3ECC}.


\begin{definition}[\cite{GKPP:3ECC}]
\label{definition:2clg}
A strongly connected digraph $G$ is called a \emph{$2$-connectivity-light graph} (\lightgraph{2}) if $V(G)$ can be partitioned into two sets of vertices, ordinary and auxiliary, such that:
\begin{enumerate}
\item{All ordinary vertices are $2$-edge-connected.}
\label{definition:2clg:property1}
\item{Every auxiliary vertex has out-degree or in-degree $1$.}
\label{definition:2clg:property2}
\item{For every two vertices $u$ and $v$ with $\mathit{out}(u)>1$ and $\mathit{in}(v)>1$, there are two edge-disjoint paths from $u$ to $v$.}
\label{definition:2clg:property3}
\item{If $u$ is an auxiliary vertex with $\mathit{out}(u)=1$ (resp., $\mathit{in}(u)=1$), then, for every ordinary vertex $v$, the unique outgoing edge from $u$ (resp., the unique incoming edge to $u$) is the only bridge that appears in every path from $u$ to $v$ (resp., from $v$ to $u$).}
\label{definition:2clg:property4}
\end{enumerate}
\end{definition}

Our \ecc{3} decomposition utilizes the following result.

\begin{theorem}[\lightgraph{2} Decomposition~\cite{GKPP:3ECC}, Lemma~3.2]
\label{theorem:2clgdecomp}
Let $G$ be a strongly connected digraph. In linear time, we can explicitly construct a collection $H_1,\dots,H_t$ of 2CLG graphs such that:
\begin{itemize}
\item{For every vertex of $G$, there is exactly one graph among $H_1,\dots,H_t$ that contains it as an ordinary vertex.}
\item{Every two vertices $u$ and $v$ of $G$ are $k$-edge-connected, for $k\ge 2$, if and only if there is an $i\in\{1,\dots,t\}$ such that $u$ and $v$ are $k$-edge-connected ordinary vertices of $H_i$.}
\end{itemize}
\end{theorem}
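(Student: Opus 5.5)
The plan is to follow the linear-time 2CLG decomposition of \cite{GKPP:3ECC}, which is built in the classical way from dominator trees and bridges. The construction first handles strong connectivity. Then, for a fixed start vertex $s$, it works in $G$ and in $G^R$ with the bridges, i.e.\ the $1$-cuts.

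\textbf{Step 1: reduce to strong connectivity.} Compute the strongly connected components of $G$ in $O(m+n)$ time. For any $k\ge 1$, two vertices that are $k$-edge-connected lie in the same SCC. Moreover, every path between two vertices of an SCC $C$ stays inside $G[C]$, so $\lambda_G(u,v)=\lambda_{G[C]}(u,v)$ for $u,v\in C$. We may therefore treat each $G[C]$ separately; their total size is at most $m$.

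\textbf{Step 2: split along bridges.} Fix $s$ in a strongly connected $G$. Compute the dominator tree of the flow graph $G_s$ and the bridges of $G_s$, i.e.\ the edges $(p,v)$ such that $p$ is the parent of $v$ in the dominator tree and every $s$--$v$ path uses $(p,v)$. Both take linear time (via the results cited as \cite{GIP20:SICOMP,georgiadis20162edge}).

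Deleting the bridges splits the dominator tree into subtrees. For each subtree rooted at $r$, with $D(r)$ the descendants of $r$ in that subtree, build a graph as follows:
\begin{itemize}
\item contract everything outside $D(r)$ that leads into $D(r)$ through the bridge into $r$ into a single auxiliary vertex with out-degree $1$;
\item contract every subtree hanging below a bridge $(p,c)$ into an auxiliary vertex $c'$ with in-degree $1$, keeping the edges that return from $D(c)$ into $D(r)$ as edges out of $c'$.
\end{itemize}
The ordinary vertices of this graph are exactly $D(r)$. These are precisely the vertices that are not separated from one another by $1$-cuts of this type.

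Repeat the same procedure on the reverse of each resulting graph, with the auxiliary root or $r$ as the new start vertex. This yields graphs whose ordinary vertices are pairwise $2$-edge-connected; this is Property~\ref{definition:2clg:property1}. Property~\ref{definition:2clg:property2} holds by construction, because each auxiliary vertex arises from contracting one side of a $1$-cut.

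\textbf{Step 3: verify the remaining properties.} The key correctness lemma is the analogue of \Cref{lemma:contraction} for $k=1$: contracting one side $S$ of a $1$-out set, or of a $1$-in set, preserves $\lambda(u,v)$ for all $u,v\notin S$, for every value $\ge 2$. The proof is the same Menger-type path-rerouting argument. At most one path can use the contracted vertex, and it can be rerouted through $S$ because $G[S]$ gives reachability from the entry to the boundary point (cf.\ \Cref{lemma:reaching_boundaries}). This lemma gives the second bullet of the theorem.

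Property~\ref{definition:2clg:property4} and Property~\ref{definition:2clg:property3} state that the unique edge at an auxiliary vertex is its only bridge to ordinary vertices, and that vertices of degree $>1$ are joined by $2$ edge-disjoint paths. Both follow from maximality: any further bridge would have been split off in Step 2, in the phase for $G$ or in the phase for $G^R$.

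\textbf{Size and running time.} The size bound holds because each edge of $G$ is copied into at most $O(1)$ graphs, namely the graph of its tail's subtree and, as a surrogate edge, one neighboring graph. Each contraction adds only $O(1)$ edges per bridge.

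\textbf{Main obstacle.} I expect the hardest part to be Property~\ref{definition:2clg:property3} together with achieving linear time. The second phase must operate on the already contracted graphs, and the auxiliary vertices created in the first phase must not introduce spurious bridges or lose Property~\ref{definition:2clg:property4}. To handle this, I would argue that a bridge created in the reversed phase cannot be incident to a degree-$1$ auxiliary vertex in a way that breaks the property; otherwise, the corresponding $1$-cut in $G$ would already have been found in the first phase.
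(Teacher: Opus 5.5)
The paper does not prove this statement; it imports it from \cite{GKPP:3ECC} (Lemma~3.2). Your reconstruction therefore has to stand on its own, and it has a genuine gap in Step~2. One forward round (bridges of $G_s$) followed by one reverse round (bridges of the reversed graph from $r$) does \emph{not} make the ordinary vertices pairwise $2$-edge-connected. So Properties~1 and~3 of \Cref{definition:2clg} fail. A bridge of a flow graph only detects vertices cut off \emph{from the root}. A strong bridge that separates two ordinary vertices from each other, but cuts both off from the root in the same way, survives both rounds.

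Concretely, take vertices $s,t,x_1,x_2,y_1,y_2$ with edges $(s,x_i),(s,y_i),(x_i,t),(y_i,t)$ for $i\in\{1,2\}$, plus $(x_1,x_2),(x_2,x_1),(y_1,y_2),(y_2,y_1)$ and $(t,s)$. Since $\lambda(s,w)\ge 2$ for every $w\neq s$, the flow graph $G_s$ has no bridges, and your first round returns $G$ itself. In $G^R$ with start $s$, the only bridge is $(s,t)$, the reverse of the unique edge entering $s$. Below it, $\{t,x_1,x_2,y_1,y_2\}$ is bridge-free, so the second round outputs a graph in which $x_1$ and $y_1$ are both ordinary. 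Yet $\{x_1,x_2,t\}$ is a $1$-out set of $G$ whose only outgoing edge is $(t,s)$, so $\lambda(x_1,y_1)=1$. This is exactly where your Step~3 and ``main obstacle'' argument (``any further bridge would already have been split off'') breaks. The edge $(t,s)$ enters the root, so it is never a bridge of $G_s$, and in the reverse round it splits off only $s$.

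What is missing is a third splitting step. Take a graph $H$ produced by your two rounds, with $r$ as the start of the reverse round and reverse root $q\neq r$. Let $a$ be the auxiliary vertex that absorbed the reverse-root side (it contains $r$), and let $f=(q,a)$ be its unique incoming edge. For every edge $e$:
\begin{itemize}
\item bridge-freeness of the forward subtree implies that $a$ reaches every ordinary vertex of $H$ in $H\setminus e$;
\item bridge-freeness of the reverse subtree implies that every ordinary vertex reaches $q$ in $H\setminus e$.
\end{itemize}
Hence for $e\neq f$ all ordinary vertices stay strongly connected, and $f$ is the only edge that can still separate them. The $2$-edge-connected classes among the ordinary vertices of $H$ are exactly the SCCs of $H\setminus f$; in the example, $f=(t,s)$.

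You have to compute these SCCs and split $H$ once more. For such an SCC $C$, the vertices outside $C$ reachable from $C$ in $H\setminus f$ form a $1$-out set of $H$. The vertices outside $C$ that reach $C$ form a $1$-in set. In both cases $f$ is the single cut edge, so your contraction lemma applies. The remaining vertices lie on no path that leaves $C$ and returns to it, so they can be dropped.

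After that, Properties~3 and~4 still need an actual proof, in particular for auxiliary vertices. Property~3 quantifies over \emph{all} vertices of out-degree or in-degree larger than one, for example an auxiliary vertex of out-degree $1$ and large in-degree. The linear size bound must also be rechecked for this extra layer. ``Maximality'' delivers neither.
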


\ignore{
We point out that a \lightgraph{2} may have parallel edges between vertices.
For each edge, we also store its multiplicity; however, it may be, due to the way we construct the \lightgraph{2}s, that we have multiple copies of an edge, each with a different multiplicity.
The actual multiplicity of this edge in the \lightgraph{2} is the sum of all the multiplicities of the copies.

To make the presentation easier, we assume that a \lightgraph{2} $G$ has the following property.
\begin{property}
\label{property:additional2CLG}
Let $G$ be a \lightgraph{2}. Then, for any two vertices $u,v \in V(G)$, where $in(u)=1$, $G$ has at most one copy of the edge $(u,v)$.
\end{property}
Once again, we point out that this one copy also stores the multiplicity of the corresponding edge.
We can easily satisfy Property~\ref{property:additional2CLG} of \Cref{definition:2clg} in a linear-time preprocessing phase, e.g., by sorting the edges of $G$ lexicographically. 
}

\ignore{
\begin{definition}
\label{definition:3clg}
A strongly connected digraph $G$ is called \emph{$3$-connectivity-light graph} (\lightgraph{3}) if $V(G)$ can be partitioned into two sets of vertices, ordinary and auxiliary, such that:
\begin{enumerate}
\item{All ordinary vertices are $3$-edge-connected.}
\label{definition:3clg:property1}
\item{Every auxiliary vertex has out-degree or in-degree at most $2$.}
\label{definition:3clg:property2}
\item{For every two vertices $u$ and $v$, there are $\min\{out(u),in(v),3\}$
edge-disjoint paths from $u$ to $v$.}
\label{definition:3clg:property3}
\item{If $u$ is an auxiliary vertex with $\mathit{out}(u)=1$ (resp., $\mathit{in}(u)=1$), then, for every ordinary vertex $v$, the unique outgoing edge from $u$ (resp., the unique incoming edge to $u$) is the only bridge that appears in every path from $u$ to $v$ (resp., from $v$ to $u$).}
\label{definition:3clg:property4}
\item{Let $u$ is an auxiliary vertex with $\mathit{out}(u)=2$ with outgoing edges $(u,x_1)$ and $(u,x_2)$. Then, for every ordinary vertex $v$, the unique latest minimum $u$-$v$ cut is one of the following: (i) the two outgoing edges from $u$ if $out(x_1)>1$ and $out(x_2)>1$, (ii) $(u,x_1)$ and the unique outgoing edge of $x_2$ if $out(x_1)>1$ and $out(x_2)=1$, (iii) the unique outgoing edge of $x_1$ and $(u,x_2)$ if $out(x_1)=1$ and $out(x_2)>1$, or (iv) the unique outgoing edge of $x_1$ and the unique outgoing edge of $x_2$ if $out(x_1)=1$ and $out(x_2)=1$.}\sidenote{Loukas: Do we need these last two properties?}
\label{definition:3clg:property5}
\item{Let $u$ is an auxiliary vertex with $\mathit{in}(u)=2$, with incoming edges $(x_1,u)$ and $(x_2,u)$. Then, for every ordinary vertex $v$, the unique earliest minimum $v$-$u$ cut is one of the following: (i) the two incoming edges to $u$ if $in(x_1)>1$ and $in(x_2)>1$, (ii) $(x_1,u)$ and the unique incoming edge to $x_2$ if $in(x_1)>1$ and $in(x_2)=1$, (iii) the unique incoming edge to $x_1$ and $(x_2,u)$ if $in(x_1)=1$ and $in(x_2)>1$, or (iv) the unique incoming edge to $x_1$ and the unique incoming edge to $x_2$ if $in(x_1)=1$ and $in(x_2)=1$.}
\label{definition:3clg:property6}
\end{enumerate}
\end{definition}

Our goal in this section is to prove the following theorem. 

\begin{theorem}[3CLG Decomposition]
\label{theorem:3clgdecomp}
Let $G$ be a strongly connected digraph. In linear time, we can construct a collection $H_1,\dots,H_t$ of \lightgraph{3}s such that:
\begin{itemize}
\item{For every vertex of $G$, there is exactly one graph among $H_1,\dots,H_t$ that contains it as an ordinary vertex.}
\item{Every two vertices $u$ and $v$ of $G$ are $4$-edge-connected if and only if there is an $i\in\{1,\dots,t\}$ such that $u$ and $v$ are $4$-edge-connected ordinary vertices of $H_i$.}
\end{itemize}
\end{theorem}

Note that \cite{Linear3ECC} gave a linear-time decomposition of a strongly connected graph into a collection of \lightgraph{2}s.
}

Given \Cref{theorem:2clgdecomp}, we may assume that our input graph $G$ is a \lightgraph{2}, which we wish to decompose into a collection of graphs as stated in \Cref{theorem:3eccdecomp}.
The general idea is to exploit the algorithm of \cite{Linear3ECC} that computes a compact representation of the minimal \outcomp{2}s of $G$ in linear time.

\subsubsection{\ecc{3} decomposition of a \lightgraph{2}}

Let $G=(V,E)$ be a \lightgraph{2} with at least two ordinary vertices.
Let $G_s$ be the corresponding flow graph with start vertex $s$, where $s$ is an arbitrarily selected ordinary vertex of $G$.
Recall that a \outcomp{2} (resp., \incomp{2}) $S$ is a set of vertices such that $out(S) = 2$ (resp., $in(S)=2$). We will consider only \outcomp{2}s and \incomp{2}s that contain at least one ordinary vertex.

For any ordinary vertex $v$, we let \mout{v} denote the minimal \outcomp{2} that contains $v$ but not $s$.
It follows from~\Cref{lemma:submodularity3} that if any $2$-out set separating $v$ from $s$ exists, then there also exists a minimum one, contained in every other such set (notice that $v,s$ are \conn{2}, therefore $\lambda(u,s)\ge 2$, by the properties of 2CLGs).
If $v$ is not contained in any such set then we let $\moutM{v} = \bot$.

Let $[v]$ denote the set of ordinary vertices that have the same minimal \outcomp{2} in $G$, i.e., $[v] = \{ u : \moutM{v} = \moutM{u}\}$. 
Similarly, $[v]_R$ denotes the set of ordinary vertices that have the same minimal \outcomp{2} in $G^R$.
Note that the $[v]$-sets form a partition of the ordinary vertices of $G$. In particular, $v \in [s]$ implies $\moutM{v} = \bot$, i.e., there is no cut of size at most $2$ that separates all paths from $v$ to $s$.
In~\cite{Linear3ECC} it is proven (Proposition III.5) that the \conn{3} components are equivalent with the partitions induced by the sets $[v]\cap [v]_R$.
Then, the linear-time algorithm of \cite{Linear3ECC} computes the sets $[v]$ in 
proper order (\Cref{definition:proper}), which means that if $\moutM{u} \subset \moutM{v}$ then $[u]$ is computed before $[v]$.

Let $C_1,\ldots,C_k$ be the $3$-edge-connected components of $G$.
Our plan is to compute, for each component $C_i$, a graph $G_i$, where the vertices in $C_i$ are ordinary, that satisfies \Cref{definition:2clg}. 
Similarly to the \ecc{(k+1)} decomposition,  
we construct the \ecc{3} decomposition in two phases, where the first phase processes the sets $[v]$, corresponding to the minimal \outcomp{2}s of $G$, in proper order, while the second phase processes the sets $[v]_R$, corresponding to the minimal \outcomp{2}s of $G^R$ in proper order.

\subsubsection{First phase}

Let $G$ be the given \lightgraph{2}. We choose an arbitrary ordinary vertex $s$ as a start vertex and compute the sets $[v]$, for all ordinary vertices $v$, corresponding to the minimal \outcomp{2}s $\moutM{v}$, in proper order.

Given a proper order of the sets $[v]$, our goal is to compute, for each set $[v]$, an auxiliary graph $G_{[v]}$, satisfying the following: 
\begin{definition}[First-phase auxiliary graph]
\label{definition:auxv}
An auxiliary graph $G_{[v]}$ contains two types of vertices, ordinary and auxiliary, such that:
\begin{enumerate}
    \item The ordinary vertices are precisely the vertices of $[v]$.
    \label{definition:auxv:property1}
    \item An auxiliary vertex $x$ is not $3$-edge-connected with any other vertex $y$, ordinary or auxiliary, in $G_{[v]}$.
    \label{definition:auxv:property2}
    \item Any two ordinary vertices $u$ and $v$ of $G_{[v]}$ are $k$-edge-connected in $G_{[v]}$, for $k \ge 3$, if and only if they are $k$-edge-connected in $G$.
    \label{definition:auxv:property3}
\end{enumerate}
\end{definition}

In the first phase of our algorithm, we construct auxiliary graphs $G_{[v]}$ for all sets $[v]$, of total size $\sum_{[v]} \left ( |V(G_{[v]})| + |E(G_{[v]})| \right ) = O(|V(G)| + |E(G)|)$.
\ignore{
\begin{enumerate}
    \item $V(G_{[v]})$ contains two types of vertices, ordinary and auxiliary. The ordinary vertices are precisely the vertices of $[v]$. Moreover, an auxiliary vertex $x$ is not $3$-edge-connected with any other vertex $y$, ordinary or auxiliary, in $G[v]$.
    \item Any two ordinary vertices $u$ and $v$ of $G_{[v]}$ are $k$-edge-connected in $G_{[v]}$, for $k \ge 3$, if and only if they are $k$-edge-connected in $G$.
    \item The total size of all auxiliary graphs $G_{[v]}$ is linear, i.e., $\sum_{[v]} \left ( |V(G_{[v]})| + |E(G_{[v]})| \right ) = O(|V(G)| + |E(G)|)$.
\end{enumerate}
}
We construct the graphs $G_{[v]}$ by applying the gadget substitution of \cite{GKPP:3ECC}.
This ensures that the connectivity of ordinary vertices is preserved (Lemma~3.6 of \cite{GKPP:3ECC}) and the resulting graph is a 2CLG (Lemma~3.5 of \cite{GKPP:3ECC}).
For completeness, we describe the construction of the gadgets and compute the running time.

%
We maintain an evolving graph $G'=(V',E')$, where we apply the gadget substitution for each \outcomp{2} that we identify. Initially, $G'=G$. 
Since we process the sets $[v]$ in proper order, 
$\moutM{u} \subset \moutM{v}$ 
implies that $[u]$ is computed before $[v]$.

When we process the next set $[v]$, we wish to identify the minimal \outcomp{2} \mpout{v} in the current graph $G'$ that corresponds to \mout{v} and contains all ordinary vertices of $[v]$.
We do that, by executing the local search \Cref{algorithm:localsearch}, as in \Cref{sec:kecc-decomposition}.
\ignore{
To process the next set $[v]$, we execute the local search Algorithm~\ref{algorithm:localsearch} (presented in Section~\ref{sec:proposition:local_search}), for exponentially increasing $\Delta = 2^i$, $i \in \{0,\ldots,\lceil \log_2{m} \rceil\}$.\footnote{Even though it is not necessary, one could make the following modification in lines 2 and 9 of Algorithm~\ref{algorithm:DFS}: one can return the DFS-tree path as soon as we reach any ordinary vertex $w \not \in [v]$. This is because any such $w$ has not been contracted into a 
$2$-out auxiliary vertex.}
}
When we initiate a local search from an ordinary vertex $v$ in the evolving graph $G'$, we find a minimal \outcomp{2} $S=\mpoutM{v}$. At this point, we need to (i) create the auxiliary graph $G_{[v]}$, by performing a gadget substitution for $V(G') \setminus S$, and (ii) update the evolving graph $G'$, by performing a gadget substitution for $S$.
To enable the fast execution of tasks (i) and (ii), we use the same data structures for maintaining the lists of incoming and outgoing edges as in \Cref{sec:kecc-decomposition}.
\ignore{
we maintain, for each vertex in $V(G')$, a list of incoming edges and a list of outgoing edges, organized in circular doubly-linked lists. This allows us to delete an edge from a list and merge two lists in constant time. Also, we assume that for each instance of an edge $e=(u,v)$, we maintain a pointer to the location of $e$ in the adjacency lists of $u$ and $v$.
}

When we apply a gadget substitution for a \outcomp{2} (resp., \incomp{2}) $S$, we introduce a new auxiliary vertex $s'$ with $\mathit{out}(s')=2$ (resp., $\mathit{in}(s')=2$), that represents the vertices (ordinary and auxiliary) $v \in S$ with $\mathit{out}(v) \ge 2$ (resp., $\mathit{in}(v) \ge 2$). We refer to such a vertex $s'$ as a \emph{$2$-out auxiliary vertex} (resp., \emph{$2$-in auxiliary vertex}).

\paragraph{Construction of $G_{[v]}$.}

Let $S=\mpoutM{v}$ be the next minimal \outcomp{2} found in $G'$, and let $\bar{S}=V(G') \setminus S$. Also, let $C=\{e_1=(x_1,y_1), e_2=(x_2,y_2)\}$ be the edges of the \cut{2} that defines $S$. (Note that $x_1,x_2 \in S$ and $y_1,y_2 \in \bar{S}$).
We apply the gadget replacement operation for $\bar{S}$, which replaces $G'[\bar{S}]$ (the subgraph of $G'$ induced by $\bar{S}$) with a $2$-connectivity light gadget \ingadget{\bar{S}}, as follows: 
\begin{enumerate}
    \item \ingadget{\bar{S}} contains a \emph{central vertex} $s'$, which is a $2$-in auxiliary vertex that represents the ordinary and auxiliary vertices $v \in \bar{S}$ with $\mathit{in}(v) \ge 2$. 
    \item The gadget contains $2$ entering edges $C'=\{e'_1=(x_1,y'_1), e'_2=(x_2,y'_2)\}$ corresponding to the edges of the \cut{2} $C$. We refer to the set $\{y'_1,y'_2\}$ as the set of \emph{boundary points of \ingadget{\bar{S}}}. 
    \item If $y_i$, $i \in \{1,2\}$ is an auxiliary vertex with $\mathit{in}(y_i)=1$, then let $y'_i$ be a copy of $y_i$ and insert the edge $(y'_i,s')$ into \ingadget{\bar{S}}. Otherwise, we let $y'_i=s'$. (Note that we can have $y'_1 = y'_2$.)
    \item For each auxiliary vertex $z \in \bar{S} \setminus \{y_1,y_2\}$ with $\mathit{in}(z)=1$, such that $G'$ has an edge $(z,w)$ for some $w \in S$, \ingadget{\bar{S}} contains a corresponding auxiliary vertex $z'$ with $\mathit{in}(z')=1$, and the edge $(s',z')$. 
    \item Let \gadgetsube{\bar{S}} be the graph that results from $G'$ after the following modifications. We replace $G'[\bar{S}]$ and its entering edges $C$ with \ingadget{\bar{S}} and $C'$ respectively. Also, we replace each edge $(z,w)$ directed from $\bar{S}$ to $S$ in $G'$, with a corresponding edge $(z',w)$ in \gadgetsube{\bar{S}}.
\end{enumerate}


We also apply the following reduction rule, in order to construct \gadgetsube{\bar{S}} in $O(\mathit{vol}(S))$ time and space.

\vspace{0.5cm}
\noindent\emph{Reduction Rule}: For each vertex $u \in S$ do the following.
\begin{itemize}
\item If $u$ has at least two entering edges $(z_1,u),(z_2,u)$ from $\bar{S}$ such that $z_1 \not= z_2$ or $z_1 = z_2 = z$ and $\mathit{in}(z) \ge 2$, then we insert into \gadgetsube{\bar{S}} two copies of the edge $(s',u)$. 
\item Otherwise, $u$ has a unique entering edge $(z,u)$ such that $z \in \bar{S}$, where $\mathit{in}(z)=1$.
If $z$ is not a boundary point of $\bar{S}$, then we insert into \gadgetsube{\bar{S}} a copy of $z$, together with the edge $(s',z)$. (If $z$ is a boundary point, then it is already included in \gadgetsube{\bar{S}}.)
Finally, we add a copy of the edge $(z,u)$.
\end{itemize}

To complete the construction of $G_{[v]}$, we let $G_{[v]} = \gadgetsubeM{\bar{S}}$, for $S=\mpoutM{v}$, where we execute the reduction rule during Steps 4 and 5 of the gadget replacement procedure.


\begin{lemma}
\label{lemma:gadget-replace-time-three}
Given a \outcomp{2} $S=\mpoutM{v}$ of the evolving graph $G'$, we can construct $G_{[v]}$ in $O(\mathit{vol}(S))$ time, plus $O(m)$ time for all $G_{[v]}$ graphs constructed through the algorithm.
\end{lemma}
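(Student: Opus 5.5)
We follow the proof of \Cref{lemma:gadget-replace-time} closely; the only real change is that the Reduction Rule now produces the gadget \ingadget{\bar{S}} instead of $k$ parallel edges. The graph $G_{[v]}=\gadgetsubeM{\bar{S}}$ consists of four parts: the subgraph $G'[S]$, the two edges of $C$ redirected into the gadget, a constant-size core of \ingadget{\bar{S}} (the vertex $s'$ and the copies $y'_1,y'_2$), and the edges entering $S$ from $\bar{S}$ after the Reduction Rule. The first three parts are cheap to build. We copy each vertex $u\in S$ together with its outgoing adjacency list, which costs $O(\mathit{vol}(S))$. We then create $s'$. For each $i\in\{1,2\}$, we look at $y_i$: if it is auxiliary with $\mathit{in}(y_i)=1$, we create $y'_i$ and the edge $(y'_i,s')$; otherwise we set $y'_i=s'$. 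Finally we redirect $e_i$ to $(x_i,y'_i)$. All of this takes $O(1)$ time beyond the copying.

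The remaining work is to add the edges from $\bar{S}$ into $S$ without paying for $\mathit{in}(S)$. For each $u\in S$ we scan its list of entering edges in $G'$ and skip those whose tail lies in $S$. To test membership in $S$ in $O(1)$ time, we mark the vertices of $S$ once, in $O(|S|)$ time, and through the DSU $\mathit{find}$ if contracted representatives are involved. We stop as soon as we see either two entering edges from $\bar{S}$ with distinct tails, or one such edge whose tail $z$ has $\mathit{in}(z)\ge 2$. In that case we add two copies of $(s',u)$. Otherwise $u$ has a unique entering edge $(z,u)$ from $\bar{S}$ and $\mathit{in}(z)=1$. If $z$ is not a boundary point, we create a copy $z'$ and the edge $(s',z')$; we keep a pointer from $z$ to its copy so that at most one copy of $z$ is made per construction. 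In either subcase we then add the edge $(z',u)$.

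The main obstacle is bounding the scans, because entering edges from $S$ itself are interleaved with those from $\bar{S}$. Each skipped edge has its tail in $S$, so the skipped edges over all $u\in S$ number at most $\mathit{vol}(S)$. Among edges from $\bar{S}$, at most two are examined per $u$ before stopping, except that parallel copies of one edge $(z,u)$ with $\mathit{in}(z)=1$ may be repeated. We handle these using the standard preprocessing that stores each edge once with a multiplicity, as in \cite{GKPP:3ECC}; this is the step to double-check. With it, at most $O(1)$ edges from $\bar{S}$ are examined per $u$, so the total scan costs $O(\mathit{vol}(S)+|S|)=O(\mathit{vol}(S))$.

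Two one-time costs remain. Creating the lists of entering edges for the new vertices (the copies and $s'$) is part of these same scans. Initializing the marking arrays and the copy pointers, and the multiplicity preprocessing, are done once globally in $O(m)$ time and reset lazily afterwards. These account for the additive $O(m)$ term over all $G_{[v]}$ graphs.
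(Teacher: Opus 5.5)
Your outline agrees with the paper's proof: copy $G'[S]$ with its outgoing lists, build the constant-size core of \ingadget{\bar{S}}, then scan the entering list of each $u\in S$, charging edges with tail in $S$ to $\mathit{vol}(S)$ and stopping after two edges from $\bar{S}$. However, the step you flagged does not go through.

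A one-time $O(m)$ preprocessing that stores each edge once with a multiplicity only removes parallel edges present in the input. The evolving graph $G'$ keeps creating new ones. When a minimal \outcomp{2} is replaced by its out-gadget, the entering lists of all its vertices of out-degree at least $2$ are merged into the list of the representative $r$ (heads are redirected through $\mathit{find}$). So edges $(w,x_1),(w,x_2),\dots$ from a single vertex $w$ with $\mathit{in}(w)=1$ become parallel copies of $(w,r)$. Now suppose a later set $S=\mpoutM{v}$ contains $r$ but not $w$. These copies do not trigger your stopping rule, and they are not charged to $\mathit{vol}(S)$, since their tail lies outside $S$. Because you never merge them, they are passed on to the next representative and may be rescanned in every subsequent construction. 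Hence your claim that only $O(1)$ edges from $\bar{S}$ are examined per $u$ is unsupported.

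The missing idea is the paper's dynamic, amortized merging. Whenever the scan meets a second copy of an edge from the same in-degree-$1$ tail, the paper merges the two copies in the evolving graph and adds their multiplicities. Each merge permanently deletes an edge copy from $G'$, and the paper's accounting (cf.\ \Cref{lemma:first-phase-evolving}) shows that only $O(n)$ extra edges are ever added to $G'$. So all merges together cost $O(m)$. This amortized charge, not the initialization of marks and pointers, is what the additive $O(m)$ term in the statement pays for.

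Separately, your stopping rule misreads the Reduction Rule. A \emph{single} edge $(z,u)$ from $\bar{S}$ with $\mathit{in}(z)\ge 2$ must become one copy of $(s',u)$, since $z$ is represented by $s'$. Two copies of $(s',u)$ are added only once two edges from $\bar{S}$ have been found, either with distinct tails or with a common tail of in-degree at least $2$. This does not affect the running time. As written, though, your procedure can give a $1$-in auxiliary vertex $u\in S$ two entering edges, so the graph it outputs is not $G_{[v]}$.
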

\begin{proof}
To construct $G_{[v]}$, we first create a copy of each vertex in $v \in S$, and also copy the list of outgoing edges of $v$ in $G'$. This takes $O(\mathit{vol}(S))$ time. Then, we introduce the central vertex $s'$ of \ingadget{\bar{S}} and, for $i \in \{1,2\}$, we add a boundary vertex $y'_i$ together with the edge $(y'_i,s')$ if $y'_i \not = s'$.

Next, we need to add the necessary auxiliary vertices of $\bar{S}$ and edges from $\bar{S}$ to $S$, according to the Reduction Rule. To that end, for each vertex $v \in S$, we scan its list of entering edges in $G'$, and we store the edges $(z,u)$ such that $z \in \bar{S}$. We stop as soon as we find two such edges, or reach the end of the list of entering edges of $v$ (note that if we access multiple copies of an edge from the same vertex $v$ with $in(v)=1$, we can merge these copies in the evolving graph (and add the multiplicities of the two copies); the time to do this contributes to the $O(m)$ overall bound, as we remove one copy of an edge from the evolving graph). 
%
%
If we find two edges entering $u$ from $\bar{S}$, then we add two copies of the edge $(s',u)$ to the list of outgoing edges of $s'$. 
Suppose, now, that we find a single edge $(z,u)$. If $z$ is neither a boundary point of $\bar{S}$ nor $s'$, then we add a copy of $z$ in \ingadget{\bar{S}}. 
Finally, we add a copy of $(z,u)$ to the list of outgoing edges of $z$. 
Since we access at most two edges entering $u$ from $\bar{S}$, the overall time to scan all lists is bounded by $O(\mathit{vol}(S))$, as claimed.
This bound also includes the time to construct the list of entering edges for each vertex of $G_{[v]}$. 
\end{proof}

\paragraph{Updating $G'$.}

After we have constructed $G_{[v]}$, we need to update the evolving graph by setting $G' = \gadgetsubeM{S}$. Recall that $S=\mpoutM{v}$ is a \outcomp{2}  of the evolving graph $G'$. Let $\{x_1,x_2\}$ be the boundary points of $S$, and let $C=\{e_1=(x_1,y_1), e_2=(x_2,y_2)\}$ be the cut edges defined by $S$.
We apply the gadget replacement operation for $S$, which replaces $G'[S]$ (the subgraph of $G'$ induced by $S$) with a $2$-connectivity light gadget \outgadget{S}. This replacement procedure is analogous to the one applied for the construction of $G_{[v]}$, but slightly different since we operate directly on $G'$.
\begin{enumerate}
    \item The gadget \outgadget{S} contains a \emph{central vertex} $s'$, which is a $2$-out auxiliary vertex that represents the ordinary and auxiliary vertices $v \in S$ with $\mathit{out}(v) \ge 2$.
    \item The gadget \outgadget{S} has two outgoing edges $C'=\{e'_1=(x'_1,y_1), e'_2=(x'_2,y_2)\}$ corresponding to the edges of the \cut{2} $C$. We refer to the set $\{x'_1,x'_2\}$ as the set of \emph{boundary points of \outgadget{S}}. 
    \item If $x_i$, $i \in \{1,2\}$ is an auxiliary vertex with $\mathit{out}(x_i)=1$, then let $x'_i=x_i$ and insert the edge $(s',x_i)$ into \outgadget{S}. Otherwise, we let $x'_i=s'$. (Note that we can have $x'_1 = x'_2$ or $x'_i=s$ for some $i \in \{1,2\}$.)
    \item For each auxiliary vertex $z \in S \setminus \{x'_1,x'_2\}$ with $\mathit{out}(z)=1$, such that $G'$ has an edge $(w,z)$ for some $w \in \bar{S}$, \outgadget{S} contains $z$ and the edge $(w,z)$. 
    \item Let \gadgetsube{S} be the graph that results from $G'$ after the following modifications. We replace $G'[S]$ and its entering edges $C$ with \outgadget{S} and $C'$ respectively.
\end{enumerate}

The important difference with respect to the replacement procedure for $G_{[v]}$, is that here we do not construct a new graph from scratch, but modify $G'$. Also, we do not need (and cannot afford) to apply the Reduction Rule.

After Step 1, we need to update the head of the edges $(w,v)$, where $w \in \bar{S}$ and $v$ is an ordinary or an auxiliary vertex with $\mathit{out}(v) \ge 2$. 
%
To do that, we apply a DSU data structure, as in \Cref{sec:kecc-decomposition}.
\ignore{
We do that indirectly with a disjoint set-union (DSU) data structure~\cite{setunion:tvl,dsu:tarjan}, which we use to contract all the vertices of $S$ with out-degree at least $2$.
%
The DSU data structure supports the operation $\mathit{unite}(u,v)$, which unites the sets of two given vertices $u$ and $v$, and makes $u$ the representative of the new set. It also supports the query $\mathit{find}(v)$, which returns the representative vertex of the set containing $v$.
To perform Step 1, we chose an ordinary vertex $r \in S$ as representative, and execute $\mathit{unite}(r,v)$ for all vertices $v \in S \setminus r$ with $\mathit{out}(v) \ge 2$. 
Hence, whenever we need to traverse an edge $(w,v)$, we traverse $(w,\mathit{find}(v))$ instead.
} 
Next, we update the list of outgoing and entering edges of $v$. To do this fast, we assume that we have stored a list $L(S)$ of the edges with both endpoints in $S$, which we can do in $O(\mathit{vol}(S))$ time.
Next, we delete each edge $(u,v) \in L(S)$, both from the list of outgoing edges of $u$ and from the list of entering edges of $v$. 
Then, 
we insert the edges $(s',x_1)$ and $(s',x_2)$, and an edge $(z,s')$ for each auxiliary vertex $z \in S \setminus \{x_1,x_2\}$ with $\mathit{out}(z)=1$ that has an entering edge from $\bar{S}$.
Finally, for every vertex $v \in S \setminus r$ with $\mathit{out}(v) \ge 2$, we merge the list of entering edges of $v$ with the list of entering edges of $r$. 

The proof that $G_{[v]}$ satisfies \Cref{definition:auxv} is deferred to a later subsection. We proceed to analyze the running time of the procedure.

\ignore{
\begin{lemma}
\label{lemma:gadget-replace-correctness}
The graphs $G_{[v]}$ constructed during the first phase of the algorithm satisfy \Cref{definition:auxv}.
\end{lemma}
\begin{proof}
We only need to prove Property~\ref{definition:auxv:property1}, as the remaining properties follow immediately from the gadget construction and the correctness of the gadget substitution (Lemma~3.6 of \cite{GKPP:3ECC}).
To prove Property~\ref{definition:auxv:property1} of \Cref{definition:auxv}, consider the local search that identified \mpout{v} in $G'$. Since $G'$ is a result of gadget substitutions, we have $\mpoutM{v} \supseteq [v]$. So, it remains to argue that \mpout{v} does not contain any ordinary vertex $w \not\in [v]$.
\end{proof}
}

\begin{lemma}
\label{lemma:gadget-replace-time-2}
Given a \outcomp{2} $S=\mpoutM{v}$ of the evolving graph $G'$, we can update $G'$ in $O(\mathit{vol}(S))$ time, plus the time required for the DSU data structure.
\end{lemma}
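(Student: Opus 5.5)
We charge every operation of the update procedure either to an edge of $L(S)$, to a vertex of $S$, or to one of the $O(1)$ edges of the gadget \outgadget{S}, mirroring the proof of \Cref{lemma:contract-time-2}. The procedure receives the vertices of $S$ as a list; this list is produced by the local search that found $S=\mpoutM{v}$ and has size $O(\mathit{vol}(S))$.

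The first step builds $L(S)$. We scan the outgoing lists of all $u\in S$, which takes $O(\mathit{vol}(S))$ time. For each edge $(u,w)$ we test whether $w\in S$ by using a mark bit set on the members of $S$ (and cleared afterwards), resolving heads through $\mathit{find}$. During the same scan we find the two cut edges $e_1=(x_1,y_1)$ and $e_2=(x_2,y_2)$ and record the out-degree of each vertex of $S$.

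The second step deletes every edge of $L(S)$ from both endpoint lists. Because each edge instance keeps pointers into both circular doubly-linked lists, each deletion costs $O(1)$, so this step takes $O(|L(S)|)$ time.

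The third step builds the gadget. We create the central vertex $s'$ by choosing a representative $r\in S$ with $\mathit{out}(r)\ge 2$ and calling $\mathit{unite}(r,u)$ for every other $u\in S$ with $\mathit{out}(u)\ge 2$. This makes $O(|S|)$ DSU operations, whose cost is accounted separately. We then insert $(s',x_1)$ and $(s',x_2)$ when $x_i$ is an auxiliary vertex with out-degree $1$, and redirect $C$ to $C'$. For every auxiliary $z\in S$ with $\mathit{out}(z)=1$ that has an entering edge from $\bar S$, we insert $(z,s')$; identifying these vertices means checking, per vertex, whether its in-list still holds an edge after the deletions, which is $O(1)$ per vertex. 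Finally, for each $u\in S\setminus\{r\}$ with $\mathit{out}(u)\ge 2$ we splice its entering list into that of $r$ in $O(1)$ time. Together this costs $O(|S|)$ plus DSU time.

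The main obstacle is that the entering edges of $S$ from $\bar S$ may be far more numerous than $\mathit{vol}(S)$, so we cannot touch them. The merging of entering lists and the DSU make this unnecessary: we never update head pointers explicitly, because $\mathit{find}$ resolves them lazily. The auxiliary out-degree-$1$ vertices keep their own entering lists untouched. Since $|S|\le\mathit{vol}(S)$ up to the convention that every vertex has out-degree at least $1$ (which holds because $G'$ is strongly connected), the total is $O(\mathit{vol}(S))$ plus DSU time.
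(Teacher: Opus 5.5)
Your proposal is correct and follows essentially the same argument as the paper. Both build $L(S)$ in $O(\mathit{vol}(S))$ time, delete its edges in $O(|L(S)|)$ time using the stored list pointers, detect the out-degree-$1$ auxiliary vertices that have an entering edge from $\bar S$ by an $O(1)$ per-vertex check once those deletions are done, and splice the entering lists of the out-degree-$\ge 2$ vertices into that of $r$ in $O(|S|)$ time, leaving head updates to the DSU. The details you add (mark bits, resolving heads via $\mathit{find}$, and $|S|\le\mathit{vol}(S)$ by strong connectivity) are justifications the paper leaves implicit.
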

\begin{proof}
Our procedure receives as input a list of the vertices in $S$.
It computes a list $L(S)$ of the edges with both endpoints in $S$ in $O(\mathit{vol}(S))$ time.
Then, we can remove all edges in $L(S)$ from the adjacency lists of $G'$ in $O(|L(S)|)$ time. Also, we can find the auxiliary vertices in $z \in S \setminus \{x_1,x_2\}$ with $\mathit{out}(z)=1$ that have an entering edge from $\bar{S}$ in $O(|S|)$ time, since it suffices to check if $\mathit{in}(z) \ge 1$.
Finally, we can merge the list of entering edges of each vertex $v \in S \setminus r$ with that of $r$, for all vertices with $\mathit{out}(v) \ge 2$, in a total of $O(|S|)$ time.
Thus, the total update time is $O(|L(S)|+|S|)=O(\mathit{vol}(S))$, plus the time required by the DSU data structure.
\end{proof}

After processing all sets $[v]$, such that $s \not\in [v]$, we set $G_{[s]}=G'$. The ordinary vertices of $G_{[s]}$ are the ordinary vertices $v \in V(G)$ for which $\moutM{v} = \mpoutM{v} = \bot$.

\subsubsection{Second phase}

The second phase completes the construction of the \ecc{3} decomposition $H_1, \ldots, H_t$, where each $H_i$ corresponds to a unique $3$-edge-connected component $C_i$ of $G$. 
To that end, we process each auxiliary graph $G_{[v]}$ of the first phase as follows. Recall that if $s \not\in [v]$, $G_{[v]}$ is the graph that results from the gadget substitution of the \incomp{2} $V(G') \setminus \mpoutM{v}$. Since $s \not\in \mpoutM{v}$, the central vertex $s_v$ of \ingadget{V(G') \setminus \mpoutM{v}} represents the set of ordinary vertices and the $2$-in auxiliary vertices in $V(G') \setminus \mpoutM{v}$, which includes $s$. Hence, we set $s_v$ as the start vertex of $G_{[v]}$, and apply the same procedure as the first phase, for $G'=(G_{[v]})^R$. For $G_{[s]}$, we apply the procedure of the first phase for $G'=(G_{[s]})^R$ with start vertex $s$.

\subsubsection{Correctness proof}
In the following, we prove that in each of the graphs created, the ordinary vertices are all \conn{3}.
The rest of the properties of \Cref{theorem:3eccdecomp} follow directly from our algorithm and the fact that gadget replacements preserve the connectivity of ordinary vertices (Lemma~3.6 of~\cite{GKPP:3ECC}).

Let $C_1,\ldots,C_k$ be the $3$-edge-connected components of a \lightgraph{2} $G$, consisting of the ordinary vertices of $G$ (no component contains both auxiliary and ordinary vertices, because auxiliary vertices have either only $1$ incoming or only $1$ outgoing edge).
%
%

%

\begin{lemma} \label{lem:correctnessPhaseOne}
The graphs $G_{[v]}$ constructed during the first phase of the algorithm satisfy \Cref{definition:auxv}.
\end{lemma}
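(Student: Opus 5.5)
Following the pattern of \Cref{lem:correctnessPhaseOne-k}, I would reduce everything to Property~\ref{definition:auxv:property1}. Property~\ref{definition:auxv:property3} follows from the correctness of the gadget substitution (Lemma~3.6 of \cite{GKPP:3ECC}), applied once per substitution performed on the evolving graph and once for the substitution of $\bar{S}$ that creates $G_{[v]}$. Property~\ref{definition:auxv:property2} follows from the gadget structure: every auxiliary vertex of $G_{[v]}$ has in-degree or out-degree at most $2$, so it cannot be $3$-edge-connected to any other vertex. So the main work is to show that the ordinary vertices of the set $S=\mpoutM{v}$ returned by the local search are exactly the ordinary vertices of $[v]$.

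First I would show that $\mpoutM{v}$ contains no ordinary vertex outside $[v]$. Consider $\moutM{v}$ in $G$. By the proper order, every ordinary $u$ with $\moutM{u}\subset\moutM{v}$ has already been processed, so it now lies inside some $2$-out auxiliary gadget. Every ordinary $u\in\moutM{v}$ satisfies $\moutM{u}\subseteq\moutM{v}$; otherwise \Cref{lemma:submodularity3} gives a smaller $2$-out set $\moutM{u}\cap\moutM{v}$ separating $u$ from $s$. I would then show that the image of $\moutM{v}$ in $G'$ is still a $2$-out set that excludes $s$. Here I would use the fact that each earlier substituted set $\mpoutM{u}$ is either nested inside $\moutM{v}$ or handled by uncrossing via \Cref{lemma:submodularity3}. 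A \outgadget{} replacement of a set contained in $\moutM{v}$ does not change the out-degree of the image. Therefore the image contains $v$ and avoids $s$, and its ordinary vertices are exactly the remaining members of $[v]$. By minimality, $\mpoutM{v}$ is contained in this image.

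Second I would show that $\mpoutM{v}$ contains all of $[v]$. For this I would prove the analogue of \Cref{clm:persisting2OutIn} for gadget substitutions. If a $2$-out (resp.\ $2$-in) set of $G'$ contains an ordinary $u$ but not the ordinary $w$ or $s$, then $G$ has such a set. To prove it, I would undo one substitution at a time. If the set contains the central vertex $s'$ of \outgadget{S''}, replace the gadget part by $S''$. If it contains only some degree-$1$ auxiliary copies $x'_i$ or $z$, either add them back together with their original vertices, or remove them; a degree-$1$ auxiliary vertex cannot increase the cut. Applying the claim to $\mpoutM{v}$ and any $w\in[v]\setminus\mpoutM{v}$ gives a $2$-out set in $G$ containing $v$ but not $w$ or $s$. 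Then $\moutM{v}\neq\moutM{w}$, contradicting $w\in[v]$. Finally, $G_{[s]}$ is handled because the $[v]$ sets partition the ordinary vertices.

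The main obstacle is this un-substitution claim. Unlike plain contraction, the gadget keeps separate auxiliary vertices $x'_i$, $y'_i$ and $z'$ with degree $1$. A cut in $G'$ may split $s'$ from some of them, so mapping back to $G$ needs a case analysis showing the out-count does not grow. I would first try to absorb every such degree-$1$ vertex onto the side of its unique neighbour edge, using the 2CLG properties (\Cref{definition:2clg}, Property~\ref{definition:2clg:property4}), before uncontracting.
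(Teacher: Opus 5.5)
Your skeleton is the paper's. Properties~\ref{definition:auxv:property2} and~\ref{definition:auxv:property3} come from the gadget structure and Lemma~3.6 of \cite{GKPP:3ECC}. Property~\ref{definition:auxv:property1} comes from two steps: (a) a $2$-out set of $G'$ avoiding $s$ whose ordinary vertices are the surviving ones of $\mu(v)$, so that minimality of $\mu'(v)$ excludes foreign ordinary vertices; and (b) the un-substitution claim, so that all of $[v]$ is caught.

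The gap is in (a). The paper does not construct that set by hand. It observes that every vertex of a minimal $2$-out set is reachable from $v$ inside it, and then invokes Lemma~6.5 of \cite{GKPP:3ECC}. Your replacement, ``each earlier $\mu'(u)$ is nested in $\mu(v)$ or handled by uncrossing via \Cref{lemma:submodularity3}'', leaves out the case that lemma exists for. An earlier $\mu'(u)$ with $[u]\cap\mu(v)=\emptyset$ may meet the current image $T$ of $\mu(v)$ in auxiliary vertices only. \Cref{lemma:submodularity3} needs $T$ and $\mu'(u)$ to separate a common set $Z$ from $s$ with $\lambda(Z,s)=2$. This fails when, say, $T\cap\mu'(u)$ is a single out-degree-$1$ vertex. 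Submodularity then bounds $\mathit{out}(T\cup\mu'(u))$ only by $3$, so uncrossing to the union can fail.

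What closes this case is the exact rule defining \outgadget{\mu'(u)}, together with the 2CLG axioms. Let $X\subseteq\mu'(u)$ be the part merged into the central vertex.

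If $T\cap X=\emptyset$, then $T$ survives verbatim with $\mathit{out}(T)=2$. This holds because out-degree-$1$ boundary points, and out-degree-$1$ vertices with in-edges from outside $\mu'(u)$, are kept as separate vertices.

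If $T\cap X\neq\emptyset$, you must show $\mathit{out}(T\cap\mu'(u))\ge 2$:
\begin{itemize}
\item A vertex of out-degree at least $2$ in the intersection has two edge-disjoint paths to $s$, by Property~\ref{definition:2clg:property3}.
\item Otherwise all vertices of the intersection have out-degree $1$. Since a 2CLG has no two consecutive such vertices, the out-degree of the intersection equals its size.
\item The size-one case $\{a\}$ with $a\in X$ is impossible. Then $a$ has no in-edge from $T\setminus\{a\}$ and its out-edge leaves $T$, so $\mathit{out}(T\setminus\{a\})=1$.
\end{itemize}
Only after this does submodularity make $T\cup\mu'(u)$ a $2$-out set whose image has the right ordinary vertices.

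By contrast, the un-substitution claim you flag as the main obstacle is easy in the $2$-out direction, which is the only one used here. If $s'\notin S$, then $S$ is already a $2$-out set of $G''$ verbatim. If $s'\in S$, replace $S\cap V(\mathcal{G}_{out}(S''))$ by $S''$. Your suggested removal of a retained out-degree-$1$ vertex $z\in S$ can in fact increase $\mathit{out}(S)$, by one less than the number of edges from $S\setminus\{z\}$ into $z$.
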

\begin{proof}
We only need to prove Property~\ref{definition:auxv:property1}, as the remaining properties follow immediately from the gadget construction and the correctness of the gadget substitution (Lemma~3.6 of \cite{GKPP:3ECC}).

To prove Property~\ref{definition:auxv:property1} of \Cref{definition:auxv}, consider the local search that identified \mpout{v} in $G'$. 
All vertices of the induced subgraph of the minimum $2$-out set of any ordinary vertex $v$ are reachable from $v$ (otherwise we could remove the non-reachable vertices and get a smaller set).
Therefore, we can apply Lemma~6.5~of~\cite{GKPP:3ECC}.
When we process $[v]$, this lemma implies that for all vertices in $[v]$, the evolving graph contains a $2$-out set $S'$ whose ordinary vertices are exactly the remaining ordinary vertices of $\mu(v)$ in $G'$.
By induction, if there exists any $u$ with $\mu(u) \subset \mu(v)$, the ordinary vertices of $\mu(u)$ have already been processed (due to the proper ordering of the sets $[v]$, provided by the algorithm in~\cite{Linear3ECC}) and thus are no longer ordinary vertices in the evolving graph. 
Furthermore, the ordinary vertices $u$ of $\mu(v)$ have $\mu(u) \subseteq \mu(v)$ (otherwise, by \Cref{lemma:submodularity3} the set $\mu(u) \cap \mu(v)$ is $2$-out and separates $u$ from $s$, contradicting the minimality of $\mu(u)$).
We conclude that the ordinary vertices of $S'$ are exactly the vertices of $[v]$.

As our local search avoids $s$ by construction, and $s\not \in [v]$, it follows that $s$ is not in the aforementioned set.
Therefore the existence of $S'$ shows that the ordinary vertices of the minimum $2$-out set we find contains a subset of the remaining ordinary vertices of $[v]$ and no other ordinary vertices.

We now argue that the minimum $2$-out set $\mu'(v)$ we find contains all vertices of $[v]$.
If this was not true, then it would contain some $v_1 \in [v]$ but not some $v_2 \in [v]$, and also not $s$ (by construction of the local search algorithm).
But our gadget replacement procedure on the evolving graph guarantees that such a set would exist in the original graph (as we prove in the following claim), which contradicts the definition of $[v]$. Therefore it suffices to prove:

\begin{claim}\label{clm:persisting2OutIn:three}
Let $u,v,s$ be two ordinary vertices in the evolving graph $G'$, and suppose there exists a $2$-out (resp. $2$-in) set $S$ containing $u$ but not $v,s$.
Then, there exists a $2$-out (resp. $2$-in) set in the original graph, containing $u$ but not $v,s$.
\end{claim}
\begin{proof}
Let $G''$ be the evolving graph at some point during the first phase when we found a $2$-out set $S''$ and replaced it with a gadget such that $G' = G''_{<S>}$.
It suffices to prove that in $G''$ there exists a $2$-out (resp. $2$-in) set containing $u$ but not $v,s$, as we can repeat the argument to reach the original graph.
Note that $S''$ does not contain $u,v,s$, as these vertices are ordinary in $G'$.

The gadget replacement in the evolving graph does not remove any edge with at most one endpoint in $S''$ (thus if $S$ does not contain any vertex from the gadget, then the result follows).
The gadget replacement only contracts vertices in $S''$ with out-degree at least $2$, and with out-degree $1$ that do not have an edge from vertices not in $S''$ (let us call $X$ the set of all these vertices that are contracted, and $s'$ the new vertex formed by the contractions), and removes edges from $s'$ to non-boundary vertices of $S''$ with out-degree $1$ that have at least one edge from vertices not in $S''$ (in case of a boundary vertex with out-degree $1$ that have at least one edge from vertices not in $S''$, it keeps exactly one edge from $s'$).

Now let us assume that $S$ does not contain $s'$.
Then, from the above discussion, $S$ exists as is in $G''$, with the same number of outgoing edges. If $S$ is $2$-out, the result follows.
However, it may have more incoming edges, because of vertices of $S''$ with out-degree $1$ having edges from $X$.
Notice that the outgoing edges of such non-boundary vertices only point to vertices in $X$; if we remove these vertices from $S$ (which does not contain any vertex in $X$) then the in-degree of the resulting set does not increase.
Similarly for boundary vertices, if we remove them the resulting set loses at least one incoming edge (the one from $s'$ to the boundary vertex removed), and may have at most one more new incoming edge (the one outgoing from the boundary vertex).
We conclude that the resulting set in $G''$ has the same number of incoming edges as $S$ in $G$. Therefore if $S$ is $2$-in, the result again follows.

Finally, if $S$ contains $s'$, then let $Y = (S\setminus \set{s'}) \cup X$.
Now notice that if $S$ is $2$-in, then $Y$ is $2$-in as the only edges ``missing'' in $G'$ compared to $G''$ are edges from vertices in $X$; but as $X\subseteq Y$, these are not counted in the incoming edges to $Y$.
We can now assume that $S$ is $2$-out.
If $Y$ is not $2$-out, it is because there exists at least one edge from a vertex in $X$ to a vertex of $S''$ with out-degree $1$ that has at least one edge from vertices not in $S''$. 
Then let $Y'$ be the extension of $Y$ that includes these vertices with out-degree $1$ that have at least one edge from vertices not in $S''$.
$Y'$ is $2$-out, because the new vertices are either non-boundary, meaning they only have edges to vertices in $S''$ (but as $G''$ is a 2CLG, these must be edges to vertices with out-degree at least $2$, which are already in $Y$ and thus in $Y'$), or boundary, meaning that they only contribute one more outgoing edge, but the edges from $X$ to them are no longer contributing (recall that $S$ in $G'$ had an outgoing edge from $s'$ to each such vertex).
As $Y'$ contains $u$ (because $u\in S$) but not $v,s$, the result follows.
\end{proof}
This concludes the proof for all sets except for $[s]$.
However, as the sets $[v]$ form a partition, and after the first phase is done we put all remaining ordinary vertices of the evolving graph in $G_{[s]}$, the result follows.
\end{proof}

\ignore{(Lemma~3.6 of~\cite{GKPP:3ECC}).
This means that in the original graph in should be that $v_1$ is also
if there is such a set in the evolving graph, then there exists such a set in the original graph as well, contradicting the definition of $[v]$.
}

Now we focus on the second phase.

\begin{lemma}[Refined Cuts]
\label{lemma:refined-cuts}
Let $S$ be a \outcomp{2} of a \lightgraph{2} $G$ that contains two distinct ordinary vertices $x$ and $y$. If there is no $(\le 2)$-out set $S' \subset S$ that separates $x$ and $y$, then any \outcomp{2} that separates $x$ and $y$ must contain all ordinary and $2$-in auxiliary vertices in $V \setminus S$.
\end{lemma}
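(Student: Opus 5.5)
The plan is to repeat the submodularity argument of \Cref{lemma:refined-cuts-k}, with one change. In a \lightgraph{2} there may be $1$-out sets, because auxiliary vertices can have out-degree $1$. So we cannot conclude that a set with at most one outgoing edge is all of $V$. We only need to show that its complement avoids ordinary and $2$-in auxiliary vertices, and for this we use Properties~\ref{definition:2clg:property1} and~\ref{definition:2clg:property3} of \Cref{definition:2clg}.

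Let $X$ be a \outcomp{2} that separates $x$ and $y$, and assume w.l.o.g.\ that $x\in X$ and $y\notin X$. The set $S\cap X$ contains $x$ but not $y$. It is a proper subset of $S$, since $y\in S\setminus X$. By hypothesis, $S\cap X$ is not a $(\le 2)$-out set, so $\mathit{out}(S\cap X)\ge 3$. Submodularity gives
$\mathit{out}(S\cap X)+\mathit{out}(S\cup X)\le \mathit{out}(S)+\mathit{out}(X)=4$, and therefore $\mathit{out}(S\cup X)\le 1$. Let $W=V\setminus(S\cup X)=(V\setminus S)\setminus X$. It suffices to show that $W$ contains no ordinary vertex and no $2$-in auxiliary vertex. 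If $\mathit{out}(S\cup X)=0$, then strong connectivity of $G$ forces $W=\emptyset$, and we are done. So assume that $\mathit{out}(S\cup X)=1$, i.e., a single edge separates $x\in S\cup X$ from every vertex of $W$.

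First, suppose that $W$ contains an ordinary vertex $w$. By Property~\ref{definition:2clg:property1}, $x$ and $w$ are $2$-edge-connected, so $\lambda(x,w)\ge 2$. But $S\cup X$ is an $(x,w)$-cut with one outgoing edge, a contradiction. Next, suppose that $W$ contains a $2$-in auxiliary vertex $z$, i.e., $\mathit{in}(z)\ge 2$. Since $x$ is ordinary and $2$-edge-connected to the ordinary vertex $y\neq x$, at least two edge-disjoint paths leave $x$, and hence $\mathit{out}(x)\ge 2$. Property~\ref{definition:2clg:property3} then yields two edge-disjoint paths from $x$ to $z$. This again contradicts the fact that the $1$-out set $S\cup X$ contains $x$ but not $z$. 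Hence every ordinary and every $2$-in auxiliary vertex of $V\setminus S$ lies in $X$.

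The only delicate point is the case $\mathit{out}(S\cup X)=1$, where the $k$-edge-connected argument breaks down. The step I expect to need care is the justification that $\mathit{out}(x)>1$, so that Property~\ref{definition:2clg:property3} applies to the pair $(x,z)$. This uses that $x$ and $y$ are distinct ordinary vertices, which is exactly why the statement requires both to be ordinary. Auxiliary vertices with in-degree $1$ in $W$ cannot be ruled out by this argument, and indeed the statement does not claim that they lie in $X$.
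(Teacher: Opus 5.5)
Your proof is correct and follows the paper's argument. Both use $\mathit{out}(S\cap X)\ge 3$ and submodularity to force $\mathit{out}(S\cup X)\le 1$, and then use the \lightgraph{2} properties to show that no ordinary or $2$-in auxiliary vertex can lie outside $S\cup X$. You are more explicit than the paper, which compresses the final step into one appeal to the two-edge-disjoint-paths property: you treat ordinary vertices directly via their pairwise $2$-edge-connectivity, and you justify $\mathit{out}(x)>1$ before invoking that property.
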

\begin{proof}
Let $X$ be a \outcomp{2} that contains $x$ but not $y$. 
Then, $out(S \cap X) > 2$, since otherwise $S \cap X$ would be a $(\le 2)$-out set contained in $S$ that separates $x$ and $y$.
By submodularity $\mathit{out}(S)+\mathit{out}(X) \ge \mathit{out}(S \cup X) + \mathit{out}(S \cap X)$, and since $\mathit{out}(S) = \mathit{out}(X) = 2$, we have $\mathit{out}(S \cup X) < 2$. Thus, $S \cup X$ must contain all ordinary and $2$-in auxiliary vertices of $G$;
otherwise, $V\setminus (S\cup X)$ contains a vertex $w$ with $in(w)>1$, but there are no two paths from $x$ to $w$, contradicting the definition of a \lightgraph{2}.
Therefore, $X$ must contain all ordinary and $2$-in auxiliary vertices in $V \setminus S$.
\end{proof}

\begin{corollary}
\label{cor:refined-cuts}
Let $x$ and $y$ be two ordinary vertices in $[v]$ that are not $3$-edge-connected in $G$. Then, there is a \outcomp{2} $X$ in $G_s^R$ that separates $x$ and $y$ that does not contain any ordinary or $2$-in auxiliary vertex in $V \setminus \moutM{v}$. In particular, $X$ does not contain $s$.
\end{corollary}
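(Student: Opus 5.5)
We follow the proof of \Cref{cor:refined-cuts-k}, with \Cref{lemma:refined-cuts} in place of \Cref{lemma:refined-cuts-k}, and we instantiate the latter with $S=\moutM{v}$.

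\textbf{Step 1: no smaller separating set.} Since $x,y\in[v]$, we have $\moutM{x}=\moutM{y}=\moutM{v}$. Suppose there were a $(\le 2)$-out set $S'\subset\moutM{v}$ containing $x$ but not $y$ (or vice versa). Then $s\notin S'$. Because $G$ is a \lightgraph{2}, $\lambda(x,s)\ge 2$, so $\mathit{out}(S')\ge 2$, and hence $S'$ is a \outcomp{2} separating $x$ from $s$. Minimality of $\moutM{x}$ then gives $\moutM{x}\subseteq S'$, which forces $y\notin\moutM{x}$, a contradiction. So the hypothesis of \Cref{lemma:refined-cuts} holds for $S=\moutM{v}$.

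\textbf{Step 2: a reverse-graph separating set.} The vertices $x$ and $y$ are not $3$-edge-connected, so by \Cref{lemma:distinctM} (with $k=1$; the lemma uses only that the two ordinary vertices are $2$-edge-connected, which holds in a \lightgraph{2}) we have $\moutM{x}\ne\moutM{y}$ or $\moutRM{x}\ne\moutRM{y}$. The first option is excluded, so $\moutRM{x}\ne\moutRM{y}$. Applying \Cref{lemma:distinctM2} in $G^R$, we may assume w.l.o.g. that $\moutRM{y}\neq\bot$ and $x\notin\moutRM{y}$. Set $Z=V\setminus\moutRM{y}$. In $G$, this set satisfies $\mathit{out}_G(Z)=\mathit{in}_G(\moutRM{y})=\mathit{out}_{G^R}(\moutRM{y})=2$. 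It contains $x$ and $s$ and excludes $y$, so it is a \outcomp{2} of $G$ separating $x$ from $y$.

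\textbf{Step 3: conclude.} By \Cref{lemma:refined-cuts}, $Z$ contains every ordinary and every $2$-in auxiliary vertex of $V\setminus\moutM{v}$. Equivalently, $X:=\moutRM{y}$ contains no such vertex; in particular $s\notin X$, since $s$ is ordinary and lies outside $\moutM{v}$. Moreover, $X$ is a \outcomp{2} in $G^R$ (i.e.\ in $G_s^R$) that contains $y$ but not $x$, so it separates $y$ and $x$. This is the required set, up to the symmetric naming of $x$ and $y$.

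\textbf{Expected obstacle.} The delicate point is Step 2 in the \lightgraph{2} setting. First, we must check that \Cref{lemma:distinctM} is valid here: its proof uses only that $u,w$ are $2$-edge-connected and that any separating cut of size $2$ between them leads to an $M$-set. Second, we must keep the roles of $x$ and $y$ straight so that the out-set of $G^R$ obtained really is $\moutRM{y}$ and excludes $x$. A further check is that the case "$x\in\moutRM{y}$, $y\notin\moutRM{x}$" is genuinely symmetric; this needs the hypotheses of \Cref{lemma:refined-cuts} to be symmetric in $x$ and $y$, which they are.
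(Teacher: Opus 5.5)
Your proposal is correct and follows the paper's proof. Both apply \Cref{lemma:refined-cuts} with $S=\moutM{v}$ to the \outcomp{2} $V\setminus\moutRM{y}$ of $G$ and take $X=\moutRM{y}$. The only differences are that you verify the hypothesis of \Cref{lemma:refined-cuts}, which the paper simply asserts from the definition of $[v]$, and that you obtain $x\notin\moutRM{y}$ (w.l.o.g.) from \Cref{lemma:distinctM,lemma:distinctM2} instead of citing Proposition~III.5 of \cite{Linear3ECC}.
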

\begin{proof}
We apply \Cref{lemma:refined-cuts} for $S=\moutM{v}$. By the definition of $[v]$, and the fact that $x,y \in [v]$, we have that there is no \outcomp{2} $S' \subset \moutM{v}$ that separates $x$ and $y$. Since $x$ and $y$ are not $3$-edge-connected, we have $x \not\in \moutRM{y}$ or $y \not\in \moutRM{x}$ (Proposition~III.5 of~\cite{Linear3ECC}).
Assume, without loss of generality, that $x \not\in \moutRM{y}$. Then, $X = V \setminus \moutRM{y}$ is a \incomp{2} in $G^R$ that contains $x$ but not $y$.
Hence, $X$ is a \outcomp{2} in $G$ that separates $x$ and $y$, so by \Cref{lemma:refined-cuts}, it contains all ordinary and $2$-in auxiliary verteices of $V \setminus \moutM{v}$. Then, \moutR{y} is a \outcomp{2} in $G_s^R$ that separates $x$ and $y$ that does not contain any ordinary or $2$-in auxiliary vertex in $V \setminus \moutM{v}$.
\end{proof}

We are now ready to show that the ordinary vertices in the graphs constructed during the second phase of the algorithm are \conn{3}.

\begin{lemma}
Let $u,v$ be ordinary vertices in a graph $H_i$ constructed during the second phase of the algorithm.
Then $u,v$ are \conn{3}.
\end{lemma}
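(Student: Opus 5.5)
We argue by contradiction, in the same way as the analogous lemma for the $(k+1)$-ECC decomposition. Let $u,v$ be ordinary vertices of a second-phase graph $H_i$ and suppose they are not \conn{3} in $G$.

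First, $u$ and $v$ ended up as ordinary vertices of the same first-phase graph $G_{[w]}$. By \Cref{lem:correctnessPhaseOne}, Property~\ref{definition:auxv:property1}, this means $u,v\in[w]$, so $[u]=[v]$. We then apply \Cref{cor:refined-cuts} to $x=u$, $y=v$ and $S=\moutM{w}$. This gives, after possibly swapping $u$ and $v$, a \outcomp{2} $X$ of $G^R$ (namely $\moutR{v}$) that contains $u$ but not $v$. Moreover, $X$ contains no ordinary and no $2$-in auxiliary vertex of $V\setminus\moutM{w}$; in particular $s\notin X$.

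Next we transfer $X$ to $(G_{[w]})^R$ with start vertex $s_w$. In $G_{[w]}$, the set $\bar S=V(G')\setminus\moutM{w}$ was replaced by the gadget \ingadget{\bar S}, whose central vertex $s_w$ represents the ordinary and $2$-in auxiliary vertices of $\bar S$. Since $X$ avoids all such vertices, its intersection with $\bar S$ consists only of auxiliary vertices with in-degree $1$. In the gadget these survive only as copies $z'$ hanging off $s'$, or as boundary copies $y_i'$, or not at all. We map $X$ to the set $X'$ obtained by keeping $X\cap \moutM{w}$ and the gadget copies of the in-degree-$1$ vertices of $X\cap\bar S$; vertices dropped by the gadget are simply removed. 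We must check that $X'$ is still a $2$-out set in $(G_{[w]})^R$, i.e.\ a $2$-in set in $G_{[w]}$. Removing in-degree-$1$ vertices, or collapsing them as the gadget does, does not increase the number of entering edges, by the same style of counting as in \Cref{clm:persisting2OutIn:three}. Also, if $X'$ had fewer than $2$ entering edges, the 2CLG property of $G_{[w]}$ (Lemma~3.5 of \cite{GKPP:3ECC}) would be violated. Finally, $X'$ contains $u$, and contains neither $v$ nor $s_w$.

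Once $X'$ is obtained, the second phase is the first phase run on $(G_{[w]})^R$ from $s_w$. Hence \Cref{lem:correctnessPhaseOne} applies to it: the ordinary vertices of each output graph form exactly one class of minimal $2$-out sets in $(G_{[w]})^R$ with respect to $s_w$. The set $X'$ separates $u$ from $v$ and from $s_w$, so $\mu(u)\subseteq X'$ does not contain $v$. Therefore $u$ and $v$ lie in different classes, and thus in different graphs $H_i$, contradicting the assumption.

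The main obstacle is the transfer step: showing that the projection of $X$ through the gadget substitution (together with the Reduction Rule, which merges several in-degree-$1$ tails into two $(s',u)$ edges) stays a $2$-in set and does not swallow $s_w$. I would handle it by the case analysis of \Cref{clm:persisting2OutIn:three} run in the reverse direction. The key fact used is that every edge from $\bar S$ into $\moutM{w}$ is either kept or redirected from $s'$, and that $X$ avoids everything that is contracted into $s'$.
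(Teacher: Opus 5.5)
Your outline is the same as the paper's: contradiction, $[u]=[v]$ from \Cref{lem:correctnessPhaseOne}, the set from \Cref{cor:refined-cuts}, transfer to $(G_{[w]})^R$, then \Cref{lem:correctnessPhaseOne} applied to the second phase. The transfer step, however, has a genuine gap. $G_{[w]}$ is not obtained from $G$ by a single in-gadget substitution. It is carved out of the evolving graph $G'$ at the moment $[w]$ is processed. Since classes are processed in proper order, by then every class $[u']$ with $\moutM{u'}\subsetneq\moutM{w}$ has already been replaced by an out-gadget, whose central vertex merges all vertices of out-degree at least $2$ of $\mpoutM{u'}$. So the set actually found is $\mpoutM{w}$, not $\moutM{w}$. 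Consequently, ``keep $X\cap\moutM{w}$'' does not define a vertex set of $G_{[w]}$: nothing prevents $X$ (which contains $u$ but not $v$) from containing some, but not all, of such a merged group. Running \Cref{clm:persisting2OutIn:three} ``in reverse'' does not fix this. That claim only lifts $2$-out/$2$-in sets from a substituted graph back to the graph before substitution. The forward direction you need is not available for arbitrary sets, precisely because a set of the original graph may cross a substituted set.

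The missing idea is what the paper does at this point. Pass to the minimum $2$-out set of $u$ in $G^R$ that avoids $s$; it lies inside $X$, so it still misses $v$ and every vertex that is merged into $s_w$. Observe that all its vertices are reachable from $u$ inside it, since otherwise a smaller such set would exist. Then invoke Lemma~6.5 of \cite{GKPP:3ECC}, the same tool used in \Cref{lem:correctnessPhaseOne}, which carries such a set through the gadget substitutions. This gives a $2$-out set of $(G_{[w]})^R$ containing $u$ but not $v$ or $s_w$, and your last step then goes through. In fact, the set supplied by \Cref{cor:refined-cuts} is already $\moutRM{u}$ after the swap (you wrote $\moutRM{v}$, which contains $v$), so you had the right set but never used its minimality.

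Even the final in-gadget step is not covered by generic counting. The Reduction Rule replaces edges entering some $a\in\mpoutM{w}$ from two distinct in-degree-$1$ tails by two copies of $(s',a)$, so edges internal to $X$ can become entering edges of $X'$. Ruling out an increase of the in-count needs the refined-cut structure, for example $\mathit{out}(\moutM{w}\setminus X)>2$ together with the reachability of $\moutM{w}$ from $v$. Your argument does not invoke either.
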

\begin{proof}
If $u,v$ are not \conn{3} and are not splitted by the first phase of the algorithm, then $[u] = [v]$ by \Cref{lem:correctnessPhaseOne}.
\Cref{cor:refined-cuts} implies that there exists a $2$-out set $X$ in $G_s^{R}$ that w.l.o.g. contains $u$ and does not contain $v,s$.
Therefore the minimum $2$-out set of $u$ in $G_s^{R}$ that does not include $s$ also does not include $v$.
Notice that $u$ reaches everything in this minimum $2$-out set, as otherwise we could remove the unreachable vertices and create an even smaller $2$-out set separating $u$ from $s$.
Therefore we can apply Lemma~6.5 of~\cite{GKPP:3ECC} which implies that $G_{[v]}$ contains a $2$-out set $S'$ containing $u$ but not $v,s$; in turn this means that the minimum $2$-out set of $u$ in $G_{[v]}$ does not contain $v,s$.
As the second phase is the same with the first phase in the graph $G_{[v]}$ instead of $G$, by \Cref{lem:correctnessPhaseOne} we get that in the end $u$ is in a different graph $H_i$ from $v$.
\end{proof}

This completes the proof of correctness. 

\begin{corollary}\label{lem:correctnessPhaseTwo:three}
The graphs constructed during the second phase of the algorithm satisfy \Cref{theorem:3eccdecomp}.
\end{corollary}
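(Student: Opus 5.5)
The plan is to assemble the corollary from the pieces already established for the \ecc{3} decomposition: the first-phase guarantee (\Cref{lem:correctnessPhaseOne}), the lemma showing that ordinary vertices in each second-phase graph are \conn{3}, and the connectivity preservation of gadget substitution (Lemma~3.6 of~\cite{GKPP:3ECC}), together with the initial reduction to \lightgraph{2}s via \Cref{theorem:2clgdecomp}. I will check the bullets of \Cref{theorem:3eccdecomp} one at a time for the graphs $H_1,\dots,H_t$ output by the second phase.

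\textbf{Structural bullets.}
\begin{itemize}
\item \emph{Strong connectivity.} Each $H_i$ is strongly connected because gadget substitution turns a \lightgraph{2} into a \lightgraph{2} (Lemma~3.5 of~\cite{GKPP:3ECC}). This applies in both phases, and reversal preserves strong connectivity.
\item \emph{Ordinary vertices.} The ordinary/auxiliary partition is inherited from the construction.
\item \emph{Uniqueness.} Every vertex of $G$ is ordinary in exactly one $H_i$. The first phase partitions the ordinary vertices into the sets $[v]$ (including $[s]$). The second phase runs the same procedure on each $(G_{[v]})^R$ and again partitions the ordinary vertices of that graph. Combined with \Cref{theorem:2clgdecomp}, which assigns every vertex of the original digraph to exactly one \lightgraph{2} as an ordinary vertex, this gives uniqueness.
\item \emph{Size.} The total edge count is $O(m)$ by \Cref{lemma:gadget-replace-time-three}, since each $G_{[v]}$ is built in $O(\mathit{vol}(\mpoutM{v}))$ time and the minimal sets found in the evolving graph are contracted away. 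This argument is applied once per phase.
\end{itemize}

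\textbf{Connectivity bullets.}
\begin{itemize}
\item \emph{Ordinary vertices of each $H_i$ are \conn{3}.} This is exactly the preceding lemma.
\item \emph{Forward direction of the iff.} Suppose $u,v$ are $4$-edge-connected in $G$. Then they are $3$-edge-connected, so by Proposition~III.5 of~\cite{Linear3ECC} we have $[u]=[v]$ and $[u]_R=[v]_R$. Hence they lie in the same $G_{[v]}$, and afterwards in the same $H_i$. Property~\ref{definition:auxv:property3} of \Cref{definition:auxv}, applied in both phases, gives that they remain $4$-edge-connected there.
\item \emph{Converse.} If $u,v$ are $4$-edge-connected ordinary vertices of some $H_i$, then applying Property~\ref{definition:auxv:property3} twice (second phase, then first) transfers $4$-edge-connectivity back to $G$.
\end{itemize}

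The main obstacle is that Property~\ref{definition:auxv:property3} must hold for the second-phase graphs as well. The second phase runs on $(G_{[v]})^R$ with start vertex the central $2$-in vertex $s_v$, which is auxiliary rather than ordinary. I would argue that \Cref{lem:correctnessPhaseOne} and Lemma~3.6 of~\cite{GKPP:3ECC} use only that the input is a \lightgraph{2} and that $s$ is separated from the relevant ordinary vertices. The refined-cuts corollary, \Cref{cor:refined-cuts}, supplies the fact that the relevant $2$-out sets avoid $s_v$. Together these let the first-phase argument carry over verbatim to the second phase.
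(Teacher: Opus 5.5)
Your proposal is correct and matches the paper's argument, which is largely implicit. The corollary combines the preceding lemma (ordinary vertices of each $H_i$ are $3$-edge-connected) with \Cref{lem:correctnessPhaseOne} applied in both phases and with the connectivity preservation of gadget substitution (Lemma~3.6 of~\cite{GKPP:3ECC}); the structural bullets are read off from the construction.

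One point in your ``main obstacle'' paragraph is misattributed. Property~\ref{definition:auxv:property3} in the second phase follows from Lemma~3.6 of~\cite{GKPP:3ECC} alone, which does not depend on the start vertex, so the auxiliary start vertex $s_v$ is no obstacle there. \Cref{cor:refined-cuts} is needed only inside the separation lemma you already cite.
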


\subsubsection{Running time}

Now we bound the running time of our construction and the total size of the constructed auxiliary graphs. We only analyze the construction of the first phase, since the second phase executes the same construction on the reverse graph of each first-phase auxiliary graph.

\ignore{
\begin{lemma}
The total time to identify all \mpout{v} sets and maintain the evolving graph $G'$ during the first phase of the algorithm is $O(m+n)$, plus the time required by the DSU data structure. 
\end{lemma}
\begin{proof}
Let $A(v)$ be the set of $2$-level in-auxiliary vertices in $M'(v)$.
From the definition of the $2$-connectivity light gadget, we can verify that during the first phase, the following invariant is maintained. 
The in-volume $\vin[,G']{M'(v)}$ of any minimal \incomp{2} $M'(v)$ that is identified in the evolving graph $G'$, is proportional to $|A(v)|+\sum_{u \in [v]}{in(u)}$, where the sum is taken for all ordinary vertices $u \in [v]$, which are the ordinary vertices that are visited for the first time while searching for $M'(v)$. 

To justify this claim, note that every auxiliary vertex $x$ with $in(x)=1$ has an entering edge $(z,x)$, where $z$ is either ordinary, in which case $z \in [v]$, or $2$-level in-auxiliary, in which case $z \in A(v)$. Moreover, for any $M'(u) \subset M'(v)$, the algorithm identified $M'(u)$ before the search for $M'(v)$ and replaced $M'(u)$ with \ingadget{M'(u)}. Each vertex in \ingadget{M'(u)} is reached by $O(1)$ edges in \ingadget{M'(u)}.

Now, we observe that $\sum_{[v]}{|A(v)|} = O(n)$. This is because all the $2$-level in-auxiliary vertices in $M'(v)$, together with the ordinary vertices in $[v]$, are contracted into a $2$-level in-auxiliary vertex $s'$ that becomes the central vertex of \ingadget{M'(v)}.

Finally, since every ordinary and $2$-level in-auxiliary vertex in $[v]$ is processed once, and we have at most $n$ such sets $[v]$, the total time to identify all $M'(v)$ sets is bounded by $O(n+\sum_{u, in(u)>1}{in(u)}) = O(n+m)$.
Recall that for the construction of \gadgetsube{M'(v)} we do not apply pruning on \ingadget{M'(v)}, since we cannot afford to spend $\vout[,G']{M'(v)}$ time for each identified \incomp{2} $M'(v)$.
\end{proof}
}

\begin{lemma}
\label{lemma:first-phase-evolving}
The total time to identify all $\mpoutM{v}$ sets and maintain the evolving graph $G'$ during the first phase of the algorithm is $O(m+n)$, plus the time to perform $O(m+n)$ operations on the DSU data structure. 
\end{lemma}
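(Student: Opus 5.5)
The argument mirrors \Cref{lemma:first-phase-evolving-k}. Here, however, the local search is the deterministic \Cref{algorithm:localsearch} specialized to $2$-out sets, whose cost is $O(\Delta)$ with constant factors since $k=2$. The plan is to charge the cost of finding each $S=\mpoutM{v}$ to $\mathit{vol}_{G'}(S)$, the volume of $S$ in the current evolving graph $G'$. I would then show that these volumes, summed over all sets $[v]$ processed in the first phase, are $O(m+n)$.

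\textbf{Step 1: cost of a single search.} Run the local search with exponentially increasing $\Delta=2^i$ and stop at the first $i$ that returns $\mpoutM{v}$. By \Cref{proposition:local_search} (with $k+1=2$), every unsuccessful call satisfies $\Delta<\mathit{vol}_{G'}(S)$. The geometric sum therefore gives $O(\mathit{vol}_{G'}(S))$ for finding $S$. Next I would add the update cost: \Cref{lemma:gadget-replace-time-2} handles the evolving graph, and \Cref{lemma:gadget-replace-time-three} handles $G_{[v]}$. Each is $O(\mathit{vol}_{G'}(S))$ plus DSU operations, plus an $O(m)$ global term for merging duplicate edge copies.

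\textbf{Step 2: charging volume (the main obstacle).} I need $\sum_{[v]}\mathit{vol}_{G'}(\mpoutM{v})=O(m+n)$. The idea is that, once $S$ is found, the gadget substitution of $S$ removes every edge with both endpoints in $S$. The substitution replaces $S$ by the central vertex $s'$ with out-degree $2$, together with $O(1)$ gadget edges per surviving out-degree-$1$ auxiliary vertex of $S$. Each original edge of $G$ whose tail lies in some processed $S$ is thus destroyed and never charged again. The edges of a later $S$ consist of original edges, which are charged once, and gadget edges created by earlier substitutions, which are charged to those substitutions.

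So the total volume is at most $m$ plus the number of gadget edges ever created. Gadget edges are of two kinds. The two out-edges of each central vertex contribute $O(n)$ in total, since there are at most $n$ substitutions. The edges $(s',x_i)$ and $(z,s')$ attached to out-degree-$1$ auxiliary vertices are the delicate part. For these I would argue that each such auxiliary vertex $z$ receives a new gadget edge only when its containing set is substituted, and that $z$ either survives with in-degree charged to an original incoming edge from $\bar S$ or is absorbed. This bounds their number by $O(m+n)$.

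I expect Step 2 to be the hardest. The risk is that an out-degree-$1$ auxiliary vertex $z$ survives across many nested substitutions and is recharged each time. To handle this, I would show that after a substitution such a $z$ keeps its single out-edge into $s'$. At the next enclosing substitution, $z$ then either lies inside the new $S$, in which case its edge is deleted, or it is an entry vertex counted against a real edge from $\bar S$ that is itself consumed.

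\textbf{Step 3: DSU operations.} I would count the DSU operations: each vertex is united into a representative at most once per substitution that absorbs it, and each traversed edge incurs one \emph{find}. This gives $O(m+n)$ operations overall and completes the bound.
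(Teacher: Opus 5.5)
\textbf{Gap in Step 2.} Your overall structure is sound: doubling $\Delta$ charges each search to $\mathit{vol}_{G'}(\mu'(v))$, and you then bound the sum of these volumes. The problem is the step you yourself flag as delicate, which is not established as written.

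Charging a surviving out-degree-$1$ auxiliary vertex $z$ to ``an incoming edge from $\bar S$'' does not bound anything. That edge is untouched by the substitution of $S$. If $z$ lies in a nested sequence of substituted sets $S_1, S_2, \dots$, and keeps the same in-neighbour $w$ outside each of them, then the single edge $(w,z)$ pays for every survival. Yet each survival recharges $z$'s out-edge and creates a fresh edge $(z,s')$. Your phrase ``a real edge from $\bar S$ that is itself consumed'' is false for the current $S$: edges entering $S$ are exactly the ones the substitution keeps.

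\textbf{Missing ingredient.} By minimality, every vertex of $S=\mu'(v)$ is reachable from $v$ in $G'[S]$. So a survivor $z\neq v$ has an in-edge whose tail lies inside $S$, and that edge is deleted (it belongs to $L(S)$). Charge the survival to that deleted edge; distinct survivals then get distinct edges. In-edges of out-degree-$1$ vertices are never created, except $(s',x_i)$ for boundary points, at most two per substitution. Hence the number of survivals is at most $m+O(n)$.

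\textbf{Minor slip.} Your claim that original edges are charged once overlooks the cut edge $(x_i,y_i)$ of an out-degree-$1$ boundary point $x_i$. That edge keeps its tail and can be charged again in a later set. This is harmless, since there are at most two such edges per substitution, but it must be counted.

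\textbf{How the paper avoids this.} It uses Lemma~6.10 of GKPP:
\[
\mathit{vol}(\mu'(v)) \le 2\sum_{u\in\mu'(v),\ \mathit{out}(u)>1}\mathit{out}(u).
\]
This holds because every out-degree-$1$ vertex of $\mu'(v)$ is reachable from $v$ inside the set, and a 2CLG has no two consecutive out-degree-$1$ vertices. So each such vertex has an in-neighbour in the set of out-degree at least $2$.

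Every vertex of out-degree at least $2$ in $S$ is merged into the central vertex $s'$, which has out-degree $2$. Hence each such vertex is charged exactly once, and the new central vertices add only $2n$ in total. Surviving out-degree-$1$ vertices never need to be tracked. That is the cleaner route to $O(m+n)$.
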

\begin{proof}
Directly by our algorithm and the linear running time of our local search, we have that the time to find an $\mpoutM{v}$ is $O(\sum_{i=0}^{\lceil \log_2{vol(\mpoutM{v})}\rceil} 2^i) = O(vol(\mpoutM{v}))$.
Furthermore, $vol(\mpoutM{v}) \le 2 \sum_{u\in \mpoutM{v}, out(u)>1} out(u)$, by Lemma~6.10 of~\cite{GKPP:3ECC}; intuitively, this means we can ignore vertices with only one outgoing edge (because we anyway account for an incoming edge to them, as there cannot be two consecutive vertices with out-degree $1$ in a 2CLG).
The gadget replacement operation removes all vertices with at least $2$ out-edges, and introduces a new vertex with $2$ outgoing edges; as we perform at most $n$ gadget replacements, the extra edges added this way are at most $2n$.

We conclude that the running time to identify all $\mpoutM{v}$ sets is at most a constant times the sum of out-degrees of all vertices of $G$, plus $O(n)$, which gives the desired $O(m+n)$.
The lemma then directly follows from \Cref{lemma:gadget-replace-time-2}.
\end{proof}

\begin{lemma}
\label{lemma:first-phase-auxiliary}
The total size of all constructed graphs during the first phase of the algorithm is $O(m+n)$.
The total time to construct all auxiliary graphs $G_{[v]} = \gadgetsubeM{V' \setminus \mpoutM{v}}$ during the first phase of the algorithm is $\Ot(m+n)$. 
\end{lemma}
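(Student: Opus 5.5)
The plan is to mirror the argument of \Cref{lemma:first-phase-auxiliary-k}, replacing contraction by the gadget substitution of \cite{GKPP:3ECC}. For each set $[v]$ with $s\notin[v]$, the graph $G_{[v]}=\gadgetsubeM{V'\setminus\mpoutM{v}}$ is built explicitly. By \Cref{lemma:gadget-replace-time-three}, this construction takes $O(\mathit{vol}(\mpoutM{v}))$ time, plus an additive $O(m)$ term shared across the whole first phase. That additive term pays for merging parallel copies of edges that leave a vertex of in-degree $1$, since each merge permanently deletes an edge of the evolving graph. Because every graph is written out explicitly, its size is bounded by its construction time. So the first two statements reduce to bounding $\sum_{[v]}\mathit{vol}_{G'}(\mpoutM{v})$, where each volume is measured in the evolving graph at the moment $\mpoutM{v}$ is found.

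For the volume sum, the plan is to reuse the charging argument from \Cref{lemma:first-phase-evolving}. By Lemma~6.10 of \cite{GKPP:3ECC}, $\mathit{vol}(\mpoutM{v})\le 2\sum_{u\in\mpoutM{v},\,\mathit{out}(u)>1}\mathit{out}(u)$. Every vertex of out-degree at least $2$ in $\mpoutM{v}$ is then merged into the new central $2$-out vertex $s'$ of \outgadget{\mpoutM{v}}. Hence the edges charged are removed from the evolving graph, and only the $2$ outgoing edges of $s'$ are reintroduced. Each original edge is therefore charged $O(1)$ times. Each replacement adds $O(1)$ new edges and vertices, and there are at most $n$ replacements (one per $[v]$). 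This gives $\sum_{[v]}\mathit{vol}(\mpoutM{v})=O(m+n)$.

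Vertices need separate counting. Each $G_{[v]}$ contains copies of the vertices of $\mpoutM{v}$, which number at most $\mathit{vol}(\mpoutM{v})+O(1)$ since every vertex has an outgoing edge. It also contains $O(1)$ gadget vertices ($s'$ and at most two boundary copies), plus one auxiliary copy per vertex $u\in \mpoutM{v}$ created by the Reduction Rule. Together this is $O(\mathit{vol}(\mpoutM{v}))$ per graph, so the total stays $O(m+n)$. Adding the final graph $G_{[s]}=G'$, whose size is at most $m+O(n)$ by the same bookkeeping, gives total size $O(m+n)$.

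For the running time, the identification of the sets, the maintenance of $G'$, and the DSU operations are covered by \Cref{lemma:first-phase-evolving} and \Cref{lemma:gadget-replace-time-2}. The construction of all $G_{[v]}$ costs $O(m+n)$ by the argument above. With a standard DSU such as \cite{dsu:tarjan} or \cite{setunion:tvl}, the total is $\Ot(m+n)$.

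The main obstacle I expect is the charging step. I must make sure that vertices of out-degree $1$ which survive a replacement, namely boundary auxiliary vertices and auxiliary vertices with an entering edge from outside, are not re-charged without bound in later searches. The plan is to rely on the 2CLG property that two out-degree-$1$ vertices are never consecutive, so each such vertex's contribution is paid by an incoming edge from a vertex of out-degree at least $2$, exactly as Lemma~6.10 of \cite{GKPP:3ECC} asserts.
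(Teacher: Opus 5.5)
Your proposal is correct and follows essentially the same route as the paper. Graph size is bounded by the explicit construction time of \Cref{lemma:gadget-replace-time-three}, and $\sum_{[v]}\mathit{vol}(\mu'(v))=O(m+n)$ follows from the charging argument of \Cref{lemma:first-phase-evolving}: Lemma~6.10 of \cite{GKPP:3ECC}, plus the fact that out-degree-$\ge 2$ vertices are absorbed into the central vertex $s'$; the DSU then accounts for the $\Ot(m+n)$ bound. Your explicit vertex count and your inclusion of $G_{[s]}$ are small clarifications that the paper leaves implicit.
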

\begin{proof}
By \Cref{lemma:gadget-replace-time-three}, when we identify the next minimal \outcomp{2} \mpout{v} in the evolving graph $G'$, we create $G_{[v]} = \gadgetsubeM{V' \setminus \mpoutM{v}}$ in $O(\mathit{vol}(\mpoutM{v}))$ time (plus $O(m)$ time in total, which is negligible).
So, the total size of all auxiliary graphs $G_{[v]}$ is bounded by the time to construct them, as they are all explicitly constructed.
This is $O(\sum_{[v]}{\mathit{vol}(\mpoutM{v})}) = O(m+n)$, exactly as argued in \Cref{lemma:first-phase-evolving}.

The running time is dominated by \Cref{lemma:first-phase-evolving}.
It is $\Ot(m+n)$, using e.g. the DSU data structure of~\cite{dsu:tarjan} or of~\cite{setunion:tvl}.
\end{proof}

\ignore{
If for the DSU data structure we use a structure that supports each $\mathit{find}$ operation in worst-case $O(1)$ time and
any sequence of $\mathit{unite}$ operations in total time $O(n \log n)$~\cite{setunion:tvl}, then we obtain an algorithm that constructs the \ecc{3} decomposition in $O(m+n\log{n})$ total time.
Alternatively, we implement $\mathit{unite}$ and $\mathit{find}$ using compressed trees with appropriate heuristics \cite{dsu:tarjan}, and the total time for the DSU operations is $O(m \alpha(m,n))$. 
}

%% file: localSearchDeterministic.tex

\section{Local search for $M$-sets}
\label{sec:proposition:local_search}

\subsection{Deterministic local search}

In this section, we prove \Cref{proposition:local_search}, which provides a deterministic local search algorithm.
We note that it is a straightforward modification of the local search algorithm in~\cite{chechik2017faster}.
The main difference here is that we want our local search to avoid a specific vertex $s$\footnote{One could prove the same result using a black-box reduction, by modifying the original graph so that vertex $s$ has very large out-degree. This would force the black-box local search algorithm to avoid $s$. For our purposes, it was sufficient to modify the existing algorithm from~\cite{chechik2017faster}, which also allowed us to only use the original graph.}.

\begin{lemma}
\label{lemma:reverse}
Let $S$ be a set of vertices, let $P$ be a path that starts from a vertex in $S$, and let $G'$ be the graph that is formed by reversing the direction of the edges of $P$. Then we have the following:
\begin{itemize}
\item{If the end of $P$ is in $S$, then $\mathit{vol}_{G'}(S)=\mathit{vol}_{G}(S)$ and $\mathit{out}_{G'}(S)=\mathit{out}_{G}(S)$.}
\item{If the end of $P$ is outside of $S$, then $\mathit{vol}_{G'}(S)=\mathit{vol}_{G}(S)-1$ and $\mathit{out}_{G'}(S)=\mathit{out}_{G}(S)-1$.}
\end{itemize}
\end{lemma}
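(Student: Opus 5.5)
The plan is a direct edge-counting argument: classify each edge of $P$ according to where its endpoints lie relative to $S$, and track how reversing it changes $\mathit{out}(S)$ and $\mathit{vol}(S)$.

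Recall that $\mathit{vol}(S)$ counts edges with tail in $S$, and $\mathit{out}(S)$ counts edges with tail in $S$ and head outside $S$. Reversing an edge changes these counts only if at least one endpoint lies in $S$. Treat $P$ as a walk $p_0,p_1,\dots,p_\ell$ with $p_0\in S$, and for each edge $(p_{j-1},p_j)$ classify it by the membership of its endpoints:
\begin{itemize}
\item \emph{inside--inside}: reversal contributes no change to either count;
\item \emph{outside--outside}: no change to either count;
\item \emph{leaving} ($p_{j-1}\in S$, $p_j\notin S$): becomes an entering edge, so it contributes $-1$ to both $\mathit{out}(S)$ and $\mathit{vol}(S)$;
\item \emph{entering} ($p_{j-1}\notin S$, $p_j\in S$): becomes a leaving edge, so it contributes $+1$ to both $\mathit{out}(S)$ and $\mathit{vol}(S)$.
\end{itemize}
Thus both quantities change by exactly $(\#\text{entering})-(\#\text{leaving})$ edges of $P$.

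Next, count the leaving and entering edges. The sequence of memberships $[p_j\in S]$ starts at ``in'', and its transitions alternate between leaving and entering. If $p_\ell\in S$, the sequence ends in ``in'', so the counts are equal and the net change is $0$. If $p_\ell\notin S$, there is exactly one more leaving edge than entering edges, so the net change is $-1$ for both quantities. This gives both bullets of the statement.

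No real obstacle is expected. The only care needed is to state explicitly that reversal preserves parallel-edge multiplicities, and that each edge of $P$ is reversed exactly once. If $P$ is not simple, interpret it as a sequence of distinct edges; the same accounting still holds edge by edge.
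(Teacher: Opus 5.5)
Your proof is correct and is essentially the paper's argument: the paper inducts on the length of $P$, peeling off the last edge and applying the same per-edge case analysis (edge inside $S$, entering $S$, leaving $S$). You instead unroll that induction into a single count of entering versus leaving edges via the alternation of crossings, which also covers explicitly the case of a last edge ending outside $S$ that the paper leaves as ``treated similarly.''
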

\begin{proof}
We prove both statements together using induction on the length of $P$ (defined as the number of edges that is uses). For the basis of our induction, let us assume that $P$ consists of a single edge $(x,y)$. Thus, we have $x\in S$, and we distinguish two cases, depending on whether $y\in S$ and $y\notin S$. In either case, $G'$ is formed by simply replacing $(x,y)$ with $(y,x)$. Now, if $y\in S$, then it is obvious that the volume of $S$ and its number of outgoing edges have not changed in $G'$. Similarly, if $y\notin S$, then it is also obvious that the volume of $S$, and its number of outgoing edges, have been reduced by one in $G'$.

Now let us assume that both statements are true for all paths that start from $S$ and have length $l$, for some $l\geq 1$. Now consider a path $P$ with length $l+1$ that starts from $S$. Let $(x,y)$ be the last edge used by $P$, let $P'$ be the initial segment of length $l$ of $P$, let $G'$ be the graph that is formed by reversing the edges of $P'$, and let $G''$ be the graph that is formed by reversing the edges of $P$. 

First, let us assume that $y\in S$. If $x\in S$, then, by the inductive hypothesis, we have $\mathit{vol}_{G'}(S)=\mathit{vol}_{G}(S)$ and $\mathit{out}_{G'}(S)=\mathit{out}_{G}(S)$. Since $G''$ is formed by reversing $(x,y)$ in $G'$, it is then easy to see that $\mathit{vol}_{G''}(S)=\mathit{vol}_{G'}(S)$ and $\mathit{out}_{G''}(S)=\mathit{out}_{G'}(S)$. Therefore, we have $\mathit{vol}_{G''}(S)=\mathit{vol}_{G}(S)$ and $\mathit{out}_{G''}(S)=\mathit{out}_{G}(S)$. If $x\notin S$, then, by the inductive hypothesis, we have $\mathit{vol}_{G'}(S)=\mathit{vol}_{G}(S)-1$ and $\mathit{out}_{G'}(S)=\mathit{out}_{G}(S)-1$. Then, since $G''$ is formed by reversing $(x,y)$ in $G'$, it is easy to see that $\mathit{vol}_{G''}(S)=\mathit{vol}_{G'}(S)+1$ and $\mathit{out}_{G''}(S)=\mathit{out}_{G'}(S)+1$. Therefore, we have again $\mathit{vol}_{G''}(S)=\mathit{vol}_{G}(S)$ and $\mathit{out}_{G''}(S)=\mathit{out}_{G}(S)$.

The case where $y\notin S$ is treated similarly.
\end{proof}

\begin{corollary}
\label{corollary:edge-conn-after-reverse}
Let $v$ and $s$ be two distinct vertices, let $P$ be a path that starts from $v$, and let $G'$ be the graph that is formed by reversing the direction of the edges of $P$. Then, $\lambda_{G'}(v,s)\in\{\lambda_{G}(v,s),\lambda_{G}(v,s)-1\}$.
\end{corollary}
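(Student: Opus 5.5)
The plan is to derive \Cref{corollary:edge-conn-after-reverse} from \Cref{lemma:reverse}, applied to every $(v,s)$-cut at once. Every $(v,s)$-cut $S$ contains $v$, so $P$ starts inside $S$. By \Cref{lemma:reverse}, $\mathit{out}_{G'}(S)$ equals either $\mathit{out}_G(S)$ or $\mathit{out}_G(S)-1$, according to whether the last vertex of $P$ lies in $S$ or outside it. We then compare the minima of $\mathit{out}(\cdot)$ over the same family of sets, namely the vertex sets containing $v$ but not $s$, in $G$ and in $G'$. This family is the same in both graphs because reversing edges does not change the vertex set.

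For the lower bound, take a $(v,s)$-mincut $S'$ in $G'$. The same set $S'$ is a $(v,s)$-cut in $G$, and $\mathit{out}_{G'}(S')\ge\mathit{out}_G(S')-1\ge\lambda_G(v,s)-1$. Hence $\lambda_{G'}(v,s)\ge\lambda_G(v,s)-1$.

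For the upper bound, take a $(v,s)$-mincut $S$ in $G$. Since $\mathit{out}_{G'}(S)\le\mathit{out}_G(S)=\lambda_G(v,s)$, we get $\lambda_{G'}(v,s)\le\lambda_G(v,s)$.

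Both bounds are integers, so $\lambda_{G'}(v,s)\in\{\lambda_G(v,s),\lambda_G(v,s)-1\}$. The only point that needs care is that \Cref{lemma:reverse} requires the path to start in $S$. This holds automatically because $v\in S$ for every cut in the family. No step here is a real obstacle: the whole argument is a min-over-the-same-family comparison with per-set changes of at most one.
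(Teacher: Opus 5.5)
Your proposal is correct and follows essentially the same argument as the paper: the upper bound comes from applying \Cref{lemma:reverse} to a $(v,s)$-mincut of $G$, and the lower bound from applying it to a $(v,s)$-mincut of $G'$ viewed as a cut in $G$. The only difference is presentational: the paper derives the lower bound by contradiction, assuming a $k'$-out set in $G'$ with $k'<k-1$, whereas you argue it directly.
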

\begin{proof}
Let $k=\lambda_{G}(v,s)$, and let $S$ be a $k$-cut that separates $v$ and $s$. By \Cref{lemma:reverse}, we have $\mathit{out}_{G'}(S)\leq\mathit{out}_{G}(S)$. Thus, the edge-connectivity between $v$ and $s$ in $G'$ cannot be higher than that in $G$. Now let us suppose, for the sake of contradiction, that there is a $k'$-out set $S'$ in $G'$ that separates $v$ and $s$, with $k'<k-1$. Then, $S'$ is a cut that separates $v$ and $s$ in $G$, and thus by \Cref{lemma:reverse} we have $\mathit{out}_{G}(S')-1\leq\mathit{out}_{G'}(S')$. Since $k'=\mathit{out}_{G'}(S')$, this implies that $\mathit{out}_{G}(S')-1\leq k'$. Therefore, since $k'<k-1$, this implies that $\mathit{out}_{G}(S')<k$, which contradicts $\lambda_{G}(v,s)=k$. This shows that the edge-connectivity between $v$ and $s$ in $G'$ either has stayed the same as in $G$, or it has been reduced by one. 
\end{proof}

\begin{corollary}
\label{corollary:minimum-set-after-reverse}
Let $v$ and $s$ be two distinct vertices of $G$, let $k=\lambda_{G}(v,s)>0$, let $S$ be the (inclusion-wise) minimum $k$-out set that separates $v$ and $s$, and let $P$ be a path that starts from $v$ and ends outside of $S$. Let $G'$ be the graph that is formed from $G$ by reversing the direction of the edges of $P$. Then, $S$ is the (inclusion-wise) minimum $(k-1)$-out set that separates $v$ and $s$ in $G'$.
\end{corollary}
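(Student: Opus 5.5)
We need to prove Corollary~\ref{corollary:minimum-set-after-reverse}: after reversing a path $P$ from $v$ ending outside $S$, the set $S$ becomes the minimum $(k-1)$-out set separating $v$ and $s$ in $G'$.

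The plan has three steps. First, we establish that $\lambda_{G'}(v,s)=k-1$ and that $S$ attains it. By \Cref{lemma:reverse}, since $P$ starts at $v\in S$ and ends outside $S$, we get $\mathit{out}_{G'}(S)=\mathit{out}_G(S)-1=k-1$. Since $S$ still contains $v$ and not $s$, it separates $v$ and $s$ in $G'$, so $\lambda_{G'}(v,s)\le k-1$. By \Cref{corollary:edge-conn-after-reverse}, $\lambda_{G'}(v,s)\ge k-1$. Hence $\lambda_{G'}(v,s)=k-1$, and $S$ is a $(v,s)$-mincut in $G'$. By \Cref{corollary:minimal-exists}, an inclusion-wise minimum $(k-1)$-out set $S'$ separating $v$ and $s$ exists in $G'$, and $S'\subseteq S$.

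Second, we show $S'=S$. Since $S'$ separates $v$ and $s$ in $G$ as well (the vertex sets are unchanged), we have $\mathit{out}_G(S')\ge k$. The path $P$ starts at $v\in S'$, so \Cref{lemma:reverse} applies to $S'$ and gives two cases. If $P$ ends inside $S'$, then $\mathit{out}_{G}(S')=\mathit{out}_{G'}(S')=k-1<k$, which is a contradiction. Hence $P$ ends outside $S'$, and $\mathit{out}_G(S')=\mathit{out}_{G'}(S')+1=k$. So $S'$ is a $k$-out set in $G$ separating $v$ and $s$. The minimality of $S$ in $G$ then gives $S\subseteq S'$, and combined with $S'\subseteq S$ we conclude $S=S'$.

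The only subtle point is the use of \Cref{lemma:reverse} for $S'$ rather than $S$. This requires only that $P$ starts in $S'$, which holds because $v\in S'$. The case analysis on where $P$ ends relative to $S'$ is routine, so I expect no real obstacle.
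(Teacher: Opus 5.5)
Your proof is correct and follows essentially the same route as the paper: you show $\mathit{out}_{G'}(S)=k-1=\lambda_{G'}(v,s)$ via \Cref{lemma:reverse} and \Cref{corollary:edge-conn-after-reverse}, and then note that any $(k-1)$-out set $S'\subseteq S$ separating $v$ and $s$ in $G'$ would, by \Cref{lemma:reverse}, be a $k$-out set in $G$, which forces $S'=S$. Your case split on where $P$ ends relative to $S'$ is harmless but unnecessary, because $S'\subseteq S$ and $P$ ending outside $S$ already imply that $P$ ends outside $S'$; your explicit appeal to the existence of a minimum set is a small improvement, since the paper only rules out proper subsets of $S$.
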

\begin{proof}
By \Cref{lemma:reverse}, we have $\mathit{out}_{G'}(S)=k-1$. By \Cref{corollary:edge-conn-after-reverse}, we have $\lambda_{G'}(v,s)=k-1$. Now let us suppose, for the sake of contradiction, that there is a proper subset $S'$ of $S$ that is a $(k-1)$-out set in $G'$ that separates $v$ and $s$. Then, we have that $P$ (in $G$) ends outside of $S'$. Thus, \Cref{lemma:reverse} implies that $\mathit{out}_{G'}(S')=\mathit{out}_{G}(S')-1$. But since $S'$ is a $(k-1)$-out set in $G'$, this implies that $S'$ is a $k$-out set in $G$, and thus we get a contradiction to the minimality of $S$. 
\end{proof}

\begin{lemma}
\label{lemma:reversing_paths}
Let $v$ and $s$ be two distinct vertices with $\lambda(v,s)=k$, and let $S$ be the minimum $k$-out set in $G$ that separates $v$ and $s$. Let $P_1,G_1,P_2,G_2,\dots,P_k,G_k$ be a sequence of paths and graphs with the following properties. $P_1$ is a path that starts from $v$, and $G_1$ is the graph that is formed by reversing the direction of the edges of $P_1$. Then, for every $i\in\{2,\dots,k\}$, $P_i$ is a path in $G_{i-1}$ that starts from $v$, and $G_i$ is the graph that is formed from $G_{i-1}$ by reversing the direction of the edges of $P_i$. Then we have the following:
\begin{itemize}
\item{If every path $P_i$, for $i\in\{1,\dots,k\}$, ends outside of $S$, then $S=\mathit{reach}_{G_k}(v)$.}
\item{If at least one path among $P_1,\dots,P_k$ ends in $S$, then $v$ can reach $s$ in $G_k$.}
\end{itemize}
\end{lemma}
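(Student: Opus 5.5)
For the first item, I will show by induction on $i\in\{0,\dots,k\}$ (with $G_0=G$) that $\mathit{out}_{G_i}(S)=k-i$ and that $S$ is the inclusion-wise minimum $(k-i)$-out set separating $v$ and $s$ in $G_i$, with $\lambda_{G_i}(v,s)=k-i$. The inductive step is exactly \Cref{corollary:minimum-set-after-reverse} applied to $G_{i-1}$ and the path $P_i$. This is legitimate because $P_i$ starts at $v$ and ends outside $S$, and the corollary only needs $\lambda_{G_{i-1}}(v,s)=k-i+1>0$.

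At $i=k$ we get $\mathit{out}_{G_k}(S)=0$, so $\mathit{reach}_{G_k}(v)\subseteq S$. For the reverse inclusion, let $R=\mathit{reach}_{G_k}(v)$. Then $R$ is a $0$-out set of $G_k$ containing $v$. Since $R\subseteq S$, it does not contain $s$. By minimality of $S$ among $0$-out sets separating $v$ and $s$ in $G_k$, we get $S\subseteq R$. Hence $S=R$.

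For the second item, let $j$ be the smallest index such that $P_j$ ends in $S$. As in the first item, for $i<j$ the corollary gives $\mathit{out}_{G_i}(S)=k-i$ and $\lambda_{G_i}(v,s)=k-i$. Reversing $P_j$ keeps $\mathit{out}(S)=k-j+1$, by \Cref{lemma:reverse}.

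The plan is to track $\lambda_{G_i}(v,s)$ and show it cannot reach $0$. The key claim is that $\lambda_{G_j}(v,s)=k-j+1$, i.e. reversing a path that ends inside $S$ does not decrease the connectivity. Suppose some $X$ separating $v$ and $s$ had $\mathit{out}_{G_j}(X)=k-j$. By \Cref{lemma:reverse}, $P_j$ must end outside $X$, and $X$ would be a $(k-j+1)$-out set in $G_{j-1}$. So $X$ is a min cut in $G_{j-1}$, and by minimality of $S$ there, $S\subseteq X$. But then $P_j$ ends in $S\subseteq X$, a contradiction.

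After $G_j$, each further reversal lowers $\lambda(v,s)$ by at most one, by \Cref{corollary:edge-conn-after-reverse}. The remaining $k-j$ reversals therefore leave $\lambda_{G_k}(v,s)\ge (k-j+1)-(k-j)=1$. Hence $v$ reaches $s$ in $G_k$.

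I expect the main obstacle to be this no-decrease claim at step $j$. It needs the minimality of $S$ in $G_{j-1}$, which is why I track $S$ as the minimum set inductively; the counting afterwards is routine.
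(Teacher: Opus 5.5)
Your proof is correct. The first item is essentially the paper's argument. The paper works directly in $G$: if $\mathit{reach}_{G_k}(v)=S'\subsetneq S$, then $\mathit{out}_G(S')>k$, and each of the $k$ reversals lowers it by exactly one. You run the same one-step reasoning through \Cref{corollary:minimum-set-after-reverse} by induction.

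For the second item your organization genuinely differs. The paper makes one global count in the original graph. For $R=\mathit{reach}_{G_k}(v)$, \Cref{lemma:reverse} gives $0=\mathit{out}_{G_k}(R)\ge\mathit{out}_G(R)-k$. So if $s\notin R$, then $R$ is a $(v,s)$-mincut of $G$, hence $S\subseteq R$, hence some $P_i$ ends inside $R$. Then at most $k-1$ reversals can lower $\mathit{out}(R)$, giving $\mathit{out}_{G_k}(R)\ge 1$, a contradiction.

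You instead track $\lambda_{G_i}(v,s)$ along the sequence. It drops by exactly one for each $i<j$. It stays put at the first path $P_j$ ending in $S$; your no-decrease claim is the paper's minimality argument localized to $G_{j-1}$. Afterwards it drops by at most one per step, by \Cref{corollary:edge-conn-after-reverse}.

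The paper's route is shorter and uses only \Cref{lemma:reverse} and the minimality of $S$ in $G$ itself. Yours needs the extra invariant that $S$ remains the minimum mincut in $G_1,\dots,G_{j-1}$. In exchange it makes the mechanism explicit: each reversal before $P_j$ consumes exactly one unit of $(v,s)$-connectivity and $P_j$ consumes none, so $k$ reversals cannot exhaust it.

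One small point. In the no-decrease claim you only rule out $X$ with $\mathit{out}_{G_j}(X)=k-j$. That restriction is justified by \Cref{corollary:edge-conn-after-reverse}, or simply by noting that your argument works verbatim for any $X$ with $\mathit{out}_{G_j}(X)\le k-j$.
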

\begin{proof}
First, let us assume that every path $P_i$, for $i\in\{1,\dots,k\}$, ends outside of $S$. Then, a repeated application of \Cref{lemma:reverse} implies that $S$ is a $0$-out set in $G_k$. Thus, a graph exploration starting at $v$ in $G_k$ will reach a set of vertices $S'\subseteq S$. Now let us suppose, for the sake of contradiction, that $S'\neq S$. Therefore, we have $S'\subset S$. Then, since $S'$ is a cut that separates $v$ and $s$, and $\lambda(v,s)=k$, and $S$ is the minimum $k$-out set in $G$ that separates $v$ and $s$, we have $\mathit{out}_{G}(S')>k$. Notice that every path $P_i$, for $i\in\{1,\dots,k\}$, ends outside of $S'$. Then, a repeated application of \Cref{lemma:reverse} implies that $S'$ is a $(\geq 1)$-out set in $G_k$. But this contradicts the fact that $S'$ is the reachability set of a vertex of $G_k$. This shows that $S=\mathit{reach}_{G_k}(v)$.

Now let us assume that at least one path among $P_1,\dots,P_k$ ends in $S$, and let $R=\mathit{reach}_{G_k}(v)$. Since $R$ is a reachability set of a vertex of $G_k$, we have $\mathit{out}_{G_k}(R)=0$. Since every path $P_i$, for $i\in\{1,\dots,k\}$, starts from $v$, a repeated application of \Cref{lemma:reverse} implies that $\mathit{out}_{G_k}(R)\geq\mathit{out}_{G}(R)-k$. Thus, since $\mathit{out}_{G_k}(R)=0$, we infer that $\mathit{out}_{G}(R)\leq k$. 

Now let us suppose, for the sake of contradiction, that $s\notin R$. This implies that $R$ is a $(v,s)$-cut. Thus, since $\lambda(v,s)=k$, we have $\mathit{out}_{G}(R)\geq k$. Therefore, $\mathit{out}_{G}(R)\leq k$ implies that $\mathit{out}_{G}(R)=k$. Then, since $S$ is the minimum $k$-cut that separates $v$ and $s$, we have $S\subseteq R$. Therefore, since at least one path among $P_1,\dots,P_k$ ends in $S$, we have that at least one path among $P_1,\dots,P_k$ ends in $R$. Then, a repeated application of \Cref{lemma:reverse} implies that $\mathit{out}_{G_k}(R)>0$, a contradiction. This shows that $v$ can reach $s$ in $G_k$.
\end{proof}

\begin{lemma}
\label{lemma:DFS}
Let $v$ and $s$ be two distinct vertices of $G$, and let $k\geq 1$ and $\Delta\geq 1$ be two integer parameters. There is an algorithm $\mathtt{FindOutPaths}(G,v,s,k,\Delta)$ (see \Cref{algorithm:DFS}) that runs in $O(k^2\Delta)$ time, and returns a collection of at most $2k$ paths that start from $v$, with the following guarantee:
\begin{itemize}
\item{If there is a $k$-out set $S$ with $\mathit{vol}(S)\leq\Delta$ that separates $v$ and $s$, then at least one of the paths returned by the algorithm ends outside of $S$.}
\end{itemize}
\end{lemma}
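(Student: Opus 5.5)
The algorithm will be a DFS-based exploration from $v$, modeled on the CHILP procedure \cite{chechik2017faster}, with one modification: reaching $s$ triggers an immediate return. We grow a DFS tree $T$ from $v$ and count the number of edges explored. Each time $\Delta$ new edges have been explored (so at most $k\Delta$ edges in total across $k$ rounds), we pick an explored edge uniformly—or, deterministically, a suitable set of edges—and record the tree path from $v$ to its tail, extended by that edge. The intended deterministic variant is the following. We run the DFS until either (a) we reach $s$, in which case we return the single tree path $v \to s$, or (b) the number of explored edges exceeds $k\Delta$, or (c) the DFS terminates. In case (b) we return tree paths to the heads of $2k$ specially chosen explored edges. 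In case (c) we return nothing, or the path to any vertex.

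\textbf{Key steps.} The correctness argument proceeds in three stages.

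First, if $s$ is reached, the returned path ends at $s\notin S$, so the guarantee holds.

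Second, suppose the DFS terminates without reaching $s$. The explored set is then a $0$-out set containing $v$ but not $s$. This contradicts strong connectivity, assuming $G$ is strongly connected or $\lambda(v,s)\ge 1$. If $\lambda(v,s)=0$ we may simply output trivial paths, since then no $k$-out set separating $v$ and $s$ with $k\ge1$ is relevant. This case needs care, but it is vacuous for our uses.

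Third, suppose more than $k\Delta$ edges are explored before hitting $s$. Only $\mathit{vol}(S)\le\Delta$ of the explored edges have their tail in $S$. Every explored edge with tail outside $S$ has its tail reached by a tree path that leaves $S$ through one of the $k$ outgoing edges of $S$. The DFS tree restricted to $S$ is connected from $v$, and the tree leaves $S$ through at most $k$ edges, because each tree edge leaving $S$ is an outgoing edge of $S$. Partition the explored edges into $2k$ consecutive blocks of $\Delta/2$ each (with $\lceil k\Delta \rceil$ total explored, so $2k$ blocks). At most $2$ blocks consist entirely of edges with tails in $S$, so some block contains an edge with tail outside $S$. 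Returning the tree path to the tail of the last edge of each block, for $2k$ blocks, therefore yields at least one path ending outside $S$.

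To make this cleaner, I would use the CHILP argument directly: choose the $2k$ endpoints as the tails of the edges explored at positions $j\cdot\Delta/2$ for $j=1,\dots,2k$. Since at most $\Delta$ explored edges have tails in $S$, and in DFS order the edges with tails in $S$ need not be contiguous, I instead argue by counting: fewer than $2k$ of the sampled positions can have tails in $S$, because at most $\Delta$ edges have tails in $S$ while the sample positions are spaced $\Delta/2$ apart. This spacing must be tuned, say by sampling $2k$ positions spread over $(2k+1)\Delta$ explored edges, so that $\Delta$ in-$S$ edges cannot cover all $2k$ of them. The required bound is that there are more sampled positions than $\mathit{vol}(S)$ can cover.

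\textbf{Running time.} The DFS explores $O(k\Delta)$ edges. Extracting $2k$ tree paths, each of length $O(k\Delta)$, costs $O(k^2\Delta)$, which matches the claimed bound.

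\textbf{Main obstacle.} The hardest step is the counting argument ensuring that some returned endpoint lies outside $S$. As a fallback, I would simply take all explored edges when $k\Delta$ is exceeded and argue that at most $\Delta$ of them have their tail in $S$. I would then choose endpoints as tails of explored edges at indices $\Delta+1, \dots$, selecting distinct tails so that at least one tail avoids $S$. The subtlety is that the tails must be genuinely outside $S$, not merely reached via different edges, because many distinct edges can share a tail in $S$. Choosing the edge at index exactly $\Delta+1$ suffices for a single path; the $2k$ paths allow slack for the volume measure, since DFS may revisit edges into $S$.
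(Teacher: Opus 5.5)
\textbf{There is a genuine gap.} Your handling of the case where the search meets $s$ and your $O(k^2\Delta)$ accounting are fine. The central step, though, is selecting at most $2k$ endpoints of which at least one is guaranteed to lie outside $S$. Neither the block argument, nor the counting argument, nor the fallback achieves this.

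Any rule that returns the tree path to the tail (or head) of the edge explored at a prescribed position fails. In DFS order, the explorations with tail in $S$ are not confined to a few contiguous stretches. Only \emph{traversals} of outgoing edges of $S$ are limited (to $k$). The search can re-enter $S$ through incoming edges and leave again by backtracking arbitrarily often.

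A concrete counterexample: take $k=1$ and $S=\{v,w_1,w_2\}$, whose only outgoing edge is $(v,u)$, with edges $(w_1,v)$ and $(w_2,v)$.
\begin{itemize}
\item Let the adjacency list of $u$ consist of edges to padding vertices $q\notin S$, each having only one or two parallel edges back to $u$.
\item Insert $(u,w_1)$ and $(u,w_2)$ at suitable places in that list, and put $(u,s)$ last.
\end{itemize}
Then $\mathit{vol}(S)=3\le\Delta$. Yet by choosing where $w_1,w_2$ appear, you can make the explorations of $(w_1,v)$ and $(w_2,v)$ land exactly on your two prescribed positions, wherever they lie in the first $(2k+1)\Delta$ explorations. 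The same holds for $(u,w_1)$ and $(u,w_2)$, whose heads are in $S$.

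So the claim that fewer than $2k$ sampled positions can have tails in $S$ is false as soon as $\Delta\ge 2k$. The block argument only shows that \emph{some} edge in some block has its tail outside $S$, not that the block's last edge does. The edge at index $\Delta+1$ can likewise have its tail in $S$.

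\textbf{The missing idea} is to choose endpoints from the DFS stack dynamics rather than from exploration positions. The paper pauses the DFS after the first $\Delta+1$ explorations, and then after each of $2k$ further blocks of $\Delta+1$ explorations. If $x$ is the vertex where a block starts, it records the tree path to the ancestor of $x$ nearest the root among those the DFS backtracked to during that block, or to $x$ itself if there was no such backtrack.

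The charging is over the $k$ outgoing edges of $S$. Each can be explored at most once and backtracked over at most once. The first block already explores one of them, because $\mathit{vol}(S)\le\Delta$. Hence at most $2k-1$ of the $2k$ later blocks contain such an event, so some block contains none. In that block:
\begin{itemize}
\item If $x\notin S$, the search can never backtrack to an ancestor in $S$, since that would mean backtracking over an outgoing tree edge of $S$.
\item If $x\in S$, the $\Delta+1$ explorations force the search to get outside $S$ without exploring an outgoing edge. This is possible only by backtracking to an ancestor of $x$ outside $S$, after which it cannot backtrack to an ancestor inside $S$.
\end{itemize}
Either way, the recorded endpoint lies outside $S$.

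Separately, your justification for the case where the search exhausts without meeting $s$ is not correct as stated. The fact that $v$ cannot reach $s$ does not rule out small $k$-out sets separating $v$ and $s$. However, the paper's proof tacitly makes the same assumption that $v$ reaches $s$, and this holds wherever the lemma is applied.
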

\begin{proof}
The algorithm works by performing a depth-first search (DFS) of the graph, starting at $v$, with budget $(2k+1)(\Delta+1)$ on the exploration of edges. That is, we will run a DFS starting at $v$, but we will stop once we have explored $(2k+1)(\Delta+1)$ entries of the adjacency lists. (However, we may stop even earlier, if we happen to meet $s$.)

The clever insight from \cite{chechik2017faster} is that we can pause the exploration every $\Delta+1$ steps, in order to pick a candidate vertex (and then continue the exploration as usual), so that, in the end, if there is a $k$-out set $S$ with $\mathit{vol}(S)\leq\Delta$ that separates $v$ and $s$, at least one of the vertices that we have gathered lies outside of $S$.

In order to describe the algorithm and prove its correctness, we have to make careful use of the concepts that are involved in a depth-first search. Let us assume that the adjacency list of every vertex $x$ consists of the vertices $y$ for which there exists an edge $(x,y)$ (where the same vertex is allowed to appear multiple times in an adjacency list, to account for the existence of parallel edges). Initially, all vertices are considered ``unvisited''. When the DFS starts, $v$ is marked as ``visited''. Then, every time we meet a vertex $x$ in an adjacency list, there are two possibilities: either the vertex is ``unvisited'', or ``visited''. If $x$ is ``unvisited'', then we mark it as ``visited'', and at this point we say that we ``discover'' $x$. However, if $x$ is already marked as ``visited'', then we simply say that we ``meet'' $x$ (i.e., when we say that we ``meet'' a vertex, we mean that we have already discovered it). We number the vertices according to the order of their discovery. Thus, $v$ is number $1$, the first vertex in the adjacency list of $v$ is number $2$, etc. Every time a new vertex $x\neq v$ is discovered, this is because we were traversing the adjacency list of a vertex $p$, and there we met $x$ for the first time (as an ``unvisited'' vertex). We call $p$ ``the parent'' of $x$, and we say that the edge $(p,x)$ is ``traversed''. Then, we maintain the edge $(p,x)$, and thus we build a directed tree $T$, which is a subgraph of $G$, and it is rooted at $v$. The tree $T$ is important, because it will be used in order to provide our desired collection of paths.

Due to the recursive nature of the DFS, there appears the concept of ``backtracking''. Specifically, each time the DFS has explored everything it could after the discovery of a vertex $x\neq v$ (and so it has reached the ``end-of-list'' entry of the adjacency list of $x$), it backtracks to its parent $p$, and the search continues from the next entry in the adjacency list of $p$ after (the first occurrence of) $x$. At this point, we say that the DFS has ``backtracked over the edge $(p,x)$''. 

Thus, at any point during the DFS, we have the following characterization of the edges of the graph: ``unvisited'', ``traversed'', ``explored'', ``backtracked over''. We say that an edge $(x,y)$ is explored, when we hit the entry $y$ in the adjacency list of $x$ that corresponds to $(x,y)$. ($y$ may be either visited or unvisited.) Thus, every exploration of an edge consumes one unit from our budget of $(2k+1)(\Delta+1)$ on the exploration of edges. Notice that every traversed edge is explored, but the converse is not necessarily true. (I.e., when we explore an edge, we do not necessarily traverse it.) Furthermore, only traversed edges may be backtracked over, and this can happen at most once for each traversed edge.

Now the description of the algorithm is very simple. (See \Cref{algorithm:DFS}.) First, we run a DFS for $\Delta+1$ steps (i.e., we explore $\Delta+1$ edges). If within those $\Delta+1$ initial steps we have discovered $s$, then we terminate the exploration, and we return the tree path from $v$ to $s$. Otherwise, let $x$ be the current vertex on which the DFS was paused. Then we repeat the following process for $2k$ times. We resume the DFS from $x$ for $\Delta+1$ steps. If we discover $s$, then we terminate the exploration, and we return the tree path from $v$ to $s$. Otherwise, we pause the exploration, and there are two possibilities: either the resumed DFS has backtracked at some point to an ancestor of $x$, or the resumed DFS did not backtrack to an ancestor of $x$. In the first case, let $z$ be the lowest ancestor of $x$ that the resumed DFS has backtracked to, and in the second case let $z=x$. Then we store in a collection of paths (that will potentially be the output) the tree path from $v$ to $z$.

Now we will prove the correctness of this procedure. So let us assume that there is a $k$-out set $S$ with $\mathit{vol}(S)\leq\Delta$ that separates $v$ and $s$. If we happen to meet $s$ during the course of the algorithm, then the exploration will terminate abruptly, and we will correctly return a path that starts from $v$ and ends in $s$. So let us assume that we did not meet $s$ during the course of the algorithm.

Since $\mathit{vol}(S)\leq\Delta$, we have that the first exploration of $\Delta+1$ edges must have left $S$ at some point (although it may have returned back thereafter). Thus, at least one outgoing edge of $S$ was traversed during the first exploration. Now, for every subsequent continuation of the DFS exploration for $\Delta+1$ edges (i.e., every time we execute a block of the \textbf{for} loop in Line~\ref{line:DFS-for}), there are three possibilities: 

\begin{enumerate}[label={(\arabic*)}]
\item{The resumed DFS has explored at least one outgoing edge of $S$.}
\item{The resumed DFS has backtracked over at least one outgoing edge of $S$.}
\item{Neither $(1)$ nor $(2)$ took place.}
\end{enumerate}

Since the first exploration has traversed an outgoing edge of $S$, and since the \textbf{for} loop in Line~\ref{line:DFS-for} is repeated $2k$ times, and since $S$ has precisely $k$ outgoing edges, and since an edge can be explored at most once and can be backtracked over at most once, we have that $(3)$ must have been the case for at least one iteration of the \textbf{for} loop in Line~\ref{line:DFS-for}.

Now let us see what happened during an iteration of the \textbf{for} loop in Line~\ref{line:DFS-for} where no outgoing edge of $S$ was explored, and no outgoing edge of $S$ was backtracked over. Let $x$ be the vertex on which the DFS was paused before this iteration began. Now there are two possibilities: either $x$ is outside of $S$, or $x$ is in $S$. First, let us assume that $x$ is outside of $S$. If the resumed DFS will not backtrack to an ancestor of $x$, then in Line~\ref{line:DFS-add-path} we collect the DFS-tree path from $v$ to $x$. So let us assume that, at some point, the resumed DFS has backtracked to an ancestor of $x$. Then, since no backtracking over an outgoing edge of $S$ took place, we have that all the ancestors of $x$ to which the resumed DFS has backtracked are also outside of $S$. Thus, in Line~\ref{line:DFS-add-path} we collect a path from $v$ to a vertex outside of $S$. Now let us assume that $x$ is in $S$. Then, since we explore $\Delta+1$ edges, and since $\mathit{vol}(S)\leq\Delta$, and since the resumed DFS did not explore an outgoing edge of $S$, the resumed DFS must have backtracked to a vertex outside of $S$ (which must be an ancestor of $x$). And then, since no backtracking over an outgoing edge of $S$ took place, we have that all further backtrackings were to vertices that were also outside of $S$. Thus, the lowest ancestor of $x$ which the resumed DFS has backtracked to is a vertex outside of $S$, and therefore in Line~\ref{line:DFS-add-path} we collect a path from $v$ to a vertex outside of $S$.

The $O(k^2\Delta)$ time bound is easy to see. Since we perform a DFS with budget $O(k\Delta)$ on the number of explored edges, every path that we return has length $O(k\Delta)$. Thus, since the output consists of at most $2k$ paths, the total size of the paths that we return is $O(k^2\Delta)$. This is clearly an upper bound on the running time of the algorithm.
\end{proof}

\begin{algorithm}[h!]
\caption{$\mathtt{FindOutPaths}(G,v,s,k,\Delta)$}
\label{algorithm:DFS}
\LinesNumbered
\DontPrintSemicolon

explore $\Delta+1$ edges, starting from $v$, in a DFS manner\;
\If{$s$ was discovered}{
  \textbf{return} the DFS-tree path from $v$ to $s$\;
}
let $\mathcal{P}\leftarrow\emptyset$ \tcp{the collection of paths to be potentially returned}
\For{$2k$ times}{
\label{line:DFS-for}
  let $x$ be the vertex on which the DFS was paused\;
  explore $\Delta+1$ extra edges\;
  \If{$s$ was discovered}{
    \textbf{return} the DFS-tree path from $v$ to $s$\;
  }
  $z\leftarrow x$\;
  \If{the resumed DFS has backtracked to an ancestor of $x$}{
    let $z$ be the lowest ancestor of $x$ to which the resumed DFS has backtracked\;
  }
  add to $\mathcal{P}$ the DFS-tree path from $v$ to $z$\;
  \label{line:DFS-add-path}
}
\textbf{return} $\mathcal{P}$\;
\end{algorithm}

Now we are ready to prove \Cref{proposition:local_search}, which we restate here for convenience.

\localsearch*
\begin{proof}
The idea is to utilize \Cref{lemma:reversing_paths}. To be specific, let us assume that $\lambda(v,s)=k+1$, and let $U=M(v)$ and $\mathit{vol}(U)\leq\Delta$. Then, we need to find a sequence $P_1,P_2,\dots,P_{k+1}$ of paths with the following properties. $P_1$ is a path of $G$ that starts from $v$ and ends outside of $U$. Let $G_1$ be the graph that is formed from $G$ by reversing the edges of $P_1$. Then, for every $i\in\{1,\dots,k\}$, supposing that we have found a path $P_i$ and we have defined a graph $G_i$, we have that $P_{i+1}$ is a path of $G_i$ that starts from $v$ and ends outside of $U$, and $G_{i+1}$ is formed from $G_i$ by reversing the edges of $P_{i+1}$. Then, a repeated application of \Cref{lemma:reverse} shows that: first, the reachability set of $v$ in $G_{k+1}$ coincides with $U$, and second, $\mathit{vol}_{G_{k+1}}(U)=\mathit{vol}_{G}(U)-(k+1)$, and thus $v$ reaches at most $\Delta-(k+1)$ edges in $G_{k+1}$.

In order to find the desired paths $P_1, P_2,\dots,P_{k+1}$, we use the procedure $\mathtt{FindOutPaths}$, whose guarantees are stated in \Cref{lemma:DFS}. Specifically, we first run $\mathtt{FindOutPaths}(G,v,s,k+1,\Delta)$. This will return at most $2(k+1)$ paths that start from $v$, with the property that at least one of them ends outside of $U$. Since we do not know for sure which of those paths will work for us, we process each of them separately. Thus, for each of them, we reverse its edges, we get a graph $G'$, and then we run $\mathtt{FindOutPaths}(G',v,s,k,\Delta)$. (Note that after processing a path, we undo the reversals that we did to the graph.) Then we repeat this process, as shown in \Cref{algorithm:localsearch}.

Notice that every time that we call $\mathtt{LocalSearchForMSet}(G,v,s,k,\Delta)$ with $k>0$, this creates a set of calls of the form $\mathtt{LocalSearchForMSet}(G',v,s,k-1,\Delta)$. If  one of those calls returns a non-empty set, then the algorithm terminates and outputs this set. Now, we have essentially shown, that, if $\lambda(v,s)=k+1$, then there is a branch of recursive calls that will return $U$. However, we must also show that there will be no incorrect outputs (i.e., non-empty outputs which are distinct from $U$). But this follows easily from \Cref{lemma:reversing_paths}: if throughout a branch of recursive calls to $\mathtt{LocalSearchForMSet}$ we process a path (in Line~\ref{line:localsearch_choosepath}) which does not end outside of $U$, then all terminal calls, which have the form $\mathtt{LocalSearchForMSet}(\cdot,v,s,0,\Delta$), will fail to return a non-empty set, because $v$ will always be able to reach $s$ (and so Line~\ref{line:localsearch_empty} will return $\emptyset$).  

In order to complete the proof of correctness, we have to consider the case where either $\lambda(v,s)>k+1$, or $\lambda(v,s)=k+1$ and $\mathit{vol}(M(v))>\Delta$. First, let us assume that $\lambda(v,s)>k+1$. Then, if we call $\mathtt{LocalSearchForMSet}(G,v,s,k+1,\Delta)$, by \Cref{corollary:edge-conn-after-reverse} we have that, in every (terminal) recursive call of $\mathtt{LocalSearchForMSet}(\cdot,v,s,0,\Delta)$, the edge-connectivity between $v$ and $s$ is larger than $0$. Therefore, in those terminal calls, either $v$ can reach at least $\Delta+1$ edges, or it can reach $s$, and so the output will always be $\emptyset$.

Now let us consider the case where $\lambda(v,s)=k+1$ and $\mathit{vol}(M(v))>\Delta$. Then, for every branch of recursive calls to $\mathtt{LocalSearchForMSet}$, there are two possibilities: either $(1)$ every call was initiated after the reversal of a path that starts from $v$ and ends outside of $M(v)$, or $(2)$ at least one such call was initiated after the reversal of a path that starts from $v$ and ends in $M(v)$. In the first case, when we reach the terminal call to $\mathtt{LocalSearchForMSet}$, by \Cref{lemma:reversing_paths} we have that the edge-connectivity of $v$ and $s$ on the input graph $G'$ has dropped to $0$, and the set of vertices reachable by $v$ on $G'$ coincides with $M(v)$ (where $M(v)$ refers to the minimal $(k+1)$-out set in the original graph $G$). Thus, if the volume of $M(v)$ in $G'$ happens to be at most $\Delta$ (which may be the case, since $\mathit{vol}_{G'}(M(v))=\mathit{vol}_{G}(M(v))-(k+1)$, as an implication of \Cref{lemma:reverse}), then $M(v)$ will be returned. Otherwise, this call will return $\emptyset$. Now let us consider case $(2)$. Then, \Cref{lemma:reversing_paths} implies that, when we reach the terminal call to $\mathtt{LocalSearchForMSet}$, the edge-connectivity of $v$ and $s$ on the input graph $G'$ is greater than $0$. Thus, in this terminal call, either $v$ can reach at least $\Delta+1$ edges, or it can reach $s$, and so the output will be $\emptyset$.

The $O(2^k(k+1)!\Delta)$ time bound is an immediate consequence of the guarantees of $\mathtt{FindOutPaths}$, provided by \Cref{lemma:DFS}, and the recursive structure of \Cref{algorithm:localsearch}, $\mathtt{LocalSearchForMSet}$. Specifically, by calling $\mathtt{LocalSearchForMSet}(G,v,s,k+1,\Delta)$, first we perform a call to $\mathtt{FindOutPaths}(G,v,s,k+1,\Delta)$. By \Cref{lemma:DFS}, this takes time $O((k+1)^2\Delta)$, and returns a collection of at most $2(k+1)$ paths. Then, for each of those paths, the recursive call to $\mathtt{LocalSearchForMSet}$ will perform a call to $\mathtt{FindOutPaths}$ with parameters $k$ and $\Delta$. Thus, there are at most $2(k+1)$ calls to $\mathtt{FindOutPaths}$ that each takes time $O(k^2\Delta)$ and returns a collection of at most $2k$ paths. Continuing in this way, there will be $2^{(k+1)}(k+1)!$ terminal calls to $\mathtt{LocalSearchForMSet}$, each of which simply performs a DFS with budget $\Delta+1$ on the number of explored edges (see Line~\ref{line:terminalcall}). Thus, the total running time can be bounded by $O(2^k(k+1)!\Delta)$.
\end{proof}

\begin{algorithm}[h!]
\caption{$\mathtt{LocalSearchForMSet}(G,v,s,k,\Delta)$}
\label{algorithm:localsearch}
\LinesNumbered
\DontPrintSemicolon
\If{$k=0$}{
  perform a DFS on $G$, starting from $v$, with budget $\Delta+1$ on the number of explored edges, and let $S$ be the set of explored vertices\;
  \label{line:terminalcall}
  \If{$s\notin S$ \textbf{and} at most $\Delta$ edges were explored}{
    \textbf{return} $S$\;
  }
  \textbf{return} $\emptyset$\;
  \label{line:localsearch_empty}
}
let $\mathcal{P}$ be the collection of paths return by $\mathtt{FindOutPaths}(v,s,k,\Delta)$ on $G$\;
\ForEach{path $P\in\mathcal{P}$}{
\label{line:localsearch_choosepath}
  let $G'$ be the graph that is formed by reversing the direction of the edges of $P$\;
  let $S$ be the set that is returned by $\mathtt{LocalSearchForMSet}(G',v,s,k-1,\Delta)$\;
  \If{$S\neq\emptyset$}{
    \textbf{return} $S$\;
  }
}
\textbf{return} $\emptyset$\;
\end{algorithm}

\input{localSearchRandomized2}

%% file: localSearchRandomized2.tex
\subsection{Randomized local search}
\label{sec:proposition:random_local_search}

Let $v$ and $s$ be two vertices such that $\lambda(v,s)=k$, and let $M(v)$ denote the inclusion-wise minimum $(v,s)$-mincut. Here we provide a randomized (Monte Carlo) algorithm for computing $M(v)$ in the case where $\mathit{vol}(M(v))\leq\Delta$, where $\Delta$ is a parameter specified by the user. The running time of this algorithm is $O(k^2\Delta)$, and thus the dependency on $k$ is significantly improved compared to \Cref{proposition:local_search}.

Our algorithm is an adaptation of the randomized local search procedure of \cite{forster2020computing} (given in Corollary A.1 in \cite{forster2020computing}). Specifically, the general idea is the same as in the proof of \Cref{proposition:local_search}: first, we need to find $k$ paths $P_1,\dots,P_k$ with the property that $P_1$ starts from $v$ and ends outside of $M(v)$, and each $P_i$, for $i\in\{2,\dots,k\}$, has the same property on the graph that is formed by reversing the direction of the edges of the paths $P_1,\dots,P_{i-1}$. And then, in order to find $M(v)$, it is sufficient to run a graph exploration, starting from $v$, on the graph that is formed by reversing the direction of the edges of the paths $P_1,\dots,P_k$, with budget $\Delta$ on the exploration of edges.

The important observation of \cite{forster2020computing} is that, if we are willing to allow for some probability of failure (so that $M(v)$ may not be discovered, and $\emptyset$ is returned instead), a desired collection of paths $P_1,\dots,P_k$ can be discovered with a random selection of edges from a simple graph exploration. Specifically, due to the bound $\Delta$ on $\mathit{vol}(M(v))$, it is sufficient to just explore enough edges, starting from $v$, using a graph exploration such as BFS or DFS, and then with uniform sampling we have a good chance to pick an edge whose tail is not in $M(v)$. With the appropriate modifications to the procedure of \cite{forster2020computing}, we get the following:

\randomizedlocalsearch*
\begin{proof}
\Cref{algorithm:randomizedlocalsearch} works by repeatedly finding paths starting from $v$ and reversing their edges. This is repeated for $k$ times, and it is followed by a final graph exploration starting from $v$. Therefore, since $\lambda(v,s)\geq k$, by \Cref{lemma:reversing_paths} it is easy to see that if the set $S$ returned by \Cref{algorithm:randomizedlocalsearch} is not empty, then it coincides with $M(v)$. Furthermore, if $\lambda(v,s)=k$ and all of the paths reversed during the execution of the \textbf{for} loop in Line~\ref{line:randomized_local_for} end outside of $M(v)$, then \Cref{lemma:reversing_paths} implies that the final exploration in Line~\ref{line:randomized_local_search} will provide $M(v)$.

Thus, we only need to show that, if $\lambda(v,s)=k$ and $\mathit{vol}(M(v))\leq\Delta$, then, with probability at least $1/2$, all of the paths reversed during the execution of the \textbf{for} loop in Line~\ref{line:randomized_local_for} have the property that they end outside of $M(v)$.

Let $P_1,\dots,P_k$ be the paths discovered by the \textbf{for} loop in Line~\ref{line:randomized_local_for}, and let $G_0,\dots,G_{k-1}$ be the graph instances on which they were found. (I.e., we have $G_0=G$, and $G_i$, for $i\in\{1,\dots,k-1\}$, is the graph that is derived from $G_{i-1}$ by reversing the direction of the edges of $P_i$.) Then, since $\mathit{vol}(M(v))\leq\Delta$, a repeated application of \Cref{lemma:reverse} shows that the volume of $M(v)$ in all of the graphs $G_1,\dots,G_{k-1}$ is at most $\Delta$. 

Now, whenever the exploration in Line~\ref{line:randomized_local_first_search} discovers $s$, it is certain that the path reversed in Line~\ref{line:randomized_local_reverse_s}  ends outside of $M(v)$. Otherwise, if the exploration in Line~\ref{line:randomized_local_first_search} explores $2k\Delta$ edges, we have that at least a fraction of $(2k\Delta-\Delta)/(2k\Delta)=1-(1/2k)$ of them have their tail outside of $M(v)$. Therefore, the probability that the sampled edge in Line~\ref{line:randomized_local_sample} has its tail in $M(v)$ is at most $1/(2k)$. Thus, by the union bound, the probability that this was the case at least once during the $k$ iterations of the \textbf{for} loop in Line~\ref{line:randomized_local_for} is at most $1/2$.

Therefore, the probability that all of the paths reversed during the execution of the \textbf{for} loop in Line~\ref{line:randomized_local_for} have their end outside of $M(v)$ is at least $1/2$.
\end{proof}

\begin{algorithm}[h!]
\caption{$\mathtt{RandomizedLocalSearchForMSet}(G,v,s,k,\Delta)$}
\label{algorithm:randomizedlocalsearch}
\LinesNumbered
\DontPrintSemicolon
\ForEach{$i\in\{1,\dots,k\}$}{
\label{line:randomized_local_for}
  perform a graph exploration, in a BFS manner, starting from $v$, but stop once $2k\Delta$ edges have been explored, or if $s$ is discovered\;
  \label{line:randomized_local_first_search}
  \If{$s$ was discovered}{
    reverse the direction of the edges of the BFS-tree path from $v$ to $s$\;
    \label{line:randomized_local_reverse_s}
  }
  \Else{
    sample an edge $(x,y)$, uniformly at random, from the set of explored edges\; 
    \label{line:randomized_local_sample}
    reverse the direction of the edges of the BFS-tree path from $v$ to $x$\;
    \label{line:randomized_local_reverse_x}
  }
}
$S\leftarrow\emptyset$\;
explore at most $\Delta+1$ edges, starting from $v$, in a BFS manner\;
\label{line:randomized_local_search}
\If{at most $\Delta$ edges were explored, and $s$ was not discovered}{
  $S\leftarrow$ the set of discovered vertices\;
}
undo all the edge reversals on $G$\;
\Return $S$\;
\end{algorithm}

The success probability of \Cref{algorithm:randomizedlocalsearch} can be amplified to at least $1-\delta$, for any $\delta\in(0,1/2)$, by repeatedly applying it for $\lceil \log_2(1/\delta)\rceil$ times.

%% file: extending_k_plus_3.tex
\section{Extending to $(k+3)$-edge-connected components} \label{sec:extensions}
As already explained, there are two main obstacles that we have to overcome following our general framework in order to compute the $k$-edge-connected components, for any fixed $k>4$, within the same running time. 
However, we would like to understand the limitations of our approach for computing the $(k+3)$-edge-connected components in a $k$-edge-connected graph. 
Towards this direction, we propose a refinement partition of a $k$-edge-connected digraph that computes the $(k+3)$-edge-connected components which are contained within a given $(k+2)$-edge-connected-component. 

Observe that the given assumptions and relaxations drop the need for a fast  decomposition into $(k+2)$-edge-connected components and concentrate on the existence of efficient algorithm for computing a good partition of a vertex $v$ with $\lambda(v,s) \leq k+2$. 
More precisely, we show that our framework can be adopted for computing a corresponding \emph{good $(k+3)$-partition} (i.e., a partition $\mathcal{P}$ for a vertex $v$ with $\lambda(v,s) \leq k+2$ according to \Cref{definition:goodpartition} with the difference that $\mathcal{P}$ maintains the $(k+3)$-edge-connected components and separates ordinary vertices from $s$ with respect to minimum $(k+2)$-out sets) 
by exploiting the existing linear-time algorithms for computing the $k'$-edge-connected components with $k' \in \{1,2,3\}$ of general digraphs \cite{georgiadis20162edge,GIP20:SICOMP,Linear3ECC}.

Let $G$ be a $k$-edge connected graph, with $n$ vertices and $m$ edges where some of its vertices are designated as ``ordinary''. 
The ordinary vertices of $G$ are $(k + 2)$-edge-connected. 
Our goal is to compute a partition $\mathcal{P}$ of $V(G)$ such that two ordinary vertices are $(k+3)$-edge-connected if they belong to the same set in $\mathcal{P}$.
Our result is stated as follows. 

\begin{theorem}\label{theo:extendingkplusthree}
Let $\delta$ be a parameter with $0<\delta<1$. 
There exists a randomized algorithm that runs in $O(k^3 m \sqrt{n} \log(n/\delta))$ time, where $n$ is the number of the ordinary vertices of $G$, and outputs a partition $\mathcal{P}$ of $V(G)$ satisfying the following guarantees:
\begin{itemize}
\item{$\mathcal{P}$ maintains the $(k+3)$-edge-connected components of the ordinary vertices of $G$. I.e., every two $(k+3)$-edge-connected ordinary vertices of $G$ are in the same set from $\mathcal{P}$.}
\item{
With probability at least $1-\delta$, every two ordinary vertices of $G$ that are not $(k+3)$-edge-connected are separated by $\mathcal{P}$.}
\end{itemize}
\end{theorem}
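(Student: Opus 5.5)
We run \Cref{algorithm:main} with $k+1$ in place of $k$. Fix an ordinary vertex $s$. For each ordinary $v$, let $M(v)$ be the minimum $(k+2)$-out set separating $v$ from $s$. Every ordinary vertex satisfies $\lambda(v,s)\ge k+2$, so $M(v)$ is well defined (possibly $\bot$).

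\Cref{lemma:distinctM} and \Cref{lemma:distinctM2} carry over verbatim with $k+1$ replaced by $k+2$: their proofs only use that the ordinary vertices are $(k+2)$-edge-connected. Hence two ordinary vertices that are not $(k+3)$-edge-connected are separated either by a small $M$-set or $M_R$-set, or by a large one that contains a sampled vertex. The small sets come from the randomized local search of \Cref{proposition:randomized_local_search} with parameter $k+2$ and $\Delta=m/\sqrt n$, repeated $O(\log(n/\delta))$ times per vertex. This costs $O(k^2 m\sqrt n\log(n/\delta))$. For large sets we sample $\lceil\sqrt n\log_2(4n/\delta)\rceil$ edges and use the same union-bound computation as in the proof of \Cref{theorem:main}. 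So everything reduces to computing a good $(k+3)$-partition for each sampled tail $v$ in $O(k^3m)$ time. Since $G$ is $k$-edge-connected, $\lambda(v,s)\in\{k,k+1,k+2\}$ is the relevant range, and $\lambda(v,s)>k+2$ needs nothing.

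\textbf{Case $\lambda(v,s)=k+2$.} This is \Cref{proposition:13_2} with $k+1$ in place of $k$, applied verbatim: take the SCCs of the Picard--Queyranne graph of a $(v,s)$-maxflow, in $O(km)$ time. The proof only uses that $M(u)$ is a $(v,s)$-mincut whenever $v\in M(u)$, and that every set closed under reachability is a mincut of value $k+2$.

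\textbf{Case $\lambda(v,s)=k+1$.} We follow \Cref{algorithm:14_2}, one level up. Compute the latest $(v,s)$-mincut $S$ and contract it to $z$, which has out-degree $k+1$ and contains no ordinary vertex. The submodularity argument is unchanged: $\mathit{out}(M(u)\cap S)\ge k+1$ and $\mathit{out}(M(u)\cup S)\ge k+2$ force $\mathit{out}(M(u)\cup S)=k+2$. Two subcases follow.
\begin{enumerate}
\item All out-edges of $z$ leave $U=M(u)\cup S$. Then removing them leaves exactly one edge from $U$ to $V\setminus U$. We therefore take the $2$-edge-connected components of $\widetilde G\setminus\delta^+(z)$. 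This preserves $(k+3)$-edge-connectivity, since removing $k+1$ edges leaves $\ge 2$ paths.
\item Some out-edge $e_i$ of $z$ lies inside $U$. We contract it. By \Cref{lemma:include_outgoing_edge}, $\lambda(z_i,s)\ge k+2$, and we apply the previous case. This gives $k+1$ calls costing $O(km)$ each, i.e.\ $O(k^2m)$.
\end{enumerate}

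\textbf{Case $\lambda(v,s)=k$.} This is the case that motivates the use of the linear-time $3$-edge-connected components algorithm of \cite{Linear3ECC}. Again contract the latest mincut $S$ to $z$, now with $k$ out-edges. Here $\mathit{out}(M(u)\cup S)=k+2$, since the same sandwich argument gives value between $k+2$ and $(k+2)+k-k$. Again two subcases.
\begin{enumerate}
\item All $k$ out-edges of $z$ leave $U$. Then removing them leaves $2$ edges out of $U$. We compute the $3$-edge-connected components of $\widetilde G\setminus\delta^+(z)$ in linear time; they still preserve $(k+3)$-components.
\item Some $e_i$ lies in $U$. We contract it and recurse on $z_i$, where $\lambda(z_i,s)\in\{k+1,k+2\}$ by \Cref{lemma:include_outgoing_edge}.
\end{enumerate}
The recursion tree has depth at most $2$ and branching at most $k+1$. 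It therefore makes $O(k^2)$ leaf calls of cost $O(km)$, for $O(k^3m)$ total. This is the source of the $k^3$ factor.

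Correctness of the separation in each recursive step relies on the projection of $U$ into the contracted graph remaining a $(k+2)$-out set containing the new source and not containing $w$ or $s$. It also relies on the Remark after \Cref{proposition:13_2}: that proposition separates with respect to any $(k+2)$-out set $U$ separating $u$ from $s$, not only $M(u)$. This matters because after contraction $U$ need no longer be the minimum set.

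The main obstacle is this recursive invariant. In the nested case we need the recursive call for $\lambda=k+1$ to accept an arbitrary $(k+2)$-out set $U\ni z_i$ rather than the $M$-set. I would first prove a strengthened version of each case statement, quantified over all such $U$, which the submodularity arguments support. I would also have to check that contractions of latest mincuts preserve $(k+3)$-connectivity between uncontracted vertices. The failure probability and running time then sum as in \Cref{section:main-result}, giving $O(k^3m\sqrt n\log(n/\delta))$.
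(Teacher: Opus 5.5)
Your proposal is correct and follows essentially the same route as the paper. It finds small $M$-sets by randomized local search with parameter $k+2$, handles large ones by sampling, and builds a good $(k+3)$-partition for each sampled tail through three cases:
\begin{itemize}
\item[] $\lambda(v,s)=k+2$: the Picard--Queyranne SCCs.
\item[] $\lambda(v,s)=k+1$: the $2$-edge-connected components after removing the out-edges of the contracted latest mincut, plus PQ on each $\widetilde G_i$.
\item[] $\lambda(v,s)=k$: the $3$-edge-connected components via \cite{Linear3ECC}, plus recursion into the two previous cases.
\end{itemize}
This gives $O(k^2)$ graphs at $O(km)$ each. Your explicit handling of the nested recursion spells out what the paper only asserts as ``the same argumentation'': you work with an arbitrary $(k+2)$-out set $U$ as in the Remark after \Cref{proposition:13_2}, and you use that contractions never decrease connectivity, so ordinary vertices such as $w$ stay outside the new latest mincut.
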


Let us briefly explain the idea. 
We show how to extend the algorithm given in \Cref{theorem:main} in which we avoid repeated arguments regarding correctness and running time analysis. 
Fix an ordinary vertex $s$. 
For any ordinary vertex $v \neq s$, we let $M(v)$ denote the (inclusion-wise) minimum $(k+2)$-out set such that $v \in M(v)$ and $s \notin M(v)$. 
We begin by computing the $M$-sets with volume at most $m/\sqrt{n}$. In particular, 
for every ordinary vertex $v\neq s$, we invoke
$\mathtt{RandomizedLocalSearchForMSet}(G,v,s,k+2,\Delta)$
 to compute a partition $\mathcal{P}_1$ that maintains the $(k+3)$-edge-connected components of ordinary vertices having small $M$-set in $O(m\sqrt{n})$ time.  
By \Cref{proposition:randomized_local_search} and the corresponding discussion in \Cref{theorem:main}, we conclude that $\mathcal{P}_1$ contains two ordinary vertices $v,v'$ within the same set if $v,v'$ are $(k+3)$-edge-connected and both $M(v), M(v')$ have volume at most $m/\sqrt{n}$. We note that handling small $M$-sets can be done within the same running time for any $k$-out set of a general digraph.  

Next we consider large $M$-sets. As explained in the algorithm that establishes \Cref{theorem:main}, by sampling $\tilde{O}(\sqrt{n})$ edges we get the following: 
for every $M$-set $U$ with volume more than $m/\sqrt{n}$, we have sampled at least one edge whose tail $v$ is in $U$. Now our task is to compute a good $(k+3)$-partition for $v$.  
Notice that if $\lambda(v,s) > k+2$ then there is no $(k+2)$-out set that separates $v$ and $s$. Also recall that any $(k + 3)$-edge-connected component containing $v$ is a subset of $M(v)$.
We assume henceforth that $\lambda(v,s) \leq k+2$. Since $G$ is $k$-edge connected, we have $k \leq \lambda(v,s) \leq k+2$. 

Let $S$ be the latest $(v,s)$-mincut and let $\lambda = \lambda(v,s)$ (so that $k \leq \lambda \leq k+2$). Build $G'$ by contracting $S$ into a new vertex $z$ having $\lambda$ outgoing edges $e_1, \ldots, e_{\lambda}$ and build $\lambda$ new graphs $G'_i$ that are obtained from $G'$ by contracting $z$ with an edge $e_i$ into a new vertex $z_i$. 
In order to compute a good $(k+3)$-partition for $v$, we distinguish the following cases:
\setlist[description]{font=\normalfont\space}
\begin{description}
    \item[1. $\lambda(v,s) = k+2$] \ We build the Picard-Queyranne graph (\Cref{definition:pq}) $PQ$ of $G$ and compute a corresponding partition $\mathcal{P}$ that corresponds to the strongly connected components of $PQ$. In a similar fashion as given in the proof of \Cref{proposition:13_2}, the two properties of the good $(k+3)$-partition are fulfilled. 
    \item[2. $\lambda(v,s) = k+1$] \ Observe the following: (i) for any two vertices $x,y \notin S$ such that $\lambda_{G}(x,y) \geq k+3$, we know that $x,y$ belong to the same $2$-edge-connected component of $G' \setminus \{e_1, \ldots, e_{k+1}\}$ and (ii) by \Cref{lemma:include_outgoing_edge} we have $\lambda_{G'_i}(z_i,s) \geq k+2$. Thus it suffices to compute a refinement of the following partitions (with the same argumentation given in \Cref{section:proof_of_prop_main_auxiliary}): 
    \begin{itemize}
        \item[(2.i)] the 2-edge-connected components of $G' \setminus \{e_1, \ldots, e_{k+1}\}$ and
        \item[(2.ii)] the SCCs of the $PQ$ graph of $G'_i$, whenever $\lambda_{G'_i}(z_i,s) = k+2$ (reduce to case 1). 
    \end{itemize}
    \item[3. $\lambda(v,s) = k$] \ Observe the following: (i) for any two vertices $x,y \notin S$ such that $\lambda_{G}(x,y) \geq k+3$, we know that $x,y$ belong to the same $3$-edge-connected component of $G' \setminus \{e_1, \ldots, e_{k}\}$ and (ii) by \Cref{lemma:include_outgoing_edge} we have $\lambda_{G'_i}(z_i,s) \geq k+1$. Thus it suffices to compute a refinement of the following partitions:
    \begin{itemize}
        \item[(3.i)]  the 3-edge-connected components of $G' \setminus \{e_1, \ldots, e_{k}\}$, 
        \item[(3.ii)] the SCCs of the $PQ$ graph of $G'_i$, whenever $\lambda_{G'_i}(z_i,s) = k+2$ (reduce to case 1), and 
        \item[(3.iii)] the 2-edge-connected components of the graph obtained from $G'_i$ after removing all outgoing edges of $z_i$ (2.i) and the SCC's of each $PQ$ graph obtained from $G'_i$ after contracting an outgoing edge of $z_i$ (2.ii), whenever $\lambda_{G'_i}(z_i,s) = k+1$ (reduce to case 2).
    \end{itemize}
\end{description}
It is not difficult to see that the number of all computed graphs are in total $O(k^2)$ and building a $PQ$ graph requires $O(km)$ time. Thus, by the linear-time algorithms for computing the $k'$-edge-connected components with $k' \in \{1,2,3\}$ of general digraphs \cite{georgiadis20162edge,GIP20:SICOMP,Linear3ECC}, we conclude that finding a good $(k+3)$-partition for $v$ can be carried out in $O(k^3 m)$ time. After collecting all such partitions from each sampled edge into $\mathcal{P}_2$, we compute the common refinement of $\mathcal{P}_1$ (small $M$-sets) and $\mathcal{P}_2$ (large $M$-sets). 

Although it is tempting to use \Cref{theo:extendingkplusthree} for computing all $(k+3)$-edge-connected components of $G$, we should make clear that we only claim the following:  
given a $(k+2)$-edge-connected component $S$ of a $k$-edge-connected graph $G$, the algorithm given in \Cref{theo:extendingkplusthree} (by labeling the vertices of $S$ as ordinary) can compute, within the claimed running time, a partition $\mathcal{P}$ of $V(G)$ such that (i) every two $(k+3)$-edge-connected vertices of $S$ are in the same set of $\mathcal{P}$ and (ii) with high probability every two vertices of $S$ that are not $(k+3)$-edge-connected are separated by $\mathcal{P}$.
Nevertheless, we believe that the previous analysis highlights certain obstacles for extending our framework towards higher edge-connectivity.  


\paragraph*{Acknowledgment.} We thank an anonymous reviewer for suggesting the use of the Picard-Queyranne graph in the proof of \Cref{proposition:main_ordinary}. \sidenote{Loukas: Added this acknowledgement.}